\def\eqref#1{equation~\ref{#1}}
\def\1{\bm{1}}
\DeclareMathAlphabet{\mathsfit}{\encodingdefault}{\sfdefault}{m}{sl}
\SetMathAlphabet{\mathsfit}{bold}{\encodingdefault}{\sfdefault}{bx}{n}
\newtheorem{lma}{Lemma}[section]
\newtheorem{definition}{Definition}
\newtheorem{cor}{Corollary}
\newcolumntype{Y}{>{\centering\arraybackslash}X}
\title{Maximum Entropy Heterogeneous-Agent \\ Reinforcement Learning}
\author{%
Jiarong Liu$^{1\sharp}$\thanks{Equal contribution. $^{\sharp}$Work done during internship at Institute for AI, Peking University. $^{\dag}$Corresponding to <\texttt{yaodong.yang@pku.edu.cn}>. See our 
page at \url{https://sites.google.com/view/meharl}.}, Yifan Zhong$^{1,2*}$, Siyi Hu$^{3}$, \\
\textbf{Haobo Fu$^{4}$, 
Qiang Fu$^{4}$,  
Xiaojun Chang$^{3}$,
Yaodong Yang$^{1\dag}$} \\
$^1$Institute for AI, Peking University, $^2$National Key Laboratory of General AI, BIGAI, \\ $^3$University of Technology Sydney, $^4$Tencent AI Lab \\
}
\begin{document}
\doparttoc 
\faketableofcontents 

\part{} 
\vspace{-9mm}
\maketitle

\vspace{-5mm}
\begin{abstract}
\vspace{-3mm}
\emph{Multi-agent reinforcement learning} (MARL) has been shown effective for cooperative games in recent years. However, existing state-of-the-art methods face challenges related to sample complexity, training instability, and the risk of converging to a suboptimal Nash Equilibrium. In this paper, we propose a unified framework for learning \emph{stochastic} policies to resolve these issues. We embed cooperative MARL problems into probabilistic graphical models, from which we derive the maximum entropy (MaxEnt) objective for MARL. Based on the MaxEnt framework, we propose \emph{Heterogeneous-Agent Soft Actor-Critic} (HASAC) algorithm. Theoretically, we prove the monotonic improvement and convergence to \emph{quantal response equilibrium} (QRE) properties of HASAC. Furthermore, we generalize a unified template for MaxEnt algorithmic design named \emph{Maximum Entropy Heterogeneous-Agent Mirror Learning} (MEHAML), which provides any induced method with the same guarantees as HASAC. We evaluate HASAC on six benchmarks: Bi-DexHands, Multi-Agent MuJoCo, StarCraft Multi-Agent Challenge, Google Research Football, Multi-Agent Particle Environment, and Light Aircraft Game. Results show that HASAC consistently outperforms strong baselines, exhibiting better sample efficiency, robustness, and sufficient exploration.
\end{abstract}

\vspace{-3mm}
\section{Introduction}
\vspace{-3mm}

Cooperative multi-agent reinforcement learning (MARL) is a complex problem characterized by the difficulty of coordinating individual agent policy improvements to enhance overall performance of the entire team. As a result, traditional independent learning methods in MARL often lead to poor convergence properties \citep{tan1993multi, claus1998dynamics}. To alleviate these difficulties, the \emph{centralized training decentralized execution} (CTDE) paradigm \citep{foerster2018counterfactual,lowe2017multi} assumes that global states and teammates' actions and policies are accessible during the training phase. This approach leads to the development of competent multi-agent policy gradient algorithms \citep{yang2018mean, wen2019modelling, wen2019probabilistic, zhang2020bi, yu2022surprising} as well as value decomposition algorithms \citep{su2020valuedecomposition, rashid2018qmix, yang2020qatten, son2019qtran, wang2021qplex}
. Furthermore, heterogeneous-agent mirror learning (HAML) \citep{kuba2022heterogeneous} provides a template for rigorous algorithmic design, which guarantees any induced algorithm of monotonic improvement of joint objective and convergence to \emph{Nash equilibrium} (NE) \citep{nash1951non}.

Despite the theoretical soundness of the HAML framework, HAML-derived algorithms suffer from two main challenges. First, these methods face challenges attributed to either sample complexity or training instability. On-policy algorithms, including HAPPO and HATRPO \citep{kuba2022trust}, require new sample data for each gradient step, which becomes very expensive as task complexity and agent numbers increase. Off-policy algorithms, on the other hand, observe training instability and hyperparameter sensitivity \citep{zhong2023heterogeneousagent}. Moreover, HAML-derived algorithms suffer from insufficient exploration, which can lead to suboptimal NE convergence. This is primarily due to the standard MARL objective they maximize, where there always exists a deterministic convergence solution \citep{kuba2022heterogeneous, Sutton1998} and stochasticity is not inherently encouraged. Since the presence of multiple NEs is a frequently observed phenomenon in many multi-agent games, these methods can fail to explore sufficiently and prematurely converge to a suboptimal NE, as we will show in Section \ref{limitations}.


A possible solution to these challenges is to let agents learn \emph{stochastic} behaviors in a sample-efficient way. Similar to the case of single-agent RL \citep{haarnoja2017reinforcement}, in cooperative MARL problems, stochastic policies enable effective exploration of the reward landscape, mastery of multiple ways of performing the task, and robustness when facing prediction errors \citep{ziebart2010modeling}. Unfortunately, while a number of methods have achieved great success in single-agent RL settings \citep{ziebart2010modeling, haarnoja2017reinforcement, haarnoja2018soft, levine2018reinforcement}, solving such stochastic policy learning problems in cooperative MARL is challenging. Existing CTDE methods offer no convergence guarantee for learning stochastic policies in general cases. Recently, FOP \citep{zhang2021fop} applies maximum entropy framework to multi-agent settings via value decomposition. However, FOP only provides convergence to the global optimum when a task satisfies the Individual-Global-Optimal (IGO) assumption, which limits its applicability in general cooperative MARL problems.


In this paper, we propose the first theoretically-justified actor-critic framework for learning stochastic policies in cooperative MARL. Firstly, we model cooperative MARL as a probabilistic graphical inference problem (Figure \ref{fig:PGM}), where stochastic policies arise as optimal solutions. Performing variational inference in this model leads us to derive the maximum entropy (MaxEnt) MARL objective. To maximize this objective, we introduce \emph{heterogeneous-agent soft policy iteration} (HASPI) which ensures the properties of monotonic improvement and convergence to \emph{quantal response equilibrium} (QRE), which is the solution concept corresponding to stochastic policies in game theory \citep{mckelvey1995quantal, goeree2020stochastic}. The key insight behind this theory is the \emph{joint soft policy decomposition proposition}. Based on HASPI, we derive the \emph{heterogeneous-agent soft actor-critic} (HASAC) algorithm. Furthermore, we generalize the HASPI procedure to the \emph{Maximum Entropy Heterogeneous-Agent Mirror Learning} (MEHAML) template, which offers a unified solution to MaxEnt MARL problems and provides the same theoretical guarantees for \emph{any} induced methods as HASAC. We test HASAC on six benchmarks: Multi-Agent MuJoCo (MAMuJoCo) \citep{de2020deep}, Bi-DexHands \citep{chen2022humanlevel}, StarCraft Multi-Agent Challenge (SMAC) \citep{samvelyan2019starcraft}, Google Research Football (GRF) \citep{kurach2020google}, Multi-Agent Particle Environment (MPE) \citep{lowe2017multi}, and Light Aircraft Game (LAG) \citep{liu2022light}. Across the majority of benchmark tasks, HASAC consistently outperforms strong baselines, exhibiting the advantages of stochastic policies, namely improved robustness, higher sample efficiency, and sufficient exploration. 

\vspace{-3mm}
\section{Related Work}
\label{related-work}
\vspace{-3mm}

Multi-agent policy gradient (MAPG) methods have been shown effective for multi-agent cooperation tasks \citep{foerster2018counterfactual, lowe2017multi}. \citet{yu2022surprising} discovers the effectiveness of PPO in multi-agent scenarios and introduces MAPPO. It inspires CoPPO \citep{wu2021coordinated}, which preserves monotonic improvement property with a simultaneous update scheme, HAPPO / HATRPO \citep{kuba2022trust}, which proves monotonic improvement and NE convergence property with a sequential update scheme, and A2PO \citep{wang2023order}, which preserves per-agent monotonic improvement property. Guarantees of HAPPO / HATRPO are enhanced by HAML \citep{kuba2022heterogeneous}, which abstracts a general theoretical framework and leads to several practical algorithms. While these methods are effective on challenging benchmarks, we show in Section \ref{limitations} that due to the standard objective they optimize, they tend to converge rapidly to a suboptimal NE when in proximity to it. Notably, the idea of NE can be considered as a notion of local optimum in cooperative MARL settings, and has been studied in many prior works \citep{swenson2018best,wei2018multiagent,kuba2022trust,kuba2022heterogeneous}. To alleviate suboptimal NE convergence problem, we propose to learn stochastic policies, which maximize the MaxEnt MARL objective that we derive from probabilistic graphical models. We adopt QRE \citep{mckelvey1995quantal,goeree2020stochastic,kozitsina2022quantal} as the solution concept in MaxEnt MARL framework, which generalizes NE when payoffs are perturbed by additional noise.

MaxEnt algorithms have achieved great success in single-agent RL. SQL \citep{haarnoja2017reinforcement} and SAC \citep{haarnoja2018softa} learn optimal MaxEnt policies through soft Q-iteration and soft policy iteration respectively. They refresh SOTA performance, showcasing the robustness and effective exploration of stochastic policy. \citet{levine2018reinforcement} reviews these algorithms from a control as inference perspective. Unfortunately, learning MaxEnt policies with theoretical guarantees remains a challenge in cooperative MARL settings. MASQL \citep{wei2018multiagent} adopts a multi-agent actor-critic architecture similar to MADDPG \citep{lowe2017multi}, extending SQL to multi-agent settings without providing any theoretical guarantees. FOP \citep{zhang2021fop}, on the other hand, is a decomposed actor-critic method \citep{de2020deep, wang2021dop}, which utilizes the decomposed critic instead of the centralized critic to learn individual policies. It factorizes the optimal joint policy of MaxEnt MARL under the IGO assumption (see Appendix \ref{IGO} for details), which leads to poor performance in complex scenarios. To overcome the constraint of IGO,  MACPF \citep{wang2023centralized} learns optimal joint policy during training phase and distills independent policies from the optimal joint policy to fulfill decentralized execution. Such a procedure can be considered as offline imitating learning, where independent policies strive to mimic the behaviors produced by the optimal joint policy, but they still lack the guarantee of converging to the optimum. 
In contrast, our approach is the first MaxEnt actor-critic method with theoretical guarantee, presenting an improvement to HAML-based algorithms. We augment the objective with entropy, propose HASPI, prove its monotonic improvement and QRE \citep{mckelvey1995quantal,goeree2020stochastic,kozitsina2022quantal} convergence property without restrictive assumption, derive HASAC, and establish MEHAML template.


\section{Preliminaries}
\label{headings}


\subsection{Cooperative Multi-Agent Reinforcement Learning}

We consider a cooperative Markov game \citep{littman1994markov} formulated by a tuple $\langle\mathcal{N}, \mathcal{S}, \boldsymbol{{\mathcal{A}}}, r, P, \gamma, d\rangle$. Here, $\mathcal{N}=\{1, \ldots, n\}$ denotes the set of $n$ agents, $\mathcal{S}$ is the finite state space, $\boldsymbol{{\mathcal{A}}}=\prod_{i=1}^n \mathcal{A}^i$ is the joint action space, where $\mathcal{A}^{i}$ denotes the finite action space of agent $i$, $r: \mathcal{S} \times \boldsymbol{{\mathcal{A}}} \rightarrow \mathbb{R}$ is the joint reward function, $P: \mathcal{S} \times \boldsymbol{{\mathcal{A}}} \times \mathcal{S} \rightarrow[0,1]$ is the transition probability function, $\gamma \in[0,1)$ is the discount factor, and $d \in \mathcal{P}(X)$ (where $\mathcal{P}(X)$ denotes the set of probability distributions over a set $X$ ) is initial state distribution. In this work, we use the notation $\mathbb{P}(X)$ to denote the power set of a set $X$ and $\operatorname{Sym}(n)$ to denote the set of permutations of integers $\{1, \ldots, n\}$, known as the symmetric group. At time step $t \in \{1, \ldots, T\}$, each agent $i \in \mathcal{N}$ is at state $\mathrm{s}_t \in \mathcal{S}$ and then takes independent actions $\mathrm{a}_t^i \sim \pi^i(\cdot^i|\mathrm{s}_t)$, where $\pi^i$ is the policy of agent $i$. Let $\mathbf{a}_t=\left(\mathrm{a}_t^1, \ldots, \mathrm{a}_t^n\right) \in \boldsymbol{{\mathcal{A}}}$ denotes the joint action and $\boldsymbol{\pi}\left(\cdot | \mathrm{s}_t\right)=\prod_{i=1}^n \pi^i\left(\cdot{ }^i | \mathrm{s}_t\right)$ denotes the joint policy. We denote the policy space of agent $i$ as $\Pi^i \triangleq \left\{\times_{s \in \mathcal{S}} \pi^i\left(\cdot{ }^i | s\right) | \forall s \in \mathcal{S}\right\}$, and the joint policy space as $\Pi \triangleq \left(\Pi^1, \ldots, \Pi^n\right)$. The agents receive a joint reward $r\left(\mathrm{s}_t, \mathbf{a}_t\right)$ and move to the next state $\mathrm{s}_{t+1} \sim P\left(\cdot | \mathrm{s}_t, \mathbf{a}_t\right)$. The initial state distribution $d$, the joint policy $\boldsymbol{\pi}$, and the transition kernel $P$ induce a marginal state distribution at time $t$, denoted by $\rho_{\boldsymbol{\pi}}^t$. We define the (unnormalized) marginal state distribution $\rho_{\boldsymbol{\pi}} \triangleq \sum_{t=1}^{T} \rho_{\boldsymbol{\pi}}^t$. The standard joint objective of all agents is to maximize the expected total reward, defined as\footnote{We write $a^i$, $\boldsymbol{a}$, and $s$ when we refer to the action, joint action, and state as to values, and $\mathrm{a}^i$, $\mathbf{a}$, and $\mathrm{s}$ as to random variable.}
\small
\begin{equation}
\label{eq2}
J_{\text{std}}(\boldsymbol{\pi})=\mathbb{E}_{\mathrm{s}_{1: T} \sim \rho_{\boldsymbol{\pi}}^{1: T}, \mathbf{a}_{1: T} \sim \boldsymbol{\pi}}\left[\sum_{t=1}^{T}  r\left(\mathrm{s}_t, \mathbf{a}_t\right)\right].
\end{equation}
\normalsize

\subsection{Limitations of Existing Cooperative MARL Methods}
\label{limitations}

Existing MAPG methods maximizing the standard joint objective (Equation \ref{eq2}), such as MAPPO and HAPPO, may fail to find the optimal NE and converge prematurely. We analyze this problem by considering a singe-state 2-agent cooperative matrix game as shown in Figure \ref{fig:motivation-matrix-game}. 

\begin{figure*}[htbp]
\vspace{-5pt}
\captionsetup[subfigure]{font=small}
 	\begin{subfigure}{0.33\linewidth}
 			\centering
 			\includegraphics[width=1.5in]{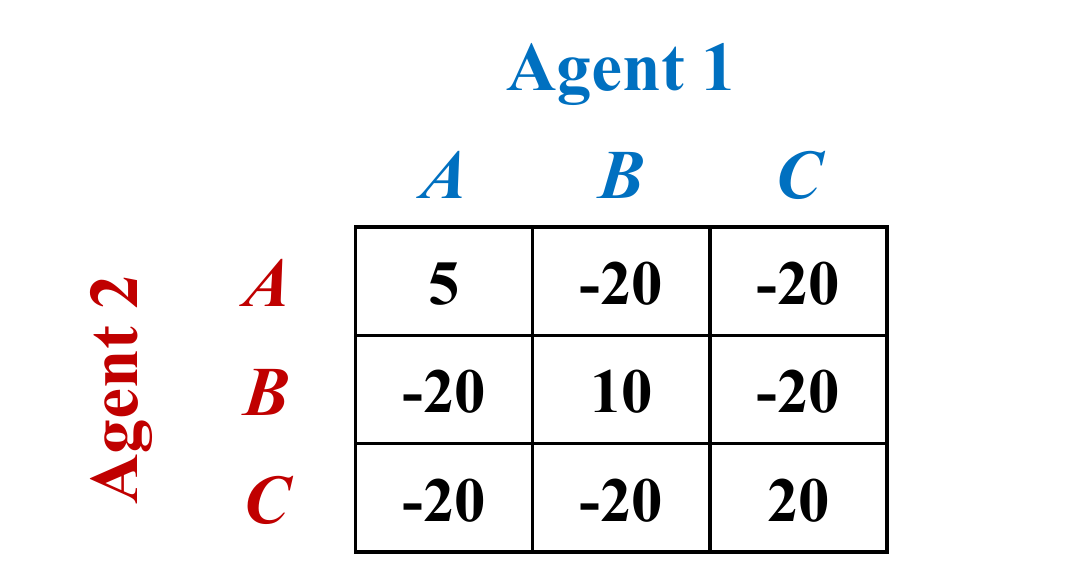}
 			\caption{Reward matrix}
                \captionsetup{font=small}
 			\label{fig:reward-matrix}
 		\end{subfigure}%
 	\begin{subfigure}{0.33\linewidth}
 			\centering
 			\includegraphics[width=1.5in]{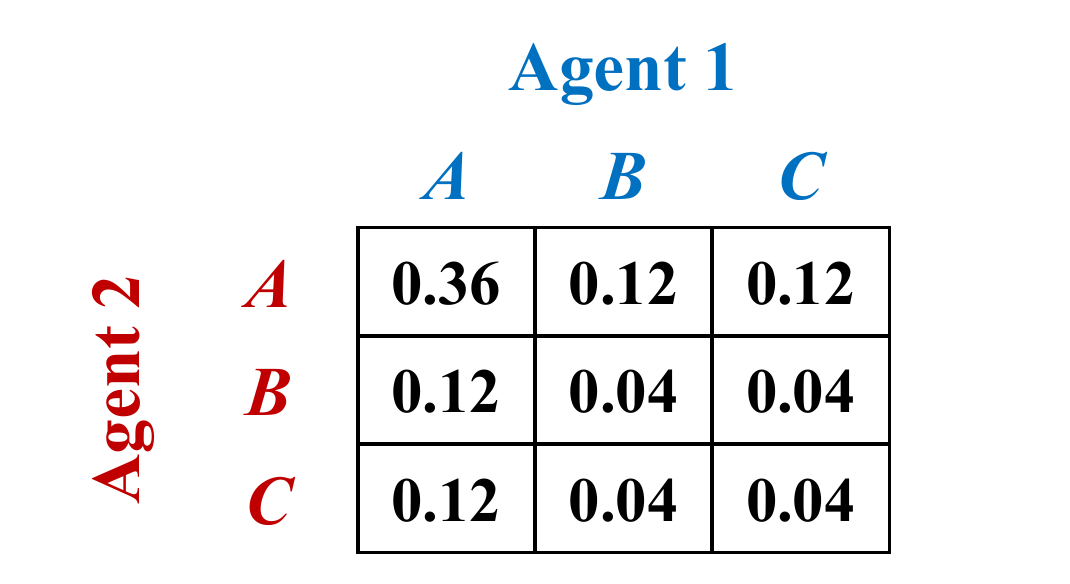}
 			\caption{Initial joint policy $\boldsymbol{\pi}$}
 			\captionsetup{font=small}
                \label{fig:initial-policy}
 		\end{subfigure}
 	\begin{subfigure}{0.33\linewidth}
 			\centering
 			\includegraphics[width=1.5in]{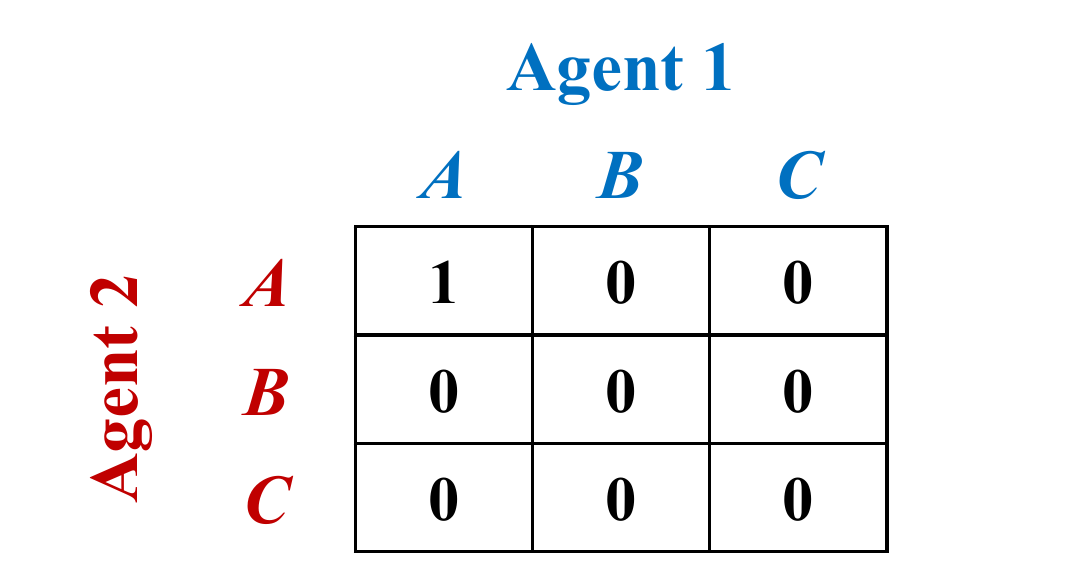}
 			\caption{Final joint policy $\boldsymbol{\pi}$}
 			\captionsetup{font=small}
                \label{fig:final-policy}
 		\end{subfigure}
    \vspace{-5pt}
 	\caption{A single-state 2-agent cooperative matrix game. (a) is the reward matrix of joint actions. (b) represents the initial joint policy $\boldsymbol{\pi}$ formed by both agents taking the individual policy $\pi = \{0.6, 0.2, 0.2\}$. (c) represents the final joint policy $\boldsymbol{\pi}$ that MAPPO and HAPPO converge to, deterministically choosing action $(A, A)$. 
 	}  
 	\vspace{-5pt}
 	\label{fig:motivation-matrix-game}
 \end{figure*}

Agents each choose from three possible actions $\{A, B, C\}$ and receive a joint reward. In this case, $(A, A)$, $(B, B)$, $(C, C)$ are three different NEs with rewards of $5, 10, 20$, with $(C, C)$ being the optimal NE. To simulate common scenarios where finding the global optima is challenging and, due to the learning in the early stages, agents are updated towards some reasonably good local optima initially, we consider a setting where agents assign a higher probability to action $(A, A)$, as shown in Figure \ref{fig:initial-policy}. By exact calculation (see Appendix \ref{exact-calculation-of-matrix-game}), we find traditional algorithms like MAPPO and HAPPO converge rapidly towards the suboptimal point $(A, A)$, failing to identify the Pareto-optimal equilibria $(C, C)$, as shown in Figure \ref{fig:final-policy}, which is also corroborated by our experiment (Section \ref{subsec5.1}).

The explanation is that maximizing the standard joint objective (Equation \ref{eq2}) leads to convergence towards the deterministic policies exploiting the local optimum they have explored so far \citep{Sutton1998}. Specifically, the standard objective discourages any unilateral deviation from the local optimum as it will result in a decrease in the joint reward. As a result, agents tend to deterministically converge to the suboptimal NE. To address this issue, we propose to learn stochastic behaviors, which always preserve the probability of selecting currently underexplored actions, and will eventually converge to the QRE where action probability is proportional to the exponential of action value. We will show in the experimental section that our method could successfully converge to the Pareto-optimal equilibrium $(C, C)$, even when it has a higher probability of choosing action $(A, A)$ initially.

\section{Method}
\label{methods}

In this section, we establish \emph{maximum entropy heterogeneous-agent reinforcement learning}
- a framework for learning stochastic policies in cooperative MARL settings, which alleviates the suboptimal convergence problem mentioned in Section \ref{limitations}. We name it \emph{heterogeneous-agent} (HA) as it is a substantial improvement to HARL \citep{zhong2023heterogeneousagent}, its Proposition \ref{jspd} builds upon the prior advantage decomposition lemma \citep{kuba2022trust}, and it is generally applicable to HA settings.   This framework encompasses three key components, including the derivation of MaxEnt MARL objective from a probabilistic inference perspective in Section \ref{MaxEntMARL}, the HASPI procedure and HASAC algorithm in Section \ref{HASAC}, and the unified MEHAML template for theoretically-justified algorithmic design in Section \ref{MEHAML}. 

\subsection{Maximum Entropy Multi-Agent Reinforcement Learning}
\label{MaxEntMARL}

\begin{wrapfigure}{r}{0.4\textwidth}

\tikzstyle{obs}=[circle,
minimum width =10pt ,
minimum height =10pt ,draw=purple, line width=0.7pt]
\tikzstyle{state}=[circle,
minimum width =20pt ,
minimum height =20pt ,draw=orange, line width=0.7pt]
\tikzstyle{action}=[circle,
minimum width =10pt ,
minimum height =10pt ,draw=blue, line width=0.7pt]
\tikzstyle{lightedge}=[<-, line width=0.7pt]
\tikzstyle{bendedge}=[<-, line width=0.7pt, bend left=60]

\begin{tikzpicture}[]

\draw[rounded corners, fill=blue!5] (-0.6,1.65) rectangle (0.6,2.85);
\node[action] (a11) at (0,2.25) {\adjustbox{max width=10pt}{$\mathrm{a}_1^i$}};
\node at (0.4,2.7) {$n$};
\draw[rounded corners, fill=blue!5] (1.2,1.65) rectangle (2.4,2.85);
\node[action] (a21) at (1.8,2.25) {\adjustbox{max width=10pt}{$\mathrm{a}_2^i$}};
\node at (2.2,2.7) {$n$};
\draw[rounded corners, fill=blue!5] (3.0,1.65) rectangle (4.2,2.85);
\node[action] (a31) at (3.6,2.25) {\adjustbox{max width=10pt}{$\mathrm{a}_3^i$}};
\node at (4,2.7) {$n$};

\node[state] (s1) at (0,1) {$\mathrm{s}_1$};
\node[state] (s2) at (1.8,1) {$\mathrm{s}_2$}
    edge [lightedge] (s1)
    edge [lightedge] (a11);
\node[state] (s3) at (3.6,1) {$\mathrm{s}_3$}
    edge [lightedge] (s2)
    edge [lightedge] (a21);

\node (cdots) at (5.4, 1) {$\bm{\cdots}$}
    edge [lightedge] (s3)
    edge [lightedge] (a31);

\node[obs] (o1) at (0,3.7) {$\mathcal{O}_1$}
    edge [bendedge] (s1)
    edge [lightedge] (a11);
\node[obs] (o2) at (1.8,3.7) {$\mathcal{O}_2$}
    edge [bendedge] (s2)
    edge [lightedge] (a21);
\node[obs] (o3) at (3.6,3.7) {$\mathcal{O}_3$}
    edge [bendedge] (s3)
    edge [lightedge] (a31);

\end{tikzpicture}
\caption{The probabilistic graphical model for cooperative MARL.}
\label{fig:PGM}
\vspace{-5pt}
\end{wrapfigure}
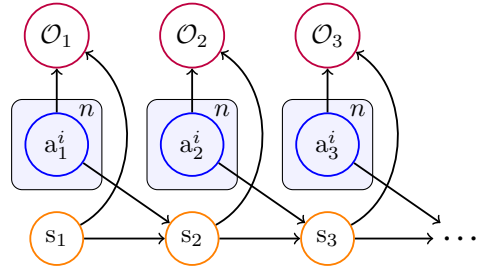

To formalize the idea of learning stochastic policies $\boldsymbol{\pi}\left(\cdot | \mathrm{s}_t\right)$, we embed cooperative MARL problem into the PGM (Figure \ref{fig:PGM}). Following \citet{levine2018reinforcement}, we introduce an additional optimality variable $\mathcal{O}_t$, which takes on binary values indicating the optimality of joint actions taken by all agents. Specifically, $\mathcal{O}_t = 1$ denotes that time step $t$ is optimal, and we model it as $p(\mathcal{O}_t = 1 | s_t, \boldsymbol{a}_t) \propto \exp(r(s_t,\boldsymbol{a}_t)).$ Then we use structured variational inference to approximate the posterior distribution  $p(\tau | \mathcal{O}_{1:T} = 1) \propto \left[ p(\mathrm{s}_1) \prod_{t=1}^T p(\mathrm{s}_{t+1}|\mathrm{s}_t,\mathbf{a}_t) \right] \exp\left( \sum_{t=1}^T r(\mathrm{s}_t,\mathbf{a}_t) \right)$ over trajectory $\tau$ with the distribution $q(\tau) = q(\mathrm{s}_1) \prod_{t=1}^T q(\mathrm{s}_{t+1}|\mathrm{s}_t,\mathbf{a}_t) q(\mathbf{a}_t|\mathrm{s}_t)$, where we fix the environment dynamics $q(\mathrm{s}_1)=p(\mathrm{s}_1)$ and $q(\mathrm{s}_{t+1}|\mathrm{s}_t,\mathbf{a}_t)=p(\mathrm{s}_{t+1}|\mathrm{s}_t,\mathbf{a}_t)$ to avoid risk-seeking behaviors \citep{levine2018reinforcement}. This inference procedure leads to the \emph{maximum entropy} (MaxEnt) objective of MARL (see Appendix \ref{proofs of maxent objective}):
\small
\begin{equation}
\label{eq3}
J(\boldsymbol{\pi})=\mathbb{E}_{\mathrm{s}_{1: T} \sim \rho_{\boldsymbol{\pi}}^{1: T}, \mathbf{a}_{1: T} \sim \boldsymbol{\pi}}\left[\sum_{t=1}^{T} \left(r\left(\mathrm{s}_t, \mathbf{a}_t\right)+\alpha\sum_{i=1}^{n}{\mathcal{H}\left(\pi^i\left(\cdot | \mathrm{s}_t\right)\right)}\right)\right],
\end{equation}
\normalsize
where $\alpha$ is the temperature constant that trades off between reward and entropy maximization, and when $\alpha=0$, the objective is reduced to standard MARL.

Augmenting standard MARL objective with an entropy term (Equation \ref{eq3}) aligns with the solution concept of \emph{quantal response equilibrium} (QRE) proposed by \citet{mckelvey1995quantal}, which is a generalization of the standard notion of \emph{Nash equilibrium} (NE) in game theory. In a QRE, payoffs are perturbed by additive disturbances (entropy term in our case) so that players do not deterministically choose the strategy with the highest observed payoff, but rather assign the probability mass in its strategies according to every strategy's payoff \citep{goeree2020stochastic}. The following Theorem \ref{qre} shows that the QRE policies of MaxEnt objective can be represented as Boltzmann distributions:
\begin{restatable}[\textbf{Representation of QRE}]{thm}{qre}
\label{qre}
A joint policy $\boldsymbol{\pi}_{\text{QRE}} \in \boldsymbol{\Pi}$ is a QRE if none of the agents can increase the maximum entropy objective (Equation \ref{eq3}) by unilaterally altering its policy, i.e.,
\small
\[
\forall i \in \mathcal{N}, \forall \pi^i \in \Pi^i, J\left(\pi^i, \boldsymbol{\pi}_{\text{QRE}}^{-i}\right) \leq J\Bigl(\boldsymbol{\pi}_{\text{QRE}}\Bigr).
\]
\normalsize
Then the QRE policies are given by
\small
\begin{equation}
\label{foqre}
\forall i \in \mathcal{N}, \pi_{\text{QRE}}^i\left(a^i | s\right):=\frac{\exp \left(\alpha^{-1} \mathbb{E}_{\mathbf{a}^{-i} \sim \boldsymbol{\pi}_{\text{QRE}}^{-i}}\left[Q_{\boldsymbol{\pi}_{\text{QRE}}}\left(s, a^i, \mathbf{a}^{-i}\right)\right]\right)}{\sum_{b^i \in \mathcal{A}^i} \exp \left(\alpha^{-1} \mathbb{E}_{\mathbf{a}^{-i} \sim \boldsymbol{\pi}_{\text{QRE}}^{-i}}\left[Q_{\boldsymbol{\pi}_{\text{QRE}}}\left(s, b^i, \mathbf{a}^{-i}\right)\right]\right)},
\end{equation}
\normalsize
where the soft Q-functions are defined as follows,
\small
\begin{equation}
\label{softq}
Q_{\boldsymbol{\pi}}(s,\boldsymbol{a})=r\left(s, \boldsymbol{a}\right) + \mathbb{E}_{ \mathbf{a}_{1: \infty} \sim \boldsymbol{\pi}, \mathrm{s}_{1: \infty} \sim P}\left[\sum_{t=1}^{\infty} \gamma^t \left(r\left(\mathrm{s}_t, \mathbf{a}_t\right)+\alpha \sum_{i=1}^{n}{\mathcal{H}\left(\pi^i\left(\cdot | \mathrm{s}_t\right)\right)}\right)\Bigg| s_0=s,\mathbf{a}_{0}=\boldsymbol{a}\right].
\end{equation}
\end{restatable}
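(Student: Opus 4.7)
}
My plan is to reduce the multi-agent QRE characterization to a sequence of single-agent entropy-regularized best-response problems, one per agent, and then apply the standard Boltzmann form known from single-agent MaxEnt RL. Concretely, suppose $\boldsymbol{\pi}_{\text{QRE}}$ is a QRE. Fix an arbitrary agent $i$ and hold the policies $\boldsymbol{\pi}_{\text{QRE}}^{-i}$ of all other agents constant. Since $\boldsymbol{\pi}_{\text{QRE}}$ is a QRE, the condition $J(\pi^i, \boldsymbol{\pi}_{\text{QRE}}^{-i}) \leq J(\boldsymbol{\pi}_{\text{QRE}})$ for all $\pi^i \in \Pi^i$ tells us that $\pi^i_{\text{QRE}}$ solves the single-agent MaxEnt problem in which all other agents together with the transition kernel form an effective environment.

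Next, I would rewrite $J(\pi^i, \boldsymbol{\pi}_{\text{QRE}}^{-i})$ in a per-agent form. Using the definition of the soft $Q$-function in \eqref{softq}, standard algebra gives
\small
\begin{equation*}
J(\pi^i, \boldsymbol{\pi}_{\text{QRE}}^{-i}) = \mathbb{E}_{\mathrm{s}\sim \rho}\mathbb{E}_{\mathrm{a}^i \sim \pi^i(\cdot|\mathrm{s})}\Bigl[\mathbb{E}_{\mathbf{a}^{-i}\sim \boldsymbol{\pi}_{\text{QRE}}^{-i}}[Q_{\boldsymbol{\pi}_{\text{QRE}}}(\mathrm{s}, \mathrm{a}^i, \mathbf{a}^{-i})] + \alpha\,\mathcal{H}(\pi^i(\cdot|\mathrm{s}))\Bigr] + C,
\end{equation*}
\normalsize
where $C$ collects the entropy contributions from $\boldsymbol{\pi}_{\text{QRE}}^{-i}$ and terms independent of $\pi^i$. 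The key observation is that, because agents act independently given $\mathrm{s}_t$, the joint entropy decomposes as $\mathcal{H}(\boldsymbol{\pi}(\cdot|\mathrm{s}_t)) = \sum_{j} \mathcal{H}(\pi^j(\cdot|\mathrm{s}_t))$, so agent $i$'s own entropy appears linearly and the rest is absorbed into $C$.

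Since the QRE condition requires $\pi^i_{\text{QRE}}$ to maximize the above expression pointwise in $s$ (because the marginal state distribution $\rho$ is nonnegative and the inner per-state objective is independent across states in the choice of the conditional $\pi^i(\cdot|s)$), I would then solve, at each $s$, the constrained convex problem
\small
\begin{equation*}
\max_{\pi^i(\cdot|s)\in\mathcal{P}(\mathcal{A}^i)} \sum_{a^i} \pi^i(a^i|s)\,\bar{Q}^i(s,a^i) + \alpha\,\mathcal{H}(\pi^i(\cdot|s)),
\end{equation*}
\normalsize
where $\bar{Q}^i(s,a^i) := \mathbb{E}_{\mathbf{a}^{-i}\sim \boldsymbol{\pi}_{\text{QRE}}^{-i}}[Q_{\boldsymbol{\pi}_{\text{QRE}}}(s,a^i,\mathbf{a}^{-i})]$. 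Applying Lagrange multipliers for the simplex constraint $\sum_{a^i} \pi^i(a^i|s) = 1$ and setting the gradient to zero yields $\pi^i(a^i|s) \propto \exp(\alpha^{-1}\bar{Q}^i(s,a^i))$, which upon normalization is exactly \eqref{foqre}.

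The main obstacle I anticipate is the justification that the pointwise maximization over each $\pi^i(\cdot|s)$ is legitimate, i.e., that the joint best-response condition reduces to per-state optimization. This requires arguing that the marginal state distribution $\rho_{\boldsymbol{\pi}}$ has full support on the reachable states (or, equivalently, that altering $\pi^i(\cdot|s)$ on a state with zero occupancy does not affect $J$, so the characterization holds almost everywhere). A clean route is to invoke soft policy evaluation: the soft Bellman operator is a contraction, so $Q_{\boldsymbol{\pi}_{\text{QRE}}}$ is well-defined, and standard policy-iteration arguments from single-agent MaxEnt RL \citep{haarnoja2018soft, levine2018reinforcement}, applied in agent $i$'s effective MDP, yield that the unique maximizer of the entropy-regularized return is the softmax policy above. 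Finally, one must verify self-consistency: the $\bar{Q}^i$ used to define $\pi^i_{\text{QRE}}$ is computed under the joint $\boldsymbol{\pi}_{\text{QRE}}$ itself, so the equations \eqref{foqre} for $i=1,\dots,n$ form a coupled fixed-point system, which is exactly what the QRE definition asserts.
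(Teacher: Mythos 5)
Your proposal follows essentially the same route as the paper's proof: fix agent $i$, hold $\boldsymbol{\pi}_{\text{QRE}}^{-i}$ fixed, reduce the unilateral-deviation condition to the per-state concave maximization of $\mathbb{E}_{\mathrm{a}^i\sim\pi^i}\bigl[\bar{Q}^i(s,\mathrm{a}^i)\bigr]+\alpha\,\mathcal{H}\bigl(\pi^i(\cdot\,|\,s)\bigr)$ over the simplex, and solve it with a Lagrange multiplier for the normalization constraint to obtain the Boltzmann form, with the coupled fixed-point self-consistency across agents. The paper carries out exactly this Lagrangian computation (and is, if anything, less explicit than you are about justifying the passage from the global objective $J$ to the per-state problem), so the two arguments coincide in substance.
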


Proof can be found in Appendix \ref{proofs of qre}. Theorem \ref{qre} illustrates the connection between QRE policies and an energy-based model, where the term $\frac{1}{\alpha}Q_{\boldsymbol{\pi}_{\text{QRE}}}(s,\boldsymbol{a})$ serves as the negative energy. When $\alpha > 0$, the QRE policies (Equation \ref{foqre}) are no longer deterministic but rather can represent all the ways of performing a task. This suggests that the inclusion of entropy term makes the policy $\boldsymbol{\pi}_{\text{QRE}}$ stochastic, enabling more effective exploration of the environment \citep{nachum2017bridging}, which is consistent with our initial goal of learning stochastic policies in MARL settings.

\subsection{Heterogeneous-Agent Soft Actor-Critic}
\label{HASAC}

In this subsection, we develop \emph{heterogeneous-agent soft policy iteration} (HASPI) to maximize MaxEnt objective (Equation \ref{eq3}), which alternates between joint soft policy evaluation and heterogeneous-agent soft policy improvement, and then derive HASAC based on this theory.

\subsubsection{Heterogeneous-Agent Soft Policy Iteration}
In joint soft policy evaluation step of HASPI, we compute soft Q-value from any $Q(s, \boldsymbol{a}): \mathcal{S} \times \boldsymbol{\mathcal{A}} \rightarrow \mathbb{R}$ by repeatedly applying a soft Bellman backup operator $\Gamma_{\boldsymbol{\pi}}$ given by:
\begin{align}
&\Gamma_{\boldsymbol{\pi}}Q(s,\boldsymbol{a}) \triangleq r(s, \boldsymbol{a}) + \gamma\mathbb{E}_{\mathrm{s}^{\prime} \sim P}\left[V\left(\mathrm{s}^{\prime}\right)\right], \label{Bellman} \\
\text{where} \quad &V(s)=\mathbb{E}_{ \mathbf{a} \sim \boldsymbol{\pi}}\left[Q(s,\mathbf{a})+\alpha\sum_{i=1}^{n}{\mathcal{H}\left(\pi^i\left(\cdot^i | s\right)\right)}\right]. \label{V}
\end{align}
is the soft value function. We can obtain the soft Q-function of any joint policy $\boldsymbol{\pi}$ as shown in Lemma \ref{spe}. Notably, the same method for updating soft Q-function has been proposed in FOP \citep{zhang2021fop} since it is the straightforward application of soft Bellman equation.
\begin{restatable}[\textbf{Joint Soft Policy Evaluation}]{lma}{spe} Consider the soft Bellman backup operator $\Gamma_{\boldsymbol{\pi}}$ and a mapping $Q_0:\mathcal{S} \times \boldsymbol{\mathcal{A}} \rightarrow \mathbb{R}$ with $|\boldsymbol{\mathcal{A}}| < \infty$, and define $Q_{k+1} = \Gamma_{\boldsymbol{\pi}}Q_k$. Then the sequence $Q_k$ will converge to the joint soft Q-function
$\boldsymbol{\pi}$ as $k \rightarrow \infty$. 
\label{spe}
\end{restatable}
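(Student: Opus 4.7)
The plan is to prove this as a standard Banach fixed-point argument, mirroring the soft policy evaluation result in single-agent SAC but now with the entropy of the joint policy decomposed into a sum over agents. The crucial observation is that the entropy term in the soft value $V(s)$ only depends on the fixed policy $\boldsymbol{\pi}$ and the state $s$, not on $Q$, so it can be absorbed into an ``entropy-augmented reward.''

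First, I would rewrite the backup operator in the form
\small
\[
\Gamma_{\boldsymbol{\pi}} Q(s,\boldsymbol{a}) = r_{\boldsymbol{\pi}}(s,\boldsymbol{a}) + \gamma\, \mathbb{E}_{\mathrm{s}' \sim P,\, \mathbf{a}' \sim \boldsymbol{\pi}(\cdot|\mathrm{s}')}\bigl[Q(\mathrm{s}',\mathbf{a}')\bigr],
\]
\normalsize
where $r_{\boldsymbol{\pi}}(s,\boldsymbol{a}) \triangleq r(s,\boldsymbol{a}) + \gamma \alpha\, \mathbb{E}_{\mathrm{s}' \sim P}\bigl[\sum_{i=1}^n \mathcal{H}(\pi^i(\cdot | \mathrm{s}'))\bigr]$. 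Since $|\boldsymbol{\mathcal{A}}| < \infty$ and $r$ is bounded, each $\mathcal{H}(\pi^i(\cdot|s)) \le \log|\mathcal{A}^i|$, so $r_{\boldsymbol{\pi}}$ is uniformly bounded; hence the space of bounded functions $\mathcal{S} \times \boldsymbol{\mathcal{A}} \to \mathbb{R}$ equipped with the sup norm is a complete metric space on which $\Gamma_{\boldsymbol{\pi}}$ acts.

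Next I would verify that $\Gamma_{\boldsymbol{\pi}}$ is a $\gamma$-contraction in the sup norm. For any two bounded $Q, Q'$, the entropy-augmented reward cancels in the difference, so
\small
\[
\bigl| \Gamma_{\boldsymbol{\pi}}Q(s,\boldsymbol{a}) - \Gamma_{\boldsymbol{\pi}}Q'(s,\boldsymbol{a}) \bigr|
= \gamma\,\bigl|\mathbb{E}_{\mathrm{s}',\mathbf{a}'}[Q(\mathrm{s}',\mathbf{a}') - Q'(\mathrm{s}',\mathbf{a}')]\bigr|
\le \gamma\,\|Q - Q'\|_\infty,
\]
\normalsize
so taking the sup over $(s,\boldsymbol{a})$ yields $\|\Gamma_{\boldsymbol{\pi}}Q - \Gamma_{\boldsymbol{\pi}}Q'\|_\infty \le \gamma \|Q-Q'\|_\infty$ with $\gamma \in [0,1)$. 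By the Banach fixed-point theorem, iterating $\Gamma_{\boldsymbol{\pi}}$ from any $Q_0$ produces a sequence $\{Q_k\}$ converging in sup norm to a unique fixed point $Q^\star$.

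Finally I would identify $Q^\star$ with the joint soft Q-function of $\boldsymbol{\pi}$ defined in Equation~\ref{softq}. Unrolling the recursion $Q^\star = \Gamma_{\boldsymbol{\pi}} Q^\star$ and using the tower property of expectations (together with the fact that $r_{\boldsymbol{\pi}}$ already carries the entropy bonus at the next step) recovers the infinite-horizon discounted sum $r(s,\boldsymbol{a}) + \mathbb{E}\bigl[\sum_{t \ge 1} \gamma^t (r(\mathrm{s}_t,\mathbf{a}_t) + \alpha \sum_i \mathcal{H}(\pi^i(\cdot|\mathrm{s}_t)))\bigr]$, which matches Equation~\ref{softq} exactly. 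The only delicate step is the bookkeeping of where the first entropy term sits (index shift by one because $r_{\boldsymbol{\pi}}$ adds entropy at time $t{+}1$), but this is routine; the main conceptual obstacle is simply recognizing that the entropy term can be pulled into a bounded shaped reward so the contraction argument goes through unchanged from the single-agent case.
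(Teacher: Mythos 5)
Your proposal is correct and follows essentially the same route as the paper: absorb the (policy-dependent, $Q$-independent) entropy bonus into an entropy-augmented reward $r_{\boldsymbol{\pi}}$, rewrite $\Gamma_{\boldsymbol{\pi}}$ as a standard policy-evaluation backup, and invoke the standard contraction-based convergence result, which you simply spell out explicitly via the Banach fixed-point theorem rather than citing it. If anything, your version is slightly more careful than the paper's, since you keep the $\gamma\alpha$ factor in $r_{\boldsymbol{\pi}}$ and explicitly identify the fixed point with the soft Q-function of Equation~\ref{softq}.
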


Proof can be found in \ref{proof of spe}. In policy improvement step, we show that the joint policy $\boldsymbol{\pi}$ can be updated based on individual policy updates. To this end, we first introduce the following definition. 
\begin{definition}
Let $i_{1: m}=\{i_1, \ldots, i_m\} \subseteq$ $\mathcal{N}$ be an ordered subset of agents, and let $-i_{1: m}$ refer to its complement. We write $i_k$ when we refer to the $k^{\text {th }}$ agent in the ordered subset. Correspondingly, the multi-agent soft Q-function is defined as
\small
\begin{equation}
\label{eq8}
Q_{\boldsymbol{\pi}}^{i_{1: m}}\left(s, \boldsymbol{a}^{i_{1: m}}\right) \triangleq \mathbb{E}_{\mathbf{a}^{-i_{1: m}} \sim \boldsymbol{\pi}^{-i_{1: m}}}\left[Q_{\boldsymbol{\pi}}\left(s, \boldsymbol{a}^{i_{1: m}}, \mathbf{a}^{-i_{1: m}}\right)+\alpha\sum_{i \in -i_{1: m}}{\mathcal{H}\left(\pi^i\left(\cdot^i | s\right)\right)}\right].
\end{equation}
\normalsize
\end{definition}
In the case where $m=n$, $Q_{\boldsymbol{\pi}}^{i_{1:n}}\left(s, \boldsymbol{a}^{i_{1:n}}\right)$ takes the form $Q_{\boldsymbol{\pi}}(s, \boldsymbol{a})$, representing the joint soft Q-function. When $m=0$, $i.e.$, $i_{1:m}=\emptyset$, the function represents the soft value function $V_{\boldsymbol{\pi}}(s)$.

With this notation defined, we introduce a pivotal proposition that shows the joint soft policy update can be decomposed into a multiplication of sequential local policy updates.
\begin{restatable}[\textbf{Joint Soft Policy Decomposition}]{prop}{decomp} Let $\pi$ be a joint policy, and $i_{1: n} \in \operatorname{Sym}(n)$ be an agent permutation. Suppose that, for each state s and every $m = 1, \ldots, n$, 
\label{jspd}
\small
\begin{equation}
\label{individualKL}
\pi^{i_m}_{\text{new}} = \arg\min_{\pi^{i_m} \in \Pi^{i_m}} \mathrm{D}_{\mathrm{KL}}\left(\pi^{i_m}\left(\cdot^{i_m} | s\right) \| \frac{\exp \left(\mathbb{E}_{\mathrm{a}^{i_{1:m-1}} \sim \boldsymbol{\pi}_{\text{new}}^{i_{1:m-1}}}\left[\frac{1}{\alpha} Q^{i_{1:m}}_{\boldsymbol{\pi}_{\text {old }}}\left(s, \mathbf{a}^{i_{1:m-1}}, \cdot^{i_m}\right)\right]\right)}{\mathbb{E}_{\mathrm{a}^{i_{1:m-1}} \sim \boldsymbol{\pi}_{\text{new}}^{i_{1:m-1}}}\left[Z_{\boldsymbol{\pi}_{\text {old }}}\left(s, \mathbf{a}^{i_{1:m-1}}\right)\right]}\right),
\end{equation}
\normalsize
where $\mathbf{a}^{i_{1:m-1}}$ is drawn from the policy $\boldsymbol{\pi}^{i_{1:m-1}}_{\text{new}}\left(\cdot | s\right)$ and the partition function $Z_{\boldsymbol{\pi}_{\text {old }}}\left(s, \mathbf{a}^{i_{1:m-1}}\right)$ normalizes the distribution. Then the joint policy satisfies the following:
\small
\begin{equation}
\label{jointKL}
\boldsymbol{\pi}_{\text{new}} = \arg\min_{\boldsymbol{\pi} \in \boldsymbol{\Pi}} \mathrm{D}_{\mathrm{KL}}\left(\boldsymbol{\pi}\left(\cdot | s\right) \| \frac{\exp \left(\frac{1}{\alpha} Q_{\boldsymbol{\pi}_{\text {old }}}\left(s, \cdot\right)\right)}{Z_{\boldsymbol{\pi}_{\text {old }}}\left(s\right)}\right).
\end{equation}
\normalsize
\end{restatable}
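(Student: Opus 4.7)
Plan: I would rewrite the joint KL on the right-hand side of Equation \ref{jointKL} as a sum of per-agent contributions that match, term by term, the individual KL divergences in Equation \ref{individualKL}, and then exploit the sequential update structure to argue that sequentially minimizing the individual KLs jointly attains the joint minimum over $\boldsymbol{\Pi}$.

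I would start by expanding the KL via its definition. Using the product structure $\boldsymbol{\pi}(\boldsymbol{a}|s) = \prod_{m=1}^{n}\pi^{i_m}(a^{i_m}|s)$, the joint entropy splits as $\mathcal{H}(\boldsymbol{\pi}) = \sum_{m}\mathcal{H}(\pi^{i_m})$, giving
\[
D_{\mathrm{KL}}\!\left(\boldsymbol{\pi}(\cdot|s)\,\|\,\frac{\exp(\alpha^{-1}Q_{\boldsymbol{\pi}_{\mathrm{old}}}(s,\cdot))}{Z_{\boldsymbol{\pi}_{\mathrm{old}}}(s)}\right) = -\sum_{m=1}^{n}\mathcal{H}(\pi^{i_m}) - \frac{1}{\alpha}\mathbb{E}_{\boldsymbol{a}\sim\boldsymbol{\pi}}[Q_{\boldsymbol{\pi}_{\mathrm{old}}}(s,\boldsymbol{a})] + \log Z_{\boldsymbol{\pi}_{\mathrm{old}}}(s).
\]
I would then apply the telescoping identity implied by Equation \ref{eq8}: since $Q_{\boldsymbol{\pi}_{\mathrm{old}}}^{i_{1:n}} = Q_{\boldsymbol{\pi}_{\mathrm{old}}}$ and $Q_{\boldsymbol{\pi}_{\mathrm{old}}}^{\emptyset} = V_{\boldsymbol{\pi}_{\mathrm{old}}}$,
\[
Q_{\boldsymbol{\pi}_{\mathrm{old}}}(s,\boldsymbol{a}) = V_{\boldsymbol{\pi}_{\mathrm{old}}}(s) + \sum_{m=1}^{n}\bigl[Q_{\boldsymbol{\pi}_{\mathrm{old}}}^{i_{1:m}}(s,\boldsymbol{a}^{i_{1:m}}) - Q_{\boldsymbol{\pi}_{\mathrm{old}}}^{i_{1:m-1}}(s,\boldsymbol{a}^{i_{1:m-1}})\bigr].
\]
Because $Q_{\boldsymbol{\pi}_{\mathrm{old}}}^{i_{1:m}}$ only depends on the first $m$ agents' actions, the outer expectations collapse onto marginals: $\mathbb{E}_{\boldsymbol{a}\sim\boldsymbol{\pi}}[Q_{\boldsymbol{\pi}_{\mathrm{old}}}^{i_{1:m}}] = \mathbb{E}_{\boldsymbol{a}^{i_{1:m}}\sim\boldsymbol{\pi}^{i_{1:m}}}[Q_{\boldsymbol{\pi}_{\mathrm{old}}}^{i_{1:m}}]$.

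Combining these, I would regroup the joint KL as a sum of $n$ per-agent objectives, where the $m$-th objective takes the form
\[
-\mathcal{H}(\pi^{i_m}) - \frac{1}{\alpha}\mathbb{E}_{a^{i_m}\sim\pi^{i_m}}\!\left[\mathbb{E}_{\boldsymbol{a}^{i_{1:m-1}}\sim\boldsymbol{\pi}^{i_{1:m-1}}}[Q_{\boldsymbol{\pi}_{\mathrm{old}}}^{i_{1:m}}(s,\boldsymbol{a}^{i_{1:m-1}},a^{i_m})]\right]
\]
plus terms independent of $\pi^{i_m}$. Each such per-agent objective coincides, up to its $\pi^{i_m}$-independent log-normalizer, with the individual KL divergence appearing in Equation \ref{individualKL}. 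Therefore the sequential minimizer $\pi^{i_m}_{\mathrm{new}}$, given the already-updated $\boldsymbol{\pi}^{i_{1:m-1}}_{\mathrm{new}}$, also minimizes the $m$-th summand of the joint KL; applying this update in the order $m=1,\ldots,n$ produces a $\boldsymbol{\pi}_{\mathrm{new}}$ attaining the joint argmin in Equation \ref{jointKL}.

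The main obstacle is the bookkeeping in the regrouping step. The multi-agent soft Q-functions $Q_{\boldsymbol{\pi}_{\mathrm{old}}}^{i_{1:m}}$ carry entropy corrections $\alpha\sum_{i\in -i_{1:m}}\mathcal{H}(\pi^i_{\mathrm{old}})$ for the complement agents, and cross-terms from summands $m' > m$ that also depend on $\pi^{i_m}$ must be shown to collapse into constants once the earlier sequential updates are fixed. Verifying this cancellation carefully, and checking that the $\boldsymbol{a}^{i_{1:m-1}}$-averaged normalizer $Z_{\boldsymbol{\pi}_{\mathrm{old}}}(s,\boldsymbol{a}^{i_{1:m-1}})$ in Equation \ref{individualKL} is exactly the Boltzmann partition function for the $m$-th per-agent minimizer, is the technically delicate part of the proof.
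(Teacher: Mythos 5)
Your reduction of the joint KL to a telescoping per-agent form is fine in itself --- it is just the multi-agent advantage decomposition (Lemma \ref{lemma1}) combined with the product structure of $\boldsymbol{\pi}$ --- but the final inference is where the argument breaks. After the regrouping, the $m$-th summand,
\[
-\alpha\,\mathcal{H}\bigl(\pi^{i_m}\bigr)\;-\;\mathbb{E}_{\boldsymbol{a}^{i_{1:m}}\sim\boldsymbol{\pi}^{i_{1:m}}}\bigl[Q^{i_{1:m}}_{\boldsymbol{\pi}_{\mathrm{old}}}(s,\boldsymbol{a}^{i_{1:m}})-Q^{i_{1:m-1}}_{\boldsymbol{\pi}_{\mathrm{old}}}(s,\boldsymbol{a}^{i_{1:m-1}})\bigr],
\]
depends on \emph{all} of $\pi^{i_1},\dots,\pi^{i_m}$, not on $\pi^{i_m}$ alone. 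The cross-terms you hope will "collapse into constants" are constants only after the earlier agents have been frozen at $\boldsymbol{\pi}^{i_{1:m-1}}_{\mathrm{new}}$; for the joint minimization over $\boldsymbol{\Pi}$ demanded by Equation \ref{jointKL} they are functions of exactly the variables being optimized. What your plan actually establishes is a greedy, coordinate-wise property --- each $\pi^{i_m}_{\mathrm{new}}$ minimizes its own summand given the already-updated predecessors --- and a single sequential pass of greedy minimizations does not in general minimize a sum whose later terms depend on the earlier variables: the choice of $\pi^{i_1}_{\mathrm{new}}$ that minimizes the first summand can inflate the later summands, so joint optimality does not follow. This coupling is the entire content of Proposition \ref{jspd}, not a bookkeeping detail, so the proposal as written has a genuine gap precisely at its concluding step.

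The paper does not attempt a termwise identification at all; it argues by contradiction. It assumes some product policy $\bar{\boldsymbol{\pi}}$ attains a strictly smaller joint KL, i.e.\ a strictly larger $\mathbb{E}_{\mathbf{a}\sim\bar{\boldsymbol{\pi}}}\bigl[Q_{\boldsymbol{\pi}_{\mathrm{old}}}(s,\mathbf{a})\bigr]+\alpha\sum_i\mathcal{H}\bigl(\bar{\pi}^i\bigr)$, and then, for every $m$, invokes the optimality of $\pi^{i_m}_{\mathrm{new}}$ in Equation \ref{individualKL} against the \emph{specific competitor} $\bar{\pi}^{i_m}$ while the first $m-1$ agents are held at $\boldsymbol{\pi}^{i_{1:m-1}}_{\mathrm{new}}$; subtracting $\mathbb{E}\bigl[Q^{i_{1:m-1}}_{\boldsymbol{\pi}_{\mathrm{old}}}\bigr]$ converts these to advantage form, and summing over $m$ and applying Lemma \ref{lemma1} reassembles the two sides into joint quantities that contradict the assumed strict improvement. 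In other words, the comparison with an arbitrary alternative policy is built into the per-agent inequalities from the outset --- that is the device used to bridge from per-agent optimality to the joint argmin --- whereas your plan defers exactly this bridge to "careful verification" without an idea for carrying it out. To salvage your route you would need an analogous competitor-based inequality (or some other control of how a deviation by agent $i_m$ propagates through the summands with index $m'>m$), rather than expecting the coupling to vanish.
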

Proof can be found in \ref{proof of jspd}. Proposition \ref{jspd} holds significance due to the crucial insight it provides, suggesting that a MaxEnt MARL problem can be considered as a sum of $n$ MaxEnt RL problems. It indicates an effective \emph{heterogeneous-agent} approach to improving the joint soft policy in multi-agent learning, where each agent optimizes individual KL-divergence sequentially leading to the optimization of joint soft policy. To formally extend the above process into a policy improvement procedure with theoretical guarantees of monotonic improvement and convergence to QRE, we propose the \emph{heterogeneous-agent soft policy improvement} as formalized below.

\begin{restatable}[\textbf{Heterogeneous-Agent Soft Policy Improvement}]{lma}{spi} Let $i_{1: n} \in \operatorname{Sym}(n)$ be an agent permutation, and for every $m = 1, \ldots, n$, let policy $\pi^{i_m}_{\text{old}} \in \Pi^{i_m}$ and $\pi^{i_m}_{\text{new}}$ be the optimizer of the minimization problem defined in Equation \ref{individualKL}. Then $Q_{\boldsymbol{\pi}_{\text{new}}}(s, \boldsymbol{a}) \geq Q_{\boldsymbol{\pi}_{\text{old}}}(s, \boldsymbol{a})$ for all $(s, \boldsymbol{a}) \in \mathcal{S} \times \boldsymbol{\mathcal{A}}$ with $|\boldsymbol{\mathcal{A}}| < \infty$ and $J\left(\boldsymbol{\pi}_{\text{new}}\right) \geq J\left(\boldsymbol{\pi}_{\text{old}}\right)$.
\label{spi}
\end{restatable}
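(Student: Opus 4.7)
The plan is to reduce the multi-agent soft improvement argument to a SAC-style monotonicity proof by leveraging Proposition \ref{jspd}. The hypothesis of the lemma is that each $\pi^{i_m}_{\text{new}}$ solves the sequential per-agent KL problem in Equation \ref{individualKL}; by Proposition \ref{jspd} this is exactly the condition ensuring that the resulting joint policy $\boldsymbol{\pi}_{\text{new}}$ solves the joint KL problem in Equation \ref{jointKL}. Since $\boldsymbol{\pi}_{\text{old}} \in \boldsymbol{\Pi}$ is itself feasible, optimality of $\boldsymbol{\pi}_{\text{new}}$ gives the one-step inequality
$$
\mathbb{E}_{\mathbf{a} \sim \boldsymbol{\pi}_{\text{new}}}\!\left[\log \boldsymbol{\pi}_{\text{new}}(\mathbf{a}|s) - \tfrac{1}{\alpha} Q_{\boldsymbol{\pi}_{\text{old}}}(s,\mathbf{a})\right] \;\leq\; \mathbb{E}_{\mathbf{a} \sim \boldsymbol{\pi}_{\text{old}}}\!\left[\log \boldsymbol{\pi}_{\text{old}}(\mathbf{a}|s) - \tfrac{1}{\alpha} Q_{\boldsymbol{\pi}_{\text{old}}}(s,\mathbf{a})\right],
$$
where $\log Z_{\boldsymbol{\pi}_{\text{old}}}(s)$ cancels on both sides since it does not depend on the argument $\boldsymbol{\pi}$. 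Using the product structure $\boldsymbol{\pi}(\cdot|s)=\prod_i \pi^i(\cdot^i|s)$ so that $-\mathbb{E}[\log \boldsymbol{\pi}(\mathbf{a}|s)] = \sum_{i=1}^n \mathcal{H}(\pi^i(\cdot|s))$, and recalling the definition of the soft value function in Equation \ref{V}, this rearranges to
$$
\mathbb{E}_{\mathbf{a} \sim \boldsymbol{\pi}_{\text{new}}}\!\left[Q_{\boldsymbol{\pi}_{\text{old}}}(s,\mathbf{a}) + \alpha \sum_{i=1}^n \mathcal{H}\bigl(\pi^i_{\text{new}}(\cdot|s)\bigr)\right] \;\geq\; V_{\boldsymbol{\pi}_{\text{old}}}(s).
$$

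Next, I would feed this inequality back into the soft Bellman expansion of $Q_{\boldsymbol{\pi}_{\text{old}}}$ (Equation \ref{Bellman}),
$$
Q_{\boldsymbol{\pi}_{\text{old}}}(s,\boldsymbol{a}) = r(s,\boldsymbol{a}) + \gamma\,\mathbb{E}_{\mathrm{s}'\sim P}\!\left[V_{\boldsymbol{\pi}_{\text{old}}}(\mathrm{s}')\right],
$$
bounding each $V_{\boldsymbol{\pi}_{\text{old}}}(\mathrm{s}')$ by the expectation of $Q_{\boldsymbol{\pi}_{\text{old}}}+\alpha\sum_i\mathcal{H}(\pi^i_{\text{new}})$ under $\boldsymbol{\pi}_{\text{new}}$ and repeating. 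This is the standard telescoping step; finiteness of $|\boldsymbol{\mathcal{A}}|$ and $\gamma\in[0,1)$ ensure absolute convergence of the resulting series, whose limit is precisely $Q_{\boldsymbol{\pi}_{\text{new}}}(s,\boldsymbol{a})$ by Lemma \ref{spe}. This gives $Q_{\boldsymbol{\pi}_{\text{new}}}(s,\boldsymbol{a}) \geq Q_{\boldsymbol{\pi}_{\text{old}}}(s,\boldsymbol{a})$ for all $(s,\boldsymbol{a})$. Integrating the corresponding $V_{\boldsymbol{\pi}_{\text{new}}}(s) \geq V_{\boldsymbol{\pi}_{\text{old}}}(s)$ against the initial distribution $d$, and matching the $t=1$ entropy term so that the resulting expression equals Equation \ref{eq3}, yields $J(\boldsymbol{\pi}_{\text{new}}) \geq J(\boldsymbol{\pi}_{\text{old}})$.

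The main obstacle I anticipate is the accounting in the first step: Proposition \ref{jspd} is stated in terms of the per-agent conditional Q-functions $Q^{i_{1:m}}_{\boldsymbol{\pi}_{\text{old}}}$ of Equation \ref{eq8}, which already marginalize out the unassigned agents' actions and absorb their entropy contributions under $\boldsymbol{\pi}_{\text{old}}$. I need to verify carefully that chaining the sequential per-agent KL updates produces exactly the joint KL against $\exp(\alpha^{-1}Q_{\boldsymbol{\pi}_{\text{old}}}(s,\cdot))/Z_{\boldsymbol{\pi}_{\text{old}}}(s)$, so that no mismatched entropy or partition-function terms remain when I compare the two sides under $\boldsymbol{\pi}_{\text{new}}$ and $\boldsymbol{\pi}_{\text{old}}$. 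Once this reduction is justified, the remainder is a direct lift of the single-agent soft policy improvement proof, since joint independence collapses $\sum_i \mathcal{H}(\pi^i(\cdot|s))$ into $\mathcal{H}(\boldsymbol{\pi}(\cdot|s))$ and the soft Bellman recursion of Equations \ref{Bellman}--\ref{V} is structurally identical to the single-agent case.
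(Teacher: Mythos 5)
Your proposal is correct and follows essentially the same route as the paper's proof: invoke Proposition~\ref{jspd} to pass from the per-agent updates to joint KL optimality, compare against the feasible $\boldsymbol{\pi}_{\text{old}}$ to obtain $\mathbb{E}_{\mathbf{a}\sim\boldsymbol{\pi}_{\text{new}}}\left[Q_{\boldsymbol{\pi}_{\text{old}}}(s,\mathbf{a})\right]+\alpha\sum_{i=1}^n\mathcal{H}\left(\pi^i_{\text{new}}(\cdot^i|s)\right)\geq V_{\boldsymbol{\pi}_{\text{old}}}(s)$, then telescope through the soft Bellman equation (with convergence to $Q_{\boldsymbol{\pi}_{\text{new}}}$ by Lemma~\ref{spe}) and integrate over $d$. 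The accounting issue you flag is exactly what Proposition~\ref{jspd} already settles, so no gap remains.
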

Proof can be found in \ref{proof of spi}. Lemma \ref{spi} guarantees that soft Q-function and MaxEnt objective monotonically increase at each policy improvement step. Next, we propose \emph{heterogeneous-agent soft policy iteration}, which alternates between joint soft policy evaluation and heterogeneous-agent soft policy improvement, and prove that joint policy $\boldsymbol{\pi}$ converges to a QRE.
\begin{restatable}[\textbf{Heterogeneous-Agent Soft Policy Iteration}]{thm}{haspi}
\label{haspi}
For any joint policy $\boldsymbol{\pi} \in \boldsymbol{\Pi}$, if we repeatedly apply joint soft policy evaluation and heterogeneous-agent soft policy improvement from $\pi^i \in \Pi^i$. Then the joint policy $\boldsymbol{\pi} = \prod^n_{i=1}\pi^i$ converges to $\boldsymbol{\pi}_\text{QRE}$ in Theorem \ref{qre}.
\end{restatable}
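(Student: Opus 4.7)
The plan is to follow the classical soft policy iteration convergence argument (as in the single-agent SAC proof), lifted to the heterogeneous-agent setting using Proposition \ref{jspd} to link sequential individual updates with a joint KL-minimization. I would construct the sequence $(\boldsymbol{\pi}_k, Q_{\boldsymbol{\pi}_k})$ obtained by alternating joint soft policy evaluation (Lemma \ref{spe}) and heterogeneous-agent soft policy improvement (Lemma \ref{spi}), and then argue convergence of this sequence to a fixed point that coincides with the QRE characterization in Theorem \ref{qre}.

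First I would establish convergence of the soft Q-values. By Lemma \ref{spi}, the sequence $\{Q_{\boldsymbol{\pi}_k}\}$ is pointwise monotonically non-decreasing on $\mathcal{S}\times\boldsymbol{\mathcal{A}}$. Because the state and joint action spaces are finite, the reward $r$ is bounded, and each entropy $\mathcal{H}(\pi^i(\cdot^i|s))$ lies in $[0, \log|\mathcal{A}^i|]$, the soft Q-function defined in Equation \ref{softq} is uniformly bounded above by $\tfrac{1}{1-\gamma}\bigl(\|r\|_\infty + \alpha\sum_{i=1}^n \log|\mathcal{A}^i|\bigr)$. The monotone convergence theorem then yields a pointwise limit $Q^\star$, and compactness of the joint policy simplex produces a convergent subsequence $\boldsymbol{\pi}_{k_j} \to \boldsymbol{\pi}^\star = \prod_{i=1}^n \pi^{\star i}$.

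Next I would characterize the limit. Because the monotonic improvement in Lemma \ref{spi} is strict unless each individual KL-minimization in Equation \ref{individualKL} leaves $\pi^{i_m}$ unchanged, at the fixed point every agent's local update is a no-op. Proposition \ref{jspd} then transfers this to the joint level: $\boldsymbol{\pi}^\star$ minimizes $D_{\mathrm{KL}}\bigl(\boldsymbol{\pi}(\cdot|s)\,\|\,\exp(\tfrac{1}{\alpha}Q^\star(s,\cdot))/Z^\star(s)\bigr)$ for every $s$. Since the KL divergence is strictly convex and the minimizer is the corresponding Boltzmann distribution, $\boldsymbol{\pi}^\star(\cdot|s)\propto\exp(\tfrac{1}{\alpha}Q^\star(s,\cdot))$. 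Marginalizing this joint Boltzmann form over $\mathbf{a}^{-i}\sim\boldsymbol{\pi}^{\star,-i}$ and applying the definition of the multi-agent soft Q-function (Equation \ref{eq8}), one recovers precisely the QRE representation of Equation \ref{foqre}, so $\boldsymbol{\pi}^\star=\boldsymbol{\pi}_{\text{QRE}}$. Since the bounded monotone sequence has a unique accumulation point, the full sequence converges.

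The main obstacle I anticipate is the interaction between the \emph{sequential} nature of the heterogeneous-agent update and the \emph{joint} characterization of the QRE. Agent $i_m$ optimizes against the already-updated policies $\boldsymbol{\pi}_{\text{new}}^{i_{1:m-1}}$, so a fixed-point argument must ensure that the resulting joint policy is simultaneously a best-response for every agent, not only for the last one to update. I would resolve this by exploiting Proposition \ref{jspd} to turn the per-agent stationarity at the limit into a joint KL stationarity condition, then use strict convexity of the KL objective together with the fact that the Boltzmann minimizer factorizes as claimed in Theorem \ref{qre}. A secondary subtlety is handling agent permutations: since Lemma \ref{spi} holds for any $i_{1:n}\in\mathrm{Sym}(n)$, the fixed-point condition actually holds for \emph{every} permutation, which rules out asymmetric limit points and secures uniqueness of the QRE characterization.
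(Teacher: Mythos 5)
Your overall skeleton (monotone, bounded soft Q-values from Lemma \ref{spi}, hence a limit, then characterize the limit point as a fixed point of the update) matches the paper's proof, but the step that characterizes the limit contains a genuine error. Proposition \ref{jspd} only tells you that at the fixed point $\boldsymbol{\pi}^\star$ minimizes $\mathrm{D}_{\mathrm{KL}}\bigl(\boldsymbol{\pi}(\cdot|s)\,\|\,\exp(\tfrac{1}{\alpha}Q_{\boldsymbol{\pi}^\star}(s,\cdot))/Z(s)\bigr)$ \emph{over the factorized class} $\boldsymbol{\Pi}$ (Equation \ref{jointKL} is an $\arg\min$ over product policies). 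You then invoke strict convexity to conclude $\boldsymbol{\pi}^\star(\cdot|s)\propto\exp(\tfrac{1}{\alpha}Q_{\boldsymbol{\pi}^\star}(s,\cdot))$, but the unconstrained Boltzmann target is in general \emph{not} a product distribution, so it typically lies outside $\boldsymbol{\Pi}$ and cannot be the constrained minimizer; moreover, minimizing a KL over the product-policy manifold is a mean-field-type problem that is not convex in the joint, so the strict-convexity argument does not apply. If your conclusion were true, the iteration would reach the globally optimal MaxEnt joint policy rather than a QRE — precisely the distinction the paper draws against FOP/IGO. The subsequent "marginalization" step is also off even on its own terms: integrating the joint Boltzmann over $\mathbf{a}^{-i}\sim\boldsymbol{\pi}^{\star,-i}$ produces $\mathbb{E}[\exp(\tfrac{1}{\alpha}Q)]$, whereas the QRE form in Equation \ref{foqre} requires $\exp(\tfrac{1}{\alpha}\mathbb{E}[Q])$.

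The paper's route avoids this entirely: at the limit point $\bar{\boldsymbol{\pi}}$, each agent's individual update (Equation \ref{individualKL}) is a no-op, which means $\bar{\pi}^i(\cdot|s)$ maximizes the per-agent objective $\mathbb{E}_{\mathrm{a}^i\sim\pi^i,\mathbf{a}^{-i}\sim\bar{\boldsymbol{\pi}}^{-i}}[Q_{\bar{\boldsymbol{\pi}}}(s,\mathbf{a})]$ plus the entropy term, with the other agents held at $\bar{\boldsymbol{\pi}}^{-i}$; the Lagrangian first-order computation already carried out in the proof of Theorem \ref{qre} then yields the per-agent Boltzmann (soft best-response) form directly, which is exactly the QRE condition. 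To repair your argument you should drop the joint-Boltzmann claim and instead convert stationarity of each individual KL-minimization into this per-agent soft best-response condition (your observation that the fixed point holds for every permutation is fine but not needed, since the per-agent condition is already symmetric across agents). Your boundedness estimate and the compactness/subsequence handling of the policy limit are fine, and are in fact slightly more careful than the paper's own wording.
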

Proof can be found in \ref{proof of haspi}. To obtain such theoretical results, the updating approach in Proposition \ref{jspd} plays a pivotal role. Lemma \ref{spi} ensures that, through the updates in Proposition \ref{jspd}, soft Q-function increases monotonically, leading to convergence of the policies. Eventually, none of the agents is motivated to make an update (Equation \ref{individualKL}) at convergence, thereby establishing a QRE.

\subsubsection{Practical Algorithm}

In practice, large continuous domains require us to derive a practical approximation to the procedure above. We will use function approximators for both centralized soft Q-function $Q_\theta\left(\mathrm{s}_t, \mathbf{a}_t\right)$ and tractable decentralized policies $\pi^{i_m}_{\phi^{i_m}}\left(\mathrm{a}^{i_m}_t | \mathrm{s}_t\right)$, for each agent $i_m$, parameterized respectively by $\theta$ and $\phi^{i_m}$, and alternate between optimizing both networks with stochastic gradient descent.

The centralized soft Q-function parameters can be trained to minimize the Bellman residual
\small
\[
J_Q(\theta)=\mathbb{E}_{\left(\mathrm{s}_t, \mathbf{a}_t\right) \sim \mathcal{D}}\left[\frac{1}{2}\left(Q_\theta\left(\mathrm{s}_t, \mathbf{a}_t\right)-\left(r\left(\mathrm{s}_t, \mathbf{a}_t\right)+\gamma \mathbb{E}_{\mathrm{s}_{t+1} \sim P}\left[V_{\bar{\theta}}\left(\mathrm{s}_{t+1}\right)\right]\right)\right)^2\right],
\]
\normalsize
where the soft value function is implicitly parameterized through the soft Q-function parameters via Equation \ref{V}.
Then we draw a permutation $i_{1: n} \in \operatorname{Sym}(n)$ and sequentially update the policy of each agent $i_m$ according to Equation \ref{individualKL}. The policy parameters can be learned by directly minimizing the expected KL-divergence in Equation \ref{individualKL} disregarding the constant log-partition function
\small
\begin{equation}
\label{eq15} 
J_{\pi^{i_m}}(\phi^{i_m})=\mathbb{E}_{\mathrm{s}_t \sim \mathcal{D}}\left[\mathbb{E}_{\mathbf{a}^{i_{1:m-1}}_t \sim \boldsymbol{\pi}^{i_{1:m-1}}_{\boldsymbol{\phi}_{\text{new}}^{i_{1:m-1}}}, \mathrm{a}^{i_m}_t \sim \pi^{i_m}_{\phi^{i_m}}}\left[\alpha \log \pi^{i_m}_{\phi^{i_m}}\left(\mathrm{a}^{i_m}_t | \mathrm{s}_t\right)-Q^{i_{1:m}}_{{\boldsymbol{\pi}_{\text {old }}}; \theta}\left(\mathrm{s}_t, \mathbf{a}^{i_{1:m-1}}_t, \mathrm{a}^{i_m}_t\right)\right]\right]. 
\end{equation}
\normalsize
We refer to the above procedure as HASAC and Appendix \ref{apdD} for its full pseudocode.

\subsection{Maximum Entropy Heterogeneous-Agent Mirror Learning}
\label{MEHAML}
In addition to updating policies by directly minimizing the KL-divergence in Equation \ref{individualKL}, we aim to propose a generalized HASPI procedure that provides a range of solutions to MaxEnt MARL problem.
To this end, we start by introducing the necessary definitions of the operators proposed in HAML \citep{kuba2022heterogeneous}: the drift functional (HADF) $\mathfrak{D}_{\boldsymbol{\pi}}^i\left(\hat{\pi}^i | s, \bar{\boldsymbol{\pi}}^{j_{1: m}}\right)$ which, intuitively, is a notion of distance between $\pi^i$ and $\hat{\pi}^i$, given that agents $j_{1: m}$ just updated to $\bar{\pi}^{j_{1: m}}$; the neighborhood operator $\mathcal{U}_{\boldsymbol{\pi}}^i\left(\pi^i\right)$ which forms a region around the policy $\pi^i$;  as well as a sampling distribution $\beta_{\boldsymbol{\pi}} \in \mathcal{P}\left(\mathcal{S}\right)$ that is continuous in $\boldsymbol{\pi}$ (see detailed definitions in Appendix \ref{apdA1}). As shown in \citep{kuba2022heterogeneous, kuba2022mirror}, these operators allow for effective abstraction of standard RL and MARL methods due to their generality.


We present the generalized HASPI using these operators. In heterogeneous-agent soft policy improvement step, we redefine the operator that agents optimize as follows,
\begin{definition}
\label{mehamo}
Let $i \in \mathcal{N}, j^{1: m} \in \mathbb{P}(-i)$, and $\mathfrak{D}^i$ be a HADF of agent $i$. The maximum entropy heterogeneous-agent mirror operator (MEHAMO) integrates the soft Q-function as
\small
\[
\left[\mathcal{M}_{\mathfrak{D}^i,\bar{\boldsymbol{\pi}}^{j_{1:m}}}^{(\hat{\pi}^i)}V_{\boldsymbol{\pi}}\right](s) \triangleq \mathbb{E}_{ \mathbf{a}^{j_{1:m}} \sim \bar{\boldsymbol{\pi}}^{j_{1:m}}, \mathrm{a}^i \sim \hat{\pi}^i}\left[Q^{j_{1:m}, i}_{\boldsymbol{\pi}}(s,\mathbf{a}^{j_{1:m}},\mathrm{a}^i)-\alpha\log\hat{\pi}^i\left(\mathrm{a}^i | s\right)\right]-\mathfrak{D}_{\boldsymbol{\pi}}^i\left(\hat{\pi}^i | s, \bar{\boldsymbol{\pi}}^{j_{1:m}}\right).
\]
\normalsize
\end{definition}

Then we propose MEHAML Algorithm template \ref{alg:1} to formalize \emph{generalized} heterogeneous-agent soft policy iteration. Notably, HAML is a special instance of our template when $\alpha = 0$.

\RestyleAlgo{ruled}
\begin{algorithm}[!ht]
    \caption{Maximum Entropy Heterogeneous-Agent Mirror Learning}\label{alg:1}
    Initialise a joint policy $\boldsymbol{\pi}_0=\left(\pi_0^1, \ldots, \pi_0^n\right)$; \\
    \For{$k = 0,1,\cdots$}{
        Compute the soft Q-function $Q_{\boldsymbol{\pi}_k}(s, \boldsymbol{a})$ for all state-(joint)action pairs $(s, \boldsymbol{a})$; \\
        $
        \color{black}
        \underbrace{
            \text{\color{black}Update $Q_{\boldsymbol{\pi}_k}(s, \boldsymbol{a})$ according to the soft Bellman equation (Equation \ref{Bellman})}
        }_{\text{\color{blue}Joint Soft Policy Evaluation}}
        $; \\
        Draw a permutaion $i_{1: n}$ of agents at random; \\
        \For{$m = 1 : n$}{
            $
            \color{black}
            \underbrace{
                \text{\color{black}Make an update $\pi_{k+1}^{i_m}=\underset{\pi^{i_m} \in \mathcal{U}_{\boldsymbol{\pi}_k}^{i_m}\left(\pi_k^{i_m}\right)}{\arg \max } \mathbb{E}_{\mathrm{s} \sim \beta_{\boldsymbol{\pi}_k}}\left[\left[\mathcal{M}_{\mathfrak{D}_{\boldsymbol{\pi}}^{i_m}, \boldsymbol{\pi}_{k+1}^{i_{1: m-1}}}^{\left(\pi^{i_m}\right)} V_{\boldsymbol{\pi}_k}\right](\mathrm{s})\right]$;}
            }_{\text{\color{blue}(Generalized) Heterogeneous-Agent Soft Policy Improvement}}
            $ \\
        }
    }
    \KwOut{A limit-point joint policy $\boldsymbol{\pi}_\infty$}
\end{algorithm}

\setlength{\textfloatsep}{1pt}
Despite the presence of a drift penalty and a neighborhood constraint, optimizing the \emph{MEHAMO} sequentially is sufficient to guarantee the same desired properties as HASPI. We establish the complete list of the core MEHAML properties in Theorem \ref{theorem1}, which confirms that any method derived from Algorithm \ref{alg:1} has the desired properties of monotonic improvement of the MaxEnt objective and QRE convergence (detailed proof can be found in Appendix \ref{apdC}).

\begin{restatable}[\textbf{The Core Theorem of MEHAML}]{thm}{mehaml}
Let $\boldsymbol{\pi}_0 \in \Pi$, and the sequence of joint policies $\left(\boldsymbol{\pi}_k\right)_{k=0}^{\infty}$ be obtained by a MEHAML algorithm \ref{alg:1} induced by $\mathfrak{D}^{i}, \mathcal{U}^i, \forall i \in \mathcal{N}$, and $\beta_{\boldsymbol{\pi}}$. Then, the joint policies induced by the algorithm enjoy the following list of properties \textbf{(1)} Attain the monotonic improvement property $J\left(\boldsymbol{\pi}_{k+1}\right) \geq J\left(\boldsymbol{\pi}_k\right)$, \textbf{(2)} Their value functions converge to a quantal response value function $\lim _{k \rightarrow \infty} V_{\boldsymbol{\pi}_k}=V^{\text{QRE}}$, \textbf{(3)} Their expected returns converge to a quantal response return $\lim _{k \rightarrow \infty} J\left(\boldsymbol{\pi}_k\right)=J^{\text{QRE}}$, \textbf{(4)} Their $\omega$-limit set consists of quantal response equilibria.
\label{theorem1}

\end{restatable}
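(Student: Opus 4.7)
The plan is to mirror the core HAML theorem proof, systematically replacing hard value functions and the advantage decomposition lemma with their soft (entropy-augmented) analogs and relying on Proposition \ref{jspd} and Lemma \ref{spi} for the multi-agent decomposition. The three substantive steps correspond to (a) a soft multi-agent policy improvement decomposition giving property (1), (b) boundedness plus compactness giving properties (2)--(3), and (c) a vanishing-improvement argument at the limit that recovers the Boltzmann form of Equation \ref{foqre}, giving property (4).

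For property (1), the key lemma to establish is a soft multi-agent policy improvement decomposition: for the permutation $i_{1:n}$ drawn at iteration $k$ and the sequentially updated partial policies $\boldsymbol{\pi}_{k+1}^{i_{1:m}}$,
\begin{equation*}
J(\boldsymbol{\pi}_{k+1}) - J(\boldsymbol{\pi}_k) \;\geq\; \frac{1}{1-\gamma}\sum_{m=1}^n \mathbb{E}_{\mathrm{s}\sim \beta_{\boldsymbol{\pi}_k}}\!\left[\left[\mathcal{M}^{(\pi_{k+1}^{i_m})}_{\mathfrak{D}^{i_m},\boldsymbol{\pi}_{k+1}^{i_{1:m-1}}}V_{\boldsymbol{\pi}_k}\right]\!(\mathrm{s}) - V_{\boldsymbol{\pi}_k}(\mathrm{s})\right].
\end{equation*}
Each summand on the right is non-negative: the old policy $\pi_k^{i_m}$ is always feasible in $\mathcal{U}_{\boldsymbol{\pi}_k}^{i_m}(\pi_k^{i_m})$ and incurs zero drift by the HADF definition, so its MEHAMO value coincides with $V_{\boldsymbol{\pi}_k}$, and the $\arg\max$ step of Algorithm \ref{alg:1} can only weakly improve on this. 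Summing over $m$ and taking expectation over $\beta_{\boldsymbol{\pi}_k}$ yields $J(\boldsymbol{\pi}_{k+1}) \geq J(\boldsymbol{\pi}_k)$.

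For properties (2) and (3), the sequence $J(\boldsymbol{\pi}_k)$ is monotone non-decreasing and bounded above by $(1-\gamma)^{-1}(r_{\max} + \alpha n \log|\mathcal{A}_{\max}|)$, hence converges to some $J^{\text{QRE}}$. Compactness of $\boldsymbol{\Pi}$ and continuity of $V_{\boldsymbol{\pi}}$ in $\boldsymbol{\pi}$ (inherited from the continuity of $\beta_{\boldsymbol{\pi}}$) give a convergent subsequence $\boldsymbol{\pi}_{k_j} \to \boldsymbol{\pi}^*$ with $V_{\boldsymbol{\pi}_{k_j}} \to V_{\boldsymbol{\pi}^*} =: V^{\text{QRE}}$. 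For property (4), the convergence of $J(\boldsymbol{\pi}_k)$ forces each per-agent MEHAMO improvement on the subsequence to vanish. Passing this optimality to the limit via continuity of $\mathcal{U}^i$, $\mathfrak{D}^i$, and $\beta_{\boldsymbol{\pi}}$, one obtains that at $\boldsymbol{\pi}^*$ no agent can strictly improve its MEHAMO; because the drift vanishes at the current policy and the neighborhood contains an open ball around it, this reduces to each $\pi^{*i}$ maximizing $\mathbb{E}_{\mathbf{a}^{-i}\sim \boldsymbol{\pi}^{*,-i}, \mathrm{a}^i \sim \pi^i}\!\left[Q_{\boldsymbol{\pi}^*}(s,\mathbf{a}^{-i},\mathrm{a}^i) - \alpha\log \pi^i(\mathrm{a}^i|s)\right]$ unconstrained over $\pi^i$, whose pointwise maximizer is the Boltzmann form of Equation \ref{foqre}. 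By Theorem \ref{qre}, $\boldsymbol{\pi}^*$ is a QRE, so the $\omega$-limit set consists of QREs.

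The main obstacle will be establishing the soft multi-agent policy improvement decomposition cleanly, with the entropy term correctly absorbed into the per-agent MEHAMO; this is the entropy-regularized analog of the advantage decomposition lemma used in HAML and is the technical workhorse of step (1). A secondary subtlety is to confirm that the drift penalty vanishes at the limit so that an unperturbed QRE (rather than a drift-regularized equilibrium) is recovered, which amounts to verifying that the continuity and pointwise-zero-at-the-current-policy properties of the HADF from Appendix \ref{apdA1} survive the passage to the subsequence limit.
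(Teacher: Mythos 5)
Your argument for property \textbf{(1)} rests on a displayed inequality that is neither proved nor true as stated, and its supporting claim is false. First, the assertion that the old policy's MEHAMO ``coincides with $V_{\boldsymbol{\pi}_k}$'' fails for $m\ge 2$: since the predecessors have already moved to $\boldsymbol{\pi}_{k+1}^{i_{1:m-1}}$, Equation \ref{eq8} gives
\begin{equation*}
\left[\mathcal{M}_{\mathfrak{D}^{i_m},\boldsymbol{\pi}_{k+1}^{i_{1:m-1}}}^{(\pi_{k}^{i_m})}V_{\boldsymbol{\pi}_k}\right](s)=\mathbb{E}_{\mathbf{a}^{i_{1:m-1}}\sim\boldsymbol{\pi}_{k+1}^{i_{1:m-1}}}\left[Q^{i_{1:m-1}}_{\boldsymbol{\pi}_k}\left(s,\mathbf{a}^{i_{1:m-1}}\right)\right],
\end{equation*}
which differs from $V_{\boldsymbol{\pi}_k}(s)$ both because the predecessor actions are sampled from the new policies and because the predecessors' entropy terms are absent; hence your summands $[\mathcal{M}^{(\pi_{k+1}^{i_m})}V_{\boldsymbol{\pi}_k}](s)-V_{\boldsymbol{\pi}_k}(s)$ need not be non-negative (each per-agent step only yields $\mathbb{E}[A^{i_j}_{\boldsymbol{\pi}_k}]\ge\mathfrak{D}^{i_j}-\alpha\mathcal{H}(\pi^{i_j}_{k+1})$, which can be negative). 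Second, the prefactor $\tfrac{1}{1-\gamma}$ with expectation under an \emph{arbitrary} sampling distribution $\beta_{\boldsymbol{\pi}_k}$ cannot give a performance-difference bound ($\beta$ is only assumed continuous, not an occupancy measure), so this inequality can also not be reused later to force the per-agent improvements to vanish for property \textbf{(4)}. The paper avoids any such quantitative surrogate bound: Lemma \ref{lemma3} converts the $\beta$-expected argmax into the \emph{state-wise} inequality $\mathcal{M}(\text{new})\ge\mathcal{M}(\text{old})$ by a state-swap contradiction (this is where the closed-ball, state-wise-metric property of $\mathcal{U}^i$ is used), and Lemma \ref{lemma2} then telescopes the soft advantages via Lemma \ref{lemma1} and a $\gamma$-contraction step to get $V_{\boldsymbol{\pi}_{k+1}}(s)\ge V_{\boldsymbol{\pi}_k}(s)$ for every $s$; this pointwise monotonicity is also what delivers the full-sequence limit in property \textbf{(2)}, which your subsequence-only argument does not.

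For property \textbf{(4)} your sketch skips the two ingredients that carry the paper's Steps 2--3. Establishing that a limit point $\bar{\boldsymbol{\pi}}$ is a fixed point of the MEHAML update requires continuity of the update map, obtained from continuity of the soft Q-function (Lemma \ref{lemma4} --- not from continuity of $\beta_{\boldsymbol{\pi}}$, as you write) together with Berge's maximum theorem and the random permutation process. And dropping the drift and the neighborhood at the fixed point does \emph{not} follow merely from the drift being zero at the current policy: a non-negative drift vanishing at $\bar{\pi}^i$ but with non-zero slope could still block improvement inside $\mathcal{U}^i$. The paper needs the zero-G\^ateaux-derivative property of the HADF combined with the strict concavity of $\mathbb{E}_{\mathrm{a}^i\sim\pi^i}[A^{i}_{\bar{\boldsymbol{\pi}}}(s,\mathrm{a}^i)]+\alpha\mathcal{H}(\pi^i(\cdot|s))$ (affine plus strictly concave) to show that any strictly better direction for the drift-free objective would strictly improve the drifted one, and concavity again to promote optimality over the ball to unconstrained optimality over $\mathcal{P}(\mathcal{A}^i)$, after which Theorem \ref{qre} yields the Boltzmann/QRE form. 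So while your overall architecture parallels the paper's, the technical workhorse you propose for monotonic improvement would fail, and the limit-point and drift-dropping steps are missing the zero-gradient and concavity arguments needed to make them rigorous.
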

\textit{Proof sketch.} \quad We divide the proof into four steps. In \emph{Step 1}, we show that the sequence of soft value function $(V_{\boldsymbol{\pi}_k})_{k \in \mathbb{N}}$ increases monotonically and converges to a limit point. In \emph{Step 2}, we show the existence of limit points $\bar{\boldsymbol{\pi}}$ of $(\boldsymbol{\pi}_k)_{k \in \mathbb{N}}$, and that they are fixed points of the MEHAML update. In the most important \emph{Step 3}, we prove that $\bar{\boldsymbol{\pi}}$ is also a fixed point of soft policy iteration by leveraging the concavity of MEHAMO. \emph{Step 4} finalizes the proof, proving that fixed points of soft policy iteration are QRE policies.

By selecting appropriate HADFs and neighborhood operators that satisfy the definitions, Algorithm $\ref{alg:1}$ has the potential to generate various theoretically-justified algorithms to solve MaxEnt MARL problem. The drifts $\mathfrak{D}_{\boldsymbol{\pi}}^i\left(\hat{\pi}^i | s, \bar{\boldsymbol{\pi}}^{j_{1: m}}\right)$ can serve as soft constraints, such as KL-divergence, controlling the distance between $\hat{\pi}^i$ and $\pi^i$ when the agents $j_{1: m}$ have just updated to $\bar{\boldsymbol{\pi}}^{j_{1: m}}$. Additionally, the neighborhood operators $\mathcal{U}^i$ can generate small policy-space subsets, serving as hard constraints, then the resultant policy improvement will remain within a small range due to the fact that $\pi^i \in \mathcal{U}_{\boldsymbol{\pi}}^i\left(\pi^i\right), \forall i \in \mathcal{N}, \pi^i \in \Pi^i$. Therefore, algorithms equipped with appropriate HADFs and neighborhoods can learn stochastic policies in a stable and coordinated manner. In summary, MEHAML provides a template for generating theoretically sound, stable, monotonically improving algorithms that enable agents to learn stochastic policies to solve multi-agent cooperation tasks.

\section{Experiments}
\label{sec5}

To demonstrate the advantages of the stochastic policies learned by HASAC, we conduct comprehensive experiments on the matrix game in Section \ref{limitations} and six benchmarks — MAMuJoCo \citep{de2020deep}, Bi-DexHands \citep{chen2022humanlevel}, SMAC \citep{samvelyan2019starcraft}, GRF \citep{kurach2020google}, MPE \citep{lowe2017multi}, and LAG \citep{liu2022light}\footnote{HASAC achieves best performance on fully cooperative MPE and LAG tasks. See Appendix \ref{apd:mpe} and \ref{apd:lag} for full results.}. It is important to note that while HASAC is originally designed for continuous actions, we employ a Gumbel-Softmax \citep{jang2016categorical} to ensure that HASAC would work for discrete actions. \textbf{Compared to existing SOTA MAPG methods, HASAC achieves the best performance in 31 out of 35 tasks across all benchmarks.} 
In addition to final performance, experimental results (see full experimental details and hyperparameter in Appendix \ref{apdE}) show the following advantages of stochastic policies: \textbf{(1)} \emph{HASAC exhibits similar performance across different random seeds and higher training stability }, \textbf{(2)} \emph{HASAC shows higher learning speed compared to existing algorithms}, and \textbf{(3)} \emph{HASAC improves agents' exploration, which facilitates policies to escape from suboptimal equilibria and converge towards a higher reward equilibrium.}

\begin{figure*}[!t]
    \captionsetup[subfigure]{font=scriptsize, aboveskip=-0.5pt}
        \begin{subfigure}{0.24\textwidth}
 			\centering
 			\includegraphics[width=\textwidth]{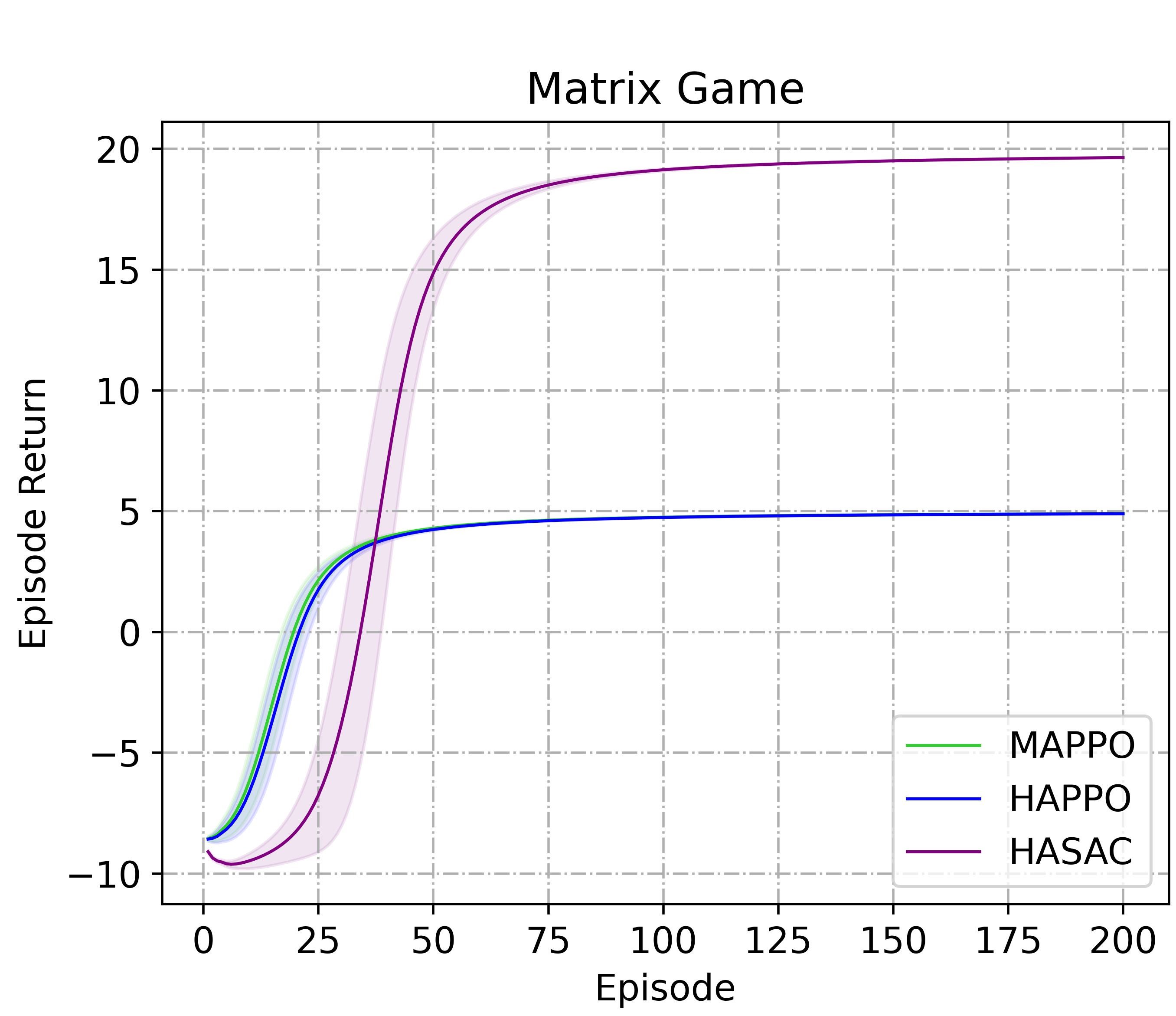}
 			\caption{Matrix Game}
                \label{matrix}
 		\end{subfigure}
        \begin{subfigure}{0.24\textwidth}
 			\centering
 			\includegraphics[width=\textwidth]{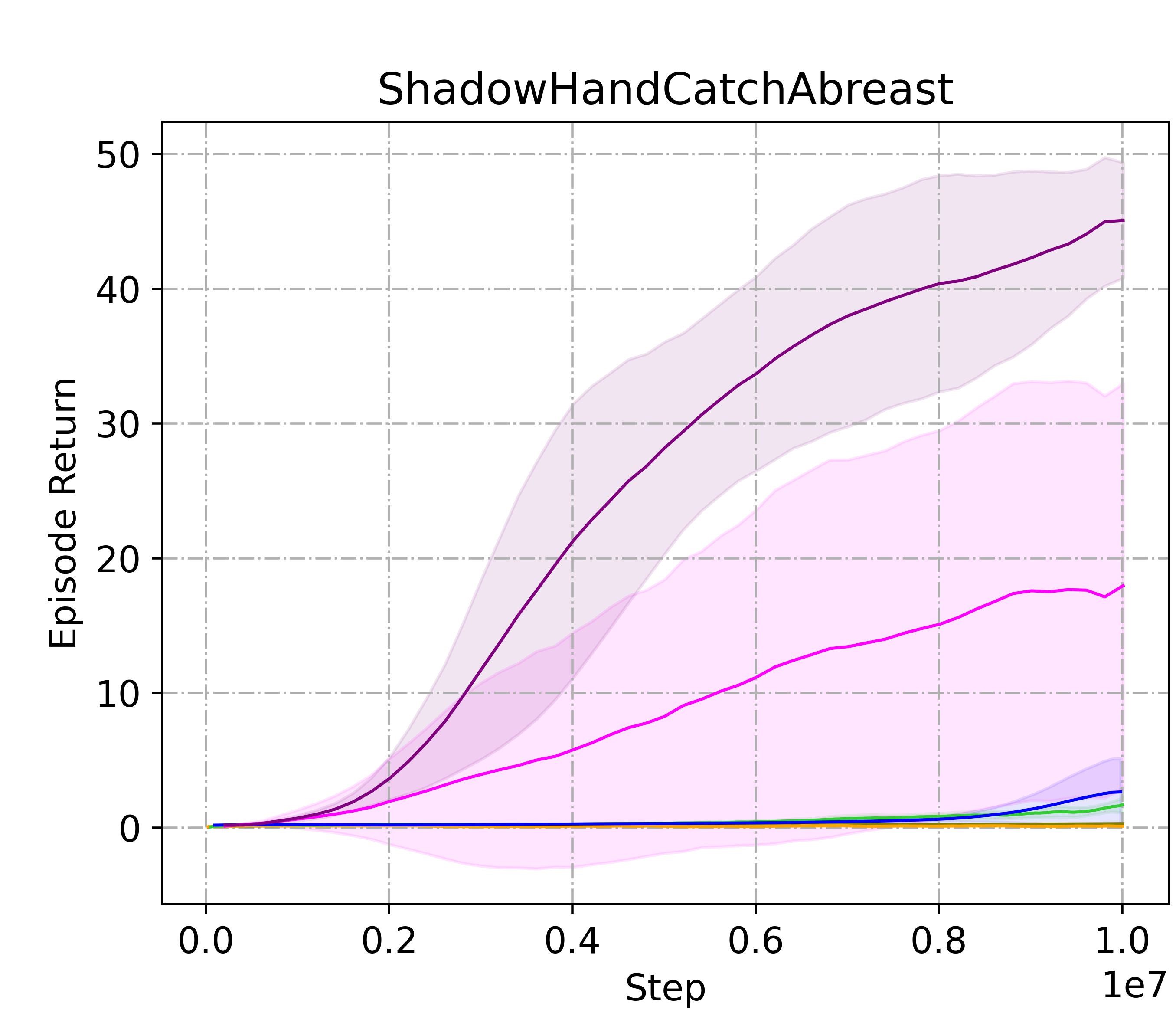}
 			\caption{Catch Abreast}
 			\label{abreast}
 		\end{subfigure}%
        \begin{subfigure}{0.24\textwidth}
 			\centering
 			\includegraphics[width=\textwidth]{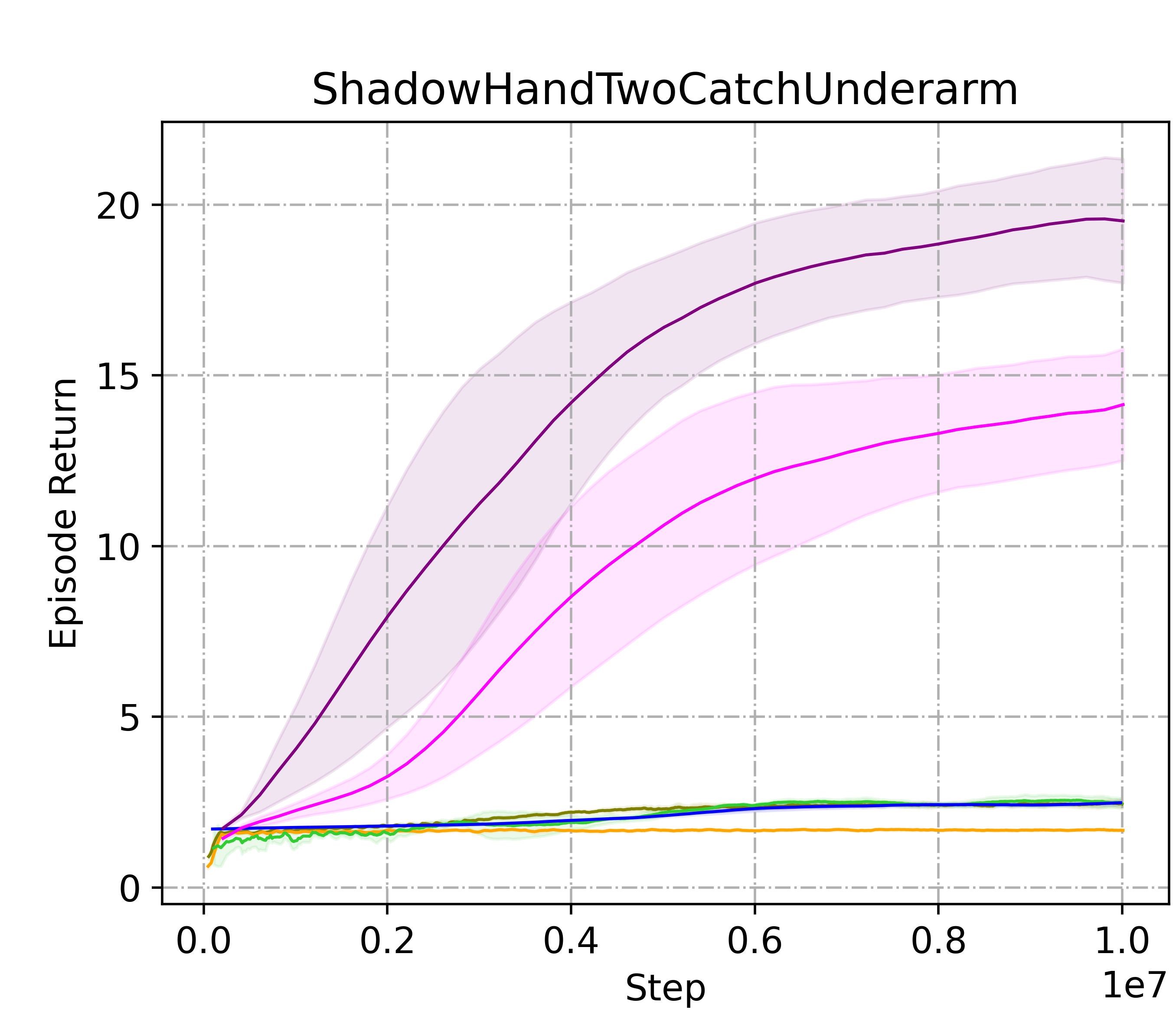}
                \caption{Two Catch Underarm}
                \label{twocatch}
 		\end{subfigure}
        \begin{subfigure}{0.24\textwidth}
 			\centering
 			\includegraphics[width=\textwidth]{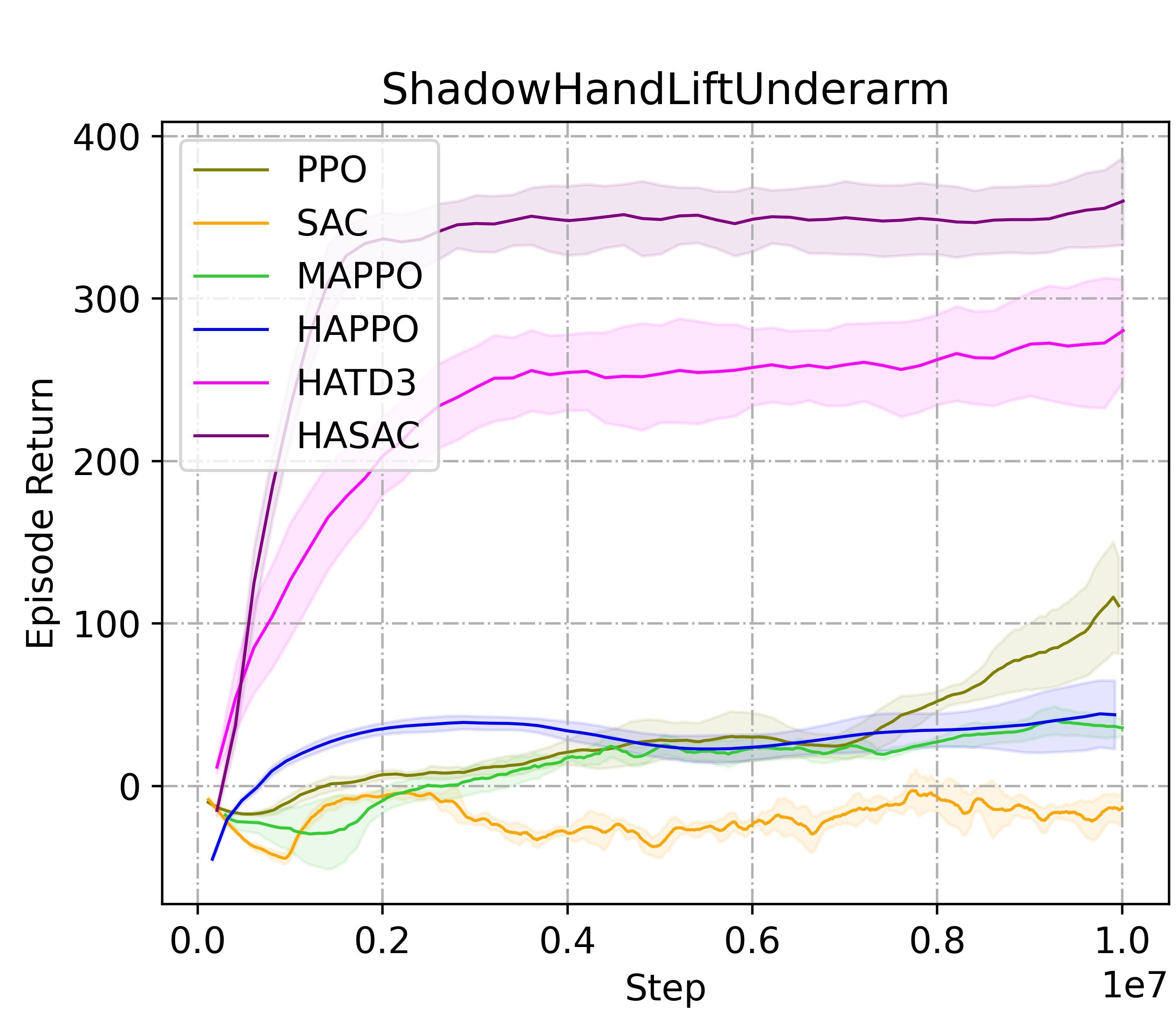}
                \caption{Lift Underarm}
                \label{lift}
 		\end{subfigure}
        \newline
 	\begin{subfigure}{0.24\textwidth}
 			\centering
 			\includegraphics[width=\textwidth]{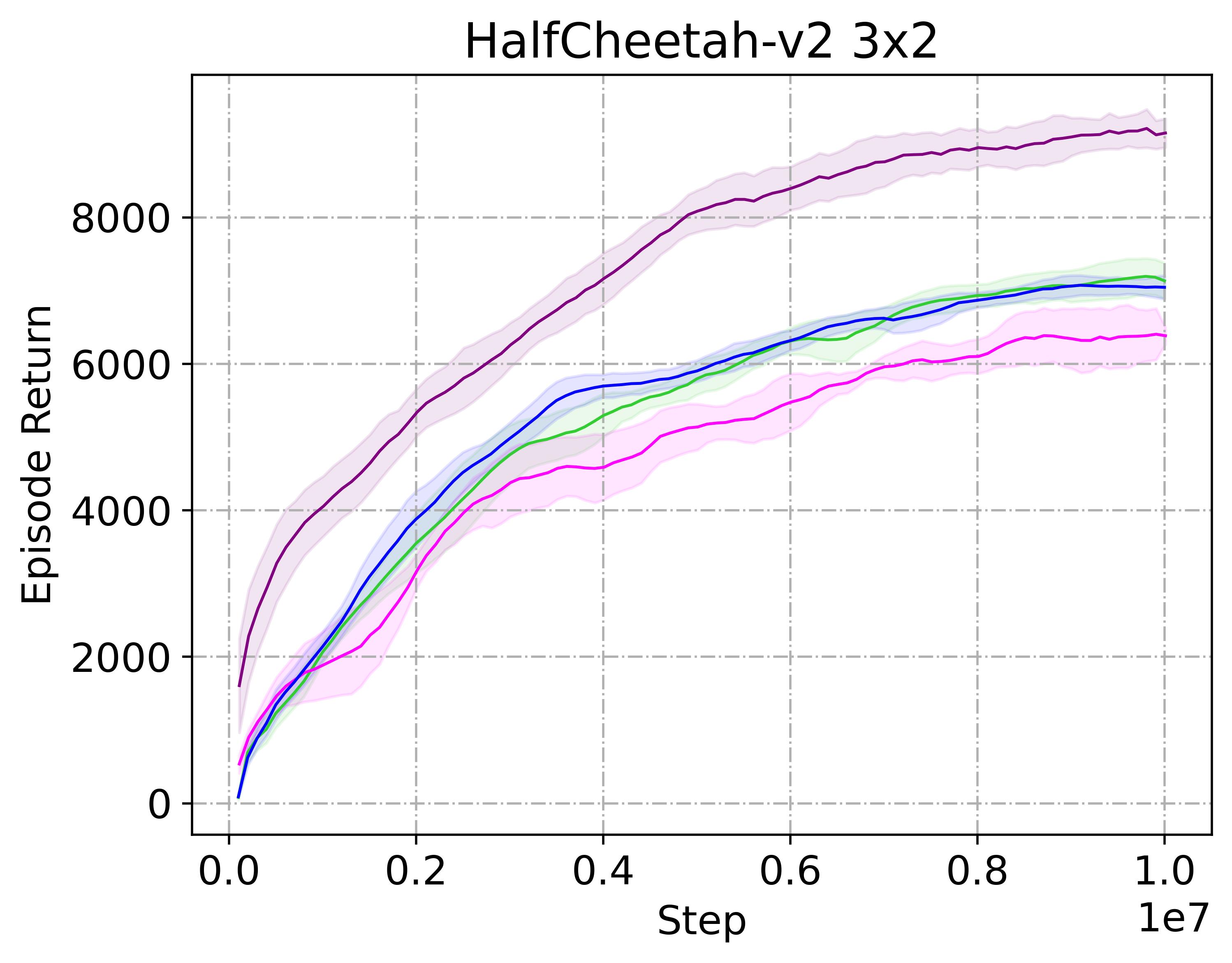}
 			\caption{3x2-Agent HalfCheetah}
                \label{cheetah}
 		\end{subfigure}
 	\begin{subfigure}{0.24\textwidth}
 			\centering
 			\includegraphics[width=\textwidth]{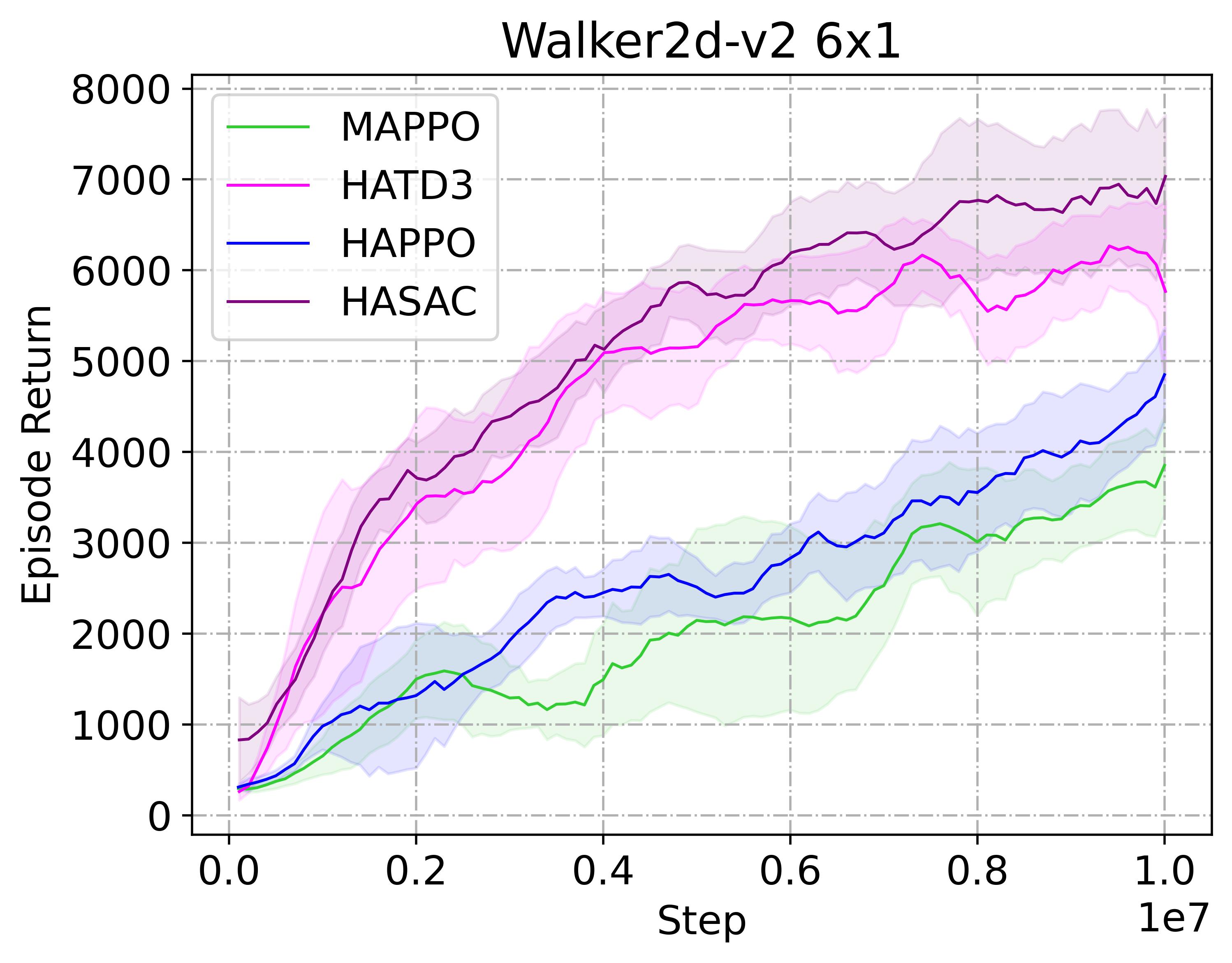}
 			\caption{6x1-Agent Walker2d}
                \label{walker2d}
 		\end{subfigure}
        \begin{subfigure}{0.24\textwidth}
 			\centering
 			\includegraphics[width=\textwidth]{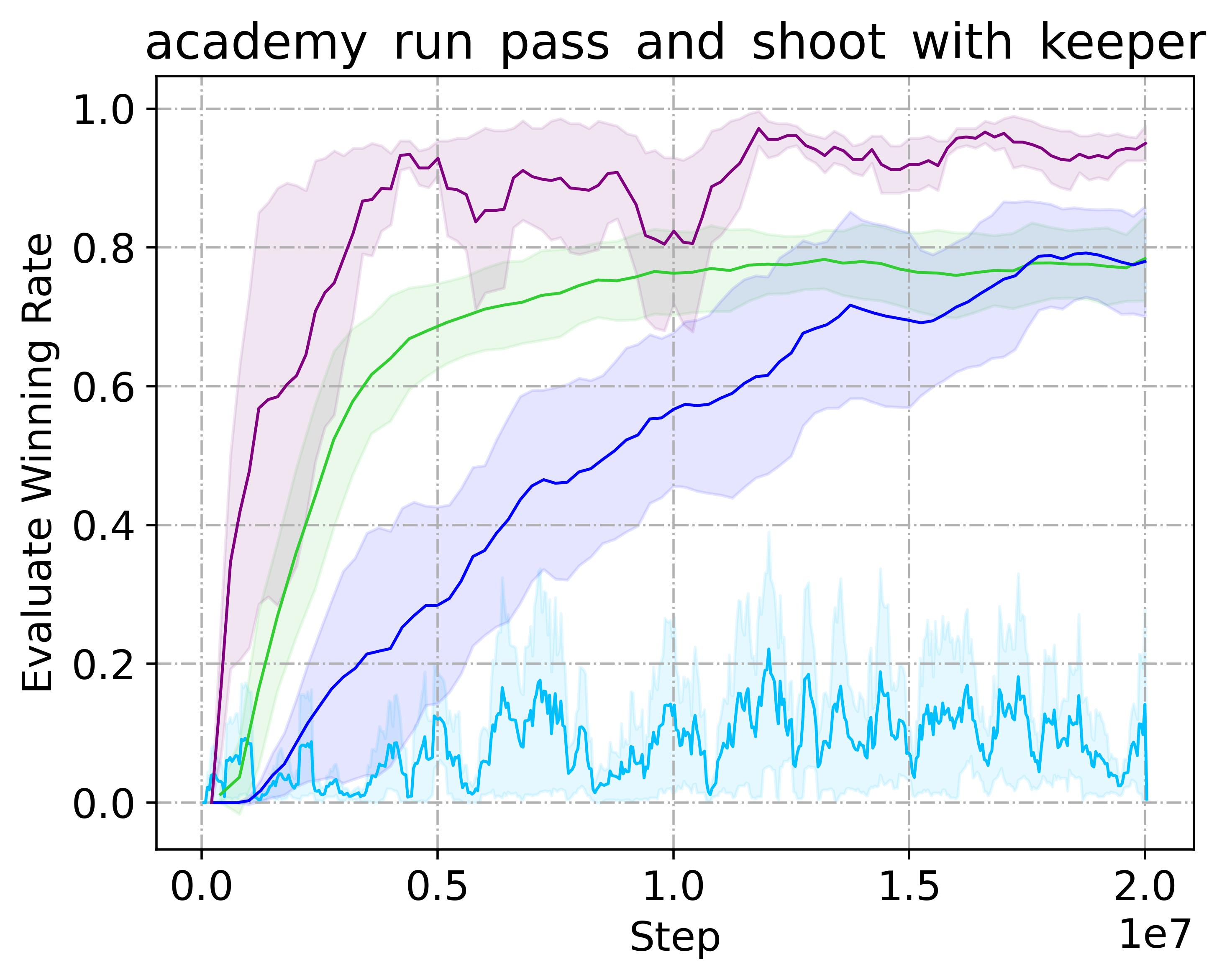}
 			\caption{RPS (2 agents)}
                \label{fig:rps}
 		\end{subfigure}
        \begin{subfigure}{0.24\textwidth}
 			\centering
 			\includegraphics[width=\textwidth]{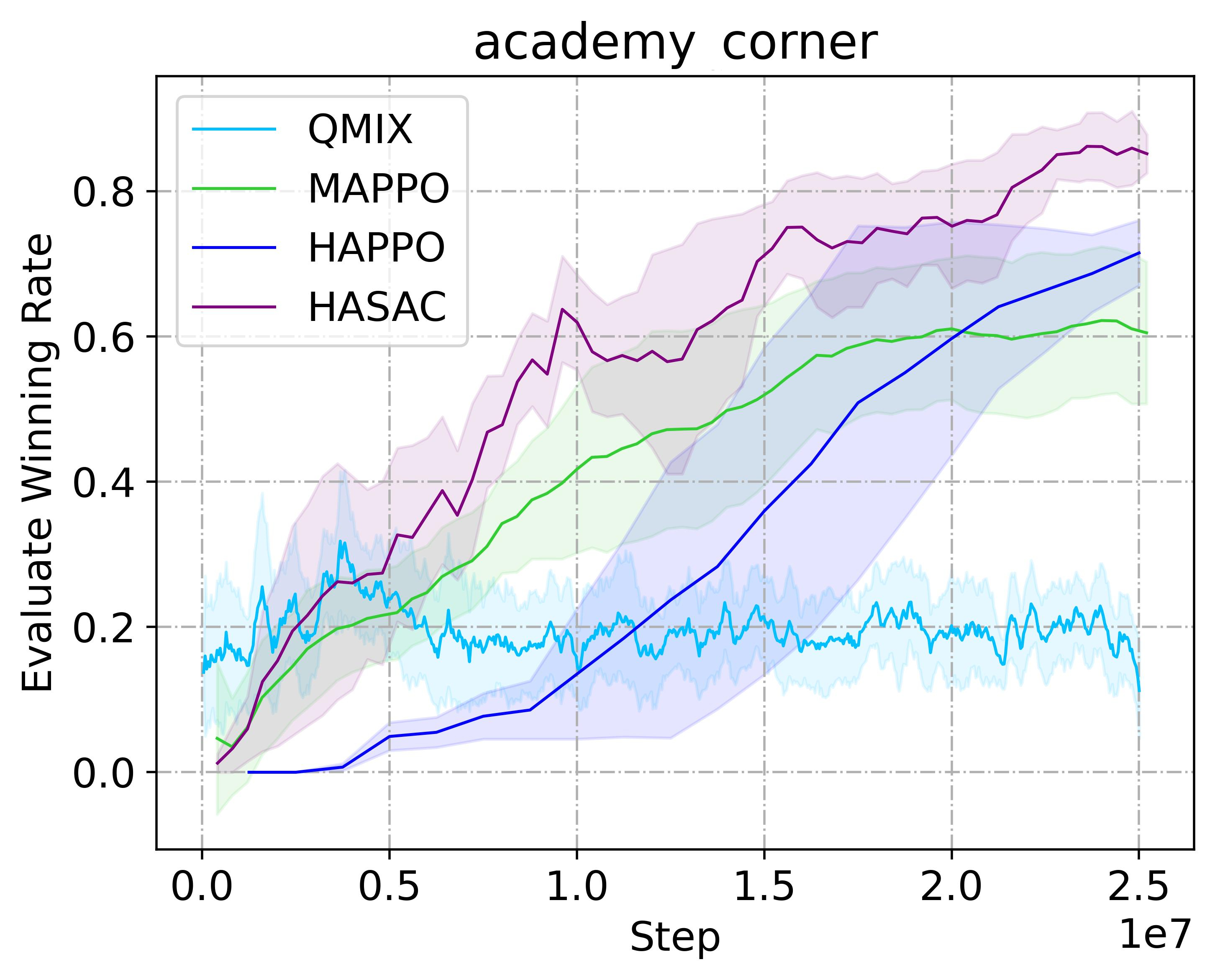}
 			\caption{Corner (11 agents)}
                \label{fig:corner}
 		\end{subfigure}
   \vspace{-3mm}
    \caption{Performance comparisons on selected tasks of multiple benchmarks.}
    \vspace{-3mm}
    \label{fig1}
\end{figure*}

\begin{table}[!t]
\resizebox{\linewidth}{!}{
\begin{tabular}{ccccccccc}
\hline
Map            & Difficulty & HASAC               & HAPPO      & HATRPO             & MAPPO              & QMIX         &      FOP       & Timesteps \\ \hline
8m\_vs\_9m     & Hard       & \textbf{97.5(1.2)}  & 83.8(4.1)  & 92.5(3.7)          & 87.5(4.0)          & 92.2(1.0)    &   23.4(1.6)    & 1e7       \\
5m\_vs\_6m     & Hard       & \textbf{90.0(3.9)}  & 77.5(7.2)  & 75.0(6.5)          & 75.0(18.2)         & 77.3(3.3)    &   46.9(5.3)    & 1e7       \\
3s5z           & Hard       & \textbf{100.0(0.0)} & 97.5(1.2)  & 93.8(1.2)          & 96.9(0.7)          & 89.8(2.5)    &   93.0(0.8)    & 1e7       \\
10m\_vs\_11m   & Hard       & 95.0(3.1)           & 87.5(6.7)  & \textbf{98.8(0.6)} & 96.9(4.8)          & 95.3(2.2)    &   12.5(6.2)    & 1e7       \\
MMM2           & Super Hard & \textbf{97.5(2.4)}  & 88.8(2.0)  & \textbf{97.5(6.4)} & 93.8(4.7) & 87.5(2.5)             &   37.5(28.1)    & 2e7       \\
3s5z\_vs\_3s6z & Super Hard & \textbf{82.5(4.1)}  & 66.2(3.1)  & 72.5(14.7)         & 70.0(10.7)         & \textbf{87.5(12.6)} &  0.0(0.0)     & 2e7       \\
corridor       & Super Hard & 90.0(10.8)          & 92.5(13.9) & 88.8(2.7)          & \textbf{97.5(1.2)} & 82.8(4.4)    &    0.0(0.0)     & 2e7       \\
6h\_vs\_8z     & Super Hard & \textbf{95.0(3.1)}  & 76.2(3.1)  & 78.8(0.6)          & 85.0(2.0)          & 7.0(27.0)    &    0.0(0.0)    & 4e7       \\ \hline
\end{tabular}
}
\vspace{-2mm}
\caption{Median evaluate winning rate and standard deviation on eight SMAC maps for different methods. All values within 1 standard deviation of the maximum score rate are marked in bold.}
\label{smac_results}
\end{table}


\subsection{Experimental results}
\label{subsec5.1}

\paragraph{Matrix Game.} We show the results of HASAC, HAPPO, MAPPO over 200 learning episodes in matrix game (Figure \ref{fig:reward-matrix}). HASAC escapes local optima and achieves the Pareto optimum due to the learned stochastic policies, while the other two methods fall into the suboptimal NE (Figure \ref{matrix}).

\vspace{-4mm}
\paragraph{Bi-DexHands.} Bi-DexHands offers numerous bimanual manipulation tasks that match various human skill levels. In the challenging Catch Abreast, Two Catch Underarm, and Lift Underarm tasks (Figure \ref{abreast}, \ref{twocatch}, and \ref{lift}), all on-policy methods fail within 10m steps due to low sample efficiency. HATD3 \citep{zhong2023heterogeneousagent}, on the other hand, exhibits very high variability and prematurely converges to local optima. In contrast, HASAC outperforms the other five methods by a large margin. It also exhibits higher convergence speed and improved robustness, demonstrating the benefits of stochastic policies.

\vspace{-4mm}
\paragraph{Multi-Agent MuJoCo.}
We compare our method to HAPPO, MAPPO, and HATD3 in ten MAMuJoCo tasks (see Appendix \ref{apd:mujoco}). Figure \ref{cheetah}, \ref{walker2d}, and the other results in Appendix \ref{apd:mujoco} demonstrate that HASAC enjoys superior performance over the three rivals both in terms of reward values and learning speed. It's worth noting that, although both HATD3 and HASAC are off-policy algorithms, we generally observe that HASAC learns faster with higher sample efficiency compared to HATD3. This is because the exploration mechanism of HATD3 involves adding Gaussian noise to a deterministic policy, resulting in lower exploration efficiency, requiring more samples to explore the reward landscape. In contrast, HASAC, due to its inherent encouragement of exploration within stochastic policies, can effectively explore the entire reward landscape, thus learning better behaviors rapidly.

\vspace{-4mm}
\paragraph{StarCraft Multi-Agent Challenge.}
We evaluate our method on four hard and four super-hard maps. As shown in Table \ref{smac_results}, HASAC achieves over 90\% win rates in 7 out of 8 maps and outperforms other strong baselines in most maps. Notably, in particularly challenging tasks such as 5m\_vs\_6m, 3s5z\_vs\_3s5z, and 6h\_vs\_8z, we observe that HAPPO and HATRPO would converge towards suboptimal NE. In addition, FOP is unable to learn meaningful joint policy in super-hard tasks due to its reliance on the IGO assumption. By contrast, HASAC consistently achieves superior performance and shows the ability to identify higher reward equilibria due to its extensive exploration. We also observe that HASAC has better stability and higher learning speed across most maps.

\vspace{-4mm}
\paragraph{Google Research Football.}
We compare HASAC with QMIX, MAPPO, and HAPPO. As shown in Figure \ref{fig:rps} and \ref{fig:corner}, we generally observe that both MAPPO and HAPPO tend to converge to a non-optimal NE on the two challenging tasks, while HASAC exhibits the ability to attain a higher reward equilibrium by learning stochastic policies which effectively enhance exploration.

\subsection{Ablation study}
\label{sec:ablation}

We investigate the benefits of stochastic policies learned by HASAC and \emph{empirically} show how different temperature $\alpha$ values affect the stochasticity of policies, leading to different QRE convergence.

\begin{figure*}[!t]
    \captionsetup[subfigure]{font=scriptsize, aboveskip=-0.5pt}
        \begin{subfigure}{0.32\textwidth}
 			\centering
 			\includegraphics[width=\textwidth]{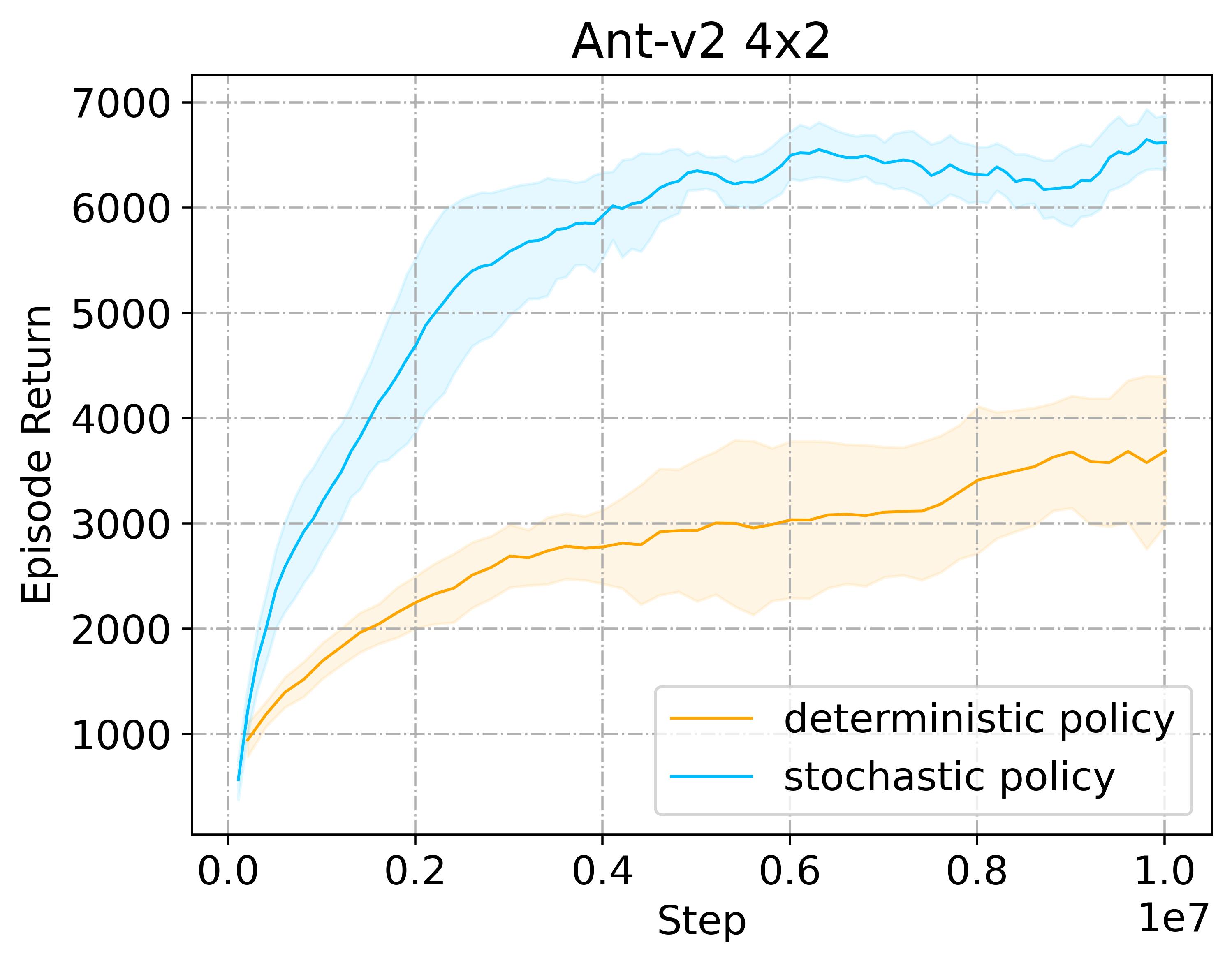}
                \caption{Stochastic vs. deterministic}
                \label{policy}
 		\end{subfigure}
        \begin{subfigure}{0.32\textwidth}
 			\centering
 			\includegraphics[width=\textwidth]{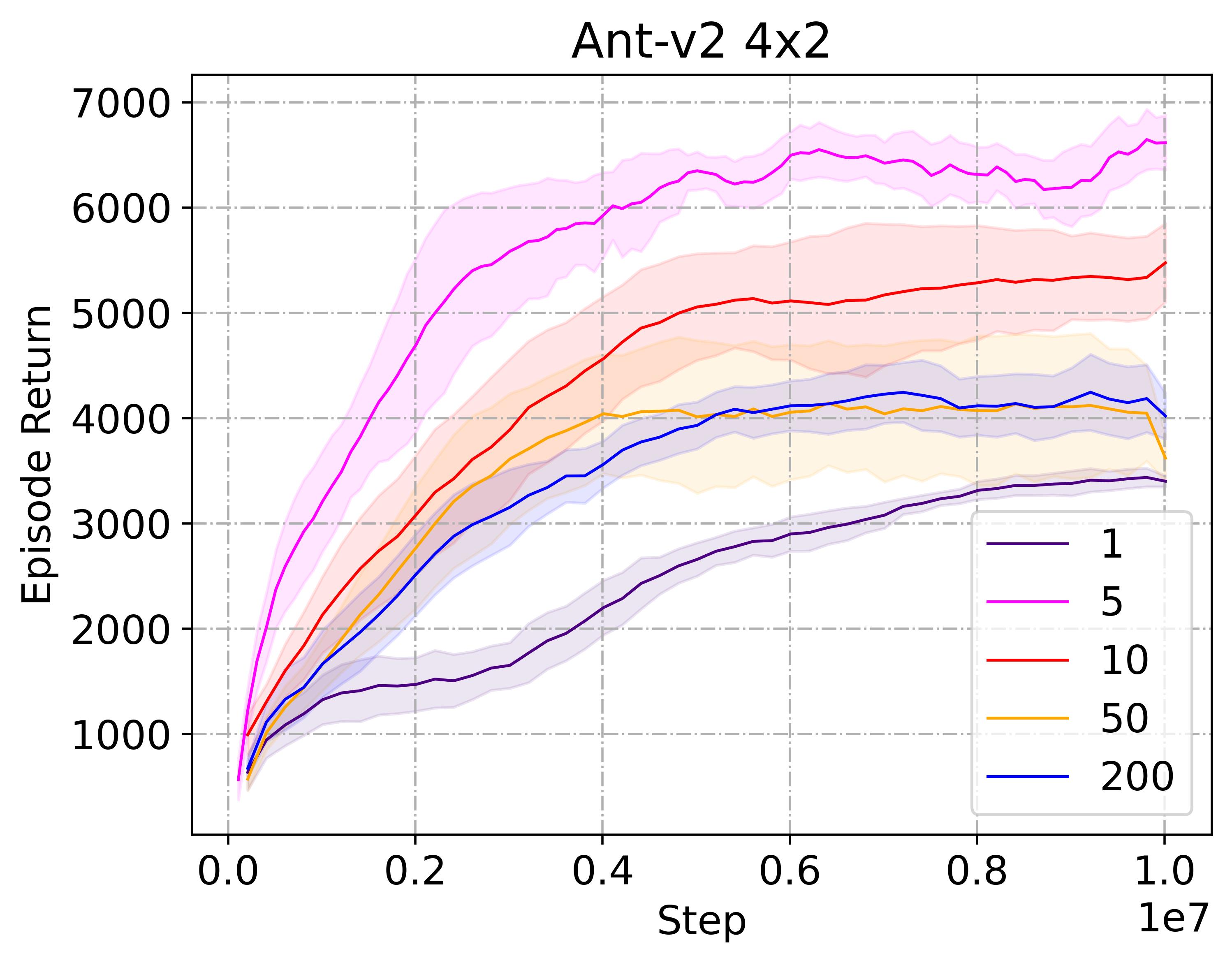}
                \caption{Effect of different $\alpha$ on Ant 4x2}
 			\label{reward scale1}
 		\end{subfigure}%
        \begin{subfigure}{0.35\textwidth}
 			\centering
 			\includegraphics[width=\textwidth]{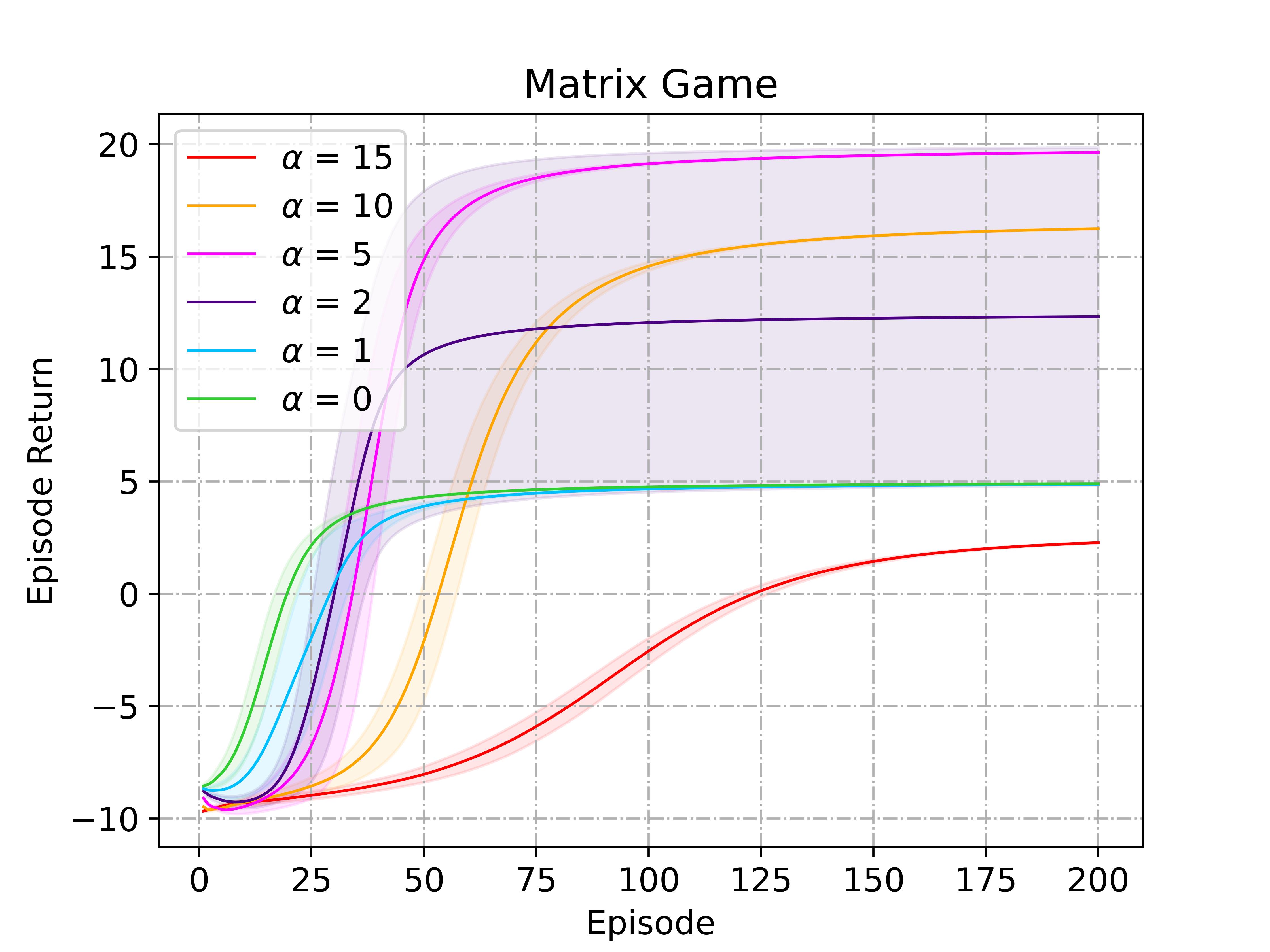}
                \caption{Effect of different $\alpha$ on matrix game}
                \label{reward scale2}
 		\end{subfigure}
        \vspace{-2mm}
        \caption{Performance comparison between HASAC with different hyperparameters on Ant-v2 4x2 task and matrix game. (a) The comparison indicates that stochastic policy can lead to better equilibrium and stabilize training. (b) \& (c) HASAC converges to different QRE with different temperature parameter $\alpha$.}
        \vspace{-4mm}
\end{figure*}

\begin{figure*}[!t]
    \captionsetup[subfigure]{font=scriptsize, aboveskip=-0.5pt}
        \begin{subfigure}{0.22\textwidth}
 			\centering
 			\includegraphics[width=\textwidth]{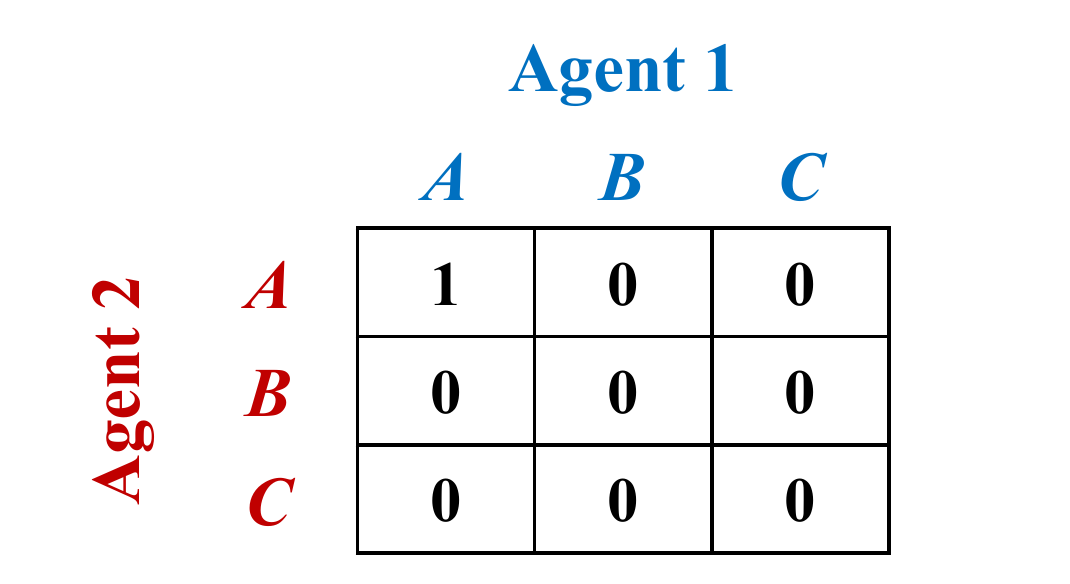}
                \caption{HASAC: $\alpha = 1$}
 		\end{subfigure}
        \begin{subfigure}{0.22\textwidth}
 			\centering
 			\includegraphics[width=\textwidth]{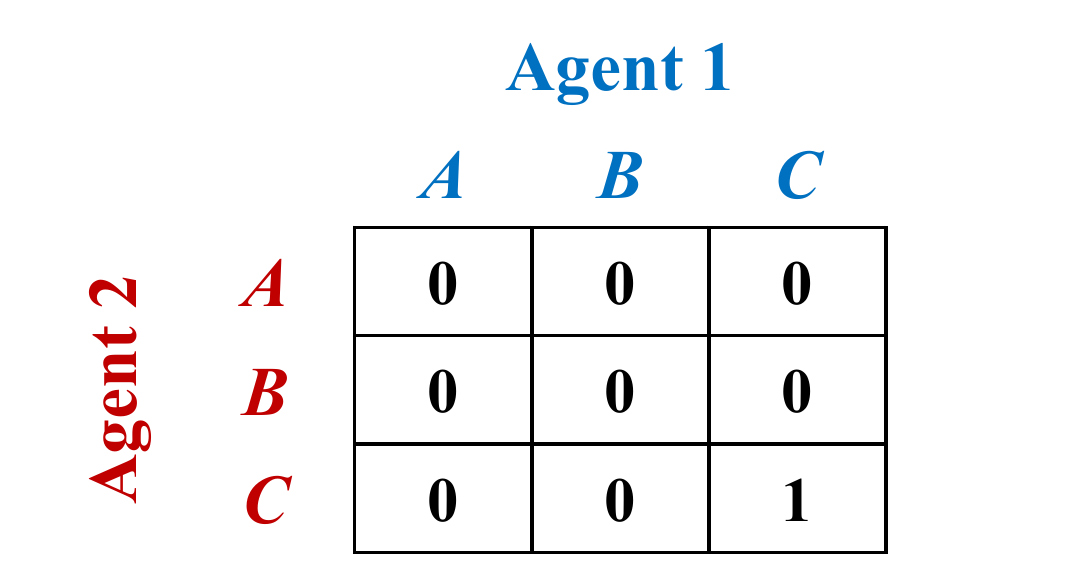}
                \caption{HASAC: $\alpha = 5$}
 		\end{subfigure}%
        \begin{subfigure}{0.28\textwidth}
 			\centering
 			\includegraphics[width=\textwidth]{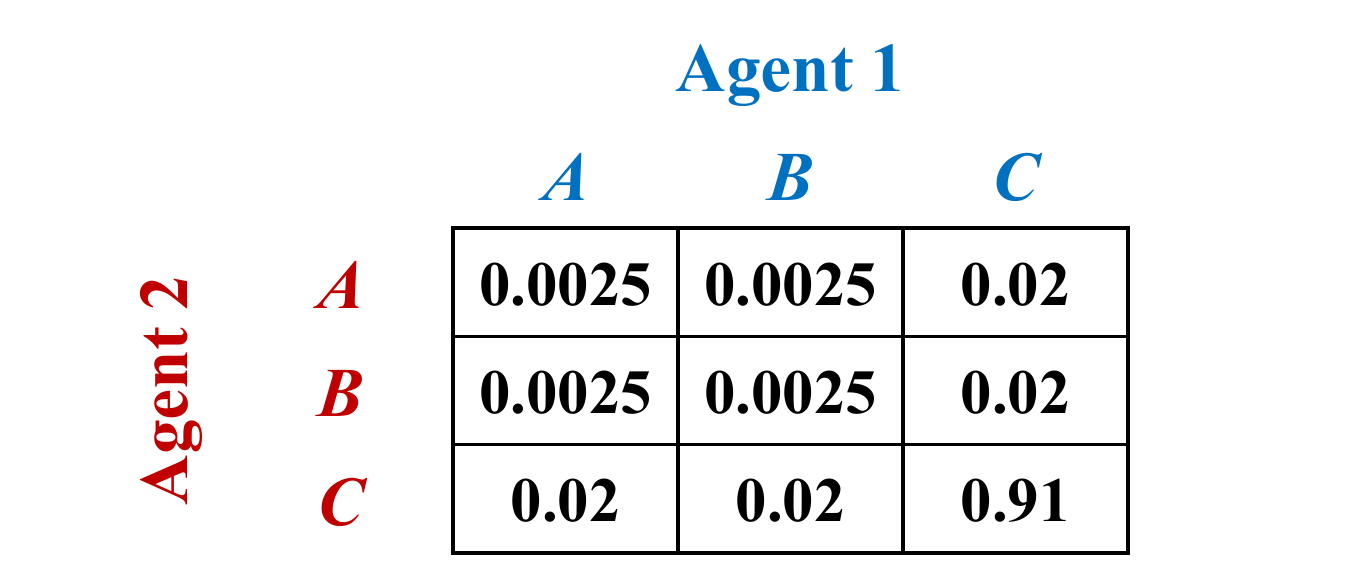}
                \caption{HASAC: $\alpha = 10$}
 		\end{subfigure}
        \begin{subfigure}{0.22\textwidth}
 			\centering
 			\includegraphics[width=\textwidth]{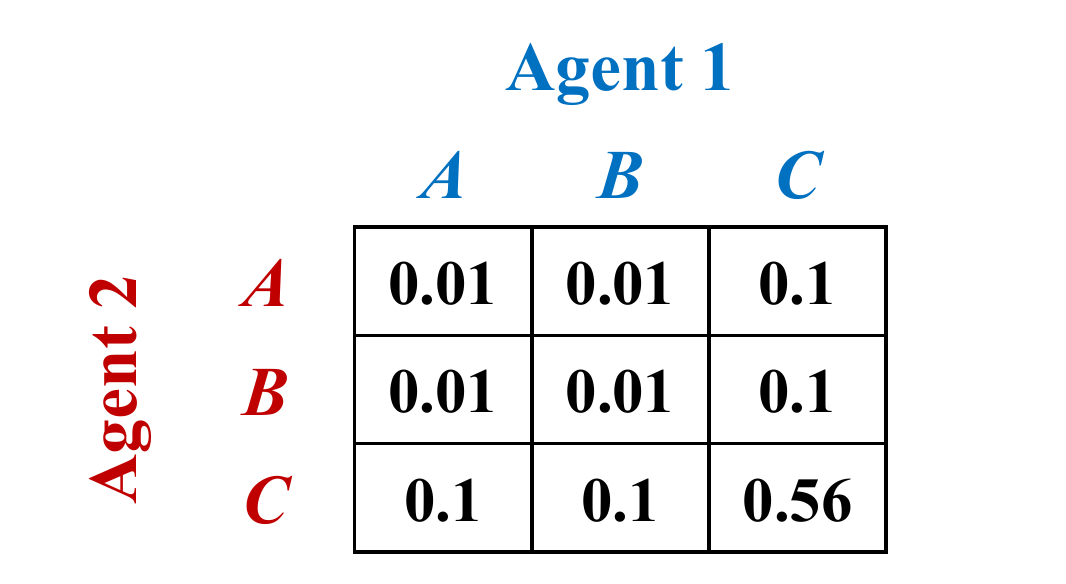}
                \caption{HASAC: $\alpha = 15$}
 		\end{subfigure}
        \vspace{-2mm}
        \caption{Different temperature $\alpha$ leads to convergence of different QRE.}
        \label{fig:difalpha}
\end{figure*}

\vspace{-4mm}
\paragraph{Stochasticity.}HASAC learns stochastic policies through maximizing the MaxEnt objective (Equation \ref{eq3}). We compare it to a deterministic variant which utilizes deterministic policies with fixed Gaussian exploration noise to maximize standard MARL objective. The results in Figure \ref{policy} show that HASAC achieves a higher reward equilibrium and demonstrates better stability compared to the deterministic variant, which exhibits high variance across the different runs. These findings highlight the importance of learning stochastic policies, which can improve robustness, facilitate escape from suboptimal equilibria, and converge to higher reward equilibrium.

\vspace{-4mm}
\paragraph{Analysis of temperature $\alpha$.}We further show the effect of temperature $\alpha$ on the stochasticity of policies. As illustrated in Figure \ref{reward scale1} (we report $\alpha^{-1}$ in legend), \ref{reward scale2}, and \ref{fig:difalpha}, when $\alpha$ is large, policies predominantly emphasize maximizing entropy, leading to poor performance due to the failure to exploit the reward signal. Conversely, when $\alpha$ is small, MaxEnt objective almost degrades to standard objective, leading to suboptimal equilibrium due to inadequate exploration. A proper $\alpha$ achieves a trade-off between exploration and exploitation, eventually resulting in better performance. To obtain appropriate $\alpha$, we implement both fixed and auto-tuned (see Appendix \ref{auto-tuned} for details) $\alpha$ in practice.



\section{Conclusion}

In this paper, we propose \emph{maximum entropy heterogeneous-agent reinforcement learning} (MEHARL) — a unified framework for learning stochastic policies in MARL. This framework comprises three key components, including the PGM derivation of MaxEnt MARL, the HASAC algorithm with monotonic improvement and QRE convergence properties, and the unified MEHAML template that provides any induced MaxEnt method with the same theoretical guarantees as HASAC. To demonstrate the advantages of stochastic policies, we evaluate HASAC on both discrete and continuous control tasks, confirming its superior performance and improved robustness and exploration. For future work, we aim to explore appropriate drift functionals and neighborhood operators to design more principled MaxEnt MARL algorithms that can further enhance performance and stability in multi-agent cooperation tasks.

\paragraph{Acknowledgements.}
This work is sponsored by National Natural Science Foundation of China (62376013) and Beijing Municipal Science \& Technology Commission (Z231100007423015).

\bibliography{iclr2024_conference}
\bibliographystyle{iclr2024_conference}

\newpage
\appendix
\addcontentsline{toc}{section}{Appendix} 
\part{Appendix} 
{
  \hypersetup{linkcolor=black}
  \parttoc 
}

\newpage
\section{Preliminaries}

\subsection{Table of Acronyms}
Table \ref{acronyms} lists the main acronyms used in this paper.
\begin{table}[!htb]
\centering
\caption{The acronyms used in this paper.}
\label{acronyms}
\begin{tabular}{l|l}
\hline
\textbf{Acronym} & \textbf{Meaning}                                                    \\ \hline
A2PO             & Agent-by-agent Policy Optimization                                  \\
CoPPO            & Coordinated Proximal Policy Optimization                            \\
CTDE             & Centralized training decentralized execution                        \\
FOP              & Factorizing Optimal Joint Policy                                    \\
GRF              & Google Research Football                                            \\
HA               & Heterogeneous-Agent                                                 \\
HADF             & Heterogeneous-agent drift functional                                \\
HAML             & Heterogeneous-Agent Mirror Learning                                 \\
HAPPO            & Heterogeneous-Agent Proximal Policy Optimization                    \\
HARL             & Heterogeneous-Agent Reinforcement Learning                          \\
HASAC            & Heterogeneous-Agent Soft Actor-Critic                               \\
HASPI            & Heterogeneous-agent soft policy iteration                           \\
HATD3            & Heterogeneous-Agent Twin Delayed Deep Deterministic Policy Gradient \\
HATRPO           & Heterogeneous-Agent Trust Region Policy Optimization                \\
IGO              & Individual-Global-Optimal                                           \\
LAG              & Light Aircraft Game                                                 \\
MACPF            & Multi-Agent Conditional Policy Factorization                        \\
MADDPG           & Multi-Agent Deep Deterministic Policy Gradient                      \\
MAMuJoCo         & Multi-Agent MuJoCo                                                  \\
MAPG             & Multi-agent policy gradient                                         \\
MAPPO            & Multi-Agent Proximal Policy Optimization                            \\
MARL             & Multi-agent reinforcement learning                                  \\
MASQL            & Multi-Agent Soft Q-Learning                                         \\
MaxEnt           & Maximum entropy                                                     \\
MEHAML           & Maximum Entropy Heterogeneous-Agent Mirror Learning                 \\
MEHAMO           & Maximum Entropy Heterogeneous-Agent Mirror Operator                 \\
MEHARL           & Maximum Entropy Heterogeneous-Agent Reinforcement Learning          \\
MPE              & Multi-Agent Particle Environment                                    \\
NE               & Nash equilibrium                                                    \\
PGM              & Probabilistic Graphical Model                                       \\
PPO              & Proximal Policy Optimization                                        \\
QRE              & Quantal response equilibrium                                        \\
RL               & Reinforcement learning                                              \\
RPS              & Run pass and shoot with keeper                                      \\
SAC              & Soft Actor-Critic                                                   \\
SMAC             & StarCraft Multi-Agent Challenge                                     \\
SQL              & Soft Q-Learning                                                     \\ \hline
\end{tabular}
\end{table}

\subsection{Individual-Global-Optimal}
\label{IGO}
The Individual-Global-Optimal (\textbf{IGO}) assumption is introduced in \citep{zhang2021fop} and can be stated as below.
\begin{definition}[\textbf{IGO}]
For an optimal joint policy $\boldsymbol{\pi}_\star(\boldsymbol{a}|s): \mathcal{S} \times \boldsymbol{{\mathcal{A}}} \rightarrow[0,1]$, if there exist individual optimal policies $[\pi^i_\star(\boldsymbol{a}^i|s): \mathcal{S} \times \mathcal{A}^i \rightarrow[0,1]]^n_{i=1}$, such that the following holds:
\[
\boldsymbol{\pi}_\star(\boldsymbol{a}|s) = \prod^n_{i=1} \pi^i_\star(\boldsymbol{a}^i|s),
\]
then, we say that $[\pi^i]$ satisfy \textbf{IGO} for $\boldsymbol{\pi}$ under $s$.
\end{definition}
The main limitation of IGO arises from the fact that during training, each agent's policy is updated based solely on its local Q-function, thereby ignoring the actions of other agents. As discussed by \citet{wang2023centralized}, when multiple agents need to collaborate in decision-making, they may fail to reach a consensus, updating their policies in the direction that increases their local Q-functions, but eventually resulting in a decrease in the global Q-function.


\subsection{Definitions and Assumptions}
\label{apdA1}

Throughout the proofs, we make the following regularity assumption introduced by \citet{kuba2022trust}:
\begin{restatable}{asum}{gzero}
There exists $\eta \in \mathbb{R}$, such that $0<\eta \ll 1$, and for every agent $i \in \mathcal{N}$, the policy space $\Pi^i$ is $\eta$-soft; that means that for every $\pi^i \in \Pi^i, s \in \mathcal{S}$, and $a^i \in \mathcal{A}^i$, we have $\pi^i\left(a^i | s\right) \geq \eta$.
\end{restatable}

In the following, we provide the essential definitions of the two key components, originally proposed by \citet{kuba2022heterogeneous}, that serve as the building blocks of the MEHAML framework. Additionally, we present the definitions of soft advantage function and a notion of distance that will be utilized in the proof of Lemma \ref{lemma4}.
\begin{definition}
Let $i \in \mathcal{N}$, a \textbf{heterogeneous-agent drift functional} (HADF) $\mathfrak{D}^{i}$ of $i$ consists of a map, which is defined as
$$
\mathfrak{D}^i: \boldsymbol{\Pi} \times \textcolor{blue}{\boldsymbol{\Pi}} \times \textcolor{red}{\mathbb{P}(-i)} \times \mathcal{S} \rightarrow\left\{\mathfrak{D}_{\boldsymbol{\pi}}^i\left(\cdot | s, \textcolor{blue}{\bar{\boldsymbol{\pi}}}^{\textcolor{red}{j_{1: m}}}\right): \mathcal{P}\left(\mathcal{A}^i\right) \rightarrow \mathbb{R}\right\},
$$
such that for all arguments, under notation $\mathfrak{D}_{\boldsymbol{\pi}}^i\left(\hat{\pi}^i | s, \bar{\boldsymbol{\pi}}^{j_{1: m}}\right) \triangleq \mathfrak{D}_{\boldsymbol{\pi}}^i\left(\hat{\pi}^i\left(\cdot{ }^i | s\right) | s, \bar{\boldsymbol{\pi}}^{j_{1: m}}\right)$,

\begin{enumerate}[itemindent=2.5em]
    \item $\mathfrak{D}_{\boldsymbol{\pi}}^i\left(\hat{\pi}^i | s, \bar{\boldsymbol{\pi}}^{j_{1: m}}\right) \geq \mathfrak{D}_{\boldsymbol{\pi}}^i\left(\pi^i | s, \bar{\boldsymbol{\pi}}^{j_{1: m}}\right)=0$ (non-negativity),

    \item $\mathfrak{D}_{\boldsymbol{\pi}}^i\left(\hat{\pi}^i | s, \bar{\boldsymbol{\pi}}^{j_{1: m}}\right)$ has all Gâteaux derivatives zero at $\hat{\pi}^i=\pi^i$ (zero gradient),
\end{enumerate}
We say that the HADF is positive if $\mathfrak{D}^i_{\boldsymbol{\pi}}\left(\hat{\pi}^i | \bar{\boldsymbol{\pi}}^{j_{1: m}}\right)=0$, $\forall s \in \mathcal{S}$ implies $\hat{\pi}^i=\pi^i$, and trivial if $\mathfrak{D}_{\boldsymbol{\pi}}^{i}\left(\hat{\pi}^i | \bar{\boldsymbol{\pi}}^{j_{1: m}}\right)=0$, $\forall s \in \mathcal{S}$ for all $\boldsymbol{\pi}, \bar{\boldsymbol{\pi}}^{j_{1: m}}$, and $\hat{\pi}^i$.
\end{definition}

\begin{definition}
Let $i \in \mathcal{N}$. We say that, $\mathcal{U}^i: \boldsymbol{\Pi} \times \Pi^i \rightarrow \mathbb{P}\left(\Pi^i\right)$ is a neighborhood operator if $\forall \pi^i \in \Pi^i$, $\mathcal{U}_{\boldsymbol{\pi}}^i\left(\pi^i\right)$ contains a closed ball, i.e., there exists a state-wise monotonically non-decreasing metric $\chi: \Pi^i \times \Pi^i \rightarrow \mathbb{R}$ such that $\forall \pi^i \in \Pi^i$ there exists $\delta^i>0$ such that $\chi\left(\pi^i, \bar{\pi}^i\right) \leq \delta^i \Longrightarrow \bar{\pi}^i \in \mathcal{U}_{\boldsymbol{\pi}}^i\left(\pi^i\right)$.
\end{definition}

\begin{definition}
Let $i_{1: m}=\{i_1, \ldots, i_m\} \subseteq$ $\mathcal{N}$ and $j_{1: k}=\{j_1, \ldots, j_k\} \subseteq$ $\mathcal{N}$ be disjoint ordered subsets of agents, and let $Q_{\boldsymbol{\pi}}^{i_{1: m}}\left(s, \boldsymbol{a}^{i_{1: m}}\right)$ be the multi-agent soft Q-function. Then the multi-agent soft advantage function is defined as
\begin{equation}
\label{eq9}
A_{\boldsymbol{\pi}}^{i_{1: m}}\left(s, \boldsymbol{a}^{j_{1: k}}, \boldsymbol{a}^{i_{1: m}}\right) \triangleq Q_{\boldsymbol{\pi}}^{j_{1: k}, i_{1: m}}\left(s, \boldsymbol{a}^{j_{1: k}}, \boldsymbol{a}^{i_{1: m}}\right)-Q_{\boldsymbol{\pi}}^{j_{1: k}}\left(s, \boldsymbol{a}^{j_{1: k}}\right) .
\end{equation}
\end{definition}

\begin{definition}
Let $X$ be a finite set and $p: X \rightarrow \mathbb{R}, q: X \rightarrow \mathbb{R}$ be two maps. Then, the notion of \textbf{distance} between $p$ and $q$ that we adopt is given by $\|p-q\| \triangleq \max _{x \in X}|p(x)-q(x)|$.
\end{definition}

\subsection{Proofs of Useful Lemmas}
\label{apdA2}

\begin{restatable}[Multi-Agent Advantage Decomposition]{lma}{maad} Let $\pi$ be a joint policy, and $i_1, \ldots, i_m$ be an arbitrary ordered subset of agents. Then, for any state s and joint action 
$\boldsymbol{a}^{i_{1: m}}$,
\label{lemma1}
\small
\begin{equation}
A_{\boldsymbol{\pi}}^{i_{1: m}}\left(s, \boldsymbol{a}^{i_{1: m}}\right)=\sum_{j=1}^m A_{\boldsymbol{\pi}}^{i_j}\left(s, \boldsymbol{a}^{i_{1: j-1}}, a^{i_j}\right) .
\end{equation}
\normalsize
\end{restatable}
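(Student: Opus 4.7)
\textit{Proof plan.} The plan is to prove the identity by a direct telescoping argument on the sequence of multi-agent soft Q-functions $Q_{\boldsymbol{\pi}}^{i_{1:j}}$ indexed by the prefix length $j$. The key observation is that each single-agent soft advantage on the right-hand side is, by Definition (Equation~\ref{eq9}), a difference of consecutive soft Q-values along the ordered subset $i_{1:m}$, so summing them collapses almost everything.

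First, I would unfold the summand using Equation~\ref{eq9} with $j_{1:k} = i_{1:j-1}$ and $i_{1:m} = \{i_j\}$, which yields
\[
A_{\boldsymbol{\pi}}^{i_j}\!\left(s,\boldsymbol{a}^{i_{1:j-1}},a^{i_j}\right)
= Q_{\boldsymbol{\pi}}^{i_{1:j}}\!\left(s,\boldsymbol{a}^{i_{1:j}}\right) - Q_{\boldsymbol{\pi}}^{i_{1:j-1}}\!\left(s,\boldsymbol{a}^{i_{1:j-1}}\right).
\]
Summing over $j = 1,\dots,m$ produces a telescoping sum whose value equals $Q_{\boldsymbol{\pi}}^{i_{1:m}}(s,\boldsymbol{a}^{i_{1:m}}) - Q_{\boldsymbol{\pi}}^{\emptyset}(s)$. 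I would then apply Equation~\ref{eq9} again with $j_{1:k}=\emptyset$ and the full ordered subset $i_{1:m}$ to rewrite the first term as $A_{\boldsymbol{\pi}}^{i_{1:m}}(s,\boldsymbol{a}^{i_{1:m}}) + Q_{\boldsymbol{\pi}}^{\emptyset}(s)$, so that the $Q_{\boldsymbol{\pi}}^{\emptyset}(s)$ terms cancel and the claimed identity follows.

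The only step that is not a formal rewrite is verifying the boundary case $Q_{\boldsymbol{\pi}}^{\emptyset}(s) = V_{\boldsymbol{\pi}}(s)$, since unlike in the standard (non-entropic) setting the multi-agent soft Q-function in Equation~\ref{eq8} carries the entropy contributions of all agents outside the conditioning subset. For $m=0$ the expectation in Equation~\ref{eq8} is taken over the full joint action under $\boldsymbol{\pi}$ and the summation over $i \in -i_{1:m}$ ranges over all of $\mathcal{N}$, which matches exactly the definition of $V_{\boldsymbol{\pi}}(s)$ in Equation~\ref{V}. I would flag this as the main (mild) obstacle to guard against, since it is the one place where the MaxEnt extension changes the bookkeeping relative to the non-entropic advantage decomposition of \citet{kuba2022trust}; everything else reduces to telescoping and the definitional identities in Section~\ref{HASAC}.
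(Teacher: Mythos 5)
Your telescoping argument is exactly the paper's proof (quoted from \citet{kuba2021settling}): expand each summand via Equation~\ref{eq9}, telescope, and use $Q_{\boldsymbol{\pi}}^{\emptyset}(s)=V_{\boldsymbol{\pi}}(s)$, which the paper states right after Equation~\ref{eq8}. Your extra check that the entropy bookkeeping in the MaxEnt definition makes the $m=0$ case coincide with Equation~\ref{V} is a sound verification of that same identity, so the proposal is correct and matches the paper's route.
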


\begin{proof}
(The lemma is proposed in \citep{kuba2021settling} and we quote the proof from \citet{kuba2021settling}) By the definition of multi-agent soft advantage function,
\[
\begin{aligned}
A_{\boldsymbol{\pi}}^{i_{1: m}}\left(s, \boldsymbol{a}^{i_{1: m}}\right) & =Q_{\boldsymbol{\pi}}^{i_{1: m}}\left(s, \boldsymbol{a}^{i_{1: m}}\right)-V_{\boldsymbol{\pi}}(s) \\
& =\sum_{j=1}^m\left[Q_{\boldsymbol{\pi}}^{i_{1: j}}\left(s, \boldsymbol{a}^{i_{1: j}}\right)-Q_{\boldsymbol{\pi}}^{i_{1: j-1}}\left(s, \boldsymbol{a}^{i_{1: j-1}}\right)\right]=\sum_{j=1}^m A_{\boldsymbol{\pi}}^{i_j}\left(s, \boldsymbol{a}^{i_{1: j-1}}, a^{i_j}\right),
\end{aligned}
\]
which finishes the proof.
\end{proof}

The continuity of $Q_{\boldsymbol{\pi}}$ is a crucial requirement for proving Theorem \ref{theorem1} later. Now we first prove that the inclusion of an additional entropy term does not affect the continuity of $Q_{\pi}$, which is the state-action value function in the single-agent setting, where $\pi$ denotes the policy of a single agent. And finally, we generalize the result to $Q_{\boldsymbol{\pi}}$ in MARL.
\begin{lma}[Continuity of $Q_\pi$]
Let $\pi$ be a policy. Then $Q_\pi(s, a)$ is continuous in $\pi$.
\label{lemma4}
\end{lma}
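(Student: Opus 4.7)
The plan is to realize $Q_\pi$ as the unique fixed point of the soft Bellman operator and then invoke the standard parametrized-contraction argument to transfer continuity of the operator in $\pi$ to continuity of its fixed point.

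First I would write down, for a given $\pi$, the soft Bellman operator
\[
(T_\pi Q)(s,a) \triangleq r(s,a) + \gamma \sum_{s'\in\mathcal{S}} P(s'|s,a) \sum_{a'\in\mathcal{A}} \pi(a'|s')\bigl[Q(s',a') - \alpha\log\pi(a'|s')\bigr],
\]
acting on bounded maps $Q: \mathcal{S}\times\mathcal{A}\to\mathbb{R}$. Since $\mathcal{S}$ and $\mathcal{A}$ are finite and, by the $\eta$-soft assumption, $|\log\pi(a|s)|\le|\log\eta|$, the operator is well defined on the Banach space of bounded $Q$ endowed with the sup-norm. A direct calculation, identical to the single-agent case of Lemma \ref{spe}, shows that $T_\pi$ is a $\gamma$-contraction, so it has a unique fixed point; unrolling the fixed-point equation and comparing with the series expansion in Equation \ref{softq} (specialized to one agent) identifies this fixed point with $Q_\pi$.

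Next I would verify that the map $\pi \mapsto T_\pi Q$ is continuous for every fixed bounded $Q$. Indeed, $(T_\pi Q)(s,a)$ is a finite sum of terms of the form $\pi(a'|s') Q(s',a')$ and $\alpha\, \pi(a'|s')\log\pi(a'|s')$; the first are linear in $\pi$, and the second are continuous on the compact set $\{\pi : \pi(a'|s')\ge\eta\}$ since $x\mapsto x\log x$ is continuous on $[\eta,1]$. Hence for any sequence $\pi_n\to\pi$ (coordinatewise) and any bounded $Q$, $\|T_{\pi_n}Q - T_\pi Q\|_\infty \to 0$.

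Finally I would combine these ingredients. For the fixed points $Q_{\pi_n}$ and $Q_\pi$,
\[
\|Q_{\pi_n}-Q_\pi\|_\infty = \|T_{\pi_n}Q_{\pi_n} - T_\pi Q_\pi\|_\infty \le \|T_{\pi_n}Q_{\pi_n} - T_{\pi_n}Q_\pi\|_\infty + \|T_{\pi_n}Q_\pi - T_\pi Q_\pi\|_\infty \le \gamma\,\|Q_{\pi_n}-Q_\pi\|_\infty + \epsilon_n,
\]
with $\epsilon_n\to 0$ by the previous paragraph, so $\|Q_{\pi_n}-Q_\pi\|_\infty \le \epsilon_n/(1-\gamma)\to 0$. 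Because continuity at an arbitrary $\pi$ via sequences suffices in this finite setting, this proves the lemma.

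The main obstacle (and why the $\eta$-soft assumption is invoked) is the entropy contribution $-\alpha\,\pi\log\pi$: without a uniform lower bound on $\pi(a|s)$, $\log\pi$ can blow up and both boundedness of $T_\pi$ and continuity of $\pi\mapsto T_\pi Q$ could fail near the boundary of the simplex. With the $\eta$-soft assumption the soft Bellman operator stays Lipschitz in $\pi$ and the standard contraction argument goes through cleanly.
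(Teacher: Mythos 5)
Your proposal is correct and is essentially the paper's argument recast in fixed-point form: both derive an inequality of the shape $\|Q_\pi-Q_{\hat\pi}\|_\infty \le \gamma\|Q_\pi-Q_{\hat\pi}\|_\infty + (\text{terms controlled by } \|\pi-\hat\pi\| \text{ and } \|\log\pi-\log\hat\pi\|)$ from the soft Bellman recursion and then divide by $1-\gamma$, with the $\eta$-soft regularity assumption keeping the entropy contribution bounded and continuous, just as the paper implicitly uses it through $\log_{\max}\pi$. The only difference is presentational: you package the estimate as continuity of the fixed point of a parametrized $\gamma$-contraction via the triangle inequality, whereas the paper carries out the same bound term by term and concludes with an explicit $\epsilon$-$\delta$ argument.
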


\begin{proof}
Let $\pi$ and $\hat{\pi}$ be two policies. Then we have
\begingroup
\allowdisplaybreaks
\begin{align*}
& \left|Q_\pi(s, a)-Q_{\hat{\pi}}(s, a)\right| \\
& =\left|\left(r(s, a)+\gamma \sum_{s^{\prime}} P\left(s^{\prime} | s, a\right) \left(\sum_{a^{\prime}} \pi(a^{\prime} | s^{\prime}) Q_\pi(s^{\prime}, a^{\prime})-\alpha{\sum_{a^{\prime}}\pi\left(a^{\prime} | s^{\prime}\right)\log\pi\left(a^{\prime} | s^{\prime}\right)}\right)\right)  \right.\\
& \left. -\left(r(s, a)+\gamma \sum_{s^{\prime}} P\left(s^{\prime} | s, a\right) \left(\sum_{a^{\prime}}\hat{\pi}(a^{\prime} | s^{\prime}) Q_{\hat{\pi}}(s^{\prime}, a^{\prime})-\alpha{\sum_{a^{\prime}}\hat{\pi}\left(a^{\prime} | s^{\prime}\right)\log\hat{\pi}\left(a^{\prime} | s^{\prime}\right)}\right)\right)\right| \\
& =\gamma\left|\sum_{s^{\prime}} P\left(s^{\prime} | s, a\right) \left(\sum_{a^{\prime}} \left[\pi(a^{\prime} | s^{\prime}) Q_\pi(s^{\prime}, a^{\prime})-\hat{\pi}(a^{\prime} | s^{\prime}) Q_{\hat{\pi}}(s^{\prime}, a^{\prime})\right] \right.\right.\\
& \left.\left. -\alpha{\sum_{a^{\prime}}\left[\pi\left(a^{\prime} | s^{\prime}\right)\log\pi\left(a^{\prime} | s^{\prime}\right)-\hat{\pi}\left(a^{\prime} | s^{\prime}\right)\log\hat{\pi}\left(a^{\prime} | s^{\prime}\right)\right]}\right)\right| \\
& \leq \gamma \sum_{s^{\prime}} P\left(s^{\prime} | s, a\right) \left(\sum_{a^{\prime}} \left|\pi(a^{\prime} | s^{\prime}) Q_\pi(s^{\prime}, a^{\prime})-\hat{\pi}(a^{\prime} | s^{\prime}) Q_{\hat{\pi}}(s^{\prime}, a^{\prime})\right| \right.\\
& \left. + \alpha {\sum_{a^{\prime}}\left|\pi\left(a^{\prime} | s^{\prime}\right)\log\pi\left(a^{\prime} | s^{\prime}\right)-\hat{\pi}\left(a^{\prime} | s^{\prime}\right)\log\hat{\pi}\left(a^{\prime} | s^{\prime}\right)\right|}\right) \\
& =\gamma \sum_{s^{\prime}} P\left(s^{\prime} | s, a\right) \left(\sum_{a^{\prime}} \left|\pi(a^{\prime} | s^{\prime}) Q_\pi(s^{\prime}, a^{\prime})-\hat{\pi}(a^{\prime} | s^{\prime}) Q_\pi(s^{\prime}, a^{\prime}) \right.\right. \\
&+\left.\left.\hat{\pi}(a^{\prime} | s^{\prime}) Q_\pi(s^{\prime}, a^{\prime})-\hat{\pi}(a^{\prime} | s^{\prime}) Q_{\hat{\pi}}(s^{\prime}, a^{\prime})\right| \right.\\
&+ \left. \alpha {\sum_{a^{\prime}}\left|\left(\pi\left(a^{\prime} | s^{\prime}\right)-\hat{\pi}\left(a^{\prime} | s^{\prime}\right)\right)\log\pi\left(a^{\prime} | s^{\prime}\right)+\hat{\pi}\left(a^{\prime} | s^{\prime}\right)\left(\log\pi\left(a^{\prime} | s^{\prime}\right)-\log\hat{\pi}\left(a^{\prime} | s^{\prime}\right)\right)\right|}\right) \\
& \leq \gamma \sum_{s^{\prime}} P\left(s^{\prime} | s, a\right) \left(\sum_{a^{\prime}} \left(\left|\pi(a^{\prime} | s^{\prime}) Q_\pi(s^{\prime}, a^{\prime})-\hat{\pi}(a^{\prime} | s^{\prime}) Q_\pi(s^{\prime}, a^{\prime})\right| \right.\right. \\
&+ \left.\left.\left|\hat{\pi}(a^{\prime} | s^{\prime}) Q_\pi(s^{\prime}, a^{\prime})-\hat{\pi}(a^{\prime} | s^{\prime}) Q_{\hat{\pi}}(s^{\prime}, a^{\prime})\right|\right) \right.\\
&+ \left. \alpha {\sum_{a^{\prime}}\left(\left|\pi\left(a^{\prime} | s^{\prime}\right)-\hat{\pi}\left(a^{\prime} | s^{\prime}\right)\right|\left|\log\pi\left(a^{\prime} | s^{\prime}\right)\right|+\left|\hat{\pi}\left(a^{\prime} | s^{\prime}\right)\right|\left|\log\pi\left(a^{\prime} | s^{\prime}\right)-\log\hat{\pi}\left(a^{\prime} | s^{\prime}\right)\right|\right)}\right) \\
&= \gamma \sum_{s^{\prime}} P\left(s^{\prime} | s, a\right) \left(\sum_{a^{\prime}} |\pi(a^{\prime} | s^{\prime})-\hat{\pi}(a^{\prime} | s^{\prime})| \cdot\left|Q_\pi(s^{\prime}, a^{\prime})\right|+\sum_{a^{\prime}} \hat{\pi}(a^{\prime} | s^{\prime})\left|Q_\pi(s^{\prime}, a^{\prime})-Q_{\hat{\pi}}(s^{\prime}, a^{\prime})\right| \right.\\
&+ \left. \alpha {\sum_{a^{\prime}}\left(\left|\pi\left(a^{\prime} | s^{\prime}\right)-\hat{\pi}\left(a^{\prime} | s^{\prime}\right)\right|\left|\log\pi\left(a^{\prime} | s^{\prime}\right)\right|+\left|\hat{\pi}\left(a^{\prime} | s^{\prime}\right)\right|\left|\log\pi\left(a^{\prime} | s^{\prime}\right)-\log\hat{\pi}\left(a^{\prime} | s^{\prime}\right)\right|\right)}\right) \\
& \leq \gamma \sum_{s^{\prime}} P\left(s^{\prime} | s, a\right) \left(\sum_{a^{\prime}} \|\pi-\hat{\pi}\| \cdot Q_{\max }+\sum_{a^{\prime}} \hat{\pi}(a^{\prime} | s^{\prime})\left\|Q_\pi-Q_{\hat{\pi}}\right\| \right.\\
& + \left. \alpha {\sum_{a^{\prime}}\left(\|\pi-\hat{\pi}\|\cdot\log_{\max}\pi+\hat{\pi}\left(a^{\prime} | s^{\prime}\right)\|\log\pi-\log\hat{\pi}\|\right)}\right) \\
& \leq \gamma Q_{\max } \cdot|\mathcal{A}| \cdot\|\pi-\hat{\pi}\|+\gamma\left\|Q_\pi-Q_{\hat{\pi}}\right\| + \alpha\gamma \log_{\max}\pi\cdot|\mathcal{A}|\cdot\|\pi-\hat{\pi}\|+\alpha\gamma\|\log\pi-\log\hat{\pi}\|
\end{align*}
\endgroup
Hence, we get
\[
\left\|Q_\pi-Q_{\hat{\pi}}\right\| \leq \gamma Q_{\max } \cdot|\mathcal{A}| \cdot\|\pi-\hat{\pi}\|+\gamma\left\|Q_\pi-Q_{\hat{\pi}}\right\| + \alpha\gamma \log_{\max}\pi\cdot|\mathcal{A}|\cdot\|\pi-\hat{\pi}\|+\alpha\gamma\|\log\pi-\log\hat{\pi}\|,
\]
which implies
\[
\left\|Q_\pi-Q_{\hat{\pi}}\right\| \leq \frac{\gamma  \cdot|\mathcal{A}| \cdot \left(Q_{\max }+\alpha\log_{\max}\pi\right)\cdot\|\pi-\hat{\pi}\|+\alpha\gamma\|\log\pi-\log\hat{\pi}\|}{1-\gamma}
\]
By continuity of $\pi$ and $\log \pi$, for any arbitrary $\epsilon > 0$, we can find $\delta_1>0$ such that $\|\pi-\hat{\pi}\| < \delta_1$ implies $\|\pi-\hat{\pi}\| < \frac{\left(1-\gamma\right)\epsilon}{2\gamma  \cdot|\mathcal{A}| \cdot \left(Q_{\max }+\alpha\log_{\max}\pi\right)}$ and $\delta_2>0$ such that $\|\pi-\hat{\pi}\| < \delta_2$ implies $\|\log\pi-\log\hat{\pi}\| < \frac{\left(1-\gamma\right)\epsilon}{2\alpha\gamma}$. Taking $\delta = \min\left(\delta_1, \delta_2\right)$ , when $\|\pi-\hat{\pi}\| < \delta$ we get $\left\|Q_\pi-Q_{\hat{\pi}}\right\| < \epsilon$, which finishes the proof. 
\end{proof}

\begin{cor}
From Lemma \ref{lemma4} we obtain that the following functions are continuous in $\pi$ :
\begin{enumerate}[itemindent=2.5em]
    \item[(1)] the state value function $V_\pi(s)=\sum_a \left(\pi(a | s) Q_\pi(s, a)-\pi(a | s)\log\pi(a | s)\right)$,

    \item[(2)] the advantage function $A_\pi(s, a)=Q_\pi(s, a)-V_\pi(s)$,

    \item[(3)] and the expected total reward $J(\pi)=\mathbb{E}_{\mathrm{s} \sim \rho_0}\left[V_\pi(\mathrm{s})\right]$.
\end{enumerate}
\end{cor}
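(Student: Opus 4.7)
The plan is to bootstrap each of the three continuity claims directly from Lemma \ref{lemma4}, which already establishes that $\pi \mapsto Q_\pi(s,a)$ is continuous with respect to the metric $\|\cdot\|$ of Appendix \ref{apdA1}. The underlying principle is entirely routine: on a finite state-action space, finite sums, products, and compositions with continuous maps preserve continuity. So the whole corollary should amount to three short invocations, with only one point requiring a moment of care.

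For item (1), I would fix $s \in \mathcal{S}$ and split
\[
V_\pi(s) = \sum_{a \in \mathcal{A}} \pi(a|s)\, Q_\pi(s,a) \;-\; \alpha \sum_{a \in \mathcal{A}} \pi(a|s)\, \log \pi(a|s).
\]
The evaluation map $\pi \mapsto \pi(a|s)$ is $1$-Lipschitz by definition of $\|\cdot\|$; the map $\pi \mapsto Q_\pi(s,a)$ is continuous by Lemma \ref{lemma4}. The only place where continuity is not obvious is the entropy term, because $\log$ is singular at $0$; however, the $\eta$-softness Assumption \ref{gzero} guarantees $\pi(a|s) \geq \eta > 0$ uniformly on $\Pi$, so $\log$ is continuous (in fact Lipschitz) on the image $[\eta,1]$. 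Finite sums and products of continuous functions are continuous, yielding (1). Item (2) follows immediately since $A_\pi(s,a) = Q_\pi(s,a) - V_\pi(s)$ is a difference of two functions already shown continuous in $\pi$.

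For item (3), the finiteness of $\mathcal{S}$ (from the cooperative Markov game setup in Section \ref{headings}) allows me to write $J(\pi) = \sum_{s \in \mathcal{S}} \rho_0(s)\, V_\pi(s)$ as a finite convex combination of the continuous functions $V_\pi(s)$ produced by item (1); hence $J$ is continuous in $\pi$. The main (and only) subtlety across all three items is the appeal to Assumption \ref{gzero} to stay uniformly bounded away from the singularity of $\log$; beyond that, no nontrivial estimate is needed and I do not expect any real obstacle.
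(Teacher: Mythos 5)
Your proposal is correct and matches the paper's (implicit) argument: the paper offers no written proof for this corollary, treating it as an immediate consequence of Lemma \ref{lemma4} via exactly the routine closure properties you spell out — finite sums and products over the finite action and state spaces, with the $\eta$-softness assumption keeping $\log\pi$ well behaved (the same assumption the proof of Lemma \ref{lemma4} itself relies on through the $\log_{\max}\pi$ and $\|\log\pi-\log\hat{\pi}\|$ terms). The only cosmetic remark is that $x\mapsto x\log x$ extends continuously to $x=0$, so mere continuity of the entropy term does not strictly require Assumption \ref{gzero}, but invoking it is harmless and consistent with the paper's standing hypotheses.
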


\begin{cor}[Continuity in MARL]
All the results about continuity in $\pi$ extend to MARL. Policy $\pi$ can be replaced with joint policy $\boldsymbol{\pi}$; as $\boldsymbol{\pi}$ is Lipschitz-continuous in agent $i$ 's policy $\pi^i$, the above continuity results extend to continuity in $\pi^i$. Thus, we will quote them in our proofs for MARL.
\end{cor}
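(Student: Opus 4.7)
The plan is to establish the corollary in two stages: first generalize Lemma~\ref{lemma4} from a single-agent policy $\pi$ to a joint policy $\boldsymbol{\pi}$, showing that $Q_{\boldsymbol{\pi}}$ is continuous in $\boldsymbol{\pi}$; then invoke continuity of composition after proving that the joint policy $\boldsymbol{\pi}=\prod_{j=1}^n\pi^j$ depends Lipschitz-continuously on each individual $\pi^i$.

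For the first stage, I would rerun the proof of Lemma~\ref{lemma4} essentially verbatim, replacing the scalar action $a$ by the joint action $\boldsymbol{a}$, the transition $P(s'|s,a)$ by $P(s'|s,\boldsymbol{a})$, and the single-agent entropy term by the per-agent sum $\alpha\sum_{i=1}^n \mathcal{H}(\pi^i(\cdot^i|s'))$ appearing in the MaxEnt-MARL soft Bellman backup (Equations~\ref{Bellman}--\ref{V}). The key observation is that $\sum_{\boldsymbol{a}'}\boldsymbol{\pi}(\boldsymbol{a}'|s')\sum_{i}\log\pi^i(a'^i|s')$ decomposes into $n$ summands, each of which can be split by the same $|pq - \hat p\hat q|\le|p-\hat p|\,|q|+|\hat p|\,|q-\hat q|$ trick used in Lemma~\ref{lemma4}, with each $|\log\pi^i|$ bounded by $|\log\eta|$ thanks to the $\eta$-softness assumption. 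The summation over $|\mathcal{A}|$ becomes a summation over $|\boldsymbol{\mathcal{A}}|=\prod_i|\mathcal{A}^i|$, which is still finite, so the same contraction-style manipulation yields a bound of the form $\|Q_{\boldsymbol{\pi}}-Q_{\hat{\boldsymbol{\pi}}}\|\le C_1\|\boldsymbol{\pi}-\hat{\boldsymbol{\pi}}\|+C_2\|\log\boldsymbol{\pi}-\log\hat{\boldsymbol{\pi}}\|$, giving continuity of $Q_{\boldsymbol{\pi}}$ in $\boldsymbol{\pi}$. The three corollary statements for $V_{\boldsymbol{\pi}}$, $A_{\boldsymbol{\pi}}^{i_{1:m}}$, and $J(\boldsymbol{\pi})$ then follow exactly as in the single-agent case, since they are algebraic combinations of $Q_{\boldsymbol{\pi}}$ with $\boldsymbol{\pi}$ and the bounded per-agent log terms.

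For the second stage, I would show that fixing $\boldsymbol{\pi}^{-i}$ and varying only $\pi^i$ produces a $1$-Lipschitz map into the joint policy: for any $s$ and $\boldsymbol{a}=(a^i,\boldsymbol{a}^{-i})$,
\[
|\boldsymbol{\pi}(\boldsymbol{a}|s)-\hat{\boldsymbol{\pi}}(\boldsymbol{a}|s)|=\boldsymbol{\pi}^{-i}(\boldsymbol{a}^{-i}|s)\,|\pi^i(a^i|s)-\hat{\pi}^i(a^i|s)|\le \|\pi^i-\hat{\pi}^i\|,
\]
so $\|\boldsymbol{\pi}-\hat{\boldsymbol{\pi}}\|\le\|\pi^i-\hat{\pi}^i\|$. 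The corresponding bound for the log differences is even cleaner: since $\log\boldsymbol{\pi}(\boldsymbol{a}|s)=\sum_j\log\pi^j(a^j|s)$, only the $i$-th summand changes, and $\|\log\boldsymbol{\pi}-\log\hat{\boldsymbol{\pi}}\|=\|\log\pi^i-\log\hat{\pi}^i\|$. Composing the two Lipschitz/continuous maps then transports continuity of $Q_{\boldsymbol{\pi}}, V_{\boldsymbol{\pi}}, A_{\boldsymbol{\pi}}^{i_{1:m}}, J(\boldsymbol{\pi})$ in $\boldsymbol{\pi}$ to continuity in any single $\pi^i$, as claimed.

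The only genuinely new piece of bookkeeping compared with Lemma~\ref{lemma4} is in the first stage: in the single-agent proof the logarithm was attached to exactly the policy being varied, whereas here the Bellman backup contains $n$ distinct $\log\pi^j$ terms that all ride on the same weighting $\boldsymbol{\pi}(\boldsymbol{a}'|s')$. One must carefully separate, for each $j$, the "varied" log from the "fixed" logs and absorb the latter as constants under $\eta$-softness before taking sup-norms. Once that bookkeeping is handled, no new ideas are required and the single-agent argument transfers directly, which is exactly why the authors phrase the result as a corollary rather than a standalone lemma.
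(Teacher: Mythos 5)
Your proposal is correct and follows exactly the route the paper intends: the corollary is stated without a separate proof precisely because the argument is (i) rerun Lemma~\ref{lemma4} with $\boldsymbol{a}$, $P(s'|s,\boldsymbol{a})$, and the per-agent entropy sum in place of their single-agent counterparts, and (ii) transport continuity in $\boldsymbol{\pi}$ to continuity in each $\pi^i$ via the $1$-Lipschitz product map $\pi^i\mapsto\pi^i\cdot\boldsymbol{\pi}^{-i}$ (and the additive decomposition of $\log\boldsymbol{\pi}$), with $\eta$-softness controlling the log terms. Your explicit handling of the $n$ separate $\log\pi^j$ summands is the only bookkeeping the paper leaves implicit, and you handle it correctly.
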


\newpage
\section{Exact Calculation of Matrix Game}
\label{exact-calculation-of-matrix-game}

In this section, we provide an exact calculation of the matrix game in Section \ref{limitations}. The initial policies of the matrix game represent the scenario where due to the exploration and learning in the early stages, agents may have explored a locally optimal solution (in this example action $(A, A)$) and assigned a high probability to it. The matrix game aims to show that both MAPPO and HAPPO will converge towards this local optimum in expectation, while HASAC, owing to its optimization of the maximum entropy objective, has the potential to escape this local optimum and converge towards a superior solution.

We start by analyzing MAPPO. For both parameter-sharing and non-parameter-sharing versions of MAPPO, the policy is optimizing the following objective:

\begin{align}
\pi^{i}_{\text{new}} = \underset{\pi^i}{\arg \max}\ \mathbb{E}_{\mathrm{s}\sim\rho_{\boldsymbol{\pi}_{\text{old}}}, \mathbf{a}\sim \boldsymbol{\pi}_{\text{old}}}\left[\frac{\pi^i(\mathrm{a}^i|\mathrm{s})}{\pi^i_{\text{old}}(\mathrm{a}^i|\mathrm{s})}A_{\boldsymbol{\pi}_{\text{old}}}(\mathrm{s},\mathbf{a})\right], i = 1,2
\label{mappo-target}
\end{align}

Note that we omit ratio-clipping in this tabular case, as this is simpler and does not affect the final result. In this 2-agent single-state matrix game, $\mathrm{s}\sim\rho_{\boldsymbol{\pi}_{\text{old}}}$ can be ignored as there is only one state. Value function $V_{\boldsymbol{\pi}_{\text{old}}}(s)$ can be calculated exactly:

\begin{align*}
V_{\boldsymbol{\pi}_{\text{old}}}(s)
&= \mathbb{E}_{\mathbf{a}\sim\boldsymbol{\pi_{\text{old}}}}\left[Q_{\boldsymbol{\pi}_{\text{old}}}(s, \mathbf{a})\right] \\
&= \mathbb{E}_{\mathbf{a}\sim\boldsymbol{\pi_{\text{old}}}}\left[r(s, \mathbf{a})\right] \\
&= 0.36 \times 5 + 0.04 \times 10 + 0.04 \times 20 + 4 \times 0.12 \times (-20) + 2 \times 0.04 \times (-20) \\
&= -8.2
\end{align*}

Similarly, the advantage function $A_{\boldsymbol{\pi}_{\text{old}}}(s, \mathbf{a})$ can be calculated exactly as shown in Figure \ref{fig:advantage}.

\begin{figure}[htbp]
\vspace{-5pt}
 			\centering
 			\includegraphics[width=1.8in]{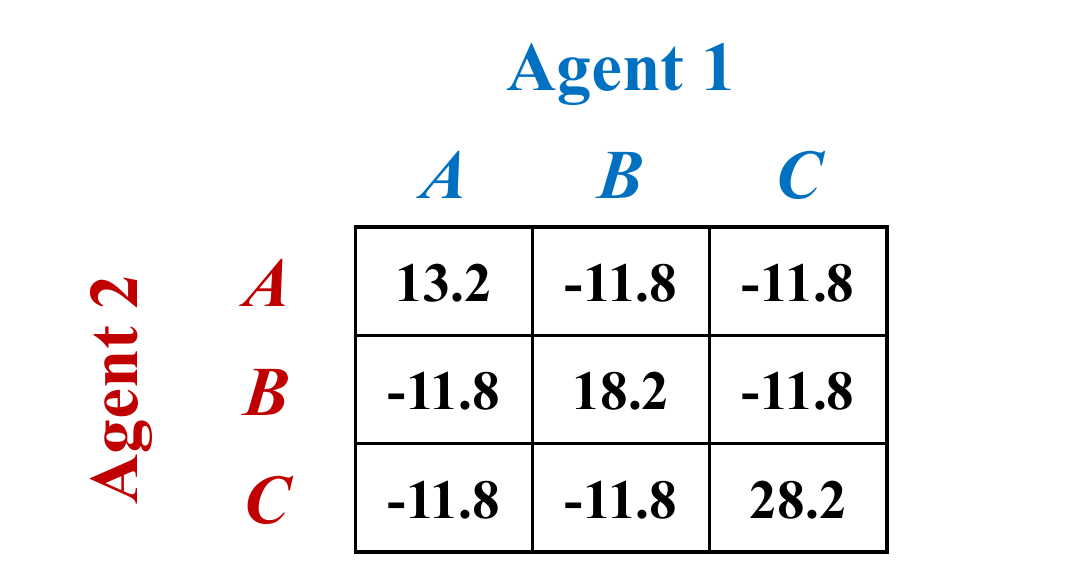}
 			\caption{Advantage values $A_{\boldsymbol{\pi}_{\text{old}}}(s, \mathbf{a})$ for all joint actions in the first iteration.}
 			\label{fig:advantage}
\end{figure}

Expanding objective \ref{mappo-target}, we get

\begin{align*}
    \pi^{i}_{\text{new}} = \underset{\pi^i}{\arg \max}\ &(0.36 \times 13.2 + 0.12 \times (-11.8) \times 2) \times \frac{\pi^i(\mathrm{a}^i=A|s)}{0.6} \\
    &+(0.12 \times (-11.8) + 0.04 \times 18.2 + 0.04 \times (-11.8)) \times \frac{\pi^i(\mathrm{a}^i=B|s)}{0.2} \\
    &+(0.12 \times (-11.8) + 0.04 \times (-11.8) + 0.04 \times (28.2)) \times \frac{\pi^i(\mathrm{a}^i=C|s)}{0.2} \\
     = \underset{\pi^i}{\arg \max}\ & 3.2 \times \pi^i(\mathrm{a}^i=A|s) + (-5.8) \times \pi^i(\mathrm{a}^i=B|s) + (-3.8) \times \pi^i(\mathrm{a}^i=C|s)
\end{align*}

This will encourage $\pi^i$ to assign a higher probability to action $A$ and lower probabilities to action $B$ and $C$. The updated policies will further encourage choosing action $A$ in subsequent iterations. Finally, the policies will converge to the deterministic policies $\pi^i = (1, 0, 0), i=1,2$.

For HAPPO, the agents update their policies sequentially. Without loss of generality, we assume they update in the order of Agent $1$ and Agent $2$. Agent $1$ updates its policy according to objective \ref{mappo-target} and increases the probability of action $A$. Let $\pi^1_{\text{new}} = (p_1, p_2, p_3), \text{s.t.\ } p_1 > 0.6, p_2 < 0.2, p_3 < 0.2, \sum^3_{j=1} p_j = 1$. Agent $2$ updates its policy according to the following objective:

\begin{align*}
\pi^{2}_{\text{new}} = \underset{\pi^2}{\arg \max}\ &\mathbb{E}_{\mathrm{s}\sim\rho_{\boldsymbol{\pi}_{\text{old}}}, \mathbf{a}\sim \boldsymbol{\pi}_{\text{old}}}\left[\frac{\pi^2(\mathrm{a}^2|\mathrm{s})}{\pi^2_{\text{old}}(\mathrm{a}^2|\mathrm{s})}\frac{\pi^1_{\text{new}}(\mathrm{a}^1|\mathrm{s})}{\pi^1_{\text{old}}(\mathrm{a}^1|\mathrm{s})}A_{\boldsymbol{\pi}_{\text{old}}}(\mathrm{s},\mathbf{a})\right] \\
= \underset{\pi^2}{\arg \max}\ &(0.6 \times p_1 \times 13.2 + 0.6 \times p_2 \times (-11.8) + 0.6 \times p_3 \times (-11.8)) \times \frac{\pi^2(\mathrm{a}^2=A|s)}{0.6} \\
    &+(0.2 \times p_1 \times (-11.8) + 0.2 \times p_2 \times 18.2 + 0.2 \times p_3 \times (-11.8)) \times \frac{\pi^2(\mathrm{a}^2=B|s)}{0.2} \\
    &+(0.2 \times p_1 \times (-11.8) + 0.2 \times p_2 \times (-11.8) + 0.2 \times p_3 \times (28.2)) \times \frac{\pi^2(\mathrm{a}^2=C|s)}{0.2} \\
= \underset{\pi^2}{\arg \max}\ & (13.2 \times p_1 - 11.8 \times p_2 - 11.8 \times p_3) \times \pi^2(\mathrm{a}^2=A|s) \\
& (-11.8 \times p_1 + 18.2 \times p_2 - 11.8 \times p_3) \times \pi^2(\mathrm{a}^2=B|s) \\
& (-11.8 \times p_1 - 11.8 \times p_2 + 28.2 \times p_3) \times \pi^2(\mathrm{a}^2=C|s) \\
\end{align*}

Since $(13.2 \times p_1 - 11.8 \times p_2 - 11.8 \times p_3) > 3.2, (-11.8 \times p_1 + 18.2 \times p_2 - 11.8 \times p_3) < -5.8, (-11.8 \times p_1 - 11.8 \times p_2 + 28.2 \times p_3) < -3.8$, this will encourage assigning higher probability to action $A$ and lower probability to action $B$ and $C$. Similar to MAPPO, the updated policies will further encourage choosing action $A$ in subsequent iterations. Finally, the policies converge to $\pi^i = (1, 0, 0), i=1,2$.

By contrast, HASAC is possible to escape the local optimum and converge to better stochastic policies, as we show below. Without loss of generality, we assume the update order is Agent $1$ and Agent $2$. We first consider the update of Agent $1$, who optimizes the following objective:

\begin{align*}
\pi^{1}_{\text{new}} = \underset{\pi^1}{\arg \max}\ &\mathbb{E}_{\mathrm{a^1}\sim\pi^1, \mathrm{a^2}\sim\pi^2_{\text{old}}}\left[Q_{\boldsymbol{\pi}_{\text{old}}}(s, \mathrm{a}^1, \mathrm{a}^2) - \alpha\log\pi^1(\mathrm{a}^1|s)\right] \\
 = \underset{\pi^1}{\arg \max}\ &\mathbb{E}_{\mathrm{a^1}\sim\pi^1, \mathrm{a^2}\sim\pi^2_{\text{old}}}\left[r(s, \mathrm{a}^1, \mathrm{a}^2) - \alpha\log\pi^1(\mathrm{a}^1|s)\right] \\
\end{align*}

Parametrizing $\pi^1 = (p_1, p_2, p_3)$, we get

\begin{align*}
\pi^{1}_{\text{new}} = \underset{\pi^1}{\arg \max}\ &-5p_1 - 14p_2 - 12p_3 - \alpha p_1\log p_1 - \alpha p_2\log p_2 - \alpha p_3\log p_3, \\
&\text{s.t.\ \ } p_1 + p_2 + p_3 = 1
\end{align*}

The solutions corresponding to different $\alpha$ are listed in the left part of Table \ref{table:hasac-iteration-1}.

\begin{table}[htbp]
\centering
\begin{tabular}{c!{\vrule width1.2pt}ccc!{\vrule width1.2pt}ccc}
\Xhline{1.2pt}
         & \multicolumn{3}{c!{\vrule width1.2pt}}{After the first update}                        & \multicolumn{3}{c}{Convergent (both)}                              \\ \hline
$\alpha$ & \multicolumn{1}{c|}{$p_1$}  & \multicolumn{1}{c|}{$p_2$}  & $p_3$  & \multicolumn{1}{c|}{$p_1$}  & \multicolumn{1}{c|}{$p_2$}  & $p_3$  \\ 
\Xhline{1.2pt}
1        & \multicolumn{1}{c|}{0.9990} & \multicolumn{1}{c|}{0.0001} & 0.0009 & \multicolumn{1}{c|}{1.0000} & \multicolumn{1}{c|}{0.0000} & 0.0000 \\ \hline
2        & \multicolumn{1}{c|}{0.9603} & \multicolumn{1}{c|}{0.0107} & 0.0290 & \multicolumn{1}{c|}{1.0000} & \multicolumn{1}{c|}{0.0000} & 0.0000 \\ \hline
5        & \multicolumn{1}{c|}{0.7083} & \multicolumn{1}{c|}{0.1171} & 0.1747 & \multicolumn{1}{c|}{0.9849} & \multicolumn{1}{c|}{0.0075} & 0.0076 \\ \hline
10       & \multicolumn{1}{c|}{0.5254} & \multicolumn{1}{c|}{0.2136} & 0.2609 & \multicolumn{1}{c|}{0.0221} & \multicolumn{1}{c|}{0.0224} & 0.9555 \\ \hline
15       & \multicolumn{1}{c|}{0.4596} & \multicolumn{1}{c|}{0.2522} & 0.2882 & \multicolumn{1}{c|}{0.1278} & \multicolumn{1}{c|}{0.1354} & 0.7368 \\ \hline
20       & \multicolumn{1}{c|}{0.4269} & \multicolumn{1}{c|}{0.2722} & 0.3009 & \multicolumn{1}{c|}{0.2514} & \multicolumn{1}{c|}{0.2790} & 0.4697 \\
\Xhline{1.2pt}
\end{tabular}
\caption{Exact calculation results for HASAC update. On the left we show the policy after the first update; on the right we show the policies of $\pi^1$ and $\pi^2$ after convergence.}
\label{table:hasac-iteration-1}
\end{table}

As $\alpha$ increases, the policy is encouraged to be \emph{stochastic}, trading off between stochasticity and reward maximization. With appropriate $\alpha$, the policies will be able to assign higher probabilities to action $B$ and $C$, thereby retaining the exploration of them. It then encourages more exploration of action $B$ and $C$ in subsequent updates. Finally, the convergent policies predominantly choose action $C$ while still being stochastic, as shown in the right part of Table \ref{table:hasac-iteration-1}.

Thus, we show the benefit of stochastic policies and that in expectation, with appropriate $\alpha$, HASAC is capable of escaping local optimum and converging to better stochastic policies. We note that the exact calculation results are generally consistent with the empirical results in Section \ref{sec:ablation} (this can be easily checked by computing the joint action probabilities and comparing with Table \ref{fig:difalpha}). An observation is that when $\alpha=5$, theoretically it is not sufficient for escaping local optimum but empirically it already results in learning the optimal actions $(C, C)$. This may be due to the fact that in practical implementation, HASAC learns a centralized Q-function by sampling from the replay buffer, which introduces some randomness that causes slight deviations between the experimental results and the exact calculated ones.

\newpage
\section{Derivation of MaxEnt MARL Objective}
\label{proofs of maxent objective}

In this section, we derive the maximum entropy objective of MARL (Equation \ref{eq3}) by performing variational inference in Figure \ref{fig:PGM}. In structured variational inference, our objective is to  approximate some distribution $p(\mathbf{y})$ with another, usually simpler distribution $q(\mathbf{y})$. In our case, we aim to approximate $p(\tau)$, given by
\begin{align}
p(\tau) = \left[ p(\mathrm{s}_1) \prod_{t=1}^T p(\mathrm{s}_{t+1}|\mathrm{s}_t,\mathbf{a}_t) \right] \exp\left( \sum_{t=1}^T r(\mathrm{s}_t,\mathbf{a}_t) \right),\label{eq:variationaltrajdistrotarget}
\end{align}
via the distribution
\begin{align}
q(\tau) =  q(\mathrm{s}_1) \prod_{t=1}^T q(\mathrm{s}_{t+1}|\mathrm{s}_t,\mathbf{a}_t) q(\mathbf{a}_t|\mathrm{s}_t) = q(\mathrm{s}_1)\prod_{t=1}^T q(\mathrm{s}_{t+1}|\mathrm{s}_t,\mathbf{a}_t)\prod_{m=1}^nq^{i_m}\left(\mathrm{a}^{i_m}_t|\mathrm{s}_t\right),\label{eq:variationaltrajdistropol}
\end{align}
where the joint policy $q(\mathbf{a}_t|\mathrm{s}_t)$ follows a fully independent factorization $q(\mathbf{a}_t|\mathrm{s}_t) = \prod_{m=1}^nq^{i_m}\left(\mathrm{a}^{i_m}_t|\mathrm{s}_t\right)$ as we assume that agents are independent of each other under CTDE paradigm. To avoid risk-seeking behavior, we fix the environment dynamics, $i.e.$, $q(\mathrm{s}_1) = p(\mathrm{s}_1)$ and $q(\mathrm{s}_{t+1}|\mathrm{s}_t,\mathbf{a}_t) = p(\mathrm{s}_{t+1}|\mathrm{s}_t,\mathbf{a}_t)$. In structured variational inference, approximate inference is performed by optimizing the variational lower bound (also called the evidence lower bound). In our case, the evidence is that $\mathcal{O}_t = 1, \forall t \in \{1, \ldots, T\}$ and the posterior is conditioned on the initial state $\mathrm{s}_1$. The variational lower bound is given by
\begin{align*}
\log p(\mathcal{O}_{1:T}) &= \log \int \int p(\mathcal{O}_{1:T}, \mathrm{s}_{1:T}, \mathbf{a}_{1:T}) d\mathrm{s}_{1:T} d\mathbf{a}_{1:T} \\
&= \log \int \int p(\mathcal{O}_{1:T}, \mathrm{s}_{1:T}, \mathbf{a}_{1:T}) \frac{q(\mathrm{s}_{1:T}, \mathbf{a}_{1:T})}{q(\mathrm{s}_{1:T}, \mathbf{a}_{1:T})} d\mathrm{s}_{1:T} d\mathbf{a}_{1:T} \\
&= \log \mathbb{E}_{(\mathrm{s}_{1:T},\mathbf{a}_{1:T})\sim q(\mathrm{s}_{1:T}, \mathbf{a}_{1:T})} \left[ \frac{p(\mathcal{O}_{1:T}, \mathrm{s}_{1:T}, \mathbf{a}_{1:T})}{q(\mathrm{s}_{1:T}, \mathbf{a}_{1:T})} \right] \\
&\geq \mathbb{E}_{(\mathrm{s}_{1:T},\mathbf{a}_{1:T})\sim q(\mathrm{s}_{1:T}, \mathbf{a}_{1:T})} \left[ \log p(\mathcal{O}_{1:T}, \mathrm{s}_{1:T}, \mathbf{a}_{1:T}) - \log q(\mathrm{s}_{1:T}, \mathbf{a}_{1:T}) \right],
\end{align*}
where the inequality on the last line is obtained via Jensen's inequality. Substituting the definitions of $p(\tau)$ and $q(\tau)$ from Equations~\ref{eq:variationaltrajdistrotarget} and \ref{eq:variationaltrajdistropol}, and according to $q(\mathrm{s}_{t+1}|\mathrm{s}_t,\mathbf{a}_t) = p(\mathrm{s}_{t+1}|\mathrm{s}_t,\mathbf{a}_t)$, the bound reduces to
\begin{equation}
\log p\left(\mathcal{O}_{1: T}\right) \geq \mathbb{E}_{\left(\mathrm{s}_{1: T}, \mathbf{a}_{1: T}\right) \sim q\left(\mathrm{s}_{1: T}, \mathbf{a}_{1: T}\right)}\left[\sum_{t=1}^T \left(r\left(\mathrm{s}_t, \mathbf{a}_t\right)-\sum_{m=1}^n\log q^{i_m}\left(\mathrm{a}^{i_m}_t|\mathrm{s}_t\right)\right)\right],
\end{equation}
Optimizing this lower bound with respect to the policy $q(\mathbf{a}_t|\mathrm{s}_t)$ corresponds exactly to the following maximum entropy objective:
\begin{equation}
J(\boldsymbol{\pi})=\mathbb{E}_{\mathrm{s}_{1: T} \sim \rho_{\boldsymbol{\pi}}^{1: T}, \mathbf{a}_{1: T} \sim \boldsymbol{\pi}}\left[\sum_{t=1}^{T} \left(r\left(\mathrm{s}_t, \mathbf{a}_t\right)+\alpha\sum_{i=1}^{n}{\mathcal{H}\left(\pi^i\left(\cdot | \mathrm{s}_t\right)\right)}\right)\right], \tag{\ref{eq3}}
\end{equation}
where we multiply a temperature parameter $\alpha$ to the entropy term to assign relative importance to entropy and reward maximization \citep{haarnoja2018soft}.

\newpage
\section{Proofs of \nameref{qre}}
\label{proofs of qre}
\qre*

\begin{proof}
First, we consider the following constrained policy optimization problem to agent $i$: for a given state $s \in \mathcal{S}$,
$$
\begin{aligned}
\underset{\pi^i}{\max}\ \mathbb{E}_{\mathrm{a}^i \sim \pi^i, \mathbf{a}^{-i} \sim \boldsymbol{\pi}^{-i}}&\left[Q_{\boldsymbol{\pi}}(s, \mathbf{a})\right] -\alpha\sum_{j=1}^{n}\sum_{a^j \in \mathcal{A}^j}{\pi^j\left(a^j | s\right)\log\pi^j\left(a^j | s\right)}, \\
&s.t. \sum_{a^i \in \mathcal{A}^i} \pi^i\left(a^i|s\right) = 1.
\end{aligned}
$$
We consider its associated Lagrangian function as follows,
$$
\begin{aligned}
\mathcal{L}(\pi^i, \lambda) &= \mathbb{E}_{\mathrm{a}^i \sim \pi^i, \mathbf{a}^{-i} \sim \boldsymbol{\pi}^{-i}}\left[Q_{\boldsymbol{\pi}}(s, \mathbf{a})\right] -\alpha\sum_{j=1}^{n}\sum_{a^j \in \mathcal{A}^j}{\pi^j\left(a^j | s\right)\log\pi^j\left(a^j | s\right)} + \lambda(\sum_{a^i \in \mathcal{A}^i} \pi^i\left(a^i|s\right) - 1) \\
&= \sum_{\boldsymbol{a} \in \boldsymbol{\mathcal{A}}}\prod_{j=1}^n\pi^j\left(a^j | s\right) Q_{\boldsymbol{\pi}}(s, \boldsymbol{a}) -\alpha\sum_{j=1}^{n}\sum_{a^j \in \mathcal{A}^j}{\pi^j\left(a^j | s\right)\log\pi^j\left(a^j | s\right)} + \lambda(\sum_{a^i \in \mathcal{A}^i} \pi^i\left(a^i|s\right) - 1).
\end{aligned}
$$
Then we consider the derivative of $\mathcal{L}(\pi^i, \lambda)$ with respect to $\pi^i$, we obtain
$$
\frac{\partial\mathcal{L}(\pi^i, \lambda)}{\partial\pi^i\left(a^i|s\right)} = \sum_{\boldsymbol{a}^{-i} \in \boldsymbol{\mathcal{A}}^{-i}}\prod_{j\neq i}\pi^j\left(a^j | s\right) Q_{\boldsymbol{\pi}}(s, a^i, \boldsymbol{a}^{-i}) - \alpha\log\pi^i\left(a^i|s\right) - \alpha + \lambda.
$$
Let $\frac{\partial\mathcal{L}(\pi^i, \lambda)}{\partial\pi^i\left(a^i|s\right)} = 0$, we know the optimal policy $\pi^i_\star$ satisfies the following condition,
$$
\pi^i_\star\left(a^i|s\right) = \exp(\alpha^{-1}\mathbb{E}_{\mathbf{a}^{-i} \sim \boldsymbol{\pi}^{-i}}\left[Q_{\boldsymbol{\pi}}\left(s, a^i, \mathbf{a}^{-i}\right)\right])\exp(\frac{\lambda}{\alpha} - 1).
$$
Furthermore, since $\sum_{a^i \in \mathcal{A}^i}\pi^i_\star\left(a^i|s\right) = 1$, we know optimal lagrange multiplier $\lambda_\star$ satisfies
$$
\exp(1 - \frac{\lambda_\star}{\alpha}) = \sum_{a^i \in \mathcal{A}^i}\exp(\alpha^{-1}\mathbb{E}_{\mathbf{a}^{-i} \sim \boldsymbol{\pi}^{-i}}\left[Q_{\boldsymbol{\pi}}\left(s, a^i, \mathbf{a}^{-i}\right)\right]),
$$
$i.e.$,
$$
\lambda_\star = \left(1 - \log\sum_{a^i \in \mathcal{A}^i}\exp(\alpha^{-1}\mathbb{E}_{\mathbf{a}^{-i} \sim \boldsymbol{\pi}^{-i}}\left[Q_{\boldsymbol{\pi}}\left(s, a^i, \mathbf{a}^{-i}\right)\right])\right)\alpha.
$$
Finally, we obtain the optimal policy as follows, 
$$
\pi^i_\star\left(a^i|s\right) = \frac{\exp(\alpha^{-1}\mathbb{E}_{\mathbf{a}^{-i} \sim \boldsymbol{\pi}^{-i}}\left[Q_{\boldsymbol{\pi}}\left(s, a^i, \mathbf{a}^{-i}\right)\right])}{\sum_{b^i \in \mathcal{A}^i}\exp(\alpha^{-1}\mathbb{E}_{\mathbf{a}^{-i} \sim \boldsymbol{\pi}^{-i}}\left[Q_{\boldsymbol{\pi}}\left(s, b^i, \mathbf{a}^{-i}\right)\right])}.
$$
Hence, when each agent can not increase the maximum entropy objective by unilaterally changing its policy, policies take the following form, 
$$
\begin{aligned}
\forall i \in \mathcal{N}, \pi_{\text{QRE}}^i\left(a^i | s\right) &= \underset{\pi^i\left(\cdot | s\right)}{\arg \max }\ \mathbb{E}_{\mathrm{a}^i \sim \pi^i, \mathbf{a}^{-i} \sim \boldsymbol{\pi}_{\text{QRE}}^{-i}}\left[Q_{\boldsymbol{\pi}_{\text{QRE}}}(s, \mathbf{a})\right] -\alpha\sum_{j=1}^{n}\sum_{a^j \in \mathcal{A}^j}{\pi^j\left(a^j | s\right)\log\pi^j\left(a^j | s\right)} \\
&= \pi^i_\star\left(a^i|s\right) = \frac{\exp(\alpha^{-1}\mathbb{E}_{\mathbf{a}^{-i} \sim \boldsymbol{\pi}_{\text{QRE}}^{-i}}\left[Q_{\boldsymbol{\pi}_{\text{QRE}}}\left(s, a^i, \mathbf{a}^{-i}\right)\right])}{\sum_{b^i \in \mathcal{A}^i}\exp(\alpha^{-1}\mathbb{E}_{\mathbf{a}^{-i} \sim \boldsymbol{\pi}_{\text{QRE}}^{-i}}\left[Q_{\boldsymbol{\pi}_{\text{QRE}}}\left(s, b^i, \mathbf{a}^{-i}\right)\right])},
\end{aligned}
$$
which finishes the proof.
\end{proof}

\newpage
\section{Proofs of \nameref{haspi}}
\label{proof of theorem 2}

In this section, we introduce heterogeneous-agent soft policy iteration and prove its properties of monotonic improvement and convergence to the QRE policies.

\subsection{Proof of \nameref{spe}}
\label{proof of spe}
For \emph{joint soft policy evaluation}, we will repeatedly apply soft Bellman operator $\Gamma_{\boldsymbol{\pi}}$ to $Q(s, \boldsymbol{a})$ until convergence, where:
\begin{equation}
\Gamma_{\boldsymbol{\pi}}Q(s,\boldsymbol{a}) \triangleq r(s, \boldsymbol{a}) + \gamma\mathbb{E}_{\mathrm{s}^{\prime} \sim P}\left[V\left(\mathrm{s}^{\prime}\right)\right]
\end{equation}
\begin{equation}
V(s)=\mathbb{E}_{ \mathbf{a} \sim \boldsymbol{\pi}}\left[Q(s,\mathbf{a})+\alpha\sum_{i=1}^{n}{\mathcal{H}\left(\pi^i\left(\cdot^i | s\right)\right)}\right].
\end{equation}

In this way, as shown in lemma \ref{spe}, we can get $Q_{\boldsymbol{\pi}}$ for any joint policy $\boldsymbol{\pi}$.

\spe*

\begin{proof}
We define the reward with entropy term as $r_{\boldsymbol{\pi}}(s, \boldsymbol{a}) \triangleq r(s, \boldsymbol{a}) + \mathbb{E}_{\mathrm{s}^{\prime} \sim P}\left[\sum_{i=1}^{n}{\mathcal{H}\left(\pi^i\left(\cdot^i | \mathrm{s}^{\prime}\right)\right)}\right]$. We can then express the update rule as: 
\[
Q(s, \boldsymbol{a}) \leftarrow r_{\boldsymbol{\pi}}(s, \boldsymbol{a}) + \gamma \mathbb{E}_{\mathrm{s}^{\prime} \sim P, \mathbf{a}^{\prime} \sim \boldsymbol{\pi}}\left[Q(s^{\prime},\mathbf{a}^{\prime})\right]
\]
and apply the standard convergence results for policy evaluation \citep{Sutton1998}.
\end{proof}

\subsection{Proof of \nameref{jspd}}
\label{proof of jspd}
After we get $Q_{\boldsymbol{\pi}}(s, \boldsymbol{a})$, we draw a permutation $i_{1: n} \in \operatorname{Sym}(n)$ and update each agent's policy $\pi^{i_m}$ according to the following optimization proposition:

\decomp*

\begin{proof}
First, we use $L^{i_m}_{\pi^{i_m}_{\text{old}}}(\pi^{i_m}(\cdot | s))$ to denote
\[
\mathrm{D}_{\mathrm{KL}}\left(\pi^{i_m}\left(\cdot^{i_m} | s\right) \| \frac{\exp \left(\mathbb{E}_{\mathrm{a}^{i_{1:m-1}} \sim \boldsymbol{\pi}_{\text{new}}^{i_{1:m-1}}}\left[\frac{1}{\alpha} Q^{i_{1:m}}_{\boldsymbol{\pi}_{\text {old }}}\left(s, \mathbf{a}^{i_{1:m-1}}, \cdot^{i_m}\right)\right]\right)}{\mathbb{E}_{\mathrm{a}^{i_{1:m-1}} \sim \boldsymbol{\pi}_{\text{new}}^{i_{1:m-1}}}\left[Z_{\boldsymbol{\pi}_{\text {old }}}\left(s, \mathbf{a}^{i_{1:m-1}}\right)\right]}\right).
\]
Suppose that there exists a policy $\bar{\boldsymbol{\pi}} \neq \boldsymbol{\pi}_{\text{new}}$, such that $L_{\boldsymbol{\pi}_{\text{old}}}(\bar{\boldsymbol{\pi}}(\cdot | s)) < L_{\boldsymbol{\pi}_{\text{old}}}(\boldsymbol{\pi}_{\text{new}}(\cdot | s))$, we have
\begin{equation}
\label{eq20}
\begin{aligned}
\mathbb{E}_{\mathbf{a} \sim \bar{\boldsymbol{\pi}}}\left[Q_{\boldsymbol{\pi}_{\text {old }}}(s, \mathbf{a})\right]+\alpha\sum_{i=1}^{n}{\mathcal{H}\left(\bar{\pi}^i\left(\cdot^i | s\right)\right)} > \mathbb{E}_{\mathbf{a} \sim \boldsymbol{\pi}_{\text {new}}}\left[Q_{\boldsymbol{\pi}_{\text {old}}}(s, \mathbf{a})\right]+\alpha\sum_{i=1}^{n}{\mathcal{H}\left(\pi^i_\text{new}\left(\cdot^i | s\right)\right)}.
\end{aligned}
\end{equation}
From Equation \ref{individualKL}, we have $L^{i_m}_{\pi^{i_m}_{\text{old}}}(\pi_{\text{new}}^{i_m}(\cdot | s)) \leq L^{i_m}_{\pi^{i_m}_{\text{old}}}(\bar{\pi}^{i_m}(\cdot | s))$ for every $m = 1, \ldots, n$, $i.e.$,
\[
\begin{aligned}
&\mathbb{E}_{ \mathbf{a}^{i_{1:m-1}} \sim \boldsymbol{\pi}^{i_{1:m-1}}_\text{new}, \mathrm{a}^{i_m} \sim \pi^{i_m}_\text{new}}\left[Q^{i_{1:m}}_{\boldsymbol{\pi}_\text{old}}(s,\mathbf{a}^{i_{1:m-1}},\mathrm{a}^{i_m})-\alpha\log\pi^{i_m}_\text{new}\left(\mathrm{a}^{i_m} | s\right)\right] \\
&\geq \mathbb{E}_{ \mathbf{a}^{i_{1:m-1}} \sim \boldsymbol{\pi}^{i_{1:m-1}}_\text{new}, \mathrm{a}^{i_m} \sim \bar{\pi}^{i_m}}\left[Q^{i_{1:m}}_{\boldsymbol{\pi}_\text{old}}(s,\mathbf{a}^{i_{1:m-1}},\mathrm{a}^{i_m})-\alpha\log\bar{\pi}^{i_m}\left(\mathrm{a}^{i_m} | s\right)\right] .
\end{aligned}
\]
Subtracting both sides of the inequality by $\mathbb{E}_{ \mathbf{a}^{i_{1:m-1}} \sim \boldsymbol{\pi}^{i_{1:m-1}}_\text{new}}\left[Q^{i_{1:m-1}}_{\boldsymbol{\pi}_\text{old}}(s,\mathbf{a}^{i_{1:m-1}})\right]$ gives
\begin{equation}
\label{eq19}
\begin{aligned}
&\mathbb{E}_{ \mathbf{a}^{i_{1:m-1}} \sim \boldsymbol{\pi}^{i_{1:m-1}}_\text{new}, \mathrm{a}^{i_m} \sim \pi^{i_m}_\text{new}}\left[A^{i_m}_{\boldsymbol{\pi}_\text{old}}(s,\mathbf{a}^{i_{1:m-1}},\mathrm{a}^{i_m})-\alpha\log\pi^{i_m}_\text{new}\left(\mathrm{a}^{i_m} | s\right)\right] \\
&\geq \mathbb{E}_{ \mathbf{a}^{i_{1:m-1}} \sim \boldsymbol{\pi}^{i_{1:m-1}}_\text{new}, \mathrm{a}^{i_m} \sim \bar{\pi}^{i_m}}\left[A^{i_m}_{\boldsymbol{\pi}_\text{old}}(s,\mathbf{a}^{i_{1:m-1}},\mathrm{a}^{i_m})-\alpha\log\bar{\pi}^{i_m}\left(\mathrm{a}^{i_m} | s\right)\right] .
\end{aligned}
\end{equation}
Combining this with Lemma \ref{lemma1} gives
\[
\begin{aligned}
& \mathbb{E}_{\mathbf{a} \sim \boldsymbol{\pi}_{\text {new }}}\left[A_{\boldsymbol{\pi}_{\text {old }}}(s, \mathbf{a})+\alpha\sum_{i=1}^{n}{\mathcal{H}\left(\pi^i_\text{new}\left(\cdot^i | s\right)\right)}\right] \\
& =\sum_{m=1}^n\left[\mathbb{E}_{ \mathbf{a}^{i_{1:m-1}} \sim \boldsymbol{\pi}^{i_{1:m-1}}_\text{new}, \mathrm{a}^{i_m} \sim \pi^{i_m}_\text{new}}\left[A^{i_m}_{\boldsymbol{\pi}_\text{old}}(s,\mathbf{a}^{i_{1:m-1}},\mathrm{a}^{i_m})-\alpha\log\pi^{i_m}_\text{new}\left(\mathrm{a}^{i_m} | s\right)\right]\right] \\
& \text{by Inequality \ref{eq19}} \\
& \geq \sum_{m=1}^n\left[\mathbb{E}_{ \mathbf{a}^{i_{1:m-1}} \sim \boldsymbol{\pi}^{i_{1:m-1}}_\text{new}, \mathrm{a}^{i_m} \sim \bar{\pi}^{i_m}}\left[A^{i_m}_{\boldsymbol{\pi}_\text{old}}(s,\mathbf{a}^{i_{1:m-1}},\mathrm{a}^{i_m})-\alpha\log\bar{\pi}^{i_m}\left(\mathrm{a}^{i_m} | s\right)\right]\right] \\
& =\mathbb{E}_{\mathbf{a} \sim \bar{\boldsymbol{\pi}}}\left[A_{\boldsymbol{\pi}_{\text {old }}}(s, \mathbf{a})+\alpha\sum_{i=1}^{n}{\mathcal{H}\left(\bar{\pi}^i\left(\cdot^i | s\right)\right)}\right]. \\
& 
\end{aligned}
\]
The resulting inequality can be equivalently rewritten as
\begin{equation}
\begin{aligned}
\mathbb{E}_{\mathbf{a} \sim \bar{\boldsymbol{\pi}}}\left[Q_{\boldsymbol{\pi}_{\text {old }}}(s, \mathbf{a})\right]+\alpha\sum_{i=1}^{n}{\mathcal{H}\left(\bar{\pi}^i\left(\cdot^i | s\right)\right)} \leq \mathbb{E}_{\mathbf{a} \sim \boldsymbol{\pi}_{\text {new }}}\left[Q_{\boldsymbol{\pi}_{\text {old }}}(s, \mathbf{a})\right]+\alpha\sum_{i=1}^{n}{\mathcal{H}\left(\pi^i_\text{new}\left(\cdot^i | s\right)\right)},
\end{aligned}
\end{equation}
which contradicts Equation \ref{eq20}. Hence, for all $\boldsymbol{\pi} \in \boldsymbol{\Pi}$, $L_{\boldsymbol{\pi}_{\text{old}}}(\boldsymbol{\pi}_{\text{new}}(\cdot | s)) \leq L_{\boldsymbol{\pi}_{\text{old}}}(\boldsymbol{\pi}(\cdot | s)) $, $i.e.$,
\[
\boldsymbol{\pi}_{\text{new}} = \arg\min_{\boldsymbol{\pi} \in \boldsymbol{\Pi}} \mathrm{D}_{\mathrm{KL}}\left(\boldsymbol{\pi}\left(\cdot | s\right) \| \frac{\exp \left(\frac{1}{\alpha} Q_{\boldsymbol{\pi}_{\text {old }}}\left(s, \cdot\right)\right)}{Z_{\boldsymbol{\pi}_{\text {old }}}\left(s\right)}\right),
\]
which finishes the proof.

\end{proof}

\subsection{Proof of \nameref{spi}}
\label{proof of spi}
Then the soft Q-function and joint maximum entropy objective (Equation \ref{eq3}) has monotonic improvement property as formalized below.

\spi*

\begin{proof}
From Proposition \ref{jspd}, we have 
\[
\boldsymbol{\pi}_{\text{new}} = \arg\min_{\boldsymbol{\pi} \in \boldsymbol{\Pi}} \mathrm{D}_{\mathrm{KL}}\left(\boldsymbol{\pi}\left(\cdot | s\right) \| \frac{\exp \left(\frac{1}{\alpha} Q_{\boldsymbol{\pi}_{\text {old }}}\left(s, \cdot\right)\right)}{Z_{\boldsymbol{\pi}_{\text {old }}}\left(s\right)}\right) = \arg\min_{\boldsymbol{\pi} \in \boldsymbol{\Pi}} L_{\boldsymbol{\pi}_{\text{old}}}(\boldsymbol{\pi}(\cdot | s)).
\]
It must be the case that $L_{\boldsymbol{\pi}_{\text{old}}}(\boldsymbol{\pi}_{\text{new}}(\cdot | s)) \leq L_{\boldsymbol{\pi}_{\text{old}}}(\boldsymbol{\pi}_{\text{old}}(\cdot | s))$. Hence 
\begin{equation}
\label{eq4.2}
\begin{aligned}
\mathbb{E}_{\mathbf{a} \sim \boldsymbol{\pi}_{\text {new }}}\left[Q_{\boldsymbol{\pi}_{\text {old }}}(s, \mathbf{a})\right]+\alpha\sum_{i=1}^{n}{\mathcal{H}\left(\pi^i_\text{new}\left(\cdot^i | s\right)\right)} \geq \mathbb{E}_{\mathbf{a} \sim \boldsymbol{\pi}_{\text {old }}}\left[Q_{\boldsymbol{\pi}_{\text {old }}}(s, \mathbf{a})\right]+\alpha\sum_{i=1}^{n}{\mathcal{H}\left(\pi^i_\text{old}\left(\cdot^i | s\right)\right)} = V_{\boldsymbol{\pi}_{\text{old}}}(s).
\end{aligned}
\end{equation}
Last, considering the soft Bellman equation, the following holds:
\begin{equation}
\begin{aligned}
\label{eq4.3}
Q_{\boldsymbol{\pi}_{\text {old }}}(s, \boldsymbol{a}) &= r(s, \boldsymbol{a}) + \gamma\mathbb{E}_{\mathrm{s}^{\prime} \sim P}\left[V_{\boldsymbol{\pi}_{\text {old }}}\left(\mathrm{s}^{\prime}\right)\right] \\
& \leq r(s, \boldsymbol{a}) + \gamma\mathbb{E}_{\mathrm{s}^{\prime} \sim P}\left[\mathbb{E}_{\mathbf{a} \sim \boldsymbol{\pi}_{\text {new }}}\left[Q_{\boldsymbol{\pi}_{\text {old }}}(\mathrm{s}^{\prime}, \mathbf{a})\right]+\alpha\sum_{i=1}^{n}{\mathcal{H}\left(\pi^i_\text{new}\left(\cdot^i | \mathrm{s}^{\prime}\right)\right)})\right] \text{(by Inequality \ref{eq4.2})} \\
& \vdots \\
& \leq Q_{\boldsymbol{\pi}_{\text {new}}}(s, \boldsymbol{a}),
\end{aligned}
\end{equation}
where we have repeatedly expanded $Q_{\boldsymbol{\pi}_{\text {old }}}$ on the RHS by applying the soft Bellman equation and the bound in Inequality \ref{eq4.2}. Convergence to $Q_{\boldsymbol{\pi}_{\text {new}}}$ follows from Lemma \ref{spe}.

We use it to prove the claim as follows, 
\vspace{-2mm}
\[
\begin{aligned}
V_{\boldsymbol{\pi}_{\text {new }}}(s)&=\mathbb{E}_{\mathbf{a} \sim \boldsymbol{\pi}_{\text {new }}}\left[Q_{\boldsymbol{\pi}_{\text {new }}}(s, \mathbf{a})\right]+\alpha\sum_{i=1}^{n}{\mathcal{H}\left(\pi^i_\text{new}\left(\cdot^i | s\right)\right)} \\
& \geq \mathbb{E}_{\mathbf{a} \sim \boldsymbol{\pi}_{\text {new }}}\left[Q_{\boldsymbol{\pi}_{\text {old }}}(s, \mathbf{a})\right]+\alpha\sum_{i=1}^{n}{\mathcal{H}\left(\pi^i_\text{new}\left(\cdot^i | s\right)\right)} \text{(by Inequality \ref{eq4.3})} \\
&\geq \mathbb{E}_{\mathbf{a} \sim \boldsymbol{\pi}_{\text {old }}}\left[Q_{\boldsymbol{\pi}_{\text {old }}}(s, \mathbf{a})\right]+\alpha\sum_{i=1}^{n}{\mathcal{H}\left(\pi^i_\text{old}\left(\cdot^i | s\right)\right)} \text{(by Inequality \ref{eq4.2})} \\
&= V_{\boldsymbol{\pi}_{\text{old}}}(s).
\end{aligned}
\]

\vspace{-2mm}
Subsequently, the monotonic improvement property of the joint maximum entropy return follows naturally, as
\small
\[
J\left(\boldsymbol{\pi}_{\text {new }}\right)=\mathbb{E}_{\mathrm{s} \sim d}\left[V_{\boldsymbol{\pi}_{\text {new }}}(\mathrm{s})\right] \geq \mathbb{E}_{\mathrm{s} \sim d}\left[V_{\boldsymbol{\pi}_{\text {old }}}(\mathrm{s})\right]=J\left(\boldsymbol{\pi}_{\text {old }}\right).
\]
\end{proof}

\vspace{-6mm}
\subsection{Proof of \nameref{haspi}}
\label{proof of haspi}
The full heterogeneous-agent soft policy iteration algorithm alternates between the joint soft policy evaluation and the heterogeneous-agent soft policy improvement
steps, and it will provably converge to the QRE policies.
\haspi*
\begin{proof}
Let $\boldsymbol{\pi}_k$ be the joint policy at iteration $k$.

First, by Lemma \ref{spi}, we have that $Q_{\boldsymbol{\pi}_k}(s, \boldsymbol{a}) \leq Q_{\boldsymbol{\pi}_{k+1}}(s, \boldsymbol{a})$, and that the soft Q-function is upper-bounded by $Q_{\text{max}}$ for all $ \boldsymbol{\pi} \in \boldsymbol{\Pi}$ (both reward and entropy are bounded). Hence, the sequence converges to some limit point $\bar{\boldsymbol{\pi}}$.

Then, considering this limit point joint policy $\bar{\boldsymbol{\pi}}$, it must be the case that 
\[
\forall i \in \mathcal{N}, \forall \pi^i \in \Pi^i,  L^{i}_{\bar{\pi}^{i}}(\bar{\pi}^{i}(\cdot | s)) \leq L^{i}_{\bar{\pi}^{i}}(\pi^{i}(\cdot | s)).
\]
And we have 
\begin{equation}
\begin{aligned}
\bar{\pi}^i\left(\cdot{ }^i | s\right) & =\underset{\pi^i\left(\cdot^i | s\right) \in \mathcal{P}\left(\mathcal{A}^i\right)}{\arg \max} \mathbb{E}_{\mathrm{a}^i \sim \pi^i}\left[Q^i_{\bar{\boldsymbol{\pi}}}\left(s, \mathrm{a}^i\right)-\alpha\log\pi^i\left(\mathrm{a}^i | s\right)\right] \\
& =\underset{\pi^i\left(\cdot^i | s\right) \in \mathcal{P}\left(\mathcal{A}^i\right)}{\arg \max } \mathbb{E}_{\mathrm{a}^i \sim \pi^i, \mathbf{a}^{-i} \sim \bar{\boldsymbol{\pi}}^{-i}}\left[Q_{\bar{\boldsymbol{\pi}}}(s, \mathbf{a})\right] \\
&-\alpha\sum_{j=1}^{n}\sum_{a^j \in \mathcal{A}^j}{\pi^j\left(a^j | s\right)\log\pi^j\left(a^j | s\right)}, \forall i \in \mathcal{N}.
\end{aligned}
\end{equation}
Last, following the proof of Theorem \ref{qre}, we have
\[
\bar{\pi}^i\left(a^i | s\right):=\frac{\exp \left(\alpha^{-1} \mathbb{E}_{\mathbf{a}^{-i} \sim \bar{\boldsymbol{\pi}}^{-i}}\left[Q_{\bar{\boldsymbol{\pi}}}\left(s, a^i, \mathbf{a}^{-i}\right)\right]\right)}{\sum_{b^i \in \mathcal{A}^i} \exp \left(\alpha^{-1} \mathbb{E}_{\mathbf{a}^{-i} \sim \bar{\boldsymbol{\pi}}^{-i}}\left[Q_{\bar{\boldsymbol{\pi}}}\left(s, b^i, \mathbf{a}^{-i}\right)\right]\right)} .
\]
Thus, $\bar{\boldsymbol{\pi}}$ is a quantal response equilibrium, which finishes the proof.

\end{proof}

\newpage
\section{Pseudocode of HASAC}
\label{apdD}

\RestyleAlgo{ruled}
\begin{algorithm}[hbt!]
        \caption{Heterogeneous-Agent Soft Actor-Critic}
        \KwIn{temperature $\alpha$, Polyak coefficient $\tau$, batch size $B$, number of: agents $n$, episodes $K$, steps per episode $T$, mini-epochs $e$;}
        \textbf{Initialize:} the critic networks: $\phi_1$ and $\phi_2$ and policy networks: $\left\{\theta^i\right\}_{i \in \mathcal{N}}$, replay buffer $\mathcal{B}$, Set target parameters equal to main parameters $\phi_{\text {targ, } 1} \leftarrow \phi_1, \phi_{\text {targ, } 2} \leftarrow \phi_2$; \\
        \For{$k = 0,1,\dots,K-1$}{
            Observe state $\mathrm{o}^i_t$ and select action $\mathrm{a}^i_t \sim \pi_\theta^i(\cdot | \mathrm{o}^i_t)$; \\
            Execute $\mathrm{a}^i_t$ in the environment; \\
            Observe next state $\mathrm{o}_{t+1}$, reward $\mathrm{r}_t$; \\
            Push transitions $\left\{\left(\mathrm{o}_t^i, \mathrm{a}_t^i, \mathrm{o}_{t+1}^i, \mathrm{r}_t\right), \forall i \in \mathcal{N}, t \in T\right\}$ into $\mathcal{B}$; \\
            Sample a random minibatch of $B$ transitions from $\mathcal{B}$; \\
            Compute the critic targets \\
            \[
            y_t=r+\gamma\left(\min _{i=1,2} Q_{\phi_{\text {targ }, i}}\left(\mathrm{s}_{t+1}, \mathbf{a}_{t+1}\right)-\alpha \sum_{i=1}^{n}\log \pi^i_\theta\left(\mathrm{a}^i_{t+1} | \mathrm{o}^i_{t+1}\right)\right), \quad \mathbf{a}_{t+1} \sim  \boldsymbol{\pi}_{\boldsymbol{\theta}}\left(\cdot | \mathrm{s}_{t+1}\right);
            \]
            Update Q-functions by one step of gradient descent using \\
            \[
            \phi_i=\arg \min _{\phi_i} \frac{1}{B} \sum_t\left(y_t-Q_{\phi_i}\left(\mathrm{s}_t, \mathbf{a}_t\right)\right)^2 \quad \text { for } i=1,2;
            \]
            Draw a permutation of agents $i_{1: n}$ at random; \\
            \For{agent $i_m=i_1, \ldots, i_n$}{
                Update agent $i_m$ by solving \\
                \[
                \begin{aligned}
                \theta^{i_m}_{\text{new}}=\arg \max _{\hat{\theta}^{i_m}} \frac{1}{B} \sum_t \Biggl(&\min _{i=1,2}Q_{\phi_i}\left(\mathrm{s}_t, \mathbf{a}_{\boldsymbol{\theta}^{i_{1: m-1}}_{\text{new}}}^{i_{1: m-1}}\left(\mathbf{o}_t^{i_{1: m-1}}\right), \mathrm{a}_{\hat{\theta}^{i_m}}^{i_m}\left(\mathrm{o}_t^{i_m}\right), \mathbf{a}_{\boldsymbol{\theta}^{i_{m+1: n}}_{\text{old}}}^{i_{m+1: n}}\left(\mathbf{o}_t^{i_{m+1: n}}\right)\right)  \\
                & - \alpha\log\pi_{\hat{\theta}^{i_m}}^{i_m}\left(\mathrm{a}_{\hat{\theta}^{i_m}}^{i_m} | \mathrm{o}_t^{i_m}\right)\Biggr)
                \end{aligned}
                \]
                where $\mathrm{a}^i_\theta(\mathrm{o}^i_t)$ is a sample from $\pi^i_\theta(\cdot | \mathrm{o}^i_t)$ which is differentiable wrt $\theta$ via the reparametrization trick; \\
                with $e$ mini-epochs of policy gradient ascent;
            }
            Update the target critic network smoothly
            \[
            \phi_{\text {targ }, i} \leftarrow \rho \phi_{\text {targ }, i}+(1-\rho) \phi_i \quad \text { for } i=1,2;
            \]
    }
    Discard $\phi$ . Deploy $\left\{\theta^i\right\}_{i \in \mathcal{N}}$ in execution; \\
\end{algorithm}

\newpage
\section{Proofs of \nameref{theorem1}}
\label{apdC}

First, we show that enhancing the MEHAMO (Definition \ref{mehamo}) alone is sufficient to guarantee policy improvement, as demonstrated by the following lemma.
\begin{restatable}{lma}{mehamo}
\label{lemma2}
Let $\boldsymbol{\pi}_{\text {old }}$ and $\boldsymbol{\pi}_{\text {new }}$ be joint policies and let $i_{1: n} \in \operatorname{Sym}(n)$ be an agent permutation. Suppose that, for every state $s \in \mathcal{S}$ and every $m = 1, \ldots, n$,
\small
\begin{equation}
\label{eq10}
\left[\mathcal{M}_{\mathfrak{D}^{i_m},\boldsymbol{\pi}_{\text {new }}^{i_{1:m-1}}}^{(\pi_{\text {new }}^{i_m})}V_{\boldsymbol{\pi}_{\text {old }}}\right](s) \geq \left[\mathcal{M}_{\mathfrak{D}^{i_m},\boldsymbol{\pi}_{\text {new }}^{i_{1:m-1}}}^{(\pi_{\text {old }}^{i_m})}V_{\boldsymbol{\pi}_{\text {old }}}\right](s)
\end{equation}
\normalsize
Then, $\boldsymbol{\pi}_{n e w}$ is jointly better than $\boldsymbol{\pi}_{\text {old }}$, so that for every state $s$,
\small
\[
V_{\boldsymbol{\pi}_{\text {new }}}(s) \geq V_{\boldsymbol{\pi}_{\text {old }}}(s) .
\]
\normalsize
Subsequently, the monotonic improvement property of the joint return follows naturally, as
\small
\[
J\left(\boldsymbol{\pi}_{\text {new }}\right)=\mathbb{E}_{\mathrm{s} \sim d}\left[V_{\boldsymbol{\pi}_{\text {new }}}(\mathrm{s})\right] \geq \mathbb{E}_{\mathrm{s} \sim d}\left[V_{\boldsymbol{\pi}_{\text {old }}}(\mathrm{s})\right]=J\left(\boldsymbol{\pi}_{\text {old }}\right).
\]
\normalsize
\end{restatable}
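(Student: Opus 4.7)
\textit{Proof proposal.} The plan is to convert the hypothesis into an inequality comparing $V_{\boldsymbol{\pi}_{\text{new}}}$ and $V_{\boldsymbol{\pi}_{\text{old}}}$ via a telescoping argument through the agent ordering, and then close by a soft-Bellman unrolling like the one used in the proof of Lemma~\ref{spi}. Concretely, I would first evaluate the MEHAMO at the \emph{old} policy $\pi^{i_m}_{\text{old}}$: by the non-negativity and the $\hat{\pi}^{i_m}=\pi^{i_m}_{\text{old}}$ zero property of $\mathfrak{D}^{i_m}$, the drift term vanishes, so
\[
\bigl[\mathcal{M}^{(\pi^{i_m}_{\text{old}})}_{\mathfrak{D}^{i_m},\boldsymbol{\pi}^{i_{1:m-1}}_{\text{new}}}V_{\boldsymbol{\pi}_{\text{old}}}\bigr](s)
=\mathbb{E}_{\mathbf{a}^{i_{1:m-1}}\sim\boldsymbol{\pi}^{i_{1:m-1}}_{\text{new}},\,\mathrm{a}^{i_m}\sim\pi^{i_m}_{\text{old}}}\!\Bigl[Q^{i_{1:m}}_{\boldsymbol{\pi}_{\text{old}}}(s,\mathbf{a}^{i_{1:m-1}},\mathrm{a}^{i_m})-\alpha\log\pi^{i_m}_{\text{old}}(\mathrm{a}^{i_m}|s)\Bigr].
\]
Because the inner expectation over $\mathrm{a}^{i_m}\sim\pi^{i_m}_{\text{old}}$ exactly realises the recursion built into Equation~\ref{eq8}, this quantity collapses to $\mathbb{E}_{\mathbf{a}^{i_{1:m-1}}\sim\boldsymbol{\pi}^{i_{1:m-1}}_{\text{new}}}\bigl[Q^{i_{1:m-1}}_{\boldsymbol{\pi}_{\text{old}}}(s,\mathbf{a}^{i_{1:m-1}})\bigr]$.

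Next I would combine this identity with the hypothesis~\eqref{eq10} and the non-negativity of the drift at $\hat{\pi}^{i_m}=\pi^{i_m}_{\text{new}}$ to obtain, for every $m=1,\ldots,n$,
\[
\mathbb{E}_{\mathbf{a}^{i_{1:m-1}}\sim\boldsymbol{\pi}^{i_{1:m-1}}_{\text{new}},\,\mathrm{a}^{i_m}\sim\pi^{i_m}_{\text{new}}}\!\Bigl[Q^{i_{1:m}}_{\boldsymbol{\pi}_{\text{old}}}-\alpha\log\pi^{i_m}_{\text{new}}\Bigr]
\;\ge\;\mathbb{E}_{\mathbf{a}^{i_{1:m-1}}\sim\boldsymbol{\pi}^{i_{1:m-1}}_{\text{new}}}\!\Bigl[Q^{i_{1:m-1}}_{\boldsymbol{\pi}_{\text{old}}}\Bigr].
\]
Define $\Phi_m(s)\triangleq\mathbb{E}_{\mathbf{a}^{i_{1:m}}\sim\boldsymbol{\pi}^{i_{1:m}}_{\text{new}}}\bigl[Q^{i_{1:m}}_{\boldsymbol{\pi}_{\text{old}}}(s,\mathbf{a}^{i_{1:m}})\bigr]+\alpha\sum_{k=1}^{m}\mathcal{H}(\pi^{i_k}_{\text{new}}(\cdot|s))$, with $\Phi_0(s)=V_{\boldsymbol{\pi}_{\text{old}}}(s)$. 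The previous inequality says precisely that $\Phi_m(s)\ge\Phi_{m-1}(s)$, so telescoping from $m=0$ to $m=n$ yields
\[
V_{\boldsymbol{\pi}_{\text{old}}}(s)\;\le\;\mathbb{E}_{\mathbf{a}\sim\boldsymbol{\pi}_{\text{new}}}\bigl[Q_{\boldsymbol{\pi}_{\text{old}}}(s,\mathbf{a})\bigr]+\alpha\sum_{i=1}^{n}\mathcal{H}\bigl(\pi^i_{\text{new}}(\cdot|s)\bigr).
\]

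Finally I would iterate this one-step bound through the soft Bellman equation $Q_{\boldsymbol{\pi}_{\text{old}}}(s,\mathbf{a})=r(s,\mathbf{a})+\gamma\mathbb{E}_{\mathrm{s}'\sim P}[V_{\boldsymbol{\pi}_{\text{old}}}(\mathrm{s}')]$, substituting the inequality recursively at each encountered $V_{\boldsymbol{\pi}_{\text{old}}}(\mathrm{s}')$ exactly as in the proof of Lemma~\ref{spi}. The geometric factor $\gamma<1$ together with boundedness of the reward and entropy controls the tail, so the series converges and the limit is $V_{\boldsymbol{\pi}_{\text{new}}}(s)$; hence $V_{\boldsymbol{\pi}_{\text{new}}}(s)\ge V_{\boldsymbol{\pi}_{\text{old}}}(s)$ pointwise, and taking $\mathbb{E}_{\mathrm{s}\sim d}$ yields $J(\boldsymbol{\pi}_{\text{new}})\ge J(\boldsymbol{\pi}_{\text{old}})$.

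The one nontrivial step is the collapse $\mathbb{E}_{\mathrm{a}^{i_m}\sim\pi^{i_m}_{\text{old}}}\bigl[Q^{i_{1:m}}_{\boldsymbol{\pi}_{\text{old}}}-\alpha\log\pi^{i_m}_{\text{old}}\bigr]=Q^{i_{1:m-1}}_{\boldsymbol{\pi}_{\text{old}}}$: this is where the entropy bookkeeping inside Equation~\ref{eq8} must be tracked carefully, because the $-\alpha\log\pi^{i_m}_{\text{old}}$ term contributes exactly the missing $\alpha\mathcal{H}(\pi^{i_m}_{\text{old}})$ needed to turn the conditional soft Q into the next-level one. Everything else is either a direct application of HADF non-negativity or a standard Bellman unrolling.
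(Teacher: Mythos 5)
Your proposal is correct and follows essentially the same route as the paper's proof: from hypothesis~\eqref{eq10} you drop the drift at $\pi^{i_m}_{\text{new}}$ by non-negativity and use that it vanishes at $\pi^{i_m}_{\text{old}}$, obtain the per-agent inequality, aggregate it into the joint one-step bound $V_{\boldsymbol{\pi}_{\text{old}}}(s)\le\mathbb{E}_{\mathbf{a}\sim\boldsymbol{\pi}_{\text{new}}}[Q_{\boldsymbol{\pi}_{\text{old}}}(s,\mathbf{a})]+\alpha\sum_i\mathcal{H}(\pi^i_{\text{new}}(\cdot|s))$, and propagate it through the soft Bellman equation, and your ``collapse'' identity and the telescoping of $\Phi_m$ are verified correctly (they are exactly the entropy bookkeeping behind Equation~\ref{eq8} and the advantage decomposition of Lemma~\ref{lemma1} in disguise). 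The only cosmetic differences are that the paper phrases the aggregation via the multi-agent advantage decomposition with the drift terms carried along, and finishes with a $\gamma$-contraction/infimum argument rather than the Bellman unrolling of Lemma~\ref{spi} that you reuse.
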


\begin{proof}
By Inequality \ref{eq10}, we have
\[
\begin{aligned}
&\mathbb{E}_{ \mathbf{a}^{i_{1:m-1}} \sim \boldsymbol{\pi}^{i_{1:m-1}}_\text{new}, \mathrm{a}^{i_m} \sim \pi^{i_m}_\text{new}}\left[Q^{i_{1:m}}_{\boldsymbol{\pi}_\text{old}}(s,\mathbf{a}^{i_{1:m-1}},\mathrm{a}^{i_m})-\alpha\log\pi^{i_m}_\text{new}\left(\mathrm{a}^{i_m} | s\right)\right] \\
&-\mathfrak{D}_{\boldsymbol{\pi_\text{old}}}^{i_m}\left(\pi^{i_m}_\text{new} | s, \boldsymbol{\pi}^{i_{1:m-1}}_\text{new}\right) \\ 
&\geq \mathbb{E}_{ \mathbf{a}^{i_{1:m-1}} \sim \boldsymbol{\pi}^{i_{1:m-1}}_\text{new}, \mathrm{a}^{i_m} \sim \pi^{i_m}_\text{old}}\left[Q^{i_{1:m}}_{\boldsymbol{\pi}_\text{old}}(s,\mathbf{a}^{i_{1:m-1}},\mathrm{a}^{i_m})-\alpha\log\pi^{i_m}_\text{old}\left(\mathrm{a}^{i_m} | s\right)\right] \\
&-\mathfrak{D}_{\boldsymbol{\pi_\text{old}}}^{i_m}\left(\pi^{i_m}_\text{old} | s, \boldsymbol{\pi}^{i_{1:m-1}}_\text{new}\right).
\end{aligned}
\]
Subtracting both sides of the inequality by $\mathbb{E}_{ \mathbf{a}^{i_{1:m-1}} \sim \boldsymbol{\pi}^{i_{1:m-1}}_\text{new}}\left[Q^{i_{1:m-1}}_{\boldsymbol{\pi}_\text{old}}(s,\mathbf{a}^{i_{1:m-1}})\right]$ gives
\begin{equation}
\label{eqD:1}
\begin{aligned}
&\mathbb{E}_{ \mathbf{a}^{i_{1:m-1}} \sim \boldsymbol{\pi}^{i_{1:m-1}}_\text{new}, \mathrm{a}^{i_m} \sim \pi^{i_m}_\text{new}}\left[A^{i_m}_{\boldsymbol{\pi}_\text{old}}(s,\mathbf{a}^{i_{1:m-1}},\mathrm{a}^{i_m})-\alpha\log\pi^{i_m}_\text{new}\left(\mathrm{a}^{i_m} | s\right)\right] \\
&-\mathfrak{D}_{\boldsymbol{\pi_\text{old}}}^{i_m}\left(\pi^{i_m}_\text{new} | s, \boldsymbol{\pi}^{i_{1:m-1}}_\text{new}\right) \\ 
&\geq \mathbb{E}_{ \mathbf{a}^{i_{1:m-1}} \sim \boldsymbol{\pi}^{i_{1:m-1}}_\text{new}, \mathrm{a}^{i_m} \sim \pi^{i_m}_\text{old}}\left[A^{i_m}_{\boldsymbol{\pi}_\text{old}}(s,\mathbf{a}^{i_{1:m-1}},\mathrm{a}^{i_m})-\alpha\log\pi^{i_m}_\text{old}\left(\mathrm{a}^{i_m} | s\right)\right] \\
&-\mathfrak{D}_{\boldsymbol{\pi_\text{old}}}^{i_m}\left(\pi^{i_m}_\text{old} | s, \boldsymbol{\pi}^{i_{1:m-1}}_\text{new}\right).
\end{aligned}
\end{equation}
Let $\widetilde{\mathfrak{D}}_{\boldsymbol{\pi}_{\text {old }}}\left(\boldsymbol{\pi}_{\text {new }} | s\right) \triangleq \sum_{m=1}^n \mathfrak{D}_{\boldsymbol{\pi}_{\text {old }}}^{i_m}\left(\pi_{\text {new }}^{i_m} | s, \boldsymbol{\pi}_{\text {new }}^{i_{1: m-1}}\right)$. Combining this with Lemma \ref{lemma1} gives
\[
\begin{aligned}
& \mathbb{E}_{\mathbf{a} \sim \boldsymbol{\pi}_{\text {new }}}\left[A_{\boldsymbol{\pi}_{\text {old }}}(s, \mathbf{a})+\alpha\sum_{i=1}^{n}{\mathcal{H}\left(\pi^i_\text{new}\left(\cdot^i | s\right)\right)}\right]-\widetilde{\mathfrak{D}}_{\boldsymbol{\pi}_{\text {old }}}\left(\boldsymbol{\pi}_{\text {new }} | s\right) \\
& =\sum_{m=1}^n\left[\mathbb{E}_{ \mathbf{a}^{i_{1:m-1}} \sim \boldsymbol{\pi}^{i_{1:m-1}}_\text{new}, \mathrm{a}^{i_m} \sim \pi^{i_m}_\text{new}}\left[A^{i_m}_{\boldsymbol{\pi}_\text{old}}(s,\mathbf{a}^{i_{1:m-1}},\mathrm{a}^{i_m})-\alpha\log\pi^{i_m}_\text{new}\left(\mathrm{a}^{i_m} | s\right)\right] \right. \\
&\left.-\mathfrak{D}_{\boldsymbol{\pi_\text{old}}}^{i_m}\left(\pi^{i_m}_\text{new} | s, \boldsymbol{\pi}^{i_{1:m-1}}_\text{new}\right)\right] \\
& \text{by Inequality \ref{eq19}} \\
& \geq \sum_{m=1}^n\left[\mathbb{E}_{ \mathbf{a}^{i_{1:m-1}} \sim \boldsymbol{\pi}^{i_{1:m-1}}_\text{new}, \mathrm{a}^{i_m} \sim \pi^{i_m}_\text{old}}\left[A^{i_m}_{\boldsymbol{\pi}_\text{old}}(s,\mathbf{a}^{i_{1:m-1}},\mathrm{a}^{i_m})-\alpha\log\pi^{i_m}_\text{old}\left(\mathrm{a}^{i_m} | s\right)\right] \right.\\
&\left. -\mathfrak{D}_{\boldsymbol{\pi_\text{old}}}^{i_m}\left(\pi^{i_m}_\text{old} | s, \boldsymbol{\pi}^{i_{1:m-1}}_\text{new}\right)\right] \\
& =\mathbb{E}_{\mathbf{a} \sim \boldsymbol{\pi}_{\text {old }}}\left[A_{\boldsymbol{\pi}_{\text {old }}}(s, \mathbf{a})+\alpha\sum_{i=1}^{n}{\mathcal{H}\left(\pi^i_\text{old}\left(\cdot^i | s\right)\right)}\right]-\widetilde{\mathfrak{D}}_{\boldsymbol{\pi}_{\text {old }}}\left(\boldsymbol{\pi}_{\text {old }} | s\right) . \\
& 
\end{aligned}
\]
The resulting inequality can be equivalently rewritten as
\begin{equation}
\label{eqD:2}
\begin{aligned}
&\mathbb{E}_{\mathbf{a} \sim \boldsymbol{\pi}_{\text {new }}}\left[Q_{\boldsymbol{\pi}_{\text {old }}}(s, \mathbf{a})\right]+\alpha\sum_{i=1}^{n}{\mathcal{H}\left(\pi^i_\text{new}\left(\cdot^i | s\right)\right)}-\widetilde{\mathfrak{D}}_{\boldsymbol{\pi}_{\text {old }}}\left(\boldsymbol{\pi}_{\text {new }} | s\right) \\
&\geq \mathbb{E}_{\mathbf{a} \sim \boldsymbol{\pi}_{\text {old }}}\left[Q_{\boldsymbol{\pi}_{\text {old }}}(s, \mathbf{a})\right]+\alpha\sum_{i=1}^{n}{\mathcal{H}\left(\pi^i_\text{old}\left(\cdot^i | s\right)\right)}-\widetilde{\mathfrak{D}}_{\boldsymbol{\pi}_{\text {old }}}\left(\boldsymbol{\pi}_{\text {old }} | s\right), \forall s \in \mathcal{S} .
\end{aligned}
\end{equation}
We use it to prove the claim as follows,
\[
\begin{aligned}
V_{\boldsymbol{\pi}_{\text {new }}}(s)&=\mathbb{E}_{\mathbf{a} \sim \boldsymbol{\pi}_{\text {new }}}\left[Q_{\boldsymbol{\pi}_{\text {new }}}(s, \mathbf{a})\right]+\alpha\sum_{i=1}^{n}{\mathcal{H}\left(\pi^i_\text{new}\left(\cdot^i | s\right)\right)} \\
& =\mathbb{E}_{\mathbf{a} \sim \boldsymbol{\pi}_{\text {new }}}\left[Q_{\boldsymbol{\pi}_{\text {old }}}(s, \mathbf{a})\right]+\alpha\sum_{i=1}^{n}{\mathcal{H}\left(\pi^i_\text{new}\left(\cdot^i | s\right)\right)}-\widetilde{\mathfrak{D}}_{\boldsymbol{\pi}_{\text {old }}}\left(\boldsymbol{\pi}_{\text {new }} | s\right) \\
& +\widetilde{\mathfrak{D}}_{\boldsymbol{\pi}_{\text {old }}}\left(\boldsymbol{\pi}_{\text {new }} | s\right)+\mathbb{E}_{\mathbf{a} \sim \boldsymbol{\pi}_{\text {new }}}\left[Q_{\boldsymbol{\pi}_{\text {new }}}(s, \mathbf{a})-Q_{\boldsymbol{\pi}_{\text {old }}}(s, \mathbf{a})\right], \\
& \text { by Inequality \ref{eqD:2} } \\
& \geq \mathbb{E}_{\mathbf{a} \sim \boldsymbol{\pi}_{\text {old }}}\left[Q_{\boldsymbol{\pi}_{\text {old }}}(s, \mathbf{a})\right]+\alpha\sum_{i=1}^{n}{\mathcal{H}\left(\pi^i_\text{old}\left(\cdot^i | s\right)\right)}-\widetilde{\mathfrak{D}}_{\boldsymbol{\pi}_{\text {old }}}\left(\boldsymbol{\pi}_{\text {old }} | s\right) \\
& +\widetilde{\mathfrak{D}}_{\boldsymbol{\pi}_{\text {old }}}\left(\boldsymbol{\pi}_{\text {new }} | s\right)+\mathbb{E}_{\mathbf{a} \sim \boldsymbol{\pi}_{\text {new }}}\left[Q_{\boldsymbol{\pi}_{\text {new }}}(s, \mathbf{a})-Q_{\boldsymbol{\pi}_{\text {old }}}(s, \mathbf{a})\right], \\
& =V_{\boldsymbol{\pi}_{\text {old }}}(s)+\widetilde{\mathfrak{D}}_{\boldsymbol{\pi}_{\text {old }}}\left(\boldsymbol{\pi}_{\text {new }} | s\right)+\mathbb{E}_{\mathbf{a} \sim \boldsymbol{\pi}_{\text {new }}}\left[Q_{\boldsymbol{\pi}_{\text {new }}}(s, \mathbf{a})-Q_{\boldsymbol{\pi}_{\text {old }}}(s, \mathbf{a})\right] \\
& =V_{\boldsymbol{\pi}_{\text {old }}}(s)+\widetilde{\mathfrak{D}}_{\boldsymbol{\pi}_{\text {old }}}\left(\boldsymbol{\pi}_{\text {new }} | s\right)+\mathbb{E}_{\mathbf{a} \sim \boldsymbol{\pi}_{\text {new }}, \mathrm{s}^{\prime} \sim P}\left[r(s, \mathbf{a})+\gamma V_{\boldsymbol{\pi}_{\text {new }}}\left(\mathrm{s}^{\prime}\right)-r(s, \mathbf{a})-\gamma V_{\boldsymbol{\pi}_{\text {old }}}\left(\mathrm{s}^{\prime}\right)\right] \\
& =V_{\boldsymbol{\pi}_{\text {old }}}(s)+\widetilde{\mathfrak{D}}_{\boldsymbol{\pi}_{\text {old }}}\left(\boldsymbol{\pi}_{\text {new }} | s\right)+\gamma \mathbb{E}_{\mathbf{a} \sim \boldsymbol{\pi}_{\text {new }}, \mathrm{s}^{\prime} \sim P}\left[V_{\boldsymbol{\pi}_{\text {new }}}\left(\mathrm{s}^{\prime}\right)-V_{\boldsymbol{\pi}_{\text {old }}}\left(\mathrm{s}^{\prime}\right)\right] \\
& \geq V_{\boldsymbol{\pi}_{\text {old }}}(s)+\gamma \inf _{\mathrm{s}^{\prime}}\left[V_{\boldsymbol{\pi}_{\text {new }}}\left(\mathrm{s}^{\prime}\right)-V_{\boldsymbol{\pi}_{\text {old }}}\left(\mathrm{s}^{\prime}\right)\right] . \\
&\text{Hence} \quad V_{\boldsymbol{\pi}_{\text {new }}}(s)-V_{\boldsymbol{\pi}_{\text {old }}}(s) \geq \gamma \inf _{\mathrm{s}^{\prime}}\left[V_{\boldsymbol{\pi}_{\text {new }}}\left(\mathrm{s}^{\prime}\right)-V_{\boldsymbol{\pi}_{\text {old }}}\left(\mathrm{s}^{\prime}\right)\right]. \\
& \text{Taking infimum over s and simplifying} \\
& (1-\gamma) \inf _s\left[V_{\boldsymbol{\pi}_{\text {new }}}(s)-V_{\boldsymbol{\pi}_{\text {old }}}(s)\right] \geq 0
\end{aligned}
\]
Therefore, $\inf _s\left[V_{\boldsymbol{\pi}_{\text {new }}}(s)-V_{\boldsymbol{\pi}_{\text {old }}}(s)\right] \geq 0 $, which proves the lemma. 
\end{proof}

Then, any algorithm derived from Algorithm \ref{alg:1} ensures that the resulting policies satisfy Condition \ref{eq10}, as demonstrated by the following lemma.

\begin{restatable}{lma}{memo}
\label{lemma3}
Suppose an agent $i_m$ maximizes the expected MEHAMO
\begin{equation}
\label{eq11}
    \pi_{\text {new }}^{i_m}=\underset{\pi^{i_m} \in \mathcal{U}_{\boldsymbol{\pi}_{\text {old }}}^{i_m}\left(\pi_{\text {old }}^{i_m}\right)}{\arg \max } \mathbb{E}_{\mathrm{s} \sim \beta_{\boldsymbol{\pi}_{\text {old }}}}\left[\left[\mathcal{M}_{\mathfrak{D}^{i_m},\boldsymbol{\pi}_{\text {new }}^{i_{1:m-1}}}^{(\pi^{i_m})}V_{\boldsymbol{\pi}_{\text {old }}}\right](\mathrm{s})\right] .
\end{equation}
Then, for every state $s \in \mathcal{S}$
\begin{equation}
\label{eq12}
    \left[\mathcal{M}_{\mathfrak{D}^{i_m},\boldsymbol{\pi}_{\text {new }}^{i_{1:m-1}}}^{(\pi_{\text {new }}^{i_m})}V_{\boldsymbol{\pi}_{\text {old }}}\right](s) \geq \left[\mathcal{M}_{\mathfrak{D}^{i_m},\boldsymbol{\pi}_{\text {new }}^{i_{1:m-1}}}^{(\pi_{\text {old }}^{i_m})}V_{\boldsymbol{\pi}_{\text {old }}}\right](s).
\end{equation}
Hence, $\boldsymbol{\pi}_{\text {new }}$ attains the properties provided by Lemma \ref{lemma2}.
\end{restatable}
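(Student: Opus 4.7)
The plan is to combine the fact that $\pi_{\text{old}}^{i_m}$ is itself a feasible competitor with a swap argument that exploits the state-wise structure of the neighborhood metric. First, I would observe that the definition of a neighborhood operator guarantees $\pi_{\text{old}}^{i_m} \in \mathcal{U}_{\boldsymbol{\pi}_{\text{old}}}^{i_m}(\pi_{\text{old}}^{i_m})$, so $\pi_{\text{old}}^{i_m}$ is a valid candidate in the optimization defining $\pi_{\text{new}}^{i_m}$. This immediately gives that the expected MEHAMO attained by $\pi_{\text{new}}^{i_m}$ dominates that attained by $\pi_{\text{old}}^{i_m}$, but only in integrated form against $\beta_{\boldsymbol{\pi}_{\text{old}}}$; the real task is to upgrade this to a pointwise bound.

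Next, I would suppose toward contradiction that Inequality \ref{eq12} fails at some state $s^\star$ in the support of $\beta_{\boldsymbol{\pi}_{\text{old}}}$, i.e.\ that $[\mathcal{M}_{\mathfrak{D}^{i_m},\boldsymbol{\pi}_{\text{new}}^{i_{1:m-1}}}^{(\pi_{\text{old}}^{i_m})}V_{\boldsymbol{\pi}_{\text{old}}}](s^\star) > [\mathcal{M}_{\mathfrak{D}^{i_m},\boldsymbol{\pi}_{\text{new}}^{i_{1:m-1}}}^{(\pi_{\text{new}}^{i_m})}V_{\boldsymbol{\pi}_{\text{old}}}](s^\star)$. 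I would then splice together a candidate policy $\tilde{\pi}^{i_m}$ defined state-by-state as the one of $\pi_{\text{new}}^{i_m}(\cdot|s)$, $\pi_{\text{old}}^{i_m}(\cdot|s)$ that yields the larger MEHAMO value at state $s$. By construction, $\tilde{\pi}^{i_m}$ dominates $\pi_{\text{new}}^{i_m}$ pointwise in MEHAMO and strictly dominates it at $s^\star$, so its expectation under $\beta_{\boldsymbol{\pi}_{\text{old}}}$ is strictly larger than that achieved by $\pi_{\text{new}}^{i_m}$.

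The crux of the argument is then showing that $\tilde{\pi}^{i_m}$ remains feasible, i.e.\ $\tilde{\pi}^{i_m} \in \mathcal{U}_{\boldsymbol{\pi}_{\text{old}}}^{i_m}(\pi_{\text{old}}^{i_m})$. This is where the hypothesis that the metric $\chi$ inducing the ball in $\mathcal{U}$ is state-wise monotonically non-decreasing is essential: at every state $s$, the per-state discrepancy between $\tilde{\pi}^{i_m}(\cdot|s)$ and $\pi_{\text{old}}^{i_m}(\cdot|s)$ equals either $0$ (when we took the old-policy branch) or the discrepancy of $\pi_{\text{new}}^{i_m}(\cdot|s)$ (when we took the new-policy branch), and in either case it is bounded above by the corresponding discrepancy of $\pi_{\text{new}}^{i_m}$. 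Monotonicity of $\chi$ then yields $\chi(\tilde{\pi}^{i_m},\pi_{\text{old}}^{i_m}) \leq \chi(\pi_{\text{new}}^{i_m},\pi_{\text{old}}^{i_m}) \leq \delta^{i_m}$, so $\tilde{\pi}^{i_m}$ lies in the same closed ball as $\pi_{\text{new}}^{i_m}$ and hence in $\mathcal{U}_{\boldsymbol{\pi}_{\text{old}}}^{i_m}(\pi_{\text{old}}^{i_m})$. This contradicts the optimality of $\pi_{\text{new}}^{i_m}$ in \ref{eq11}, proving \ref{eq12} for all $s$ in the support of $\beta_{\boldsymbol{\pi}_{\text{old}}}$ (and by the continuity of the MEHAMO in the policy argument, this extends to all $s \in \mathcal{S}$).

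I expect this neighborhood-feasibility step to be the main technical obstacle, since it is the only place where the abstract properties of $\chi$ (as opposed to concrete choices like KL-divergence) enter in a non-trivial way; once it is in place, invoking Lemma \ref{lemma2} with the pointwise inequality \ref{eq12} immediately gives $V_{\boldsymbol{\pi}_{\text{new}}} \geq V_{\boldsymbol{\pi}_{\text{old}}}$ and consequently $J(\boldsymbol{\pi}_{\text{new}}) \geq J(\boldsymbol{\pi}_{\text{old}})$, completing the conclusion of the lemma.
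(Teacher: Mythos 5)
Your overall route is the same as the paper's: assume the pointwise inequality \ref{eq12} fails at some state, splice a hybrid policy that reverts to $\pi_{\text{old}}^{i_m}$ where it does better, argue the splice is still feasible, and contradict the optimality in \ref{eq11}. (The paper splices only at the single offending state $s_0$ rather than taking the pointwise max over all states, but that difference is cosmetic.) However, your feasibility argument contains a genuine flaw: you infer from $\pi_{\text{new}}^{i_m}\in\mathcal{U}_{\boldsymbol{\pi}_{\text{old}}}^{i_m}(\pi_{\text{old}}^{i_m})$ that $\chi(\pi_{\text{new}}^{i_m},\pi_{\text{old}}^{i_m})\le\delta^{i_m}$. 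The definition of a neighborhood operator only guarantees that $\mathcal{U}_{\boldsymbol{\pi}}^{i}(\pi^{i})$ \emph{contains} a closed $\chi$-ball of some radius $\delta^{i}$; it does not say the neighborhood equals that ball, so a feasible $\pi_{\text{new}}^{i_m}$ may lie outside it and your chain $\chi(\tilde{\pi}^{i_m},\pi_{\text{old}}^{i_m})\le\chi(\pi_{\text{new}}^{i_m},\pi_{\text{old}}^{i_m})\le\delta^{i_m}$ breaks at the second inequality. The paper avoids this by arguing directly that the spliced policy is state-wise (weakly) closer to $\pi_{\text{old}}^{i_m}$ than $\pi_{\text{new}}^{i_m}$ is, and at the same distance elsewhere, which together with $\pi_{\text{new}}^{i_m}\in\mathcal{U}_{\boldsymbol{\pi}_{\text{old}}}^{i_m}(\pi_{\text{old}}^{i_m})$ and the state-wise monotonicity of $\chi$ is taken to imply membership of the splice in the neighborhood; your argument should be rephrased in that form rather than via the radius $\delta^{i_m}$.

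A second, smaller issue is your closing remark that the inequality on the support of $\beta_{\boldsymbol{\pi}_{\text{old}}}$ extends to all of $\mathcal{S}$ ``by continuity of the MEHAMO in the policy argument.'' Continuity in the policy says nothing about states that $\beta_{\boldsymbol{\pi}_{\text{old}}}$ never weights, and no such extension argument exists; the strict-improvement contradiction needs $\beta_{\boldsymbol{\pi}_{\text{old}}}(s^{\star})>0$ at the offending state, which is exactly how the paper's computation $\beta_{\boldsymbol{\pi}_{\text{old}}}(s_0)\bigl(\cdots\bigr)>0$ works and implicitly relies on the sampling distribution being positive everywhere. With the feasibility step repaired as above and the support caveat stated (or positivity of $\beta_{\boldsymbol{\pi}}$ invoked), the rest of your argument, including the final appeal to Lemma \ref{lemma2}, goes through as in the paper.
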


\begin{proof}
We will prove this statement by contradiction. Suppose that there exists $s_0 \in \mathcal{S}$ such that
\[
\left[\mathcal{M}_{\mathfrak{D}^{i_m},\boldsymbol{\pi}_{\text {new }}^{i_{1:m-1}}}^{(\pi_{\text {new }}^{i_m})}V_{\boldsymbol{\pi}_{\text {old }}}\right](s_0) < \left[\mathcal{M}_{\mathfrak{D}^{i_m},\boldsymbol{\pi}_{\text {new }}^{i_{1:m-1}}}^{(\pi_{\text {old }}^{i_m})}V_{\boldsymbol{\pi}_{\text {old }}}\right](s_0).
\]
Let us define the following policy $\hat{\pi}^{i_m}$.
\[
\hat{\pi}^{i_m}\left(\cdot^{i_m} | s\right)=\left\{\begin{array}{l}
\pi_{\text {old }}^{i_m}\left(\cdot^{i_m} | s\right), \text { at } s=s_0 \\
\pi_{\text {new }}^{i_m}\left(\cdot^{i_m} | s\right), \text { at } s \neq s_0
\end{array}\right.
\]
Note that $\hat{\pi}^{i_m}$ is (weakly) closer to $\pi_{\text {old }}^{i_m}$ than $\pi_{\text {new }}^{i_m}$ at $s_0$, and at the same distance at other states. Together with $\pi_{\text {new }}^{i_m} \in \mathcal{U}_{\pi_{\text {old }}}^{i_m}\left(\pi_{\text {old }}^{i_m}\right)$, this implies that $\hat{\pi}^{i_m} \in \mathcal{U}_{\pi_{\text {old }}}^{i_m}\left(\pi_{\text {old }}^{i_m}\right)$. Further,
\[
\begin{aligned}
& \mathbb{E}_{\mathrm{s} \sim \beta_{\boldsymbol{\pi}_{\text {old }}}}\left[\left[\mathcal{M}_{\mathfrak{D}^{i_m},\boldsymbol{\pi}_{\text {new }}^{i_{1:m-1}}}^{(\hat{\pi}^{i_m})}V_{\boldsymbol{\pi}_{\text {old }}}\right](\mathrm{s})\right] - \mathbb{E}_{\mathrm{s} \sim \beta_{\boldsymbol{\pi}_{\text {old }}}}\left[\left[\mathcal{M}_{\mathfrak{D}^{i_m},\boldsymbol{\pi}_{\text {new }}^{i_{1:m-1}}}^{(\pi_\text{new}^{i_m})}V_{\boldsymbol{\pi}_{\text {old }}}\right](\mathrm{s})\right]\\
& =\beta_{\boldsymbol{\pi}_{\text {old }}}\left(s_0\right)\left(\left[\mathcal{M}_{\mathfrak{D}^{i_m}, \boldsymbol{\pi}_{\text {new }}^{i_{1:m-1}}}^{\left(\hat{\pi}^{i_m}\right)} V_{\boldsymbol{\pi}_{\text {old }}}\right]\left(s_0\right)-\left[\mathcal{M}_{\mathfrak{D}^{i_m}, \boldsymbol{\pi}_{\text {new }}^{i_{1:m-1}}}^{\left(\pi_\text{new}^{i_m}\right)} V_{\boldsymbol{\pi}_{\text {old }}}\right]\left(s_0\right)\right)>0 . \\
\end{aligned}
\]
The above contradicts $\pi_{\text {new }}^{i_m}$ as being the argmax of Equality \ref{eq11}, as $\hat{\pi}^{i_m}$ is strictly better. The contradiction finishes the proof.
\end{proof}

Next, we prove the most fundamental theorem of MEHAML.
\mehaml*

\begin{proof}
\textbf{Proof of Property 1.} \\
It follows from combining Lemma \ref{lemma2} $\&$ \ref{lemma3}.

\textbf{Proof of Properties 2, 3 $\&$ 4.}

\textbf{Step 1: convergence of the value function.} By Lemma \ref{lemma2}, we have that $V_{\boldsymbol{\pi}_k}(s) \leq V_{\boldsymbol{\pi}_{k+1}}(s), \forall s \in \mathcal{S}$, and that the value function is upper-bounded by $V_{\max }$. Hence, the sequence of value functions $\left(V_{\boldsymbol{\pi}_k}\right)_{k \in \mathbb{N}}$ converges. We denote its limit by $V$.

\textbf{Step 2: characterisation of limit points.} As the joint policy space $\Pi$ is bounded, by Bolzano-Weierstrass theorem, we know that the sequence $\left(\boldsymbol{\pi}_k\right)_{k \in \mathbb{N}}$ has a convergent subsequence. Therefore,it has at least one limit point policy. Let $\bar{\boldsymbol{\pi}}$ be such a limit point. We introduce an auxiliary notation: for a joint policy $\boldsymbol{\pi}$ and a permutation $i_{1: n}$, let $\mathrm{HU}\left(\boldsymbol{\pi}, i_{1: n}\right)$ be a joint policy obtained by a MEHAML update from $\pi$ along the permutation $i_{1: n}$.
\label{step2}

\textbf{Claim:} For any permutation $z_{1: n} \in \operatorname{Sym}(n)$,
\[
\bar{\boldsymbol{\pi}}=\mathrm{HU}\left(\bar{\boldsymbol{\pi}}, z_{1: n}\right)
\]

\textbf{Proof of Claim.} Let $\hat{\boldsymbol{\pi}}=\mathrm{HU}\left(\bar{\boldsymbol{\pi}}, z_{1: n}\right) \neq \bar{\boldsymbol{\pi}}$ and $\left(\boldsymbol{\pi}_{k_r}\right)_{r \in \mathbb{N}}$ be a subsequence converging to $\bar{\boldsymbol{\pi}}$. Let us recall that the limit value function is unique and denoted as $V$. Writing $\mathbb{E}_{i_{1: n}^{0: \infty}}[\cdot]$ for the expectation operator under the stochastic process $\left(i_{1: n}^k\right)_{k \in \mathbb{N}}$ of update orders, for a state $s \in \mathcal{S}$, we have
\[
\begin{aligned}
& 0=\lim _{r \rightarrow \infty} \mathbb{E}_{i_{1: n}^{0: \infty}}\left[V_{\boldsymbol{\pi}_{k_r+1}}(s)-V_{\boldsymbol{\pi}_{k_r}}(s)\right] \\
& \text{as every choice of permutation improves the value function} \\
& \geq \lim _{r \rightarrow \infty} \mathrm{P}\left(i_{1: n}^{k_r}=z_{1: n}\right)\left[V_{\mathrm{HU}\left(\boldsymbol{\pi}_{k_r}, z_{1: n}\right)}(s)-V_{\boldsymbol{\pi}_{k_r}}(s)\right] \\
& =p\left(z_{1: n}\right) \lim _{r \rightarrow \infty}\left[V_{\mathrm{HU}\left(\boldsymbol{\pi}_{k_r}, z_{1: n}\right)}(s)-V_{\boldsymbol{\pi}_{k_r}}(s)\right] .
\end{aligned}
\]
By the continuity of the expected MEHAMO (following from the continuity of the state-action value function (Lemma \ref{lemma4}), the entropy term, HADFs, neighbourhood operators, and the sampling distribution) we obtain that the first component of $\operatorname{HU}\left(\boldsymbol{\pi}_{k_r}, z_{1: n}\right)$, which is $\pi_{k_r+1}^{z_1}$, is continuous in $\boldsymbol{\pi}_{k_r}$ by Berge's Maximum Theorem \citep{ausubel1993generalized}. Applying this argument recursively for $z_2, \ldots, z_n$, we have that $\operatorname{HU}\left(\boldsymbol{\pi}_{k_r}, z_{1: n}\right)$ is continuous in $\boldsymbol{\pi}_{k_r}$.Hence, as $\pi_{k_r}$ converges to $\bar{\pi}$, its HU converges to the HU of $\bar{\pi}$, which is $\hat{\boldsymbol{\pi}}$. Hence, we continue writing the above derivation as
\[
=p\left(z_{1: n}\right)\left[V_{\hat{\boldsymbol{\pi}}}(s)-V_{\bar{\boldsymbol{\pi}}}(s)\right] \geq 0 \text {, by Lemma \ref{lemma2}.}
\]
As $s$ was arbitrary, the state-value function of $\hat{\boldsymbol{\pi}}$ is the same as that of $\boldsymbol{\pi}: V_{\hat{\boldsymbol{\pi}}}=V_{\bar{\boldsymbol{\pi}}}$, by the Bellman equation \ref{Bellman}: $Q(s, \boldsymbol{a})=r(s, \boldsymbol{a})+\gamma \mathbb{E} \left[V\left(s^{\prime}\right)\right]$, this also implies that their state-action value functions are the same: $Q_{\hat{\boldsymbol{\pi}}}=Q_{\bar{\boldsymbol{\pi}}}$. Let $m$ be the smallest integer such that $\hat{\pi}^{z_m} \neq \bar{\pi}^{z_m}$. This means that $\hat{\pi}^{z_m}$ achieves a greater expected MEHAMO than $\bar{\pi}^{z_m}$. Hence,
\[
\mathbb{E}_{\mathrm{s} \sim \beta_{\bar{\boldsymbol{\pi}}}}\left[\left[\mathcal{M}_{\mathfrak{D}^{z_m},\bar{\boldsymbol{\pi}}^{z_{1:m-1}}}^{(\hat{\pi}^{z_m})}V_{\bar{\boldsymbol{\pi}}}\right](\mathrm{s})\right] >
\mathbb{E}_{\mathrm{s} \sim \beta_{\bar{\boldsymbol{\pi}}}}\left[\left[\mathcal{M}_{\mathfrak{D}^{z_m},\bar{\boldsymbol{\pi}}^{z_{1:m-1}}}^{(\bar{\pi}^{z_m})}V_{\bar{\boldsymbol{\pi}}}\right](\mathrm{s})\right]
\]
then for some state $s$,
\[
\left[\mathcal{M}_{\mathfrak{D}^{z_m},\bar{\boldsymbol{\pi}}^{z_{1:m-1}}}^{(\hat{\pi}^{z_m})}V_{\bar{\boldsymbol{\pi}}}\right](s)>\left[\mathcal{M}_{\mathfrak{D}^{z_m},\bar{\boldsymbol{\pi}}^{z_{1:m-1}}}^{(\bar{\pi}^{z_m})}V_{\bar{\boldsymbol{\pi}}}\right](s)
\]
which can be written as
\[
\begin{aligned}
& \mathbb{E}_{ \mathbf{a}^{z_{1:m-1}} \sim \bar{\boldsymbol{\pi}}^{z_{1:m-1}}, \mathrm{a}^{z_m} \sim \hat{\pi}^{z_m}}\Bigl[Q^{z_{1:m}}_{\bar{\boldsymbol{\pi}}}(s,\mathbf{a}^{z_{1:m-1}},\mathrm{a}^{z_m})-\alpha\log\hat{\pi}^{z_m}\left(\mathrm{a}^{z_m} | s\right)\Bigr] - \mathfrak{D}_{\bar{\boldsymbol{\pi}}}^{z_m}\left(\hat{\pi}^{z_m} | s, \bar{\boldsymbol{\pi}}^{z_{1: m-1}}\right) \\
& = \mathbb{E}_{ \mathbf{a}^{z_{1:m-1}} \sim \bar{\boldsymbol{\pi}}^{z_{1:m-1}}, \mathrm{a}^{z_m} \sim \hat{\pi}^{z_m}}\Bigl[Q^{z_{1:m}}_{\hat{\boldsymbol{\pi}}}(s,\mathbf{a}^{z_{1:m-1}},\mathrm{a}^{z_m})-\alpha\log\hat{\pi}^{z_m}\left(\mathrm{a}^{z_m} | s\right)\Bigr] - \mathfrak{D}_{\bar{\boldsymbol{\pi}}}^{z_m}\left(\hat{\pi}^{z_m} | s, \bar{\boldsymbol{\pi}}^{z_{1: m-1}}\right)\\
& > \mathbb{E}_{ \mathbf{a}^{z_{1:m-1}} \sim \bar{\boldsymbol{\pi}}^{z_{1:m-1}}, \mathrm{a}^{z_m} \sim \bar{\pi}^{z_m}}\Bigl[Q^{z_{1:m}}_{\bar{\boldsymbol{\pi}}}(s,\mathbf{a}^{z_{1:m-1}},\mathrm{a}^{z_m})-\alpha\log\bar{\pi}^{z_m}\left(\mathrm{a}^{z_m} | s\right)\Bigr] - \mathfrak{D}_{\bar{\boldsymbol{\pi}}}^{z_m}\left(\bar{\pi}^{z_m} | s, \bar{\boldsymbol{\pi}}^{z_{1: m-1}}\right) \\
& = \mathbb{E}_{ \mathbf{a}^{z_{1:m-1}} \sim \bar{\boldsymbol{\pi}}^{z_{1:m-1}}, \mathrm{a}^{z_m} \sim \bar{\pi}^{z_m}}\Bigl[Q^{z_{1:m}}_{\bar{\boldsymbol{\pi}}}(s,\mathbf{a}^{z_{1:m-1}},\mathrm{a}^{z_m})-\alpha\log\bar{\pi}^{z_m}\left(\mathrm{a}^{z_m} | s\right)\Bigr] .
\end{aligned}
\]
Adding both sides of the inequality by $\alpha\sum_{i=1}^{m-1}{\mathcal{H}\left(\bar{\pi}^{z_i}\left(\cdot | s\right)\right)}$ and using the equation $V_{\boldsymbol{\pi}}(s)=\mathbb{E}_{ \mathbf{a} \sim \boldsymbol{\pi}}\left[Q_{\boldsymbol{\pi}}(s,\mathbf{a})+\alpha\sum_{i=1}^{n}{\mathcal{H}\left(\pi^i\left(\cdot | s\right)\right)}\right]$ gives
\[
\begin{aligned}
& V_{\hat{\boldsymbol{\pi}}}(s) = \mathbb{E}_{ \mathbf{a} \sim \hat{\boldsymbol{\pi}}}\left[Q_{\hat{\boldsymbol{\pi}}_{\text {}}}(s,\mathbf{a})+\alpha\sum_{i=1}^{n}{\mathcal{H}\left(\hat{\pi}^i\left(\cdot | s\right)\right)}\right] \\
& \geq\mathbb{E}_{ \mathbf{a} \sim \hat{\boldsymbol{\pi}}}\left[Q_{\hat{\boldsymbol{\pi}}_{\text {}}}(s,\mathbf{a})+\alpha\sum_{i=1}^{n}{\mathcal{H}\left(\hat{\pi}^i\left(\cdot | s\right)\right)}\right] -\mathfrak{D}_{\bar{\boldsymbol{\pi}}}^{z_m}\left(\hat{\pi}^{z_m} | s, \bar{\boldsymbol{\pi}}^{z_{1: m-1}}\right) \\
& > \mathbb{E}_{ \mathbf{a} \sim \bar{\boldsymbol{\pi}}}\left[Q_{\bar{\boldsymbol{\pi}}_{\text {}}}(s,\mathbf{a})+\alpha\sum_{i=1}^{n}{\mathcal{H}\left(\bar{\pi}^i\left(\cdot | s\right)\right)}\right] \\
& = V_{\bar{\boldsymbol{\pi}}}(s) .
\end{aligned}
\]
However, we have $V_{\hat{\boldsymbol{\pi}}}=V_{\boldsymbol{\pi}}$ which yields a contradiction, proving the claim.

\textbf{Step 3: dropping the HADF.} Consider an arbitrary limit point joint policy $\bar{\boldsymbol{\pi}}$. By Step 2, for any permutation $i_{1: n}$, considering the first component of the $\mathrm{HU}$,
\[
\begin{aligned}
\bar{\pi}^{i_1} & =\underset{\pi^{i_1} \in \mathcal{U}_{\bar{\boldsymbol{\pi}}}^{i_1}\left(\bar{\pi}^{i_1}\right)}{\arg \max} \mathbb{E}_{\mathrm{s} \sim \beta_{\bar{\boldsymbol{\pi}}}}\left[\left[\mathcal{M}_{\mathfrak{D}^{i_1}}^{\left(\pi^{i_1}\right)} V_{\bar{\boldsymbol{\pi}}}\right](\mathrm{s})\right] \\
& =\underset{\pi^{i_1} \in \mathcal{U}_{\bar{\boldsymbol{\pi}}}^{i_1}\left(\bar{\pi}^{i_1}\right)}{\arg \max} \mathbb{E}_{\mathrm{s} \sim \beta_{\bar{\boldsymbol{\pi}}}}\left[\mathbb{E}_{\mathrm{a}^{i_1} \sim \pi^{i_1}}\Bigl[Q^{i_1}_{\bar{\boldsymbol{\pi}}}\left(\mathrm{s}, \mathrm{a}^{i_1}\right)-\alpha\log\pi^{i_1}\left(\mathrm{a}^{i_1} | \mathrm{s}\right)\right]-\mathfrak{D}_{\bar{\boldsymbol{\pi}}}^{i_1}\left(\pi^{i_1} | \mathrm{s}\right)\Bigr] \\
& =\underset{\pi^{i_1} \in \mathcal{U}_{\bar{\boldsymbol{\pi}}}^{i_1}\left(\bar{\pi}^{i_1}\right)}{\arg \max} \mathbb{E}_{\mathrm{s} \sim \beta_{\bar{\boldsymbol{\pi}}}}\left[\mathbb{E}_{\mathrm{a}^{i_1} \sim \pi^{i_1}}\Bigl[A^{i_1}_{\bar{\boldsymbol{\pi}}}\left(\mathrm{s}, \mathrm{a}^{i_1}\right)-\alpha\log\pi^{i_1}\left(\mathrm{a}^{i_1} | \mathrm{s}\right)\right]-\mathfrak{D}_{\bar{\boldsymbol{\pi}}}^{i_1}\left(\pi^{i_1} | \mathrm{s}\right)\Bigr] .
\end{aligned}
\]
Suppose that there exists a policy $\pi^{\prime} \neq \bar{\pi}^{i_1}$, and a state $s$, such that
\begin{equation}
\label{eq22}
\pi^{\prime}=\underset{\pi^{i_1} \in \mathcal{U}_{\bar{\boldsymbol{\pi}}}^{i_1}\left(\bar{\pi}^{i_1}\right)}{\arg \max} \mathbb{E}_{\mathrm{a}^{i_1} \sim \pi^{i_1}}\left[A^{i_1}_{\bar{\boldsymbol{\pi}}}\left(s, \mathrm{a}^{i_1}\right)-\alpha\log\pi^{i_1}\left(\mathrm{a}^{i_1} | s\right)\right],
\end{equation}
implies
\[
\mathbb{E}_{\mathrm{a}^{i_1} \sim \pi^{\prime}}\left[A^{i_1}_{\bar{\boldsymbol{\pi}}}(s, \mathrm{a}^{i_1})-\alpha\log\pi^{\prime}\left(\mathrm{a}^{i_1} | s\right)\right]>\mathbb{E}_{\mathrm{a}^{i_1} \sim \bar{\pi}^{i_1}}\left[A^{i_1}_{\bar{\boldsymbol{\pi}}}(s, \mathrm{a})-\alpha\log\bar{\pi}^{i_1}\left(\mathrm{a}^{i_1} | s\right)\right]
\]
which can be written as
\[
\mathbb{E}_{\mathrm{a}^{i_1} \sim \pi^{\prime}}\left[A^{i_1}_{\bar{\boldsymbol{\pi}}}(s, \mathrm{a}^{i_1})\right] + \alpha\mathcal{H}\left(\pi^{\prime}\left(\cdot^{i_1} | s\right)\right) > \alpha\mathcal{H}\left(\bar{\pi}^{i_1}\left(\cdot^{i_1} | s\right)\right) .
\]
For any policy $\pi^{i_1}$, consider the canonical parameterisation $\pi^{i_1}(\cdot^{i_1} | s)=\Bigl(x_1, \ldots, x_{m-1}, \\
1-\sum_{i=1}^{m-1} x_i\Bigr)$, where $m$ is the size of the action space. We have that
\[
\begin{aligned}
& \mathbb{E}_{\mathrm{a}^{i_1} \sim \pi^{i_1}}\left[A^{i_1}_{\bar{\boldsymbol{\pi}}}(s, \mathrm{a}^{i_1})\right]=\sum_{i=1}^m \pi^{i_1}\left(a^{i_1}_i | s\right) A^{i_1}_{\bar{\boldsymbol{\pi}}}\left(s, a^{i_1}_i\right) \\
& =\sum_{i=1}^{m-1} x_i A^{i_1}_{\bar{\boldsymbol{\pi}}}\left(s, a^{i_1}_i\right)+\left(1-\sum_{j=1}^{m-1} x_j\right) A^{i_1}_{\bar{\boldsymbol{\pi}}}\left(s, a^{i_1}_m\right) \\
& =\sum_{i=1}^{m-1} x_i\left[A^{i_1}_{\bar{\boldsymbol{\pi}}}\left(s, a^{i_1}_i\right)-A^{i_1}_{\bar{\boldsymbol{\pi}}}\left(s, a^{i_1}_m\right)\right]+A^{i_1}_{\bar{\boldsymbol{\pi}}}\left(s, a^{i_1}_m\right) .
\end{aligned}
\]
This means that $\mathbb{E}_{\mathrm{a}^{i_1} \sim \pi^{i_1}}\left[A^{i_1}_{\bar{\boldsymbol{\pi}}}(s, \mathrm{a}^{i_1})\right]$ is an affine function of $\pi^{i_1}(\cdot^{i_1} | s)$, and thus, its Gâteaux derivatives are constant in $\mathcal{P}(\mathcal{A})$ for fixed directions. Hence, we can obtain that $\mathbb{E}_{\mathrm{a}^{i_1} \sim \pi^{i_1}}\left[A^{i_1}_{\bar{\boldsymbol{\pi}}}(s, \mathrm{a}^{i_1})\right] +\alpha\mathcal{H}\left(\pi^{i_1}\left(\cdot^{i_1} | s\right)\right)$ is a strict concave function of $\pi^{i_1}(\cdot^{i_1} | s)$ (following from the affinity of $\mathbb{E}_{\mathrm{a}^{i_1} \sim \pi^{i_1}}\left[A^{i_1}_{\bar{\boldsymbol{\pi}}}(s, \mathrm{a}^{i_1})\right]$ and the strict concavity of $\mathcal{H}\left(\pi^{i_1}\left(\cdot^{i_1} | s\right)\right)$). Therefore, by combining the Equation \ref{eq22} and the strict concavity of $\mathbb{E}_{\mathrm{a}^{i_1} \sim \pi^{i_1}}\left[A^{i_1}_{\bar{\boldsymbol{\pi}}}(s, \mathrm{a}^{i_1})\right] +\alpha\mathcal{H}\left(\pi^{i_1}\left(\cdot^{i_1} | s\right)\right)$, Gâteaux derivative of $\mathbb{E}_{\mathrm{a}^{i_1} \sim \pi^{i_1}}\left[A^{i_1}_{\bar{\boldsymbol{\pi}}}(s, \mathrm{a}^{i_1})\right] +\alpha\mathcal{H}\left(\pi^{i_1}\left(\cdot^{i_1} | s\right)\right)$, in the direction from $\bar{\pi}$ to $\pi^{\prime}$, is strictly positive.

Furthermore, the Gâteaux derivatives of $\mathfrak{D}^{i_1}_{\bar{\boldsymbol{\pi}}}(\pi^{i_1} | s)$ are zero at $\pi^{i_1}(\cdot^{i_1} | s)=\bar{\pi}^{i_1}(\cdot^{i_1} | s)$ by its definition (zero gradient). Hence, the Gâteaux derivative of $\mathbb{E}_{\mathrm{a}^{i_1} \sim \pi^{i_1}}\left[A^{i_1}_{\bar{\boldsymbol{\pi}}}(s, \mathrm{a})\right]+\alpha\mathcal{H}\left(\pi^{i_1}\left(\cdot^{i_1} | s\right)\right)-\mathfrak{D}_{\bar{\boldsymbol{\pi}}}^{i_1}\left(\pi^{i_1} | s\right)$ is strictly positive. Therefore, for conditional policies $\hat{\pi}^{i_1}(\cdot^{i_1} | s)$ sufficiently close to $\bar{\pi}^{i_1}(\cdot^{i_1} | s)$ in the direction towards $\pi^{\prime}(\cdot^{i_1} | s)$, we have
\begin{equation}
\label{eq21}
\begin{aligned}
&\mathbb{E}_{\mathrm{a}^{i_1} \sim \hat{\pi}^{i_1}}\Bigl[A^{i_1}_{\bar{\boldsymbol{\pi}}}(s, \mathrm{a}^{i_1})\Bigr]+\alpha\mathcal{H}\left(\hat{\pi}^{i_1}\left(\cdot^{i_1} | s\right)\right)- \mathfrak{D}^{i_1}_{\bar{\boldsymbol{\pi}}}(\hat{\pi}^{i_1} | s) \\
&>\mathbb{E}_{\mathrm{a}^{i_1} \sim \bar{\pi}^{i_1}}\Bigl[A^{i_1}_{\bar{\boldsymbol{\pi}}}(s, \mathrm{a}^{i_1})\Bigr]+\alpha\mathcal{H}\left(\bar{\pi}^{i_1}\left(\cdot^{i_1} | s\right)\right)- \mathfrak{D}^{i_1}_{\bar{\boldsymbol{\pi}}}(\bar{\pi}^{i_1} | s).
\end{aligned}
\end{equation}
Let us construct a policy $\tilde{\pi}^{i_1}$ as follows. For all states $y \neq s$, we set $\tilde{\pi}^{i_1}(\cdot^{i_1} | y)=\bar{\pi}^{i_1}(\cdot^{i_1} | y)$. Moreover, for $\tilde{\pi}^{i_1}(\cdot^{i_1} | s)$ we choose $\hat{\pi}^{i_1}(\cdot^{i_1} | s)$ as in Inequality \ref{eq21}, sufficiently close to $\bar{\pi}^{i_1}(\cdot^{i_1} | s)$, so that $\tilde{\pi}^{i_1} \in \mathcal{U}^{i_1}_{\bar{\boldsymbol{\pi}}^{i_1}}\left(\bar{\pi}^{i_1}\right)$. Then, we have
\[
\begin{aligned}
&\mathbb{E}_{\mathrm{s} \sim \beta_{\bar{\boldsymbol{\pi}}}, \mathrm{a}^{i_1} \sim \tilde{\pi}^{i_1}}\Bigl[A^{i_1}_{\bar{\boldsymbol{\pi}}}(\mathrm{s}, \mathrm{a}^{i_1})\Bigr]+\mathbb{E}_{\mathrm{s} \sim \beta_{\bar{\boldsymbol{\pi}}}}\Bigl[\alpha\mathcal{H}\left(\tilde{\pi}^{i_1}\left(\cdot^{i_1} | \mathrm{s}\right)\right)- \mathfrak{D}^{i_1}_{\bar{\boldsymbol{\pi}}}(\tilde{\pi}^{i_1} | \mathrm{s})\Bigr] \\
&> \mathbb{E}_{\mathrm{s} \sim \beta_{\bar{\boldsymbol{\pi}}}, \mathrm{a}^{i_1} \sim \bar{\pi}^{i_1}}\Bigl[A^{i_1}_{\bar{\boldsymbol{\pi}}}(\mathrm{s}, \mathrm{a}^{i_1})\Bigr]+\mathbb{E}_{\mathrm{s} \sim \beta_{\bar{\boldsymbol{\pi}}}}\Bigl[\alpha\mathcal{H}\left(\bar{\pi}^{i_1}\left(\cdot^{i_1} | \mathrm{s}\right)\right)- \mathfrak{D}^{i_1}_{\bar{\boldsymbol{\pi}}}(\bar{\pi}^{i_1} | \mathrm{s})\Bigr],
\end{aligned}
\]
which yields a contradiction. Hence, the assumption was false. Thus, we have proved that, for every state $s$,
\[
\begin{aligned}
\bar{\pi}^{i_1}\left(\cdot^{i_1} | s\right) & =\underset{\pi^{i_1} \in \mathcal{U}_{\bar{\boldsymbol{\pi}}}^{i_1}\left(\bar{\pi}^{i_1}\right)}{\arg \max} \mathbb{E}_{\mathrm{a}^{i_1} \sim \pi^{i_1}}\left[A^{i_1}_{\bar{\boldsymbol{\pi}}}\left(s, \mathrm{a}^{i_1}\right)-\alpha\log\pi^{i_1}\left(a^{i_1} | s\right)\right] \\
& =\underset{\pi^{i_1} \in \mathcal{U}_{\bar{\boldsymbol{\pi}}}^{i_1}\left(\bar{\pi}^{i_1}\right)}{\arg \max} \mathbb{E}_{\mathrm{a}^{i_1} \sim \pi^{i_1}}\left[Q^{i_1}_{\bar{\boldsymbol{\pi}}}\left(s, \mathrm{a}^{i_1}\right)-\alpha\log\pi^{i_1}\left(a^{i_1} | s\right)\right] .
\end{aligned}
\]

\textbf{Step 4: Quantal response equilibrium.} We have proved that $\bar{\boldsymbol{\pi}}$ satisfies 
\[
\begin{aligned}
\bar{\pi}^i\left(\cdot{ }^i | s\right) & =\underset{\pi^i\left(\cdot^i | s\right) \in \mathcal{P}\left(\mathcal{A}^i\right)}{\arg \max} \mathbb{E}_{\mathrm{a}^i \sim \pi^i}\left[Q^i_{\bar{\boldsymbol{\pi}}}\left(s, \mathrm{a}^i\right)-\alpha\log\pi^i\left(\mathrm{a}^i | s\right)\right] \\
& =\underset{\pi^i\left(\cdot^i | s\right) \in \mathcal{P}\left(\mathcal{A}^i\right)}{\arg \max } \mathbb{E}_{\mathrm{a}^i \sim \pi^i, \mathbf{a}^{-i} \sim \bar{\boldsymbol{\pi}}^{-i}}\left[Q_{\bar{\boldsymbol{\pi}}}(s, \mathbf{a})\right] \\
&-\alpha\sum_{j=1}^{n}\sum_{a^j \in \mathcal{A}^j}{\pi^j\left(a^j | s\right)\log\pi^j\left(a^j | s\right)}, \forall i \in \mathcal{N}, s \in \mathcal{S} .
\end{aligned}
\]
Then following the proof of Theorem \ref{qre}, we have
\[
\bar{\pi}^i\left(a^i | s\right):=\frac{\exp \left(\alpha^{-1} \mathbb{E}_{\mathbf{a}^{-i} \sim \bar{\boldsymbol{\pi}}^{-i}}\left[Q_{\bar{\boldsymbol{\pi}}}\left(s, a^i, \mathbf{a}^{-i}\right)\right]\right)}{\sum_{b^i \in \mathcal{A}^i} \exp \left(\alpha^{-1} \mathbb{E}_{\mathbf{a}^{-i} \sim \bar{\boldsymbol{\pi}}^{-i}}\left[Q_{\bar{\boldsymbol{\pi}}}\left(s, b^i, \mathbf{a}^{-i}\right)\right]\right)} .
\]
Thus, $\bar{\pi}$ is a quantal response equilibrium. Lastly, this implies that the value function corresponds to a quantal response value function $V^{\mathrm{QRE}}$, the return corresponds to a quantal response return $J^{\mathrm{QRE}}$.

\end{proof}

\newpage
\section{Experimental Details}
\label{apdE}

\subsection{Automatically Adjusting Temperature}
\label{auto-tuned}
We implement an automated temperature tuning method for HASAC which draws on the auto-tuned temperature extension of SAC \citep{haarnoja2018soft}. Thus, we adjust $\alpha$ with the following objective:
\[
J(\alpha) = \mathbb{E}_{\mathrm{s}_t \sim \mathcal{D}, \mathbf{a}_t\sim \boldsymbol{\pi}}[- \alpha \log \boldsymbol{\pi}(\mathbf{a}_t|\mathrm{s}_t)-\alpha \bar{\mathcal{H}}],
\]
where $\bar{\mathcal{H}}$ is the target entropy.

\subsection{Experimental Setup and Additional Results}

\subsubsection{Bi-DexHands}
Bi-DexHands \citep{chen2022humanlevel} offers numerous bimanual manipulation tasks that are designed to match various human skill levels. Building on the Isaac Gym simulator, Bi-DexHands supports running thousands of environments simultaneously. This increases the number of samples generated in the same time interval, thus significantly alleviating the sample efficiency problem of on-policy algorithms.

We evaluate HASAC on nine tasks simulating human behaviors across different ages and compare it with on-policy algorithms HAPPO, MAPPO, PPO, and off-policy algorithms HATD3 and SAC, as shown in Figure \ref{fig:dexhands}. Despite leveraging GPU parallelization, we observe that the sample efficiency of on-policy algorithms remains significantly lower than that of off-policy algorithms. In the most challenging Catch Abreast, Two Catch Underarm, and Lift
Underarm tasks, on-policy algorithms fail to learn useful policies within 10m steps. While the off-policy algorithm HATD3 demonstrates higher sample efficiency compared to on-policy algorithms, it exhibits substantial variance during training due to the deterministic policies it learned, eventually converging to local optima. In contrast, our algorithm HASAC outperforms the other five methods by a large margin, showcasing faster convergence speed and lower variance.

\begin{figure*}[ht]
    \subfloat[Catch Abreast]{\includegraphics[width=0.32\textwidth]{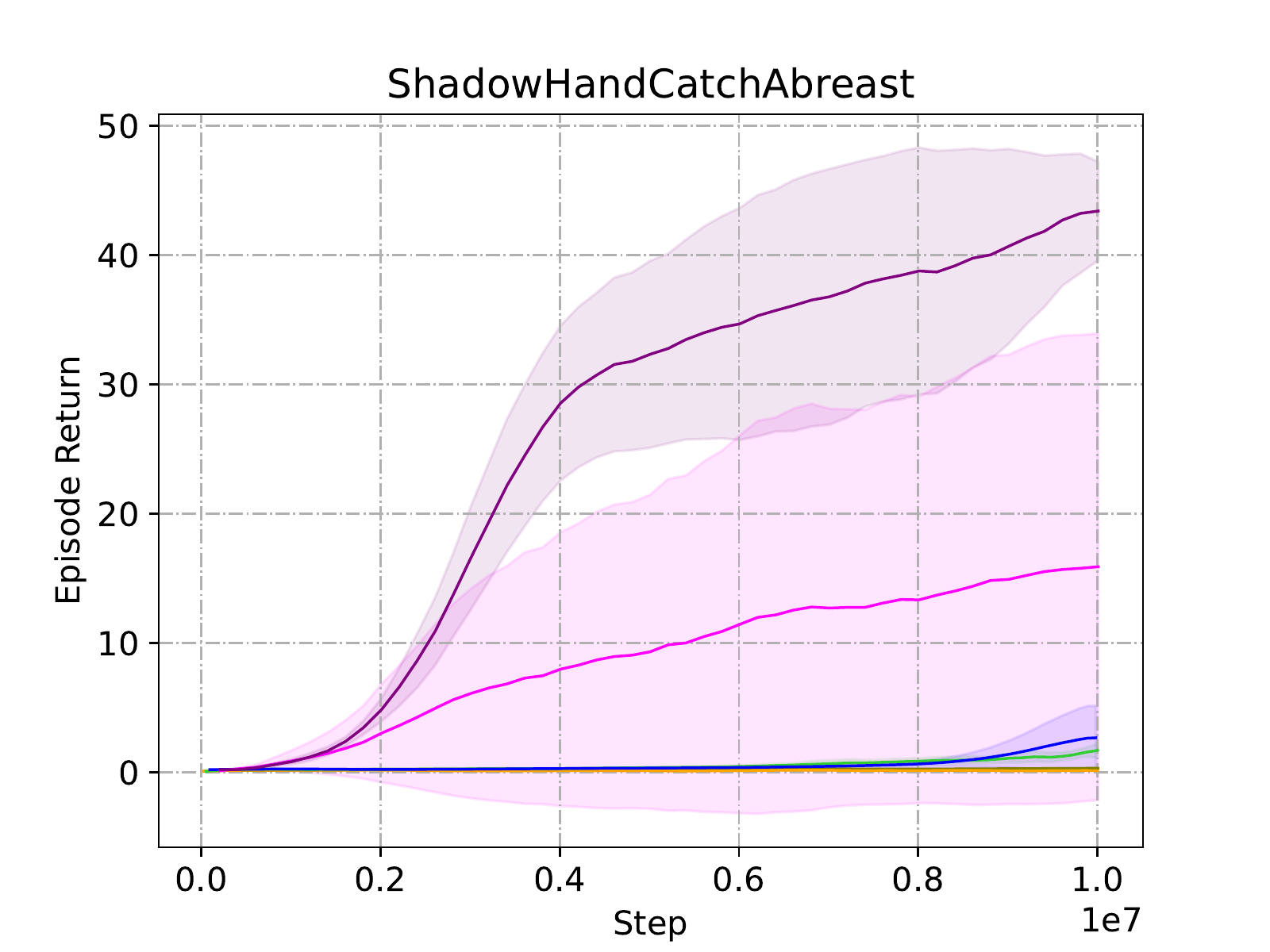}}
    \hfill
    \subfloat[Two Catch Underarm]{\includegraphics[width=0.32\textwidth]{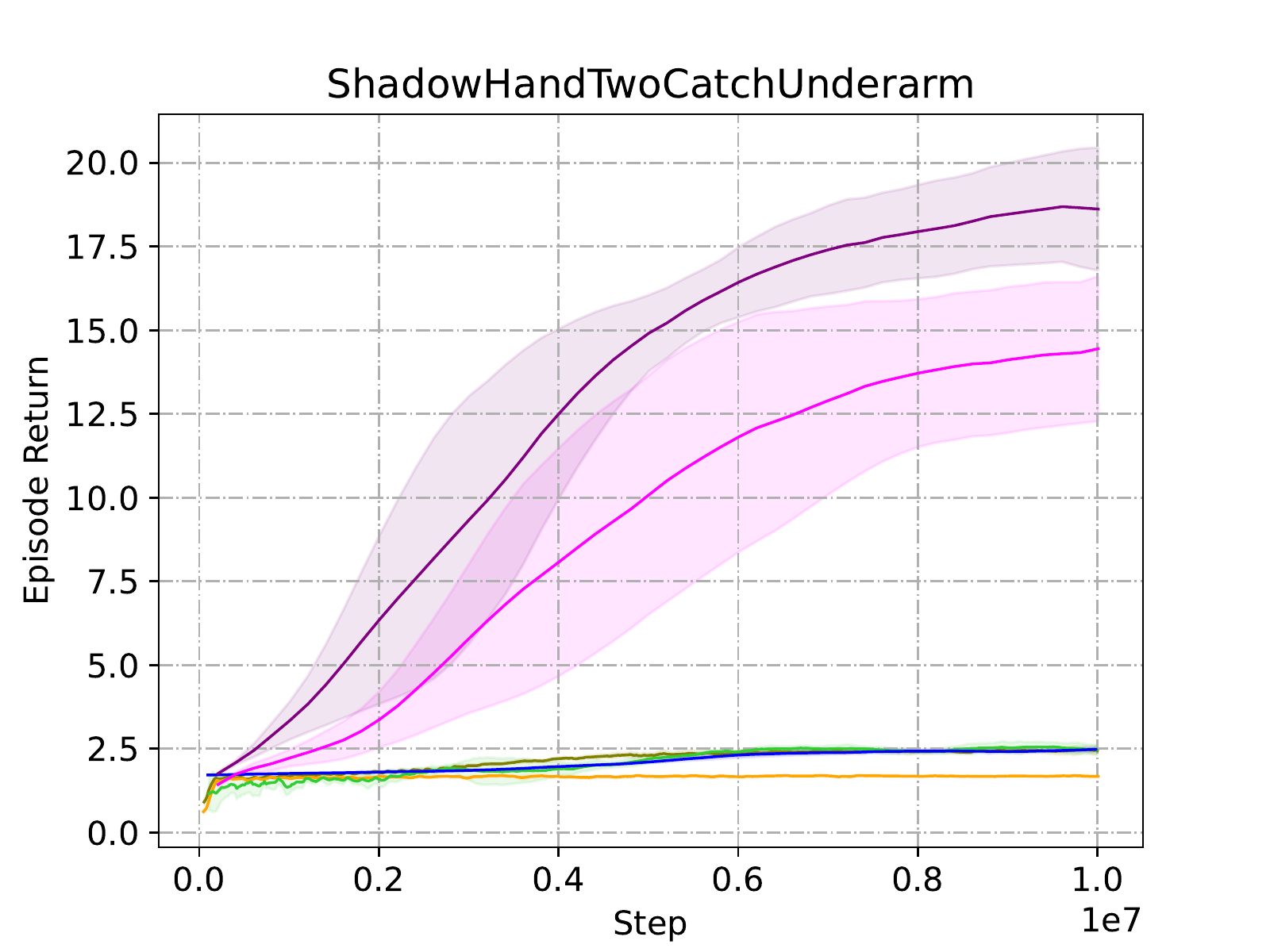}}
    \hfill
    \subfloat[Lift Underarm]{\includegraphics[width=0.32\textwidth]{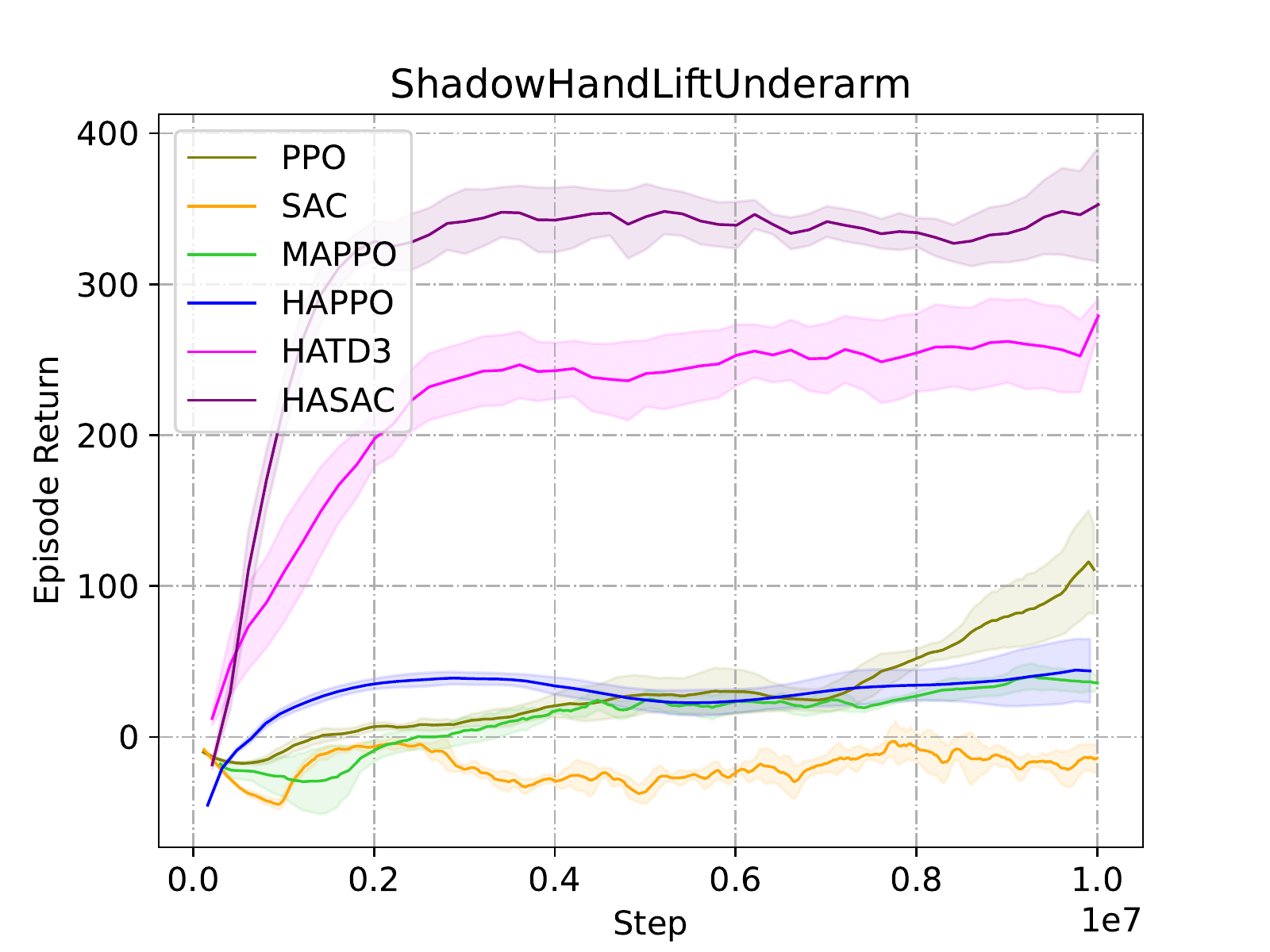}}
    \newline
    \subfloat[Hand Over]{\includegraphics[width=0.32\textwidth]{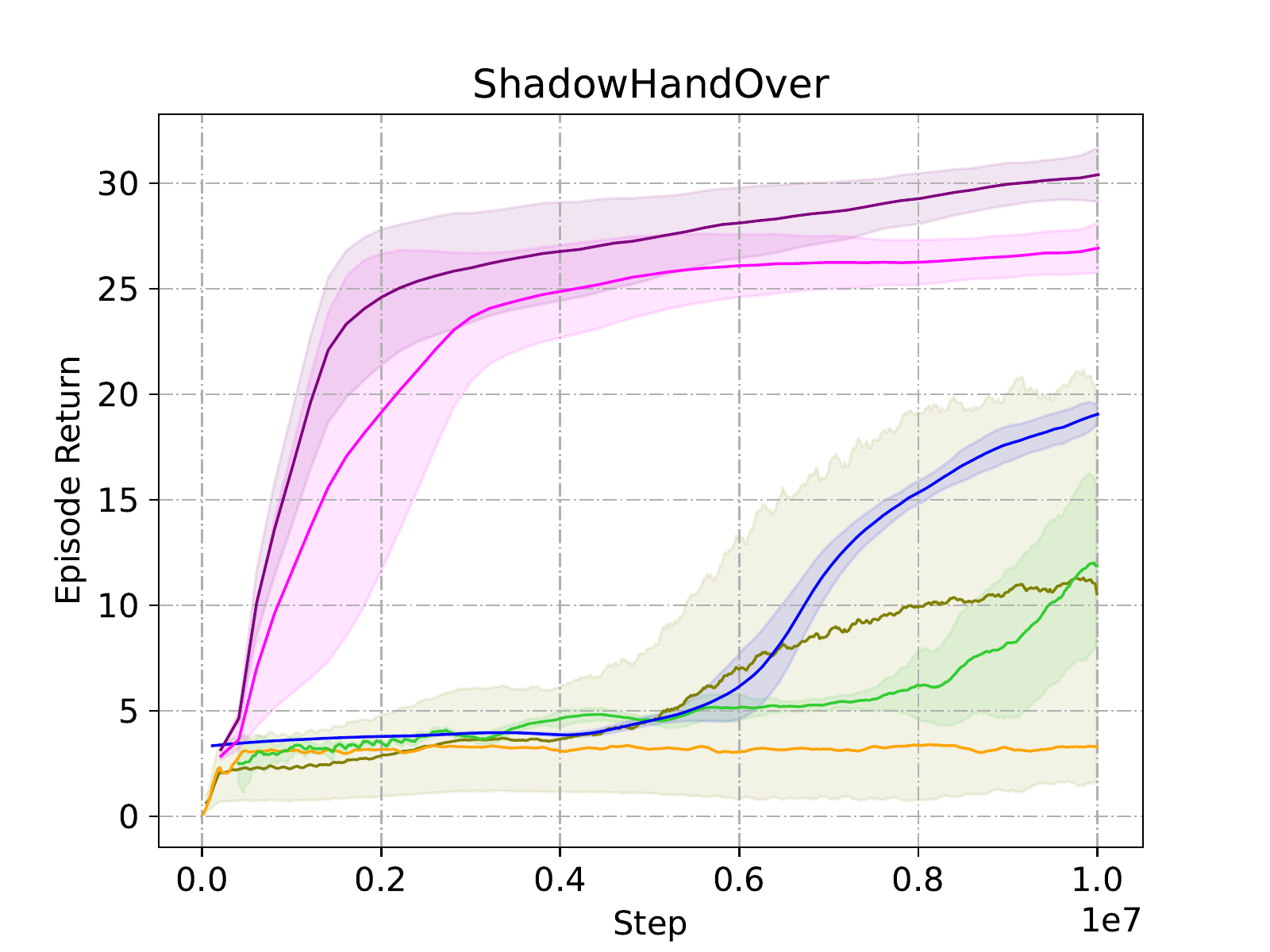}}
    \hfill
    \subfloat[Catch Over2Underarm]{\includegraphics[width=0.32\textwidth]{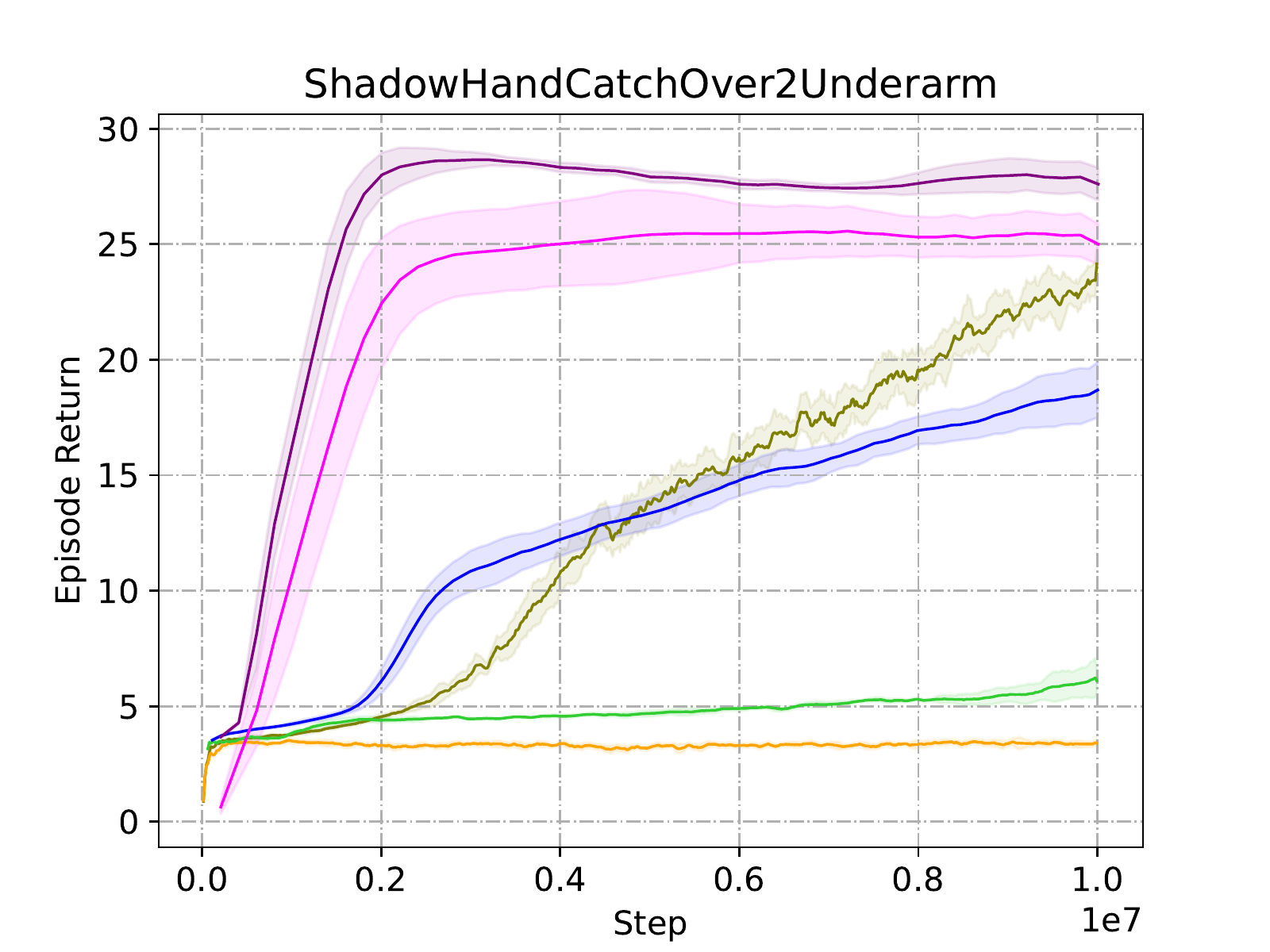}}
    \hfill
    \subfloat[Hand Pen]{\includegraphics[width=0.32\textwidth]{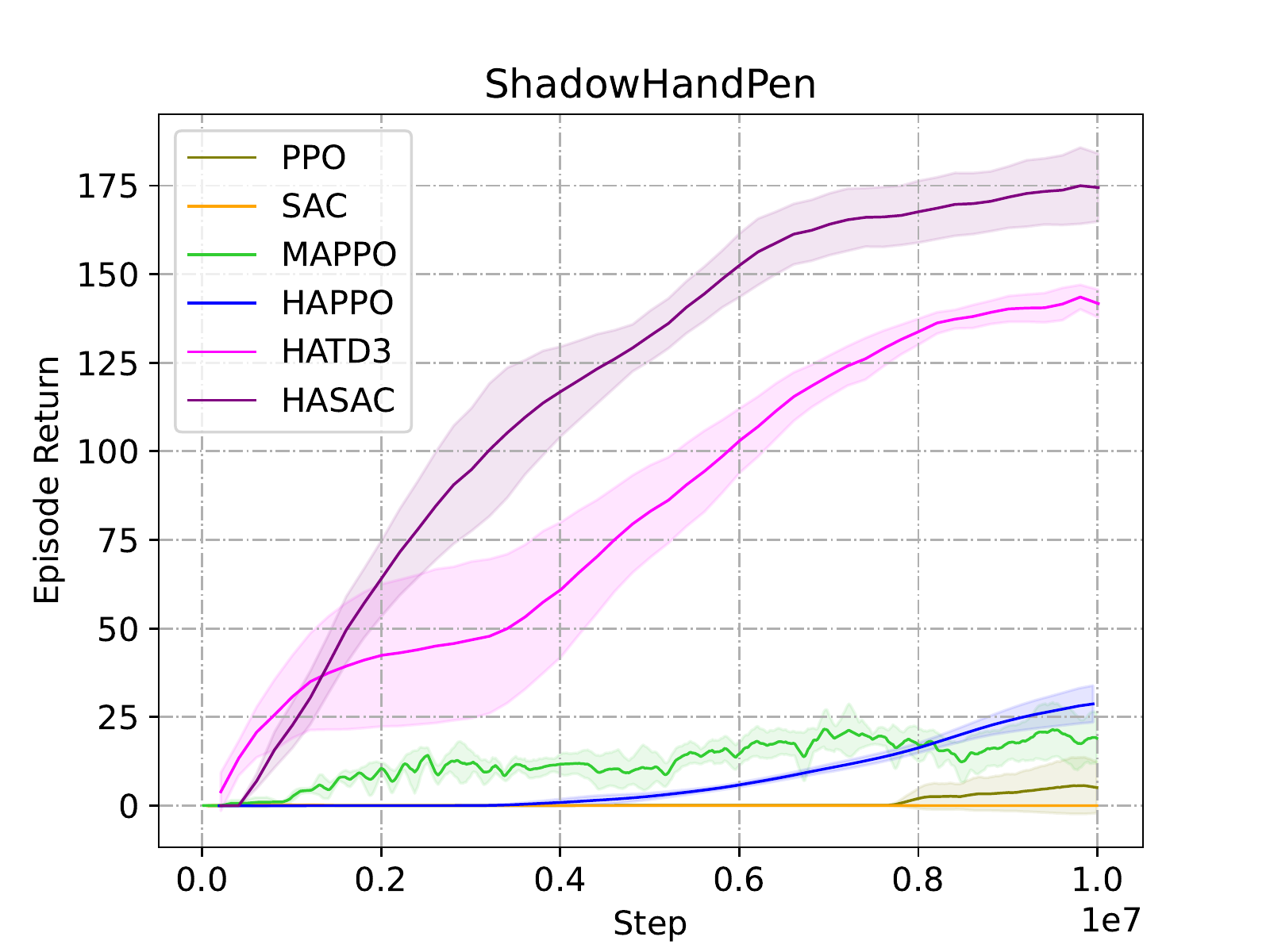}}
    \newline
    \subfloat[Door Close Inward]{\includegraphics[width=0.32\textwidth]{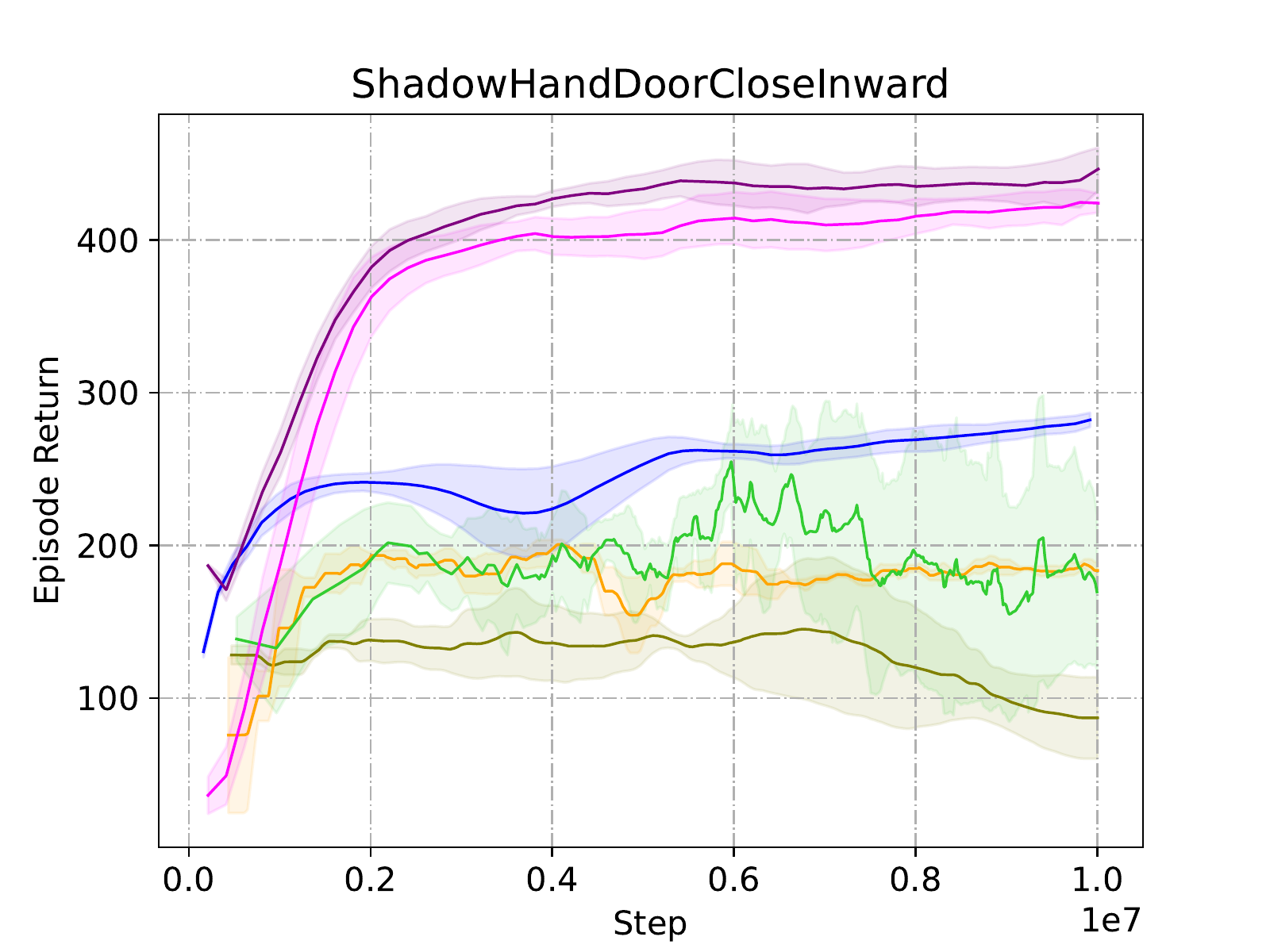}}
    \hfill
    \subfloat[Door Open Inward]{\includegraphics[width=0.32\textwidth]{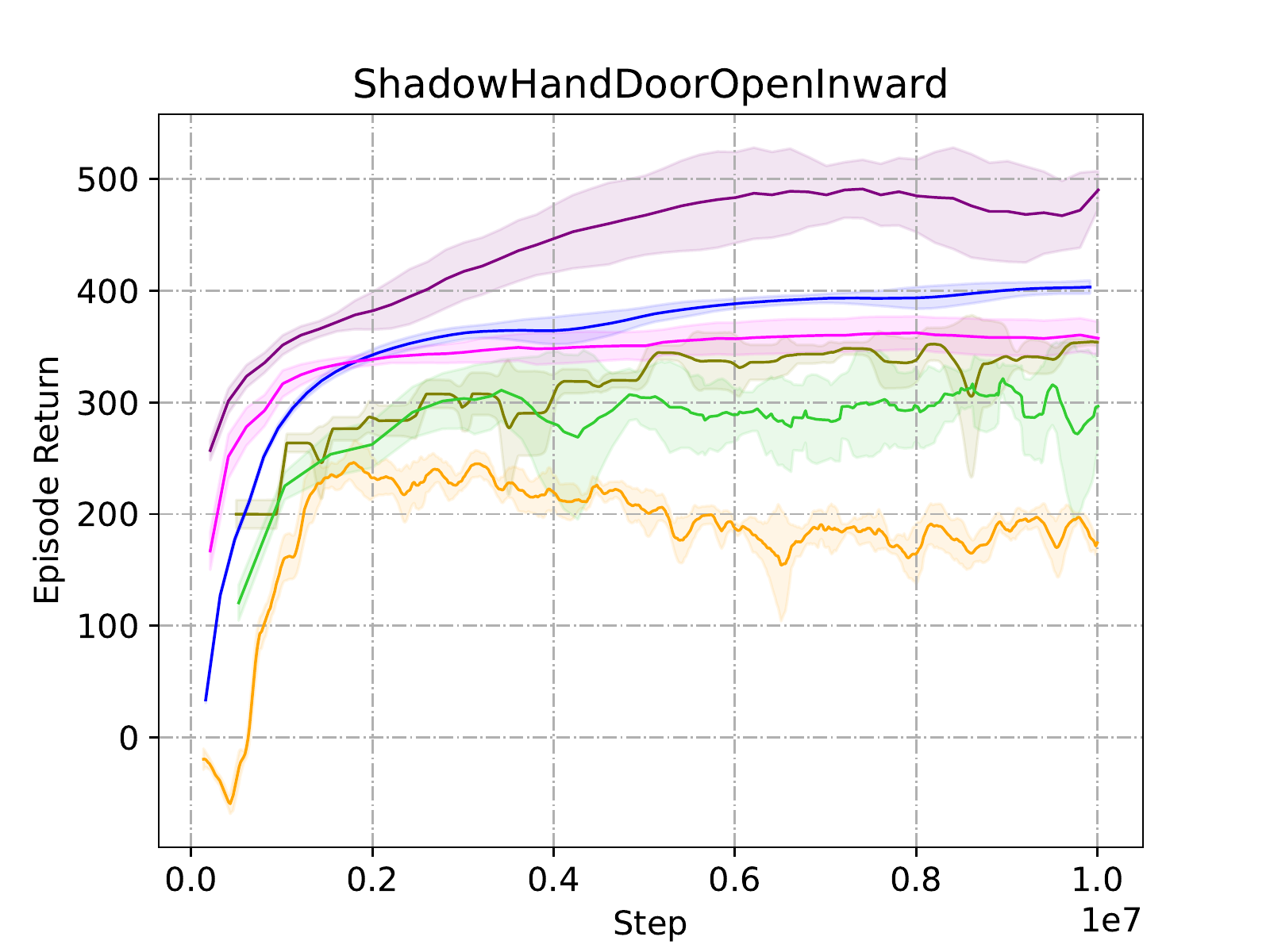}}
    \hfill
    \subfloat[Door Open Outward]{\includegraphics[width=0.32\textwidth]{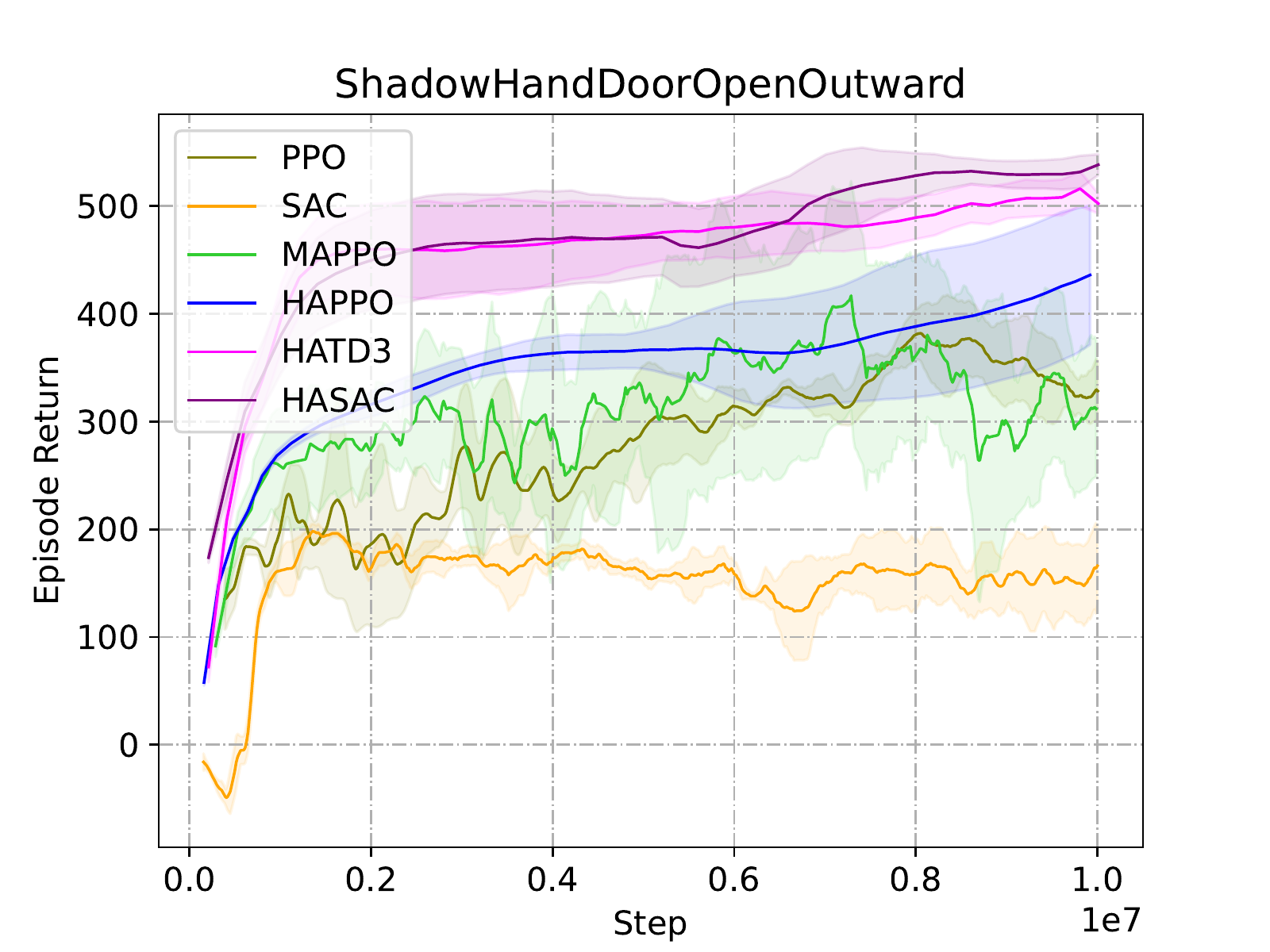}}
    \newline
    \caption{Comparisons of average episode return on nine Bi-DexHands tasks.}
    \label{fig:dexhands}
\end{figure*}

\subsubsection{Multi-Agent MuJoCo}
\label{apd:mujoco}

MuJoCo tasks challenge a robot to learn an optimal way of motion; Multi-Agent MuJoCo (MAMuJoCo) models each part of a robot as an independent agent, for example, a leg for a spider or an arm for a swimmer. With the increasing variety of body parts, improving each agent's exploration becomes necessary.  Although the easier tasks with fewer agents can be solved by a wide range of different algorithms, the more complex benchmarks, such as the HumanoidStandup 17x1 and ManyAgentSwimmer 10x2, are difficult to solve with current MARL algorithms. 

We compare our method to several algorithms that show the current state-of-the-art performance in 10 tasks of 5 scenarios in MAMuJoCo, including HAPPO, a sequential-update on-policy algorithm; MAPPO, a simultaneous-update on-policy algorithm; and HATD3 \citep{zhong2023heterogeneousagent}, an off-policy algorithm that outperforms HADDPG and MADDPG. Figure \ref{fig5} demonstrates that, in all scenarios, HASAC enjoys superior performance over the three rivals both in terms of reward values and learning speed.

\begin{figure*}[ht]
    \subfloat[2x4-Agent Ant]{\includegraphics[width=0.32\textwidth]{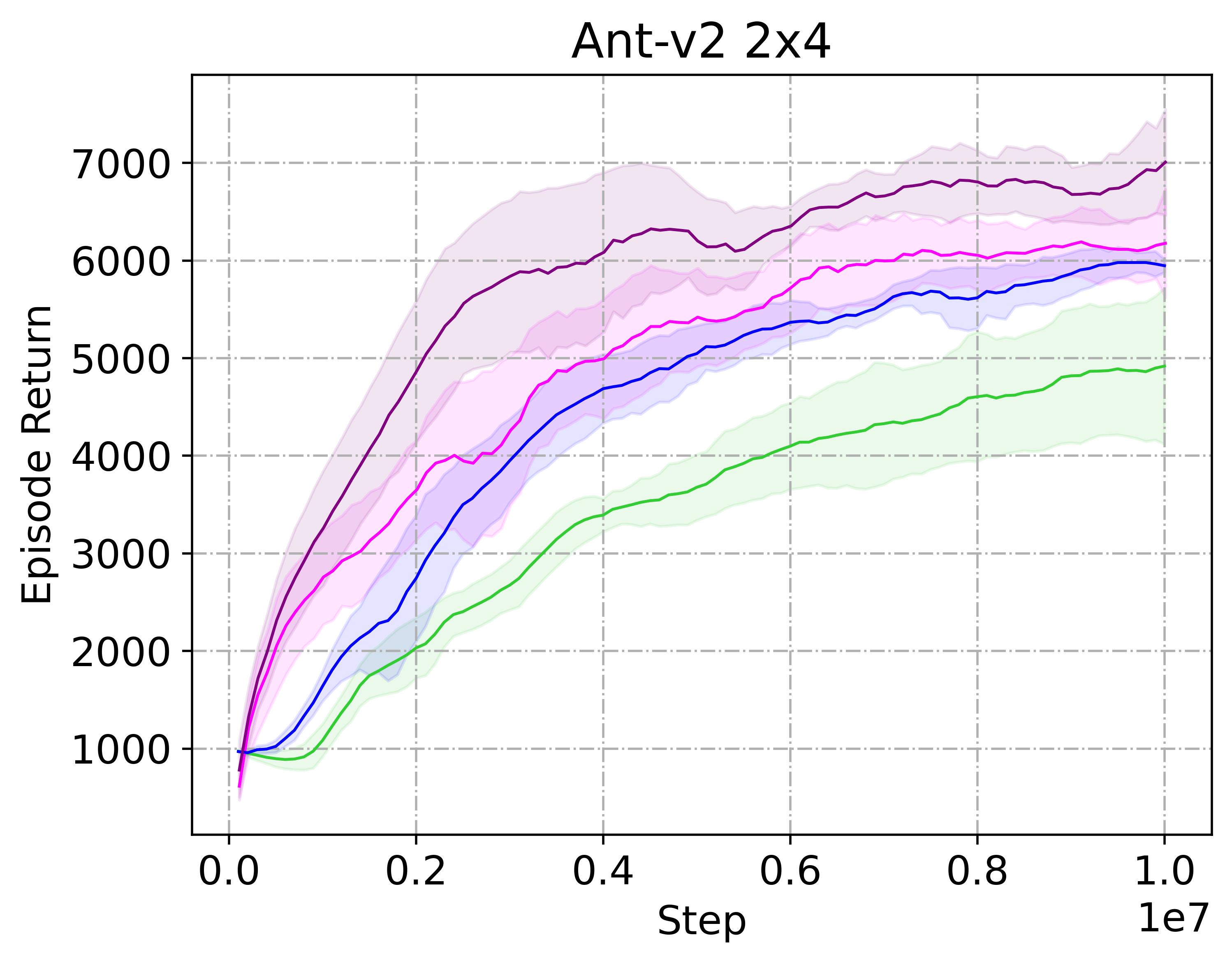}}
    \hfill
    \subfloat[4x2-Agent Ant]{\includegraphics[width=0.32\textwidth]{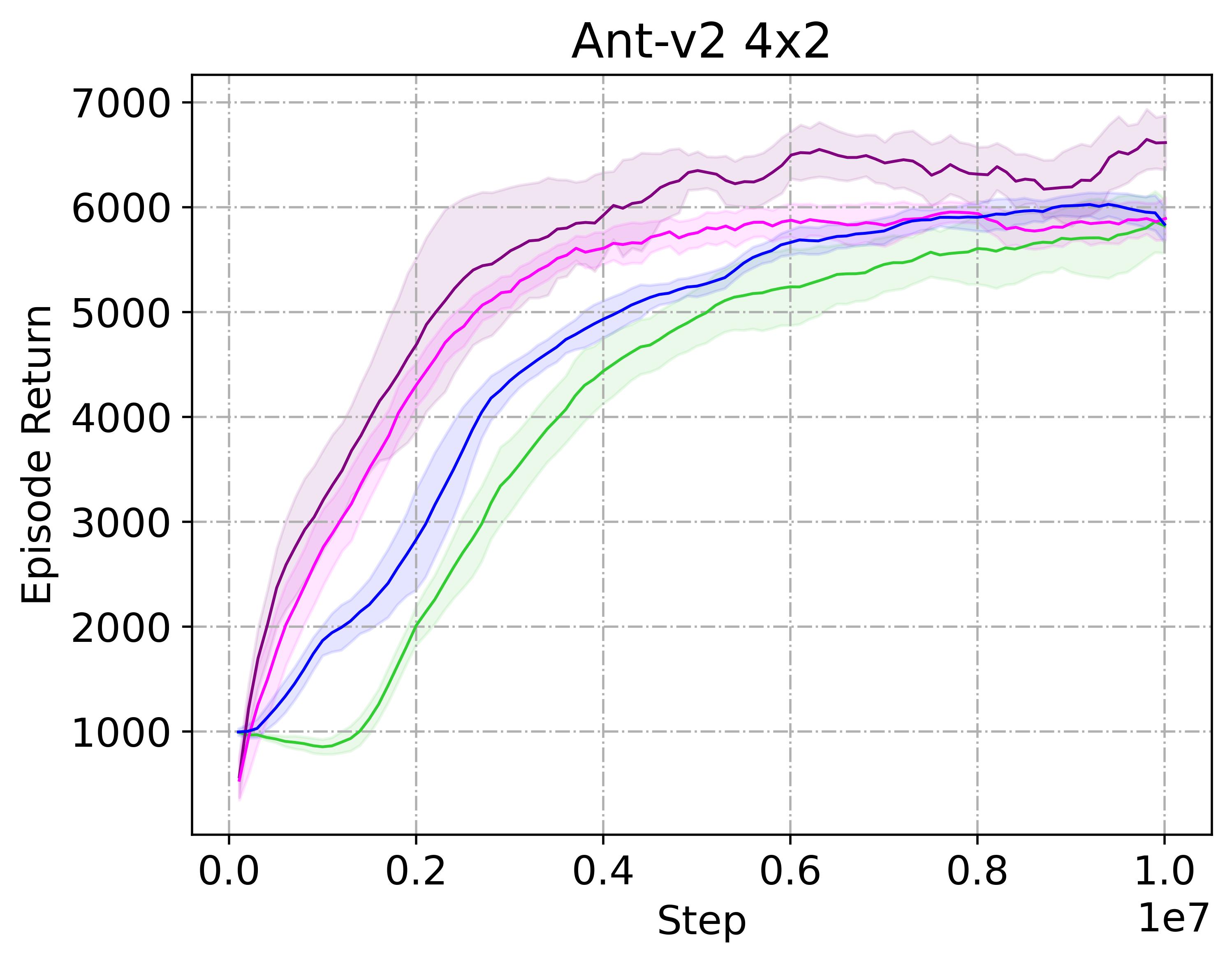}}
    \hfill
    \subfloat[8x1-Agent Ant]{\includegraphics[width=0.32\textwidth]{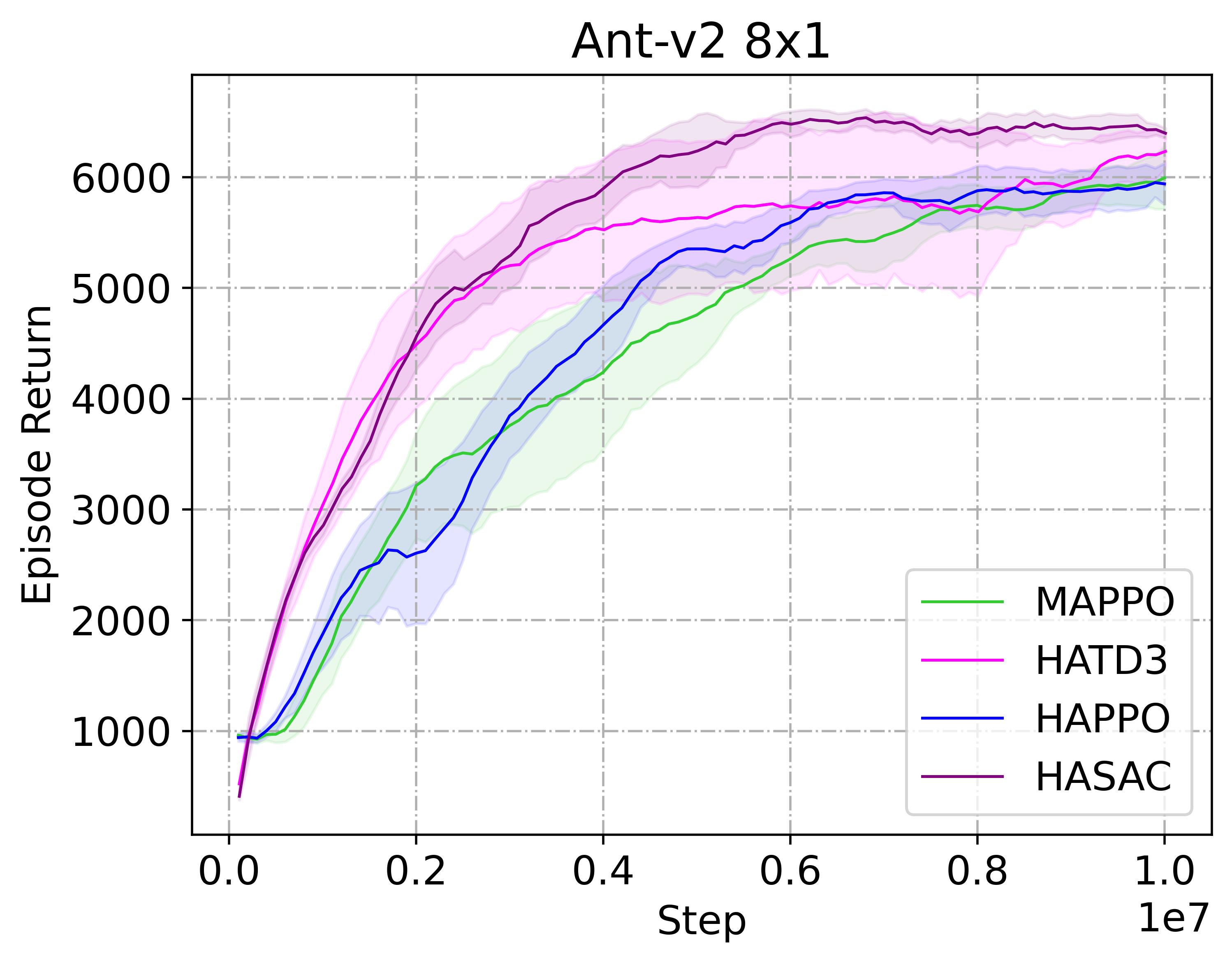}}
    \newline
    \subfloat[2x3-Agent HalfCheetah]{\includegraphics[width=0.32\textwidth]{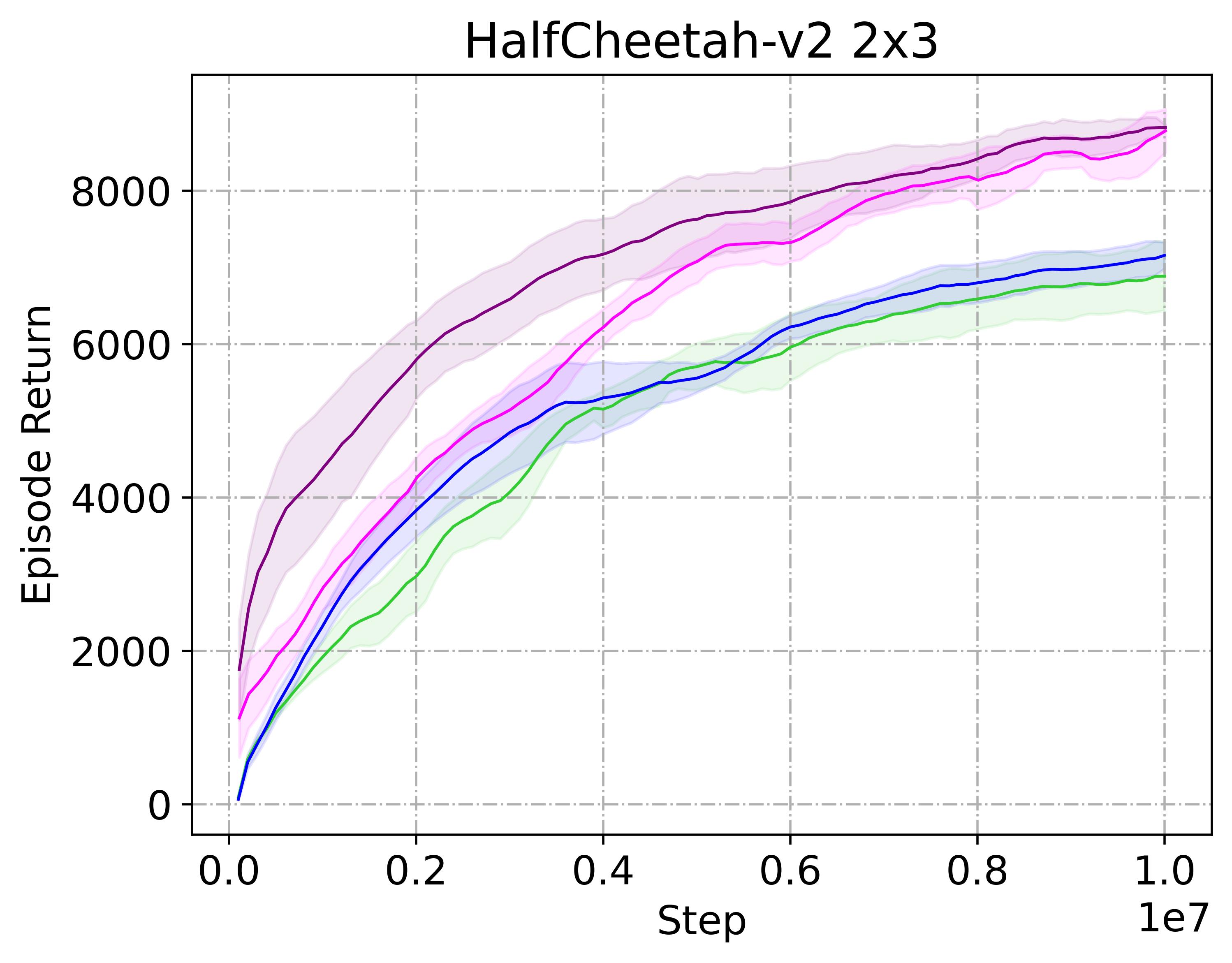}}
    \hfill
    \subfloat[3x2-Agent HalfCheetah]{\includegraphics[width=0.32\textwidth]{images/HalfCheetah-v2_3x2_learning_curve.jpg}}
    \hfill
    \subfloat[6x1-Agent HalfCheetah]{\includegraphics[width=0.32\textwidth]{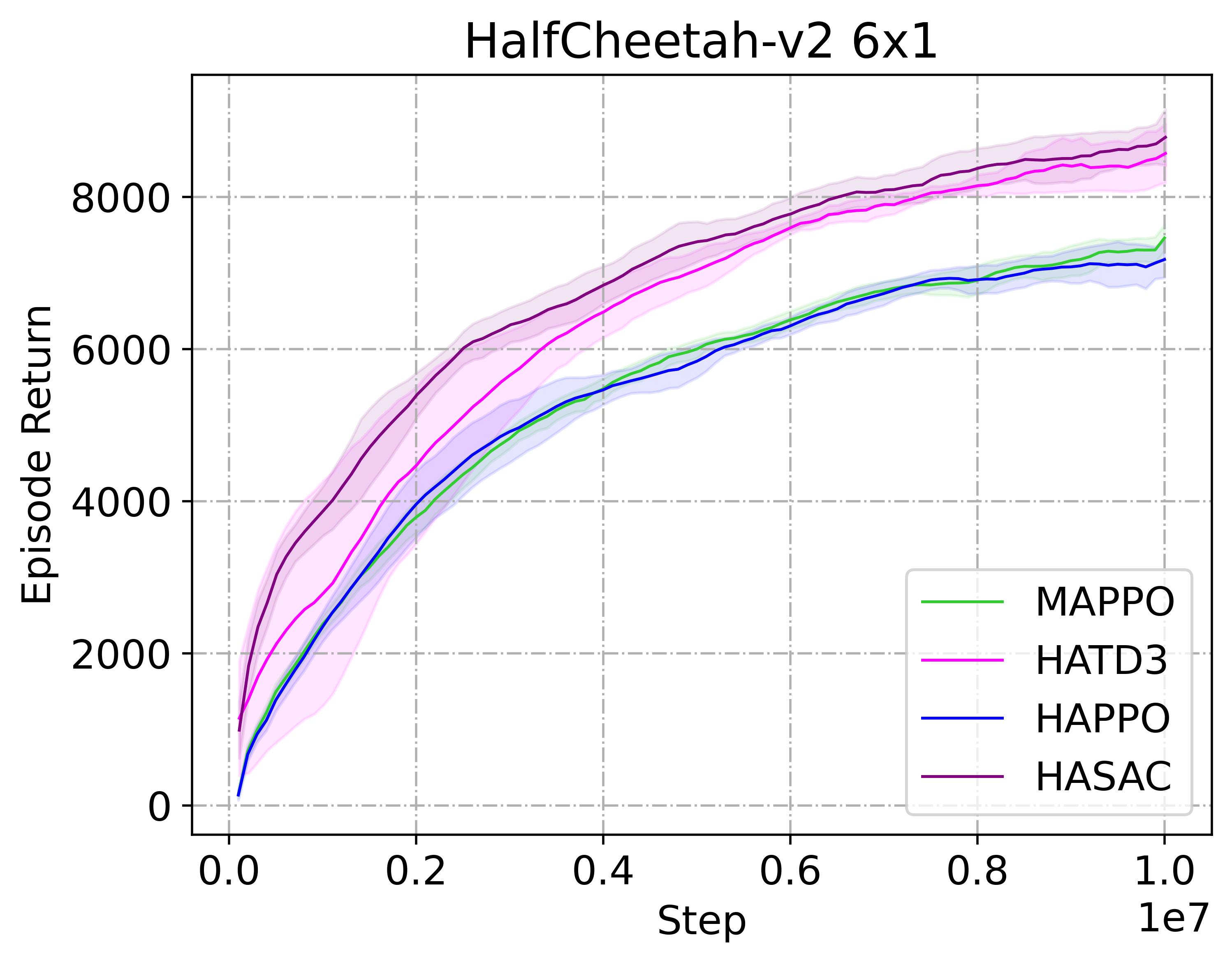}}
    \newline
    \subfloat[2x3-Agent Walker]{\includegraphics[width=0.32\textwidth]{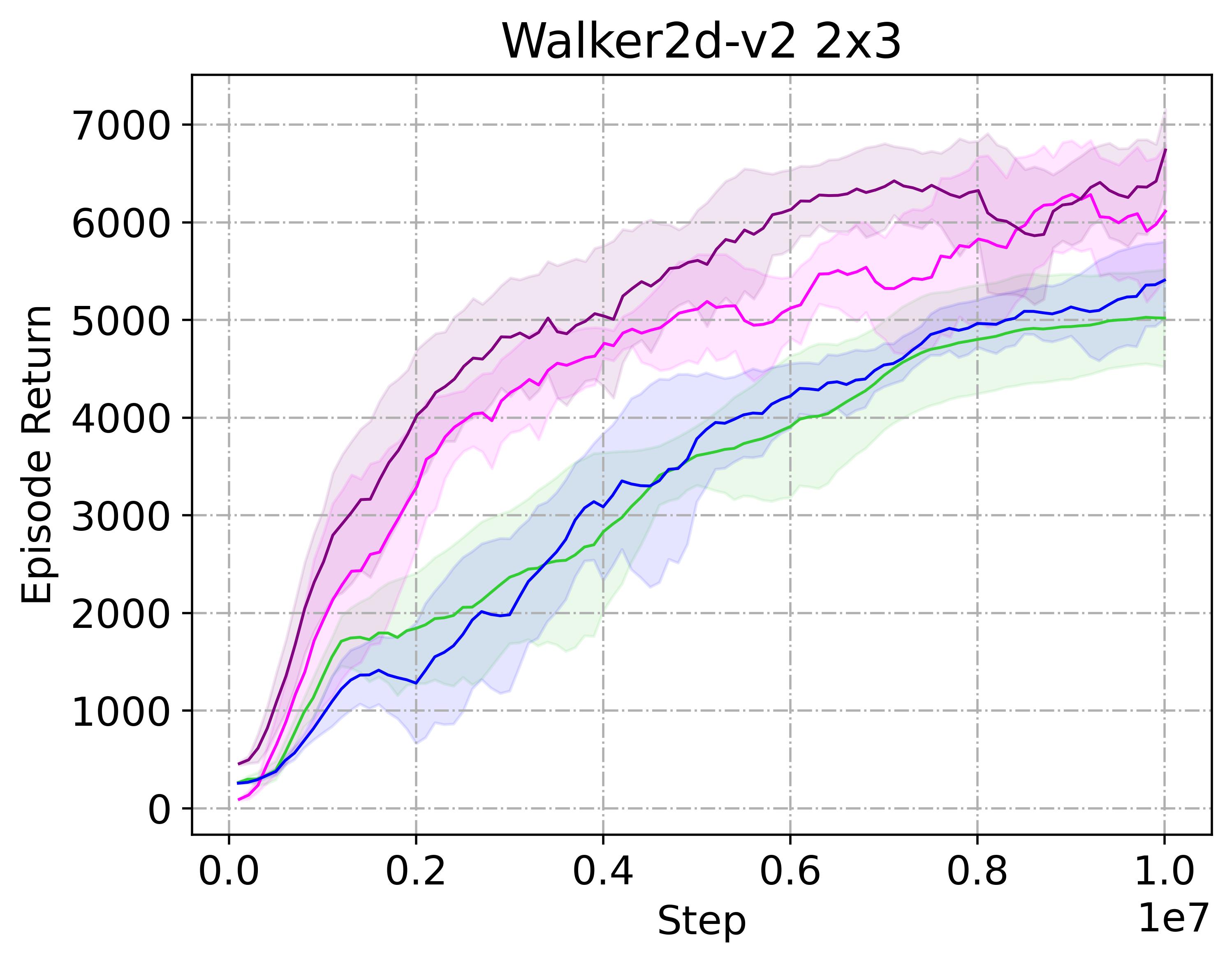}}
    \hfill
    \subfloat[6x1-Agent Walker]{\includegraphics[width=0.32\textwidth]{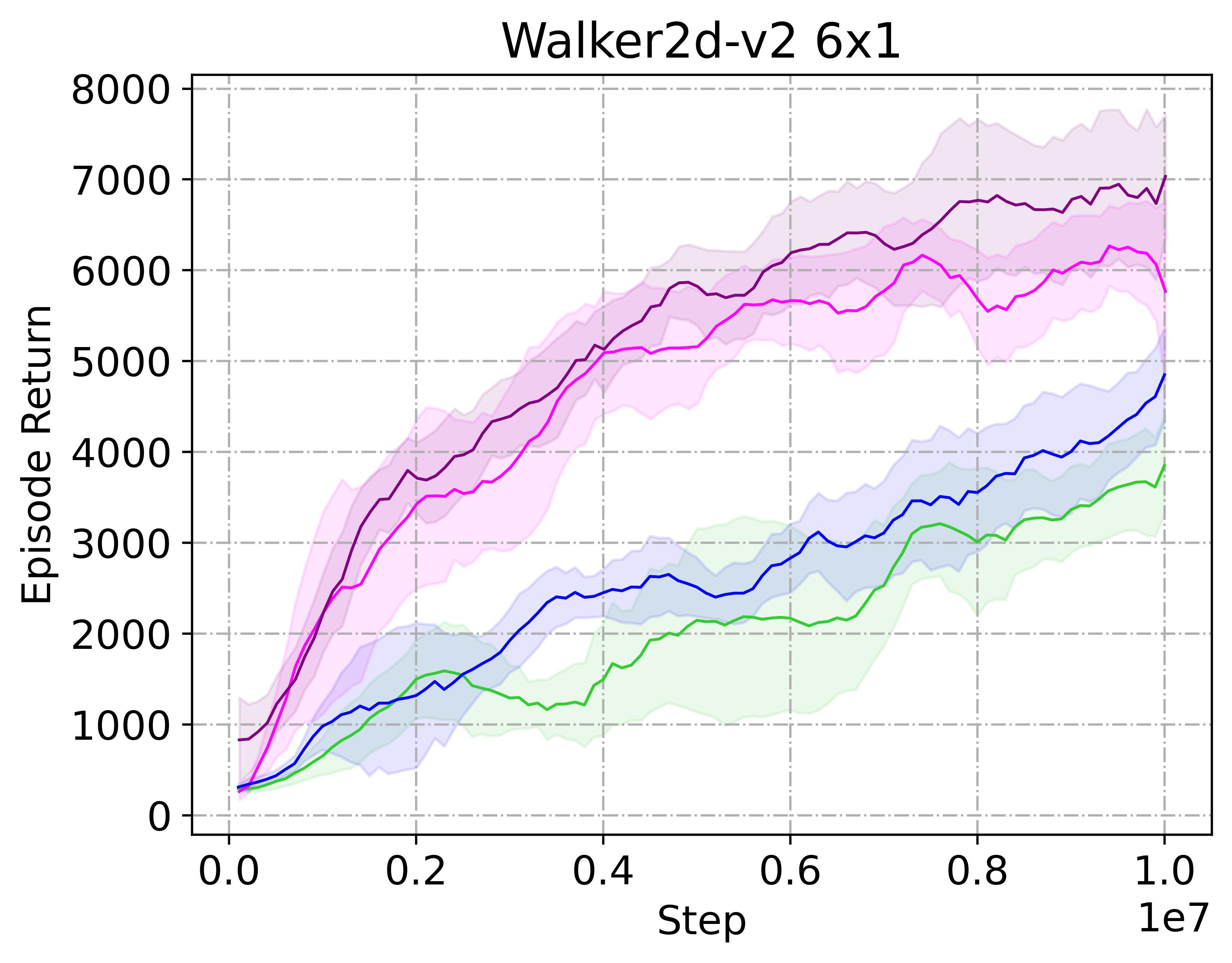}}
    \hfill
    \subfloat[10x2-Agent Swimmer]{\includegraphics[width=0.32\textwidth]{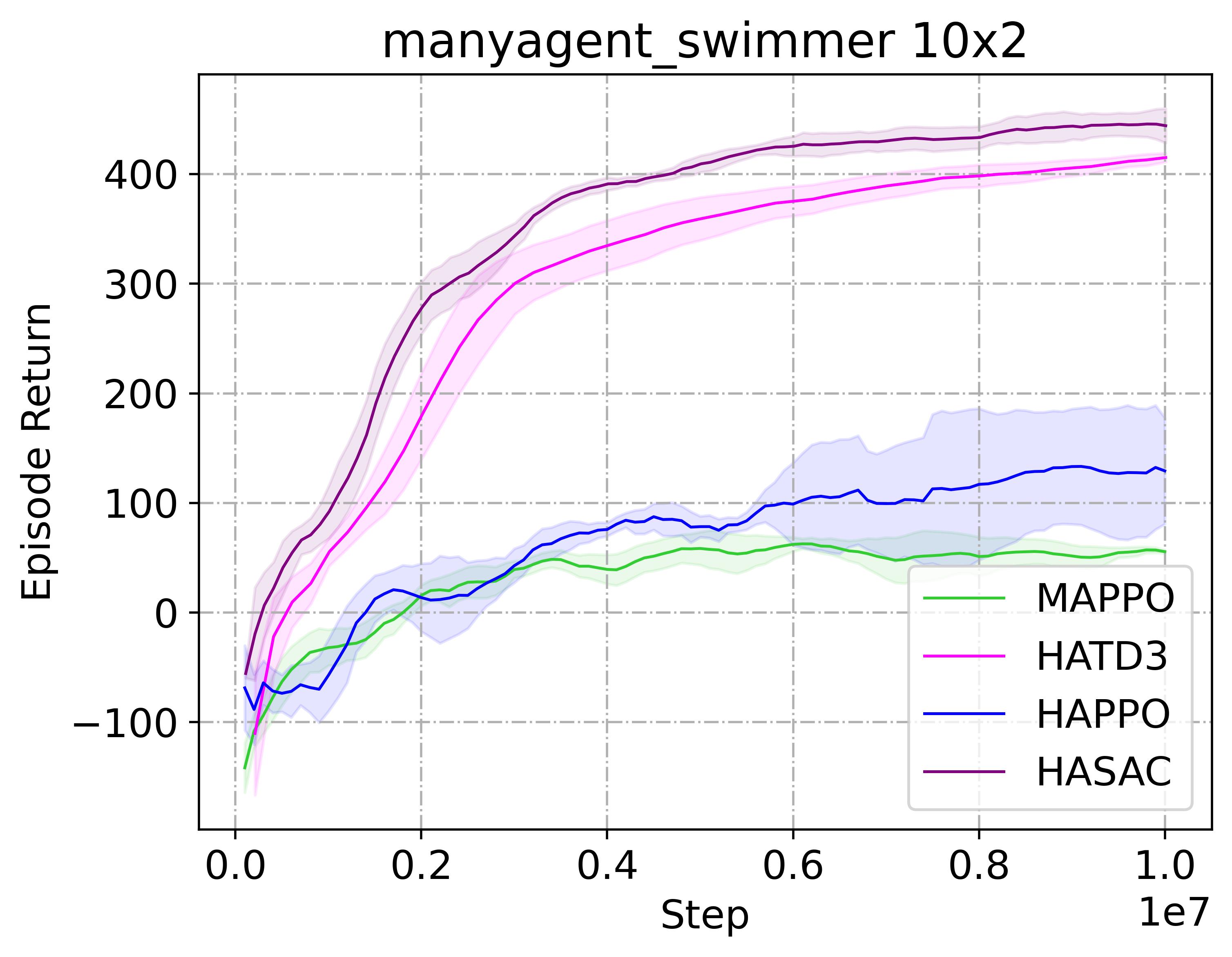}}
    \newline
    \subfloat[17x1-Agent HumanoidStandup]{\includegraphics[width=0.32\textwidth]{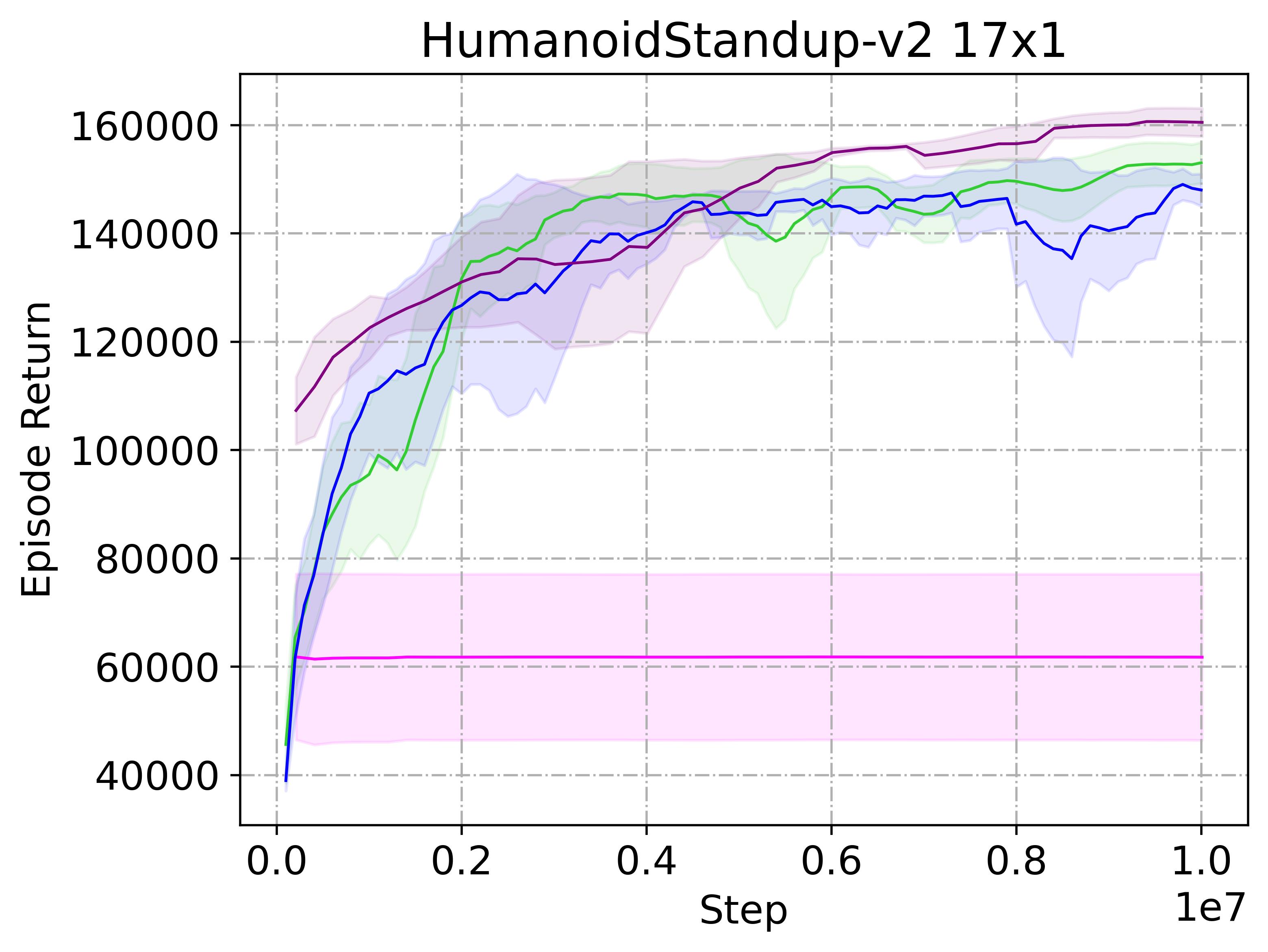}}
    \caption{Comparisons of average episode return on multiple Multi-Agent MuJoCo tasks.}
    \label{fig5}
\end{figure*}

\subsubsection{StarCraft Multi-Agent Challenge}
\label{apdE12}
StarCraft Multi-Agent Challenge (SMAC) \citep{samvelyan2019starcraft} contains a set of StarCraft maps in which a team of ally units aims to defeat the opponent team. Notably, MAPPO \citep{yu2022surprising}, HAPPO, and HATRPO have demonstrated remarkable performance on this benchmark through the utilization of five influential factors that significantly impact algorithm performance.

We evaluate our method on four hard maps and four super-hard. Our experimental results, illustrated in Figure \ref{fig6}, reveal that HASAC achieves over 90\% win rates in 7 out of 8 maps and significantly outperforms other strong baselines in most maps. Notably, in particularly challenging tasks such as 5m\_vs\_6m, 3s5z\_vs\_3s5z, and 6h\_vs\_8z, we observe that HAPPO and HATRPO would converge towards suboptimal NE. In addition, FOP is unable to learn meaningful joint policy in super-hard tasks due to its reliance on the IGO assumption. By contrast, HASAC consistently achieves superior performance and shows the ability to identify higher reward equilibria due to its extensive exploration. We also observe that HASAC has better stability and higher learning speed across most maps.

\begin{figure*}[!ht]
    \subfloat[8m-vs-9m (hard)]{\includegraphics[width=0.32\textwidth]{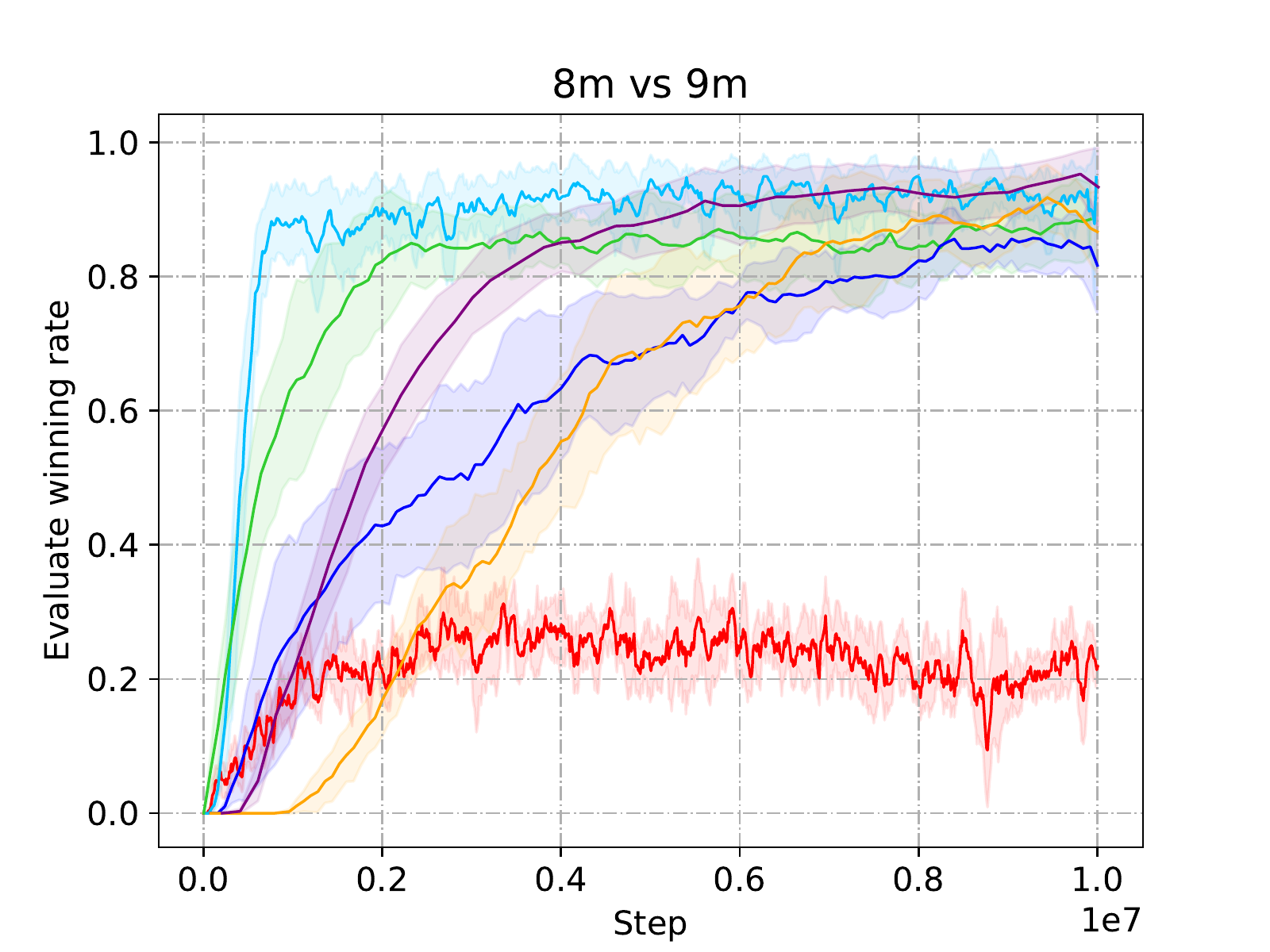}}
    \hfill
    \subfloat[5m-vs-6m (hard)]{\includegraphics[width=0.32\textwidth]{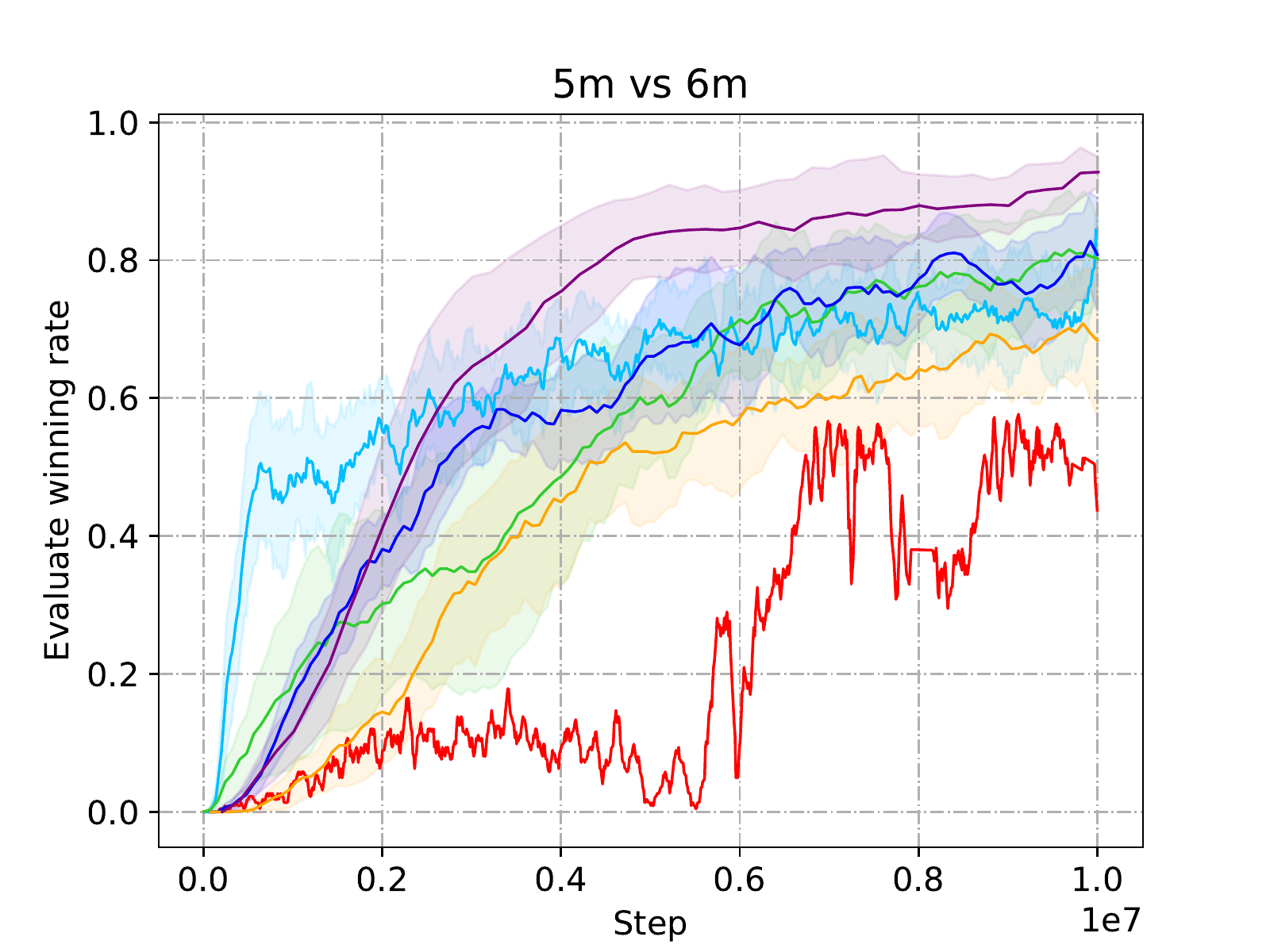}}
    \hfill
    \subfloat[3s5z (hard)]{\includegraphics[width=0.32\textwidth]{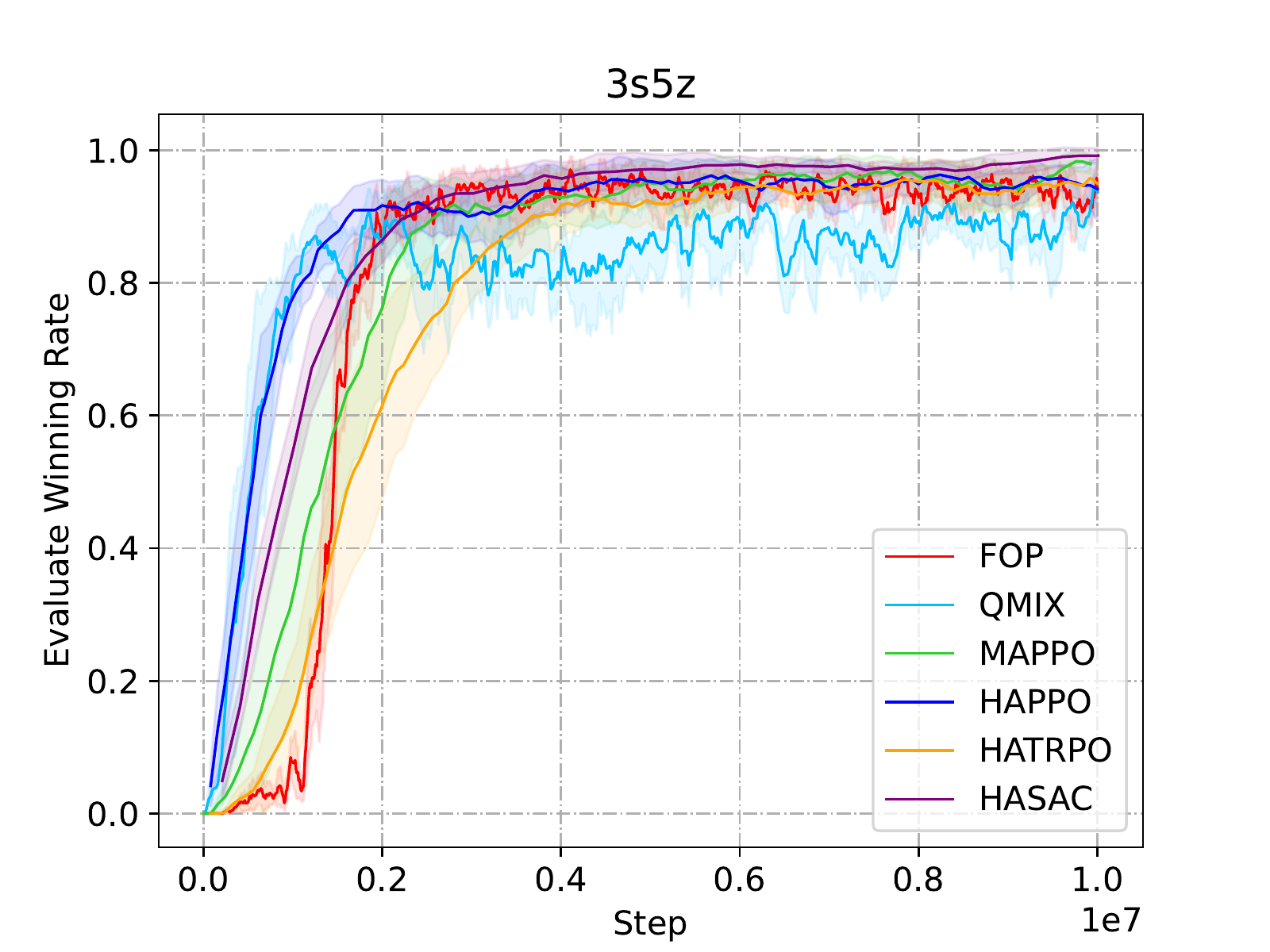}}
    \newline
    \subfloat[10m-vs-11m (hard)]{\includegraphics[width=0.32\textwidth]{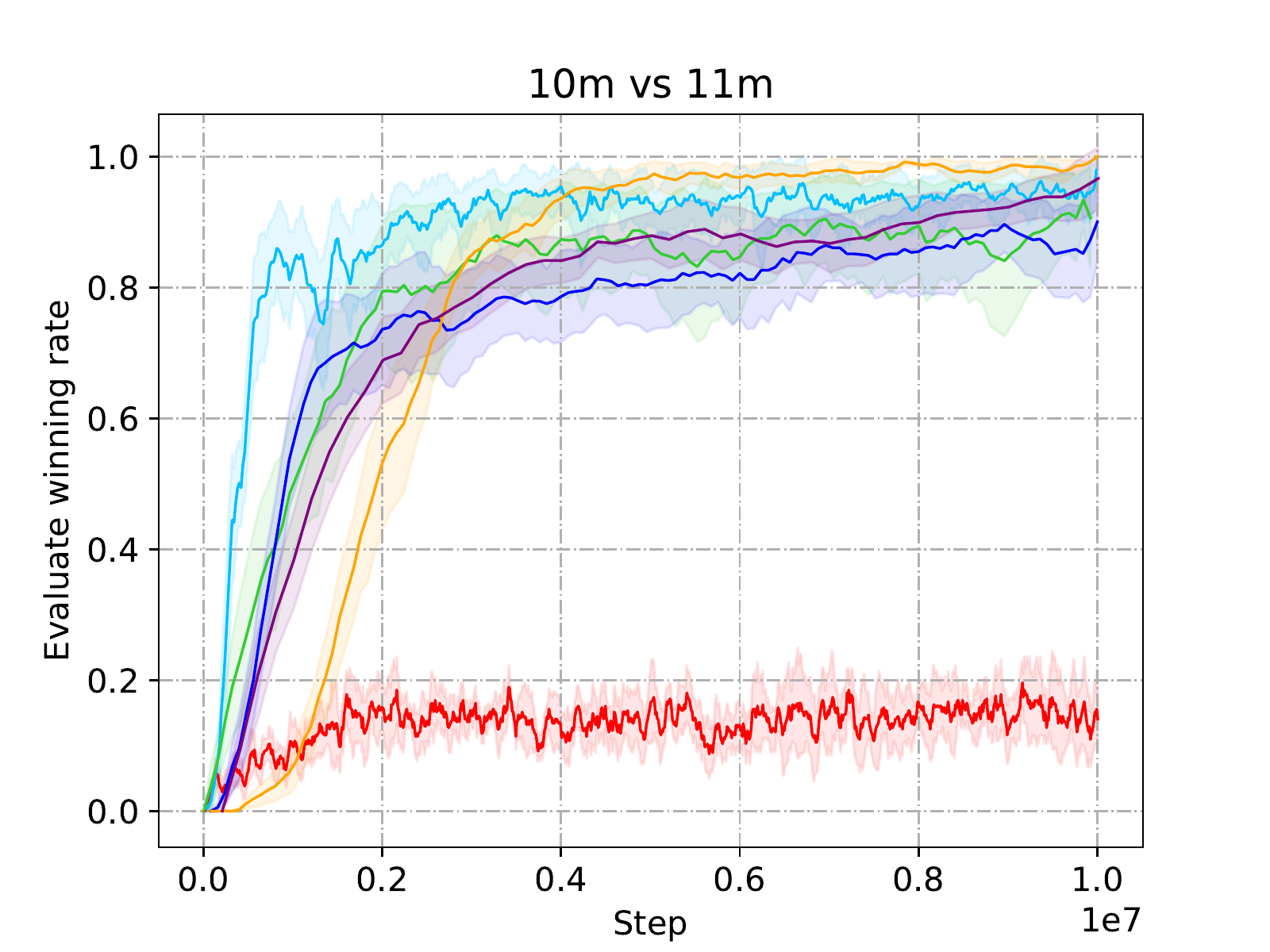}}
    \hfill
    \subfloat[MMM2 (super hard)]{\includegraphics[width=0.32\textwidth]{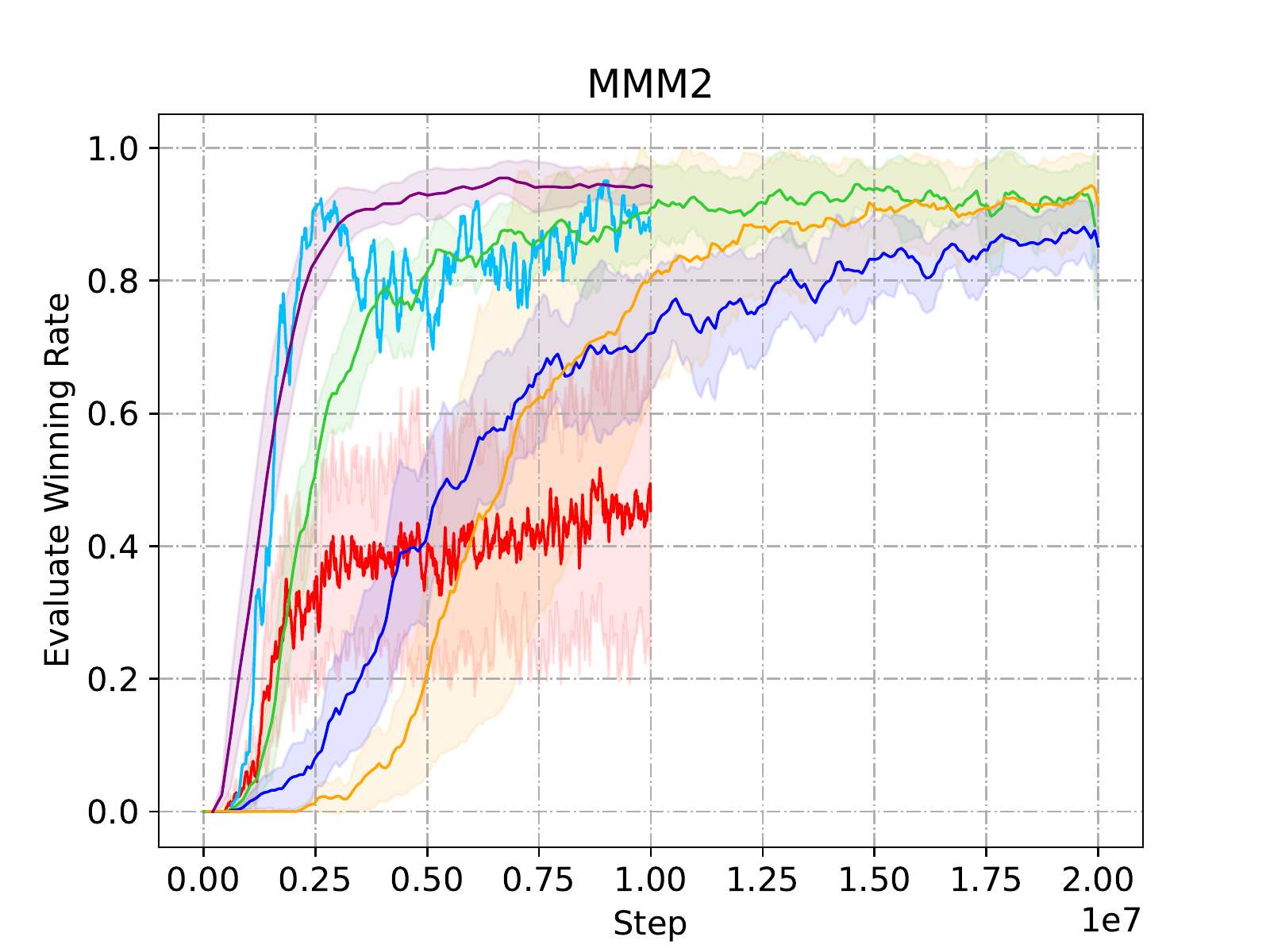}}
    \hfill
    \subfloat[3s5z-vs-3s6z (super hard)]{\includegraphics[width=0.32\textwidth]{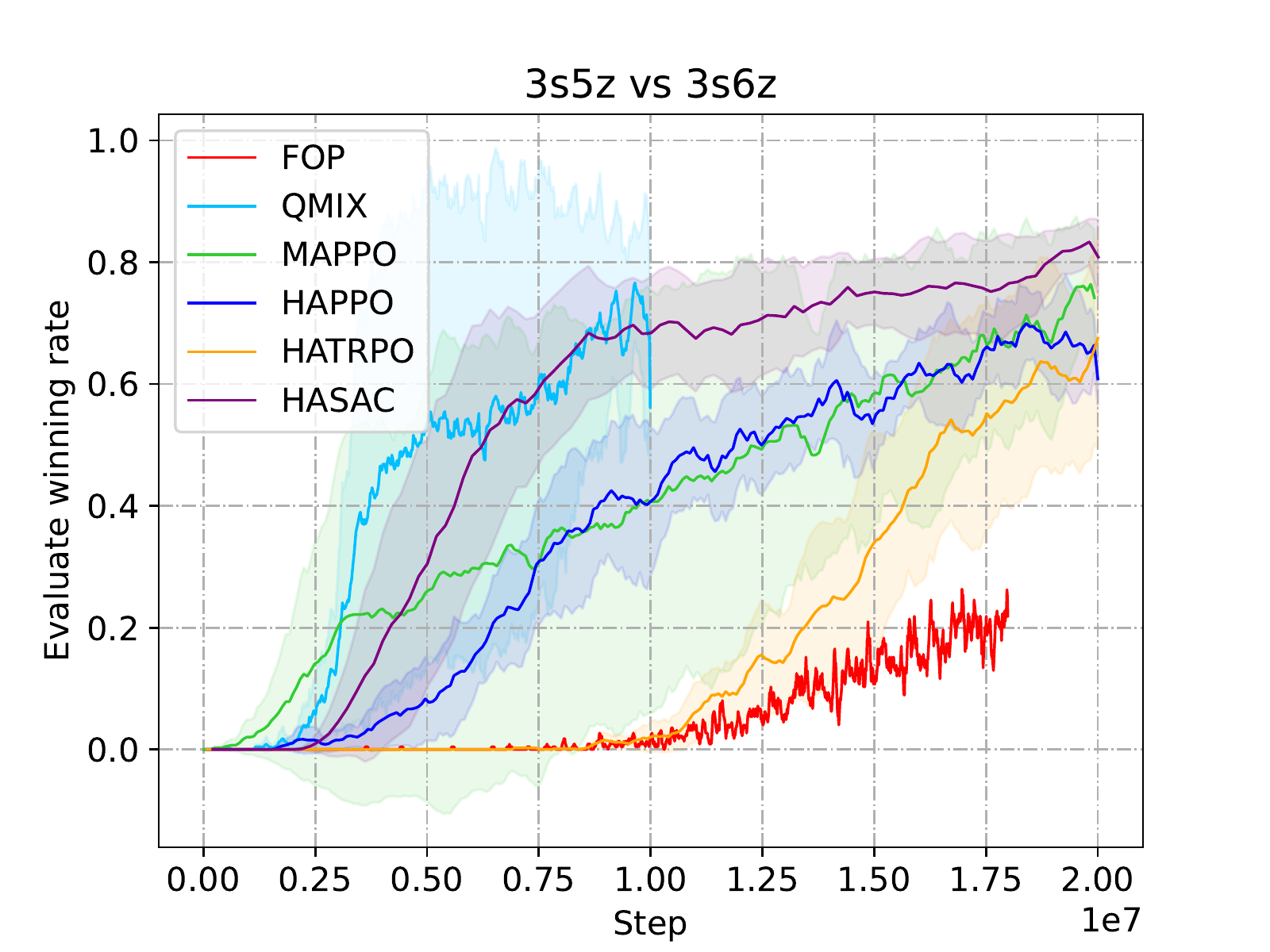}}
    \newline
    \subfloat[corridor (super hard)]{\includegraphics[width=0.32\textwidth]{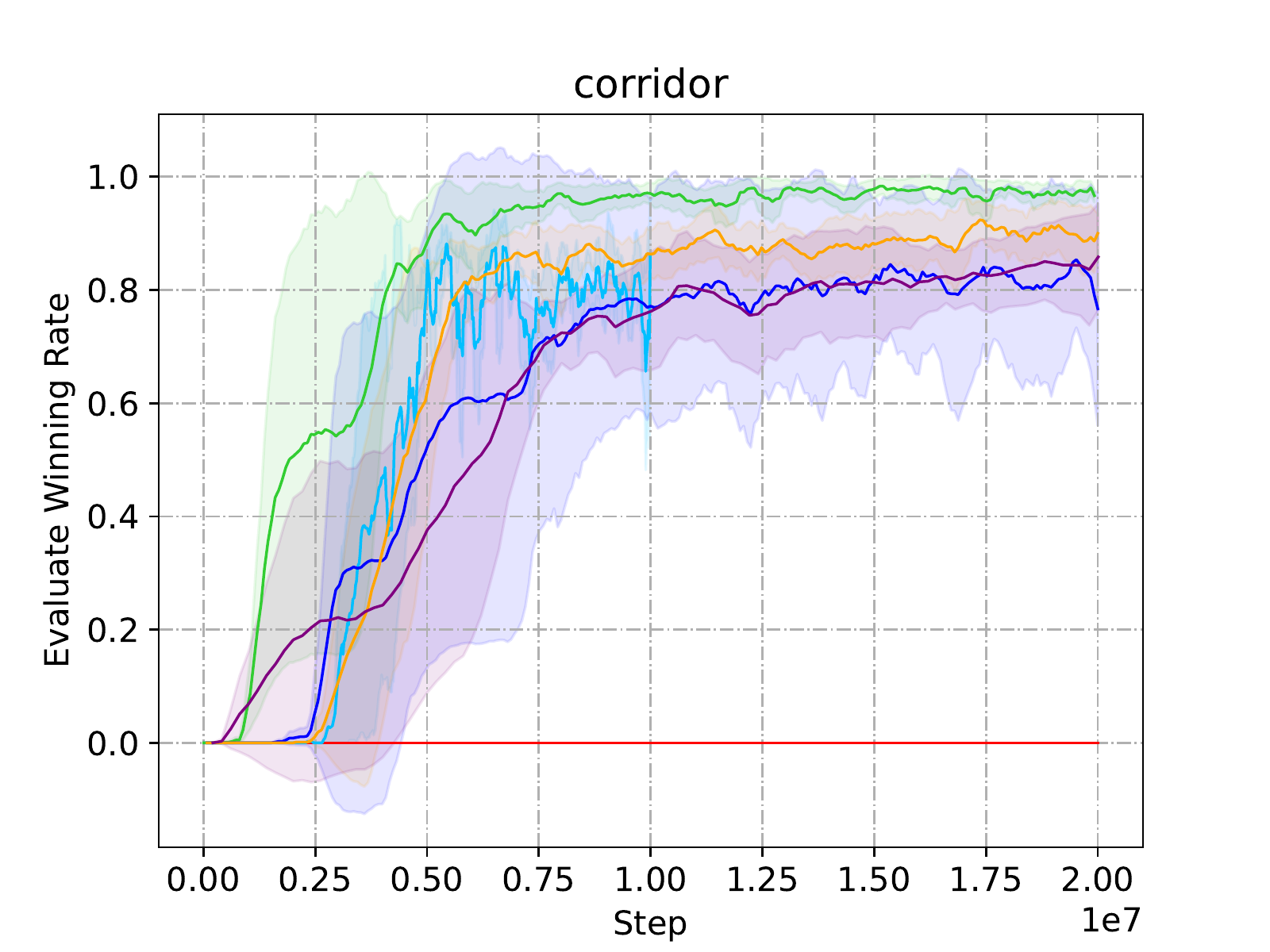}}
    \quad
    \subfloat[6h-vs-8z (super hard)]{\includegraphics[width=0.32\textwidth]{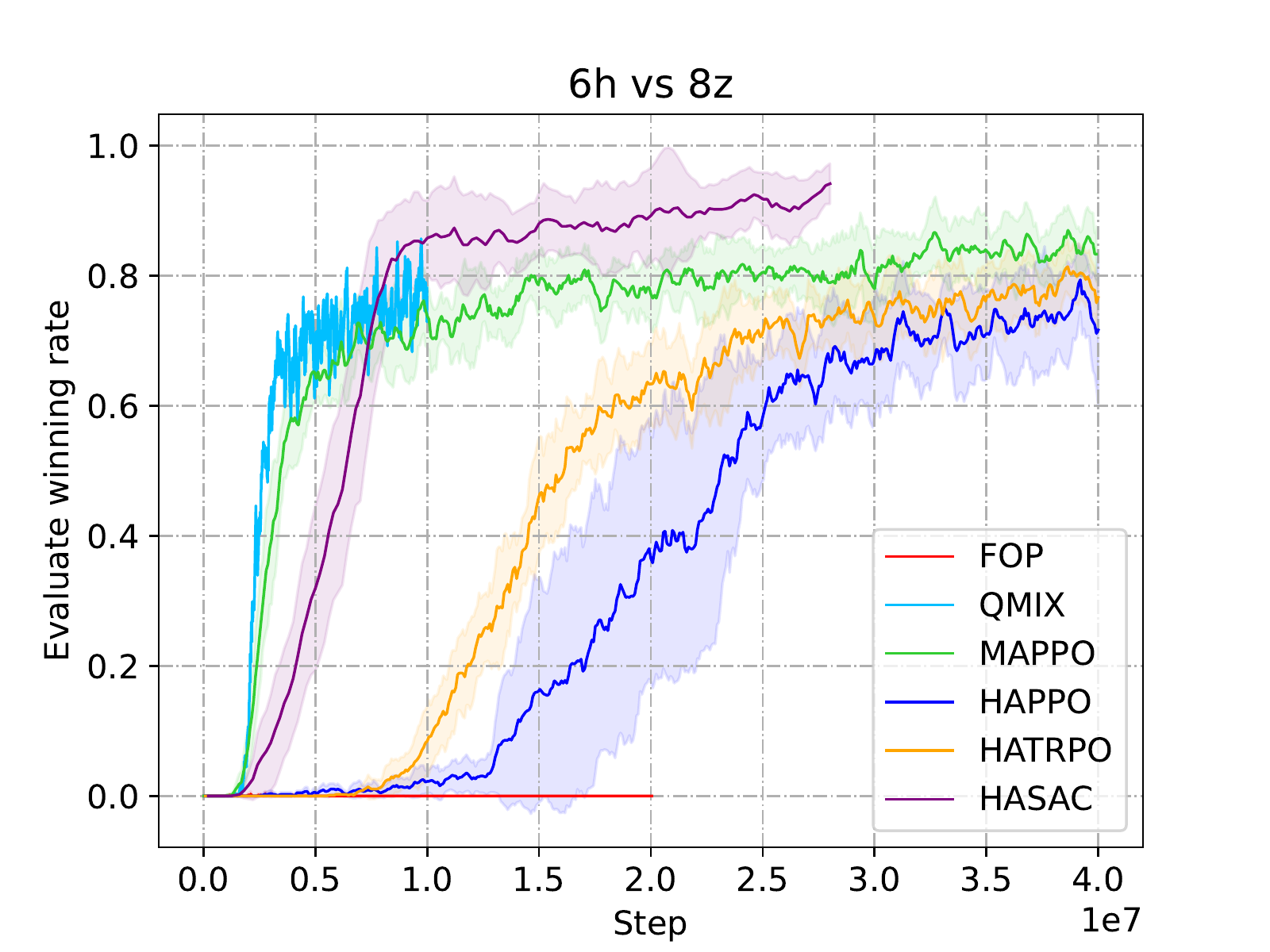}}
    \caption{Performance comparison on eight SMAC tasks. We generally observe that HASAC consistently outperforms other baselines in most tasks, exhibiting higher convergence speed and better stability.}
    \label{fig6}
\end{figure*}

\subsubsection{Google Research Football}
Google Research Football Environment (GRF) contains a set of cooperative multi-agent challenges in which a team of agents plays a team of bots in various football scenarios. Recent works \citep{yu2022surprising, zhong2023heterogeneousagent} have conducted experiments on academic scenarios and achieved nearly 100\% winning rate on each scenario except two very challenging tasks:  run pass and shoot with keeper (RPS) and corner. We apply HASAC to these two academy tasks of GRF, with QMIX and several SOTA methods, including HAPPO and MAPPO as baselines. Since GRF lacks a global state interface, we propose a solution to address this limitation by implementing a global state based on agents’ observations following the Simple115StateWrapper of GRF. Concretely, the global state consists of common components in agents’ observations and the concatenation of agent-specific parts and is taken as input by the centralized critic for value prediction. Additionally, we employ the dense-reward setting. All methods are trained for 20 million environment steps in the RPS task and for 25 million environment steps in the corner task.

As shown in Figure \ref{rps} and \ref{corner}, we generally observe that both MAPPO and HAPPO tend to converge to a non-optimal NE on the two challenging tasks with a winning rate of approximately 80\%. This suboptimal convergence can be attributed to the insufficient level of exploration of these algorithms. In contrast, HASAC exhibits the ability to attain a higher reward equilibrium by learning stochastic policies, which effectively enhance exploration and robustness. This finding highlights the crucial role of stochastic policies in improving exploration, thereby enabling agents to converge toward a higher reward equilibrium.

\begin{figure*}[ht]
    \centering
    \subfloat[\label{rps}RPS (2 agents)]{\includegraphics[width=0.32\textwidth]{images/academy_run_pass_and_shoot_with_keeper_learning_curve.jpg}}
    \hfill
    \subfloat[\label{corner}Corner (11 agents)]{\includegraphics[width=0.32\textwidth]{images/academy_corner_learning_curve.jpg}}
    \hfill
    \subfloat[\label{nonweapon}Non-weapon (2 agents)]{\includegraphics[width=0.32\textwidth]{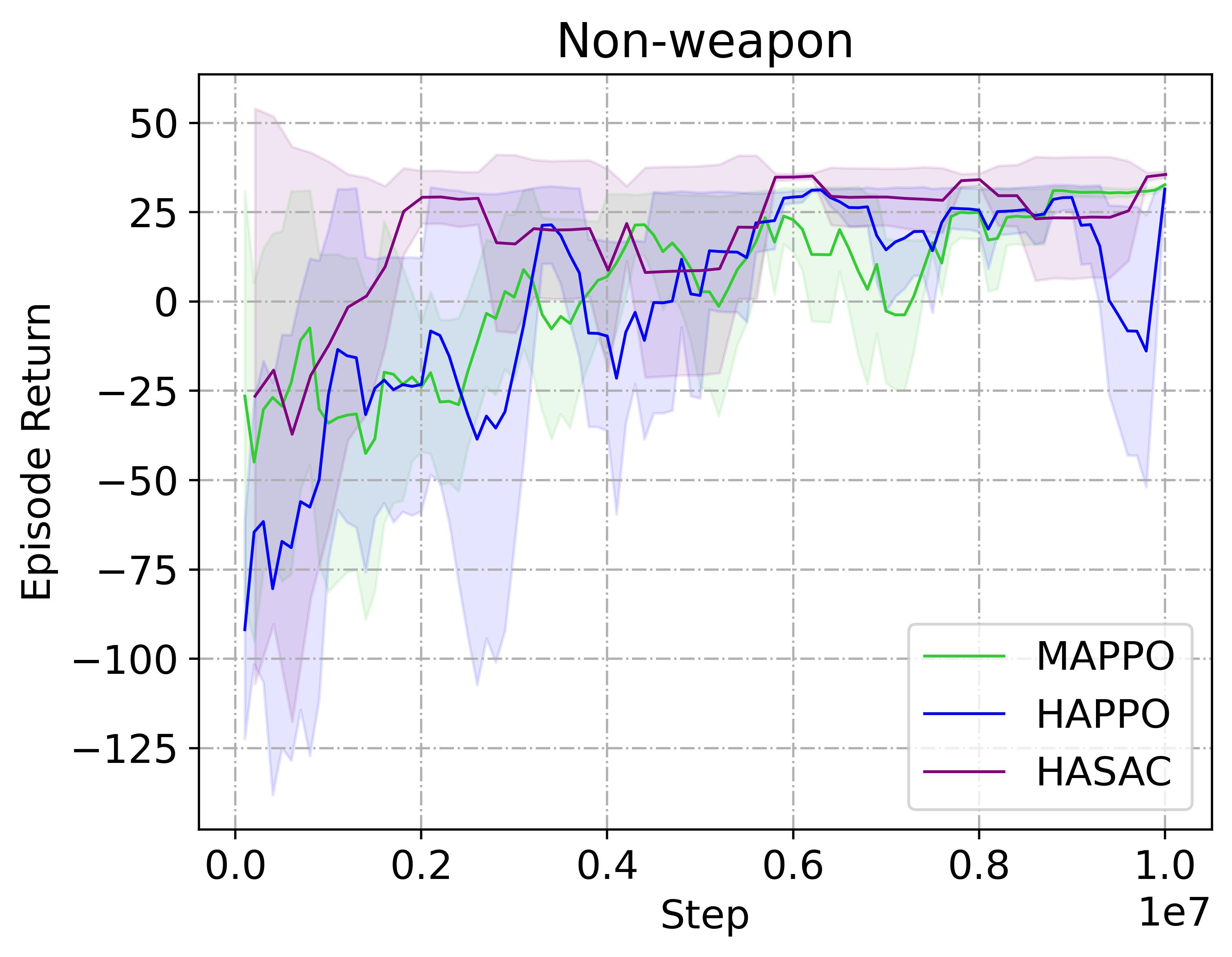}}
    \vspace{-1mm} 
    \caption{Performance comparison on two GRF tasks and one LAG task. HASAC achieves superior performance to the other methods.}
    \label{fig7}
\end{figure*}

\subsubsection{Multi-Agent Particle Environment}
\label{apd:mpe}
We evaluate HASAC on the Spread, Reference, and Speaker\_Listener tasks of the Multi-Agent Particle Environment (MPE) \citep{lowe2017multi}, which were implemented in PettingZoo \citep{terry2021pettingzoo}. PettingZoo incorporates MPE with some minor adjustments and allows for customizing the number of agents and landmarks, as well as the global and local rewards. To adapt these tasks to fully cooperative MARL settings, we sum up the individual rewards of agents to form a joint reward. The results presented in Figure \ref{fig:mpe} shows that HASAC consistently outperforms the baselines in terms of both average return and sample efficiency.

\begin{figure*}[ht]
    \subfloat[Reference (continuous)]{\includegraphics[width=0.32\textwidth]{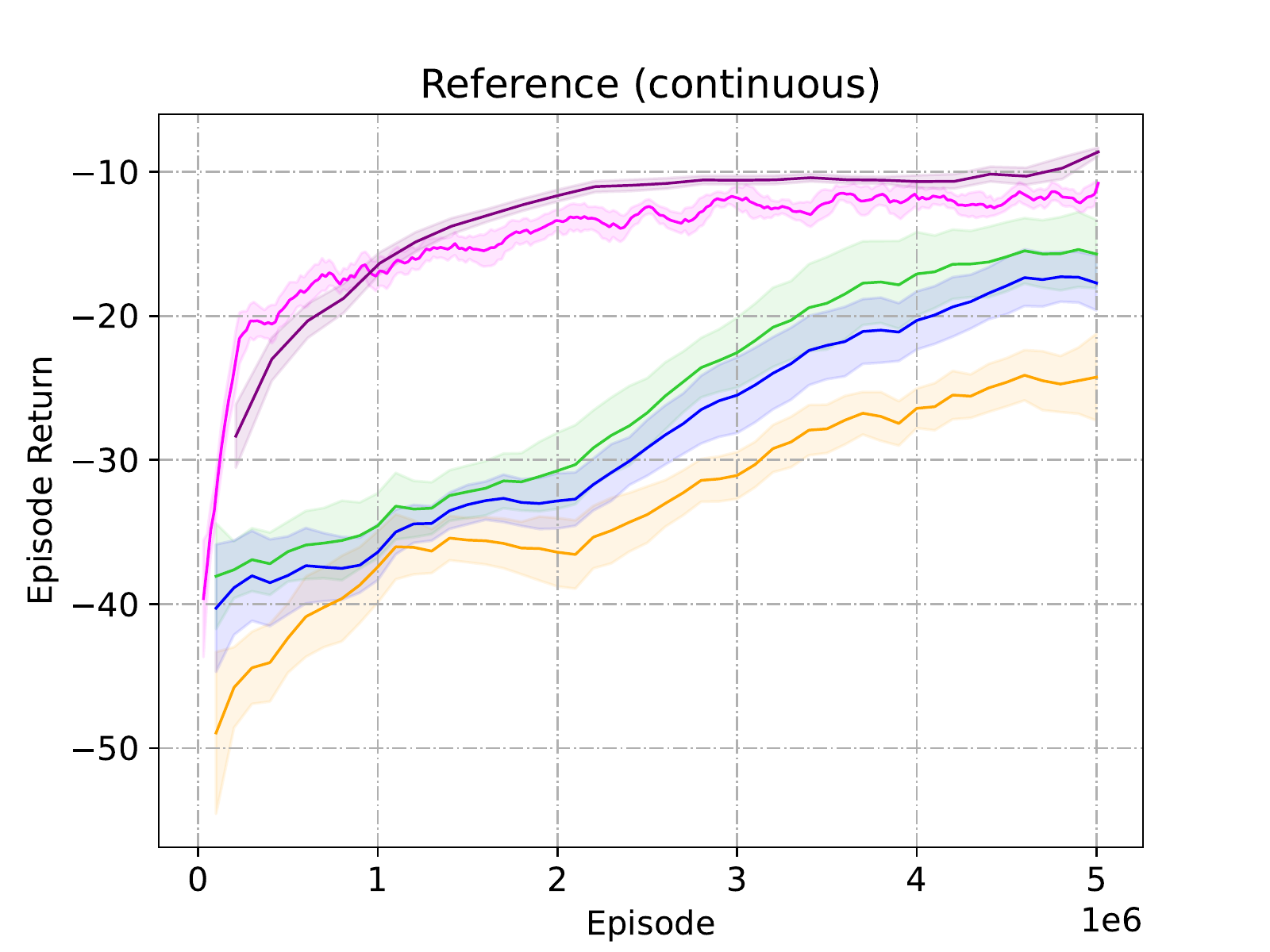}}
    \hfill
    \subfloat[Speaker Listener (continuous)]{\includegraphics[width=0.32\textwidth]{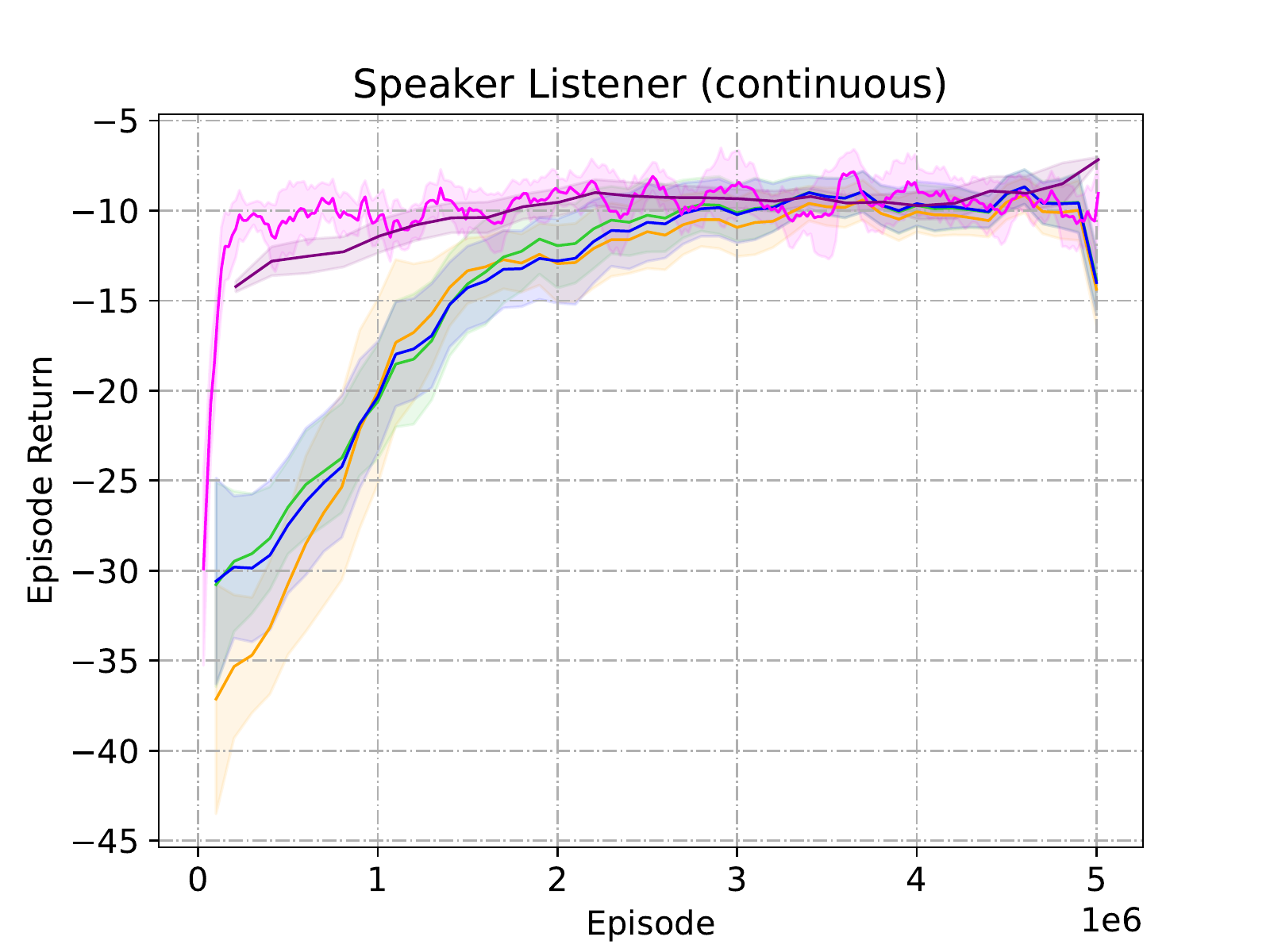}}
    \hfill
    \subfloat[Spread (continuous)]{\includegraphics[width=0.32\textwidth]{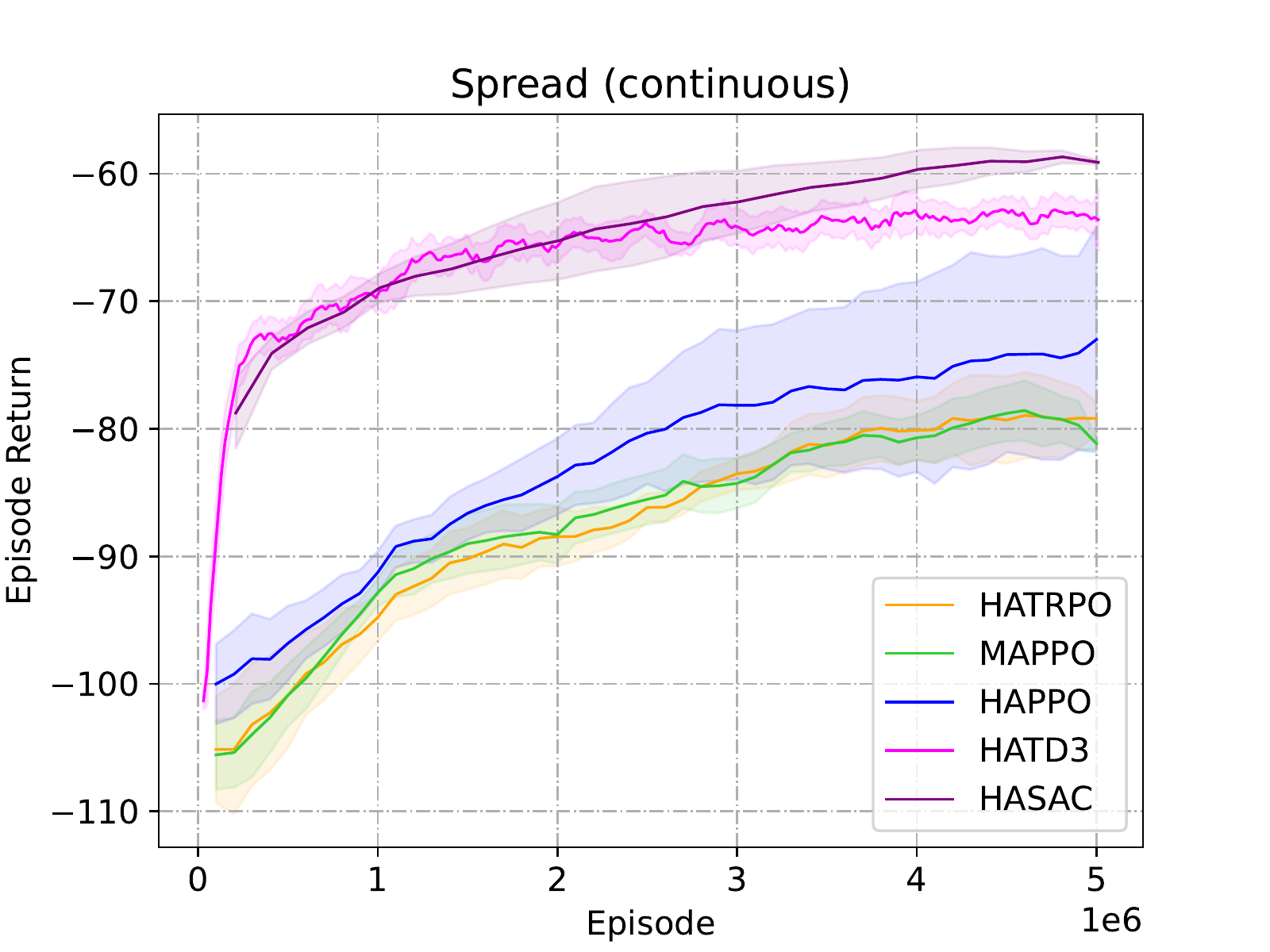}}
    \newline
    \subfloat[Reference (discrete)]{\includegraphics[width=0.32\textwidth]{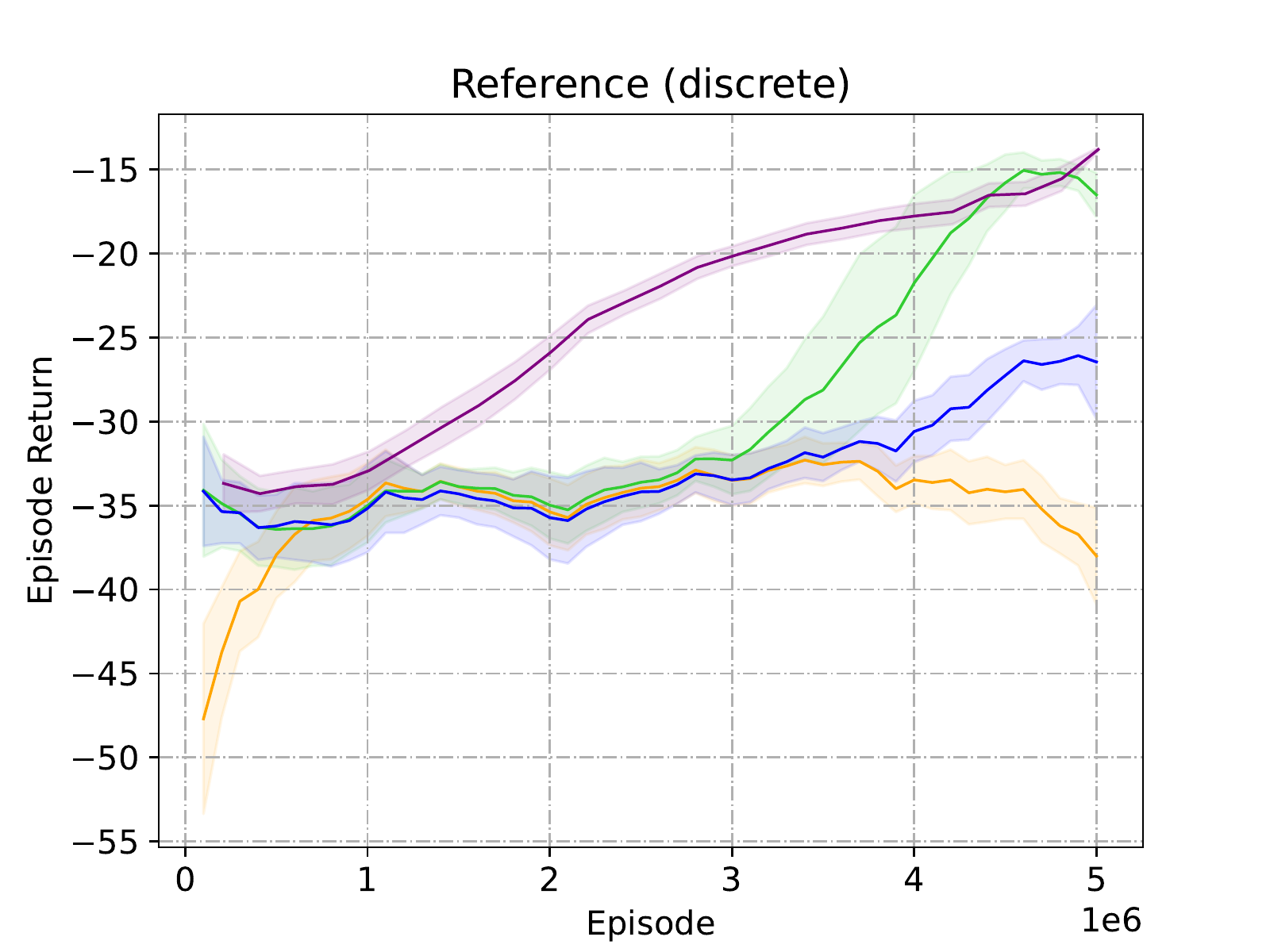}}
    \hfill
    \subfloat[Speaker Listener (discrete)]{\includegraphics[width=0.32\textwidth]{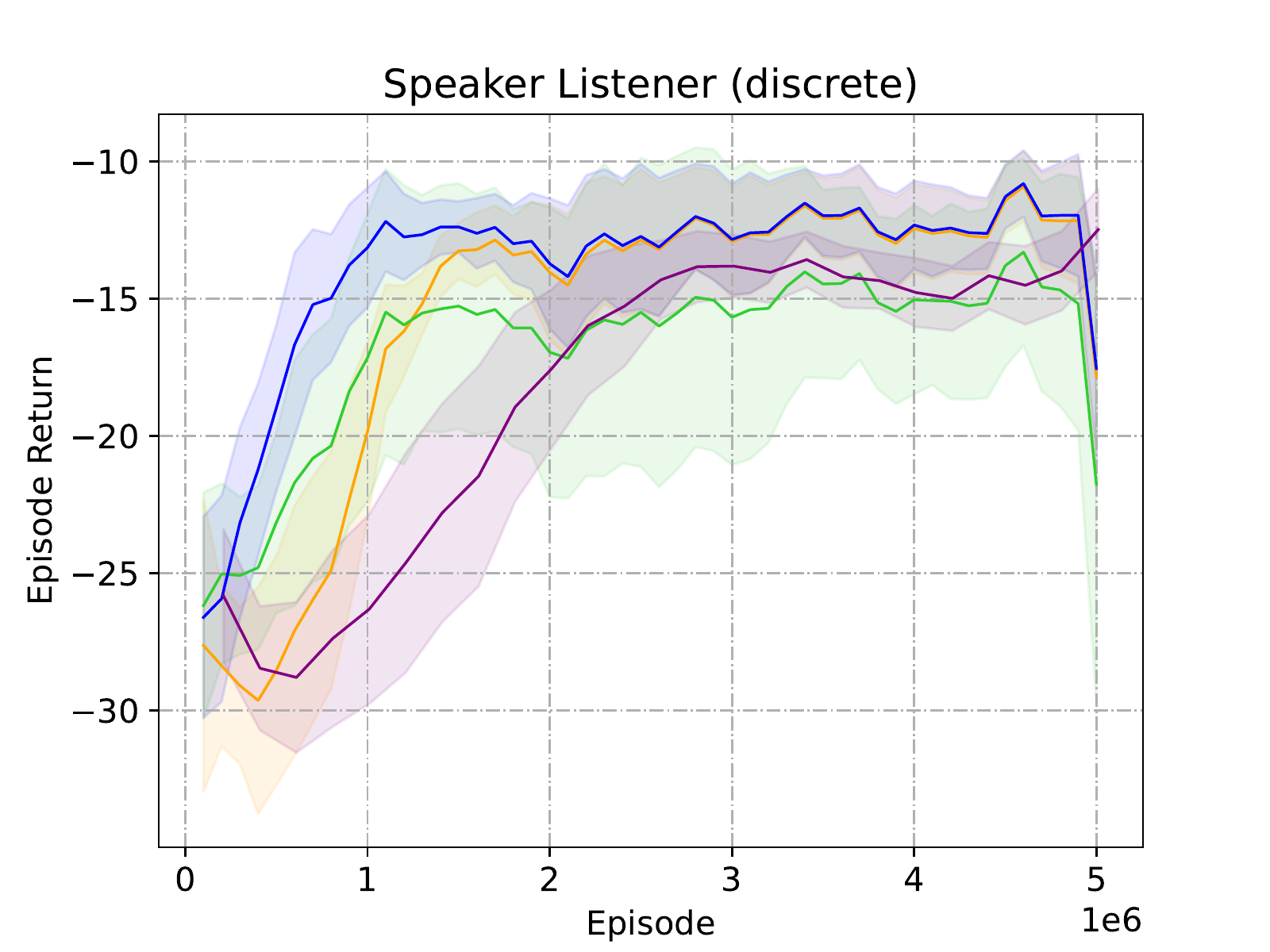}}
    \hfill
    \subfloat[Spread (discrete)]{\includegraphics[width=0.32\textwidth]{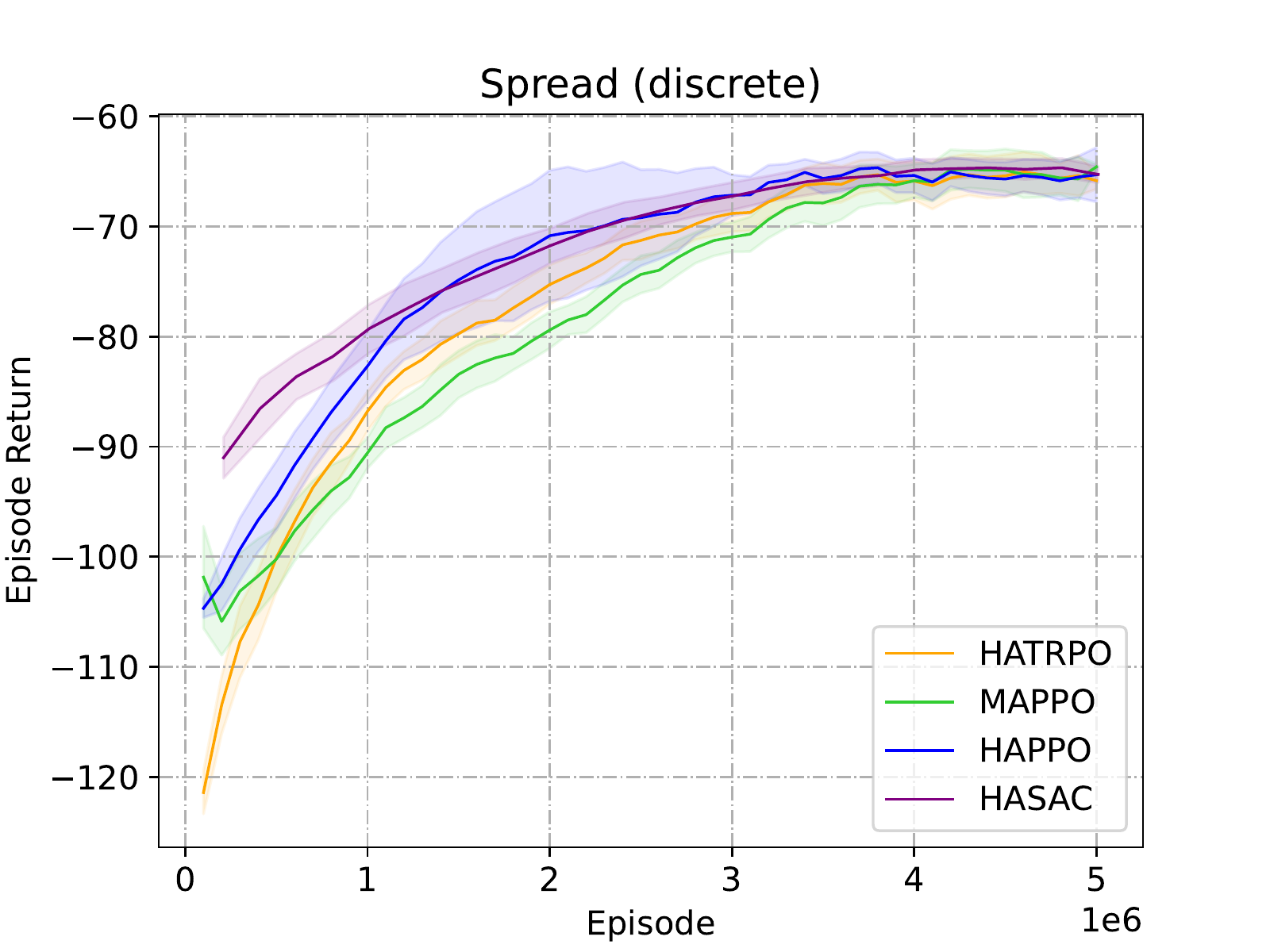}}
    \newline
    \caption{Comparisons of average episode return on six MPE tasks.}
    \label{fig:mpe}
\end{figure*}

\subsubsection{Light Aircraft Game}
\label{apd:lag}
In addition to the previous three well-established benchmarks, we extend our experiments to include a novel environment called Light Aircraft Game (LAG) \citep{liu2022light}. LAG is a recently developed cooperative-competitive environment for red and blue aircraft games, offering various settings such as single control, 1v1, and 2v2 scenarios. In the context of multi-agent scenarios, LAG currently supports self-play only for 2v2 settings. To address this limitation, we introduce a novel cooperative non-weapon task where two agents collaborate to combat two opponents controlled by the built-in AI. Specifically, the agents are trained to fly towards the tail of their opponents and maintain a suitable distance.

We compare our method to MAPPO and HAPPO on the cooperative non-weapon task involving 2 agents. Figure \ref{nonweapon} demonstrates that HASAC outperforms both MAPPO and HAPPO in terms of learning speed and stability. Specifically, HASAC exhibits faster convergence and achieves a higher level of stability throughout the learning process. In contrast, MAPPO and HAPPO exhibit considerable variability in their performance and display slower learning speeds.

\subsection{Hyper-parameter Settings for Experiments}
Before presenting the hyperparameters employed in our experiments, we would like to clarify the reporting conventions we adhere to. First, we introduce a boolean variable \textbf{auto\_alpha} to indicate whether the temperature is auto-tuned or not. Also, the hyperparameters will only take effect when they are used. For instance, the temperature parameter $\alpha$ can be assigned any numerical value, but it is taken into consideration only when the boolean value \textbf{auto\_alpha} is set to \emph{False}. Similarly, the \textbf{target\_entropy} and \textbf{alpha\_lr} are applicable when \textbf{auto\_alpha} is set to \emph{True}.

\subsubsection{Common Hyper-parameters Across All Environments}
We implement the HASAC based on the HARL framework \citep{zhong2023heterogeneousagent} and employ the existing implementations of other algorithms, including HATD3, HAPPO, HATRPO, and MAPPO, as described in the HARL literature. In Google Research Football, we use the results of QMIX in \citet{yu2022surprising}. In StarCraft Multi-Agent Challenge, we use the results of FOP in \citet{zhang2021fop}. To ensure comprehensive evaluation, we conduct training using a minimum of four different random seeds for each algorithm. Next, we offer the hyperparameters used for HASAC in Table \ref{tab1} across all environments, which are kept comparable with the HATD3 for fairness purposes.


\begin{table}[!htb]
\centering
\caption{Common hyperparameters used for off-policy algorithms HASAC and HATD3 across all environments.}
\label{tab1}
\begin{tabular}{cc|cc}
\hline
hyperparameters    & value & hyperparamerters & value                    \\ \hline
proper time limits & True  & warmup steps     & 1e4                      \\
activation         & ReLU  & final activation & Tanh \\
buffer size        & 1e6   & polyak           & 0.005                    \\ 
hidden sizes       & {[}256, 256{]}  & update per train &  1             \\ 
train interval     &  50   & target entropy        &  -dim($\mathcal{A}^i$)              \\ 
policy noise     &  0.2   & noise clip        &  0.5              \\ 
policy update frequency     &  2   & linear lr decay        &    False           \\ \hline
\end{tabular}
\end{table}

\subsubsection{Bi-DexHands}
In the Bi-DexHands domain, for SAC, PPO, MAPPO baselines we adopt the implementation and tuned hyperparameters reported in the paper \citep{zhang2020bi}. And for HAPPO we adopt the implementation and tuned hyperparameters reported in the HARL paper \citep{zhong2023heterogeneousagent}. Here we report the hyperparameters for HASAC and HATD3 in Table \ref{tab:bi1} and \ref{tab:bi2}.

\begin{table}[!htb]
\centering
\caption{Common hyperparameters used for HASAC and HATD3 in the Bi-DexHands domain.}
\label{tab:bi1}
\begin{tabular}{cc|cc|cc}
\hline
hyperparameters & value          & hyperparameters & value & hyperparameters & value                     \\ \hline
rollout threads & 20             & gamma  & 0.95    & batch size & 1000 \\ 
actor lr        & 5e - 4         & critic lr & 5e - 4 & n step &   20   \\  
use huber loss	& False          & use valuenorm &	False  & exploration noise  & 0.1             \\\hline
\end{tabular}
\end{table}

\begin{table}[!htb]
\centering
\caption{Different hyperparameters used for HASAC in the Bi-DexHands domain.}
\label{tab:bi2}
\begin{tabular}{c|cccc}
\hline
scenarios & auto alpha & alpha  & alpha lr        \\ \hline
ShadowHandCatchAbreast       & True      & / & 3e - 4 \\
ShadowHandTwoCatchUnderarm    & False      & 5e - 5 & / \\ 
ShadowHandLiftUnderarm       & True      & / & 3e - 4 \\
ShadowHandOver       & True      & / & 3e - 4 \\ 
ShadowHandCatchOver2Underarm       & True      & / & 3e - 4 \\ 
ShadowHandPen       & True      & / & 3e - 4 \\  
ShadowHandDoorCloseInward       & True      & / & 3e - 4 \\ 
ShadowHandDoorOpenInward       & True      & / & 3e - 4 \\ 
ShadowHandDoorOpenOutward       & True      & / & 3e - 4 \\ \hline
\end{tabular}
\end{table}

\subsubsection{Multi-Agent MuJoCo (MAMuJoCo)}
In this part, we report the hyperparameters used in MAMuJoCo tasks for HASAC and HATD3 in Table \ref{tab2}, \ref{tab13}, \ref{tab12}, and \ref{tab3}. For the other three baselines, we utilize the implementation and tuned hyperparameters reported in the HARL paper \citep{zhong2023heterogeneousagent}.

\begin{table}[!htb]
\centering
\caption{Common hyperparameters used for HASAC and HATD3 in the MAMuJoCo domain.}
\label{tab2}
\begin{tabular}{cc|cc}
\hline
hyperparameters & value          & hyperparameters & value  \\ \hline
rollout threads & 10             & train interval  & 50     \\
critic lr       & 1e - 3         & gamma           & 0.99   \\  
use huber loss	& False          & use valuenorm &	False   \\ \hline
\end{tabular}
\end{table}

\begin{table}[!htb]
\centering
\caption{Common hyperparameters used for HATD3 in the MAMuJoCo domain.}
\label{tab13}
\begin{tabular}{cc|cc}
\hline
hyperparameters & value          & hyperparameters & value  \\ \hline
actor lr       & 5e - 4         & exploration noise           & 0.1   \\  \hline
\end{tabular}
\end{table}

\begin{table}[!htb]
\centering
\caption{Parameter \textbf{n\_step} used for HASAC and HATD3 in the MAMuJoCo domain.}
\label{tab12}
\begin{tabular}{cc|cc}
\hline
scenarios               & value & scenarios            & value \\ \hline
Ant 2x4                 & 5     & HalfCheetah 2x3      & 10    \\
Ant 4x2                 & 5     & HalfCheetah 3x2      & 10    \\
Ant 8x1                 & 5     & HalfCheetah 6x1      & 10    \\
Walker 2x3              & 20    & Walker 6x1           & 20    \\
manyagent\_swimmer 10x2 & 10    & HumanoidStandup 17x1 & 10    \\ \hline
\end{tabular}
\end{table}

\begin{table}[!htb]
\centering
\caption{Different hyperparameters used for HASAC in the MAMuJoCo domain.}
\begin{tabular}{c|cccccc} \hline
scenarios               & actor lr & auto alpha & alpha            & alpha lr          & batch size \\ \hline
Ant 2x4                 & 3e - 4   & False       & 0.2              & /    & 2200       \\
Ant 4x2                 & 3e - 4   & False       & 0.2              & /   & 1000       \\
Ant 8x1                 & 3e - 4   & False       & 0.2              & /   & 2200       \\
HalfCheetah  2x3        & 1e - 3   & True        & / & 3e - 4              & 1000       \\
HalfCheetah 3x2         & 1e - 3   & True        & / & 3e - 4             & 1000       \\
HalfCheetah 6x1         & 1e - 3   & True        & / & 3e - 4            & 1000       \\
Walker 2x3              & 5e - 4   & False       & 0.2              & /  & 1000       \\
Walker 6x1              & 5e - 4   & False       & 0.2              & /   & 1000       \\
manyagent\_swimmer 10x2 & 1e - 3   & True        & / & 3e - 4             & 1000       \\
HumanoidStandup 17x1    & 1e - 3   & True        & / & 3e - 4             & 1000       \\ \hline
\end{tabular}
\label{tab3}
\end{table}

\begin{table}[!htb]
\centering
\caption{Common hyperparameters used for HAPPO and MAPPO in the MAMuJoCo domain.}
\label{tab10}
\begin{tabular}{cc|cc}
\hline
hyperparameters & value & hyperparameters & value          \\ \hline
batch size      & 4000  & network         & MLP            \\
gamma           & 0.99  & hidden sizes    & {[}256, 256{]} \\ \hline
\end{tabular}
\end{table}

\begin{table}[!htb]
\centering
\caption{Different hyperparameters used for HAPPO and MAPPO in the MAMuJoCo domain.}
\label{tab11}
\begin{tabular}{c|cccccc}
\hline
scenarios               & \begin{tabular}[c]{@{}c@{}}linear\\ lr decay\end{tabular} & \begin{tabular}[c]{@{}c@{}}actor/critic\\ lr\end{tabular} & \begin{tabular}[c]{@{}c@{}}ppo/critic\\ epoch\end{tabular} & \begin{tabular}[c]{@{}c@{}}clip\\ param\end{tabular} & \begin{tabular}[c]{@{}c@{}}actor/critic\\ mini batch\end{tabular} & \begin{tabular}[c]{@{}c@{}}entropy\\ coef\end{tabular} \\ \hline
Ant                     & False                                                     & 5e - 4                                                    & 5                                                          & 0.1                                                  & 1                                                                 & 0            \\
HalfCheetah             & False                                                     & 5e - 4                                                    & 15                                                         & 0.05                                                 & 1                                                                 & 0.01         \\
Walker 2x3              & True                                                      & 1e - 3                                                    & 5                                                          & 0.05                                                 & 2                                                                 & 0            \\
Walker 6x1              & False                                                     & 5e - 4                                                    & 5                                                          & 0.1                                                  & 1                                                                 & 0.01         \\
manyagent\_swimmer 10x2 & False                                                     & 5e - 4                                                    & 5                                                          & 0.2                                                  & 1                                                                 & 0.01         \\
HumanoidStandup 17x1    & True                                                      & 5e - 4                                                    & 5                                                          & 0.1                                                  & 1                                                                 & 0            \\ \hline
\end{tabular}
\end{table}

\subsubsection{StarCraft Multi-Agent Challenge (SMAC)}
In the SMAC domain, for MAPPO we adopt the implementation and tuned hyperparameters reported in the MAPPO paper \citep{yu2022surprising}. And for HAPPO and HATRPO we adopt the implementation and tuned hyperparameters reported in the HARL paper \citep{zhong2023heterogeneousagent}. Here we report the hyperparameters for HASAC in Table \ref{tab4} and \ref{tab5}.

\begin{table}[!htb]
\centering
\caption{Common hyperparameters used for HASAC in the SMAC domain.}
\label{tab4}
\begin{tabular}{cc|cc|cc}
\hline
hyperparameters & value          & hyperparameters & value     & hyperparameters & value  \\ \hline
rollout threads & 20             & state type  & FP      & batch size  & 1000  \\ 
use huber loss	& False          & use valuenorm &	False & use policy active masks & true \\ \hline
\end{tabular}
\end{table}

\begin{table}[!htb]
\centering
\caption{Different hyperparameters used for HASAC in the SMAC domain.}
\label{tab5}
\begin{tabular}{c|ccccccc}
\hline
Map            & critic lr & actor lr & gamma & \begin{tabular}[c]{@{}c@{}}auto \\ alpha\end{tabular} & alpha            & alpha lr         & n step  \\ \hline
8m\_vs\_9m           & 5e - 4    & 3e - 4   & 0.95  & True                                                  & / & 3e - 4           & 5                                                  \\
5m\_vs\_6m     & 5e - 4    & 3e - 4   & 0.95  & True                                                  & / & 3e - 4           & 20                                                    \\
3s5z           & 5e - 4    & 3e - 4   & 0.99  & True                                                  & / & 3e - 4           & 20                                                   \\
10m\_vs\_11m     & 5e - 4    & 3e - 4   & 0.95  & True                                                  & / & 3e - 4           & 20                                                    \\
MMM2     & 5e - 4    & 3e - 4   & 0.95  & True                                                  & / & 3e - 4           & 20                                                    \\
3s5z\_vs\_3s6z & 5e - 4    & 3e - 4   & 0.99  & True                                                 & / & 3e - 4 & 10                                                     \\ 
corridor     & 5e - 4    & 5e - 4   & 0.99  & False                                                  & 2e - 3 & /           & 5                                                    \\ 
6h\_vs\_8z     & 5e - 4    & 3e - 4   & 0.99  & True                                                  & / & 3e - 4           & 5                                                \\ \hline
\end{tabular}
\end{table}

\subsubsection{Google Research Football (GRF)}
In the GRF domain, for MAPPO and QMIX baselines we adopt the implementation and tuned hyperparameters reported in the MAPPO paper \citep{yu2022surprising}. And for HAPPO we adopt the implementation and tuned hyperparameters reported in the HARL paper \citep{zhong2023heterogeneousagent}. Here we report the hyperparameters for HASAC in Table \ref{tab6} and \ref{tab7}.

\begin{table}[!htb]
\centering
\caption{Common hyperparameters used for HASAC in the GRF domain.}
\label{tab6}
\begin{tabular}{cc|cc|cc}
\hline
hyperparameters & value          & hyperparameters & value & hyperparameters & value                     \\ \hline
rollout threads & 20             & gamma  & 0.99    & batch size & 1000 \\ 
actor lr        & 5e - 4         & critic lr & 5e - 4 & n step &   10   \\  
use huber loss	& False          & use valuenorm &	False  \\ \hline
\end{tabular}
\end{table}

\begin{table}[!htb]
\centering
\caption{Different hyperparameters used for HASAC in the GRF domain.}
\label{tab7}
\begin{tabular}{c|cccc}
\hline
scenarios & auto alpha & alpha  & alpha lr        \\ \hline
RPS       & False      & 1e - 4 & / \\
Corner    & False      & 1e - 3 & / \\ \hline
\end{tabular}
\end{table}

\subsubsection{Multi-Agent Particle Environment (MPE)}
In this part, we report the hyperparameters for HASAC in Table \ref{tab:mpe}.

\begin{table}[!htb]
\centering
\caption{Hyperparameters used for HASAC in the MPE domain.}
\label{tab:mpe}
\begin{tabular}{cc|cc|cc}
\hline
hyperparameters & value  & hyperparameters & value  & hyperparameters & value  \\ \hline
rollout threads & 20     & batch\_size     & 1000   & critic lr       & 5e - 4  \\
gamma           & 0.99   & actor lr        & 5e - 4 & n\_step         & 20      \\ 
auto alpha      & True   & alpha lr        & 3e - 4 \\ 
use huber loss	& False          & use valuenorm &	False \\ \hline
\end{tabular}
\end{table}

\subsubsection{Light Aircraft Game (LAG)}
In this part, we report the hyperparameters for HASAC in Table \ref{tab8} and the hyperparameters for MAPPO and HAPPO in Table \ref{tab9}.

\begin{table}[!htb]
\centering
\caption{Hyperparameters used for HASAC in the LAG domain.}
\label{tab8}
\begin{tabular}{cc|cc|cc}
\hline
hyperparameters & value  & hyperparameters & value  & hyperparameters & value  \\ \hline
rollout threads & 20     & batch\_size     & 1000   & critic lr       & 5e - 4  \\
gamma           & 0.99   & actor lr        & 5e - 4 & n\_step         & 10      \\ 
auto alpha      & True   & alpha lr        & 3e - 4 \\ \hline
\end{tabular}
\end{table}

\begin{table}[!htb]
\centering
\caption{Hyperparameters used for MAPPO and HAPPO in the LAG domain.}
\label{tab9}
\begin{tabular}{cc|cc|cc}
\hline
hyperparameters & value  & hyperparameters  & value & hyperparameters   & value          \\ \hline
batch size      & 4000   & linear lr decay  & False & hidden sizes      & {[}256, 256{]} \\
actor/critic lr & 5e - 4 & gamma            & 0.99  & network           & MLP            \\
ppo epoch       & 5      & entropy coef     & 0     & clip param        & 0.05            \\
critic epoch    & 5      & actor mini batch & 2     & critic mini batch & 2              \\ \hline
\end{tabular}
\end{table}

\newpage
\section{Ablation Study on Sequential Updates}
\label{apd ablation}

In this section, we conduct an ablation study to investigate the effect of sequential updates. We compare the performance of the original HASAC with sequential update scheme, and HASAC without sequential updates (or MASAC), which updates each policy according to the following objective:
\[
J_{\pi^{i}}(\phi^{i})=\mathbb{E}_{\mathrm{s}_t \sim \mathcal{D}}\left[\mathbb{E}_{\mathrm{a}^{i}_t \sim \pi^{i}_{\phi^{i}}, \mathbf{a}^{-i}_t \sim \boldsymbol{\pi}^{-i}_{\boldsymbol{\phi}_{\text{old}}^{-i}}}\left[\alpha \log \pi^{i}_{\phi^{i}}\left(\mathrm{a}^{i}_t | \mathrm{s}_t\right)-Q^{i}_{{\boldsymbol{\pi}_{\text {old }}}; \theta}\left(\mathrm{s}_t, \mathrm{a}^{i}_t, \mathbf{a}^{-i}_t\right)\right]\right]. 
\]
We run the experiments on nine Bi-DexHands tasks. For a fair comparison, we set the hyperparameters of MASAC to be the same as those of HASAC. Experiments results show that HASAC with sequential updates consistently achieves better performance.
\begin{table}[!htb]
\centering
\begin{tabular}{c|cc}
\hline
Tasks                        & HASAC                & MASAC               \\ \hline
ShadowHandCatchAbreast       & \textbf{45.9(4.3)}   & 36.4(5.2)           \\
ShadowHandTwoCatchUnderarm   & \textbf{20.6(1.9)}   & 17.2(1.1)           \\
ShadowHandLiftUnderarm       & \textbf{361.0(23.2)} & \textbf{345.0(9.7)} \\
ShadowHandOver               & \textbf{31.1(0.7)}   & \textbf{30.5(1.1)}  \\
ShadowHandCatchOver2Underarm & \textbf{28.7(0.4)}   & 27.0(0.9)           \\
ShadowHandPen                & \textbf{181.5(10.4)} & 163.8(18.6)         \\
ShadowHandDoorCloseInward    & \textbf{429.8(16.7)} & \textbf{425.9(7.1)} \\
ShadowHandDoorOpenInward     & \textbf{475.6(23.8)} & 428.8(14.1)         \\
ShadowHandDoorOpenOutward    & \textbf{542.1(13.1)} & 528.1(20.4)         \\ \hline
\end{tabular}
\caption{Averaged final performance on nine Bi-DexHands tasks.}
\end{table}

\end{document}